\documentclass[twocolumn,nofootinbib, superscriptaddress,10pt,aps,prx]{revtex4-2}
\usepackage[dvips]{graphicx} 
\usepackage{amsfonts}
\usepackage{csquotes} 
\usepackage{amssymb}
\usepackage{amscd}
\usepackage{amsmath}    
\usepackage{amsthm}
\usepackage{bm} 
\usepackage{bbm}
\usepackage{booktabs}
\usepackage{array}
\usepackage{enumerate}
\usepackage{enumitem}
\usepackage{epsfig}
\usepackage{subfigure}
\usepackage{subfloat}
\usepackage{xcolor}	
\usepackage{physics}
\usepackage[most]{tcolorbox}

\usepackage{tikz}
\usepackage{quantikz}
\usepackage{multirow}

\usepackage{tabularx}
\usepackage{float}
\usepackage{appendix}
\usepackage{makecell}
\usepackage{algpseudocode}  
\usepackage{dsfont}
\usepackage[colorlinks, linkcolor=blue, anchorcolor=blue, citecolor=blue]{hyperref}
\usepackage{MnSymbol}
\usepackage[linesnumbered,ruled,vlined]{algorithm2e}
\usepackage{natbib}
\usepackage{times}
\usepackage{tabularray}

\makeatletter
\newcommand{\printappendixtoc}{%
	\section*{Contents}    
	\@starttoc{atoc}       
}

\newcommand{\l@atocsection}{\@dottedtocline{1}{0em}{2.3em}}
\newcommand{\l@atocsubsection}{\@dottedtocline{2}{1.5em}{3em}}
\makeatother

\newtheorem{theorem}{Theorem}
\newtheorem{fact}{Fact}
\newtheorem{lemma}{Lemma}

\newtheorem{definition}{Definition}

\newtheorem{remark}{Remark}

\newcommand{\bE}{\mathbb{E}}
\newcommand{\bF}{\mathbb{F}}

\newcommand{\cD}{\mathcal{D}}

\newcommand{\cO}{\mathcal{O}}
\newcommand{\tilO}{\widetilde{\mathcal{O}}}

\definecolor{dred}{rgb}{.8,0.2,.2}
\definecolor{dblue}{rgb}{.2,0.2,.8}

\newtcolorbox[auto counter]{mybox}[2][]{
	enhanced,
	breakable,
	colback=blue!5!white,
	colframe=blue!75!black,
	fonttitle=\bfseries,
	title=Box \thetcbcounter: #2,#1
}

\begin{document}
	\title{Low-Depth Random Unitaries without Ancillae}
	\author{Zhenyu Du}
	\thanks{These authors contributed equally to this work.}
	\affiliation{Center for Quantum Information, Institute for Interdisciplinary Information Sciences, Tsinghua University, Beijing 100084, China}
	
	\author{Siyuan Cheng}
	\thanks{These authors contributed equally to this work.}
	\affiliation{Center for Quantum Information, Institute for Interdisciplinary Information Sciences, Tsinghua University, Beijing 100084, China}
	
	\author{Xiongfeng Ma}
	\email{xma@tsinghua.edu.cn}
	\affiliation{Center for Quantum Information, Institute for Interdisciplinary Information Sciences, Tsinghua University, Beijing 100084, China}
	
	\begin{abstract}
		Random unitaries are fundamental to quantum information and many-body physics, with widespread applications ranging from quantum learning and metrology to device benchmarking. 
		A central pursuit is to minimize the space and circuit depth required to generate them. 
		However, existing methods for generating low-depth random unitaries rely heavily on an extensive number of ancillary qubits, imposing severe spatial overhead. 
		In this work, we prove that random unitaries can be generated in optimal depth without ancillae. 
		For multiplicative-error approximate $k$-designs on $n$ qubits, our circuits achieve a depth of $\tilO(k) (\log n)^{1/\delta}$ on $\delta$-dimensional architectures and $\tilO(k) \log \log n$ with all-to-all connectivity. 
		Furthermore, by introducing a general exactification lemma, we lift our construction to optimal-depth exact $k$-designs, yielding an exponential resource reduction over state-of-the-art exact constructions. 
		Our results minimize the space-time costs for a wide range of quantum protocols.
	\end{abstract}
	
	\maketitle
	
	Random unitaries play a foundational role in modern quantum technologies and our understanding of quantum many-body physics. 
	They serve as the backbone for a wide range of quantum information protocols, including quantum learning~\cite{Brydges2019Probing, Huang2020Predicting, Huang2022QuantumAdvantage, Elben2020MixedStateEntanglement, Elben2023randomized, Li2025QueryOptimal, Zhou2020SingleCopyNegativity}, metrology~\cite{Zhu2018FisherSymmetric, Zhou2026RandomizedMetrology, Lu2026FisherRandomMeasurements, Du2026ComplexityDrivenTransition}, cryptography~\cite{Hayden2004RandomizingQuantumStates, Ji2018Pseudorandom, Bhattacharyya2026UncloneableEncryption}, random circuit sampling~\cite{Arute2019Supremacy, Zhong2020AdvantagePhotons}, and device benchmarking~\cite{Emerson2005NoiseEstimationRandomUnitary, Magesan2011ScalableRandomizedBenchmarking, Helsen2023GateSetProperty}. 
	Moreover, they provide essential models for investigating fundamental physical dynamics, such as quantum gravity~\cite{Patrick2007BlackHoles, Yasuhiro2008FastScramblers}, quantum chaos~\cite{Roberts2017ChaosDesign, Chan2018MinimalModel}, and information scrambling~\cite{Nahum2017EntanglementGrowth, Nahum2018OperatorSpreading, Mi2021InformationScrambling, Google2026ConstructiveInterference}.
	
	A central pursuit is therefore to understand the minimal resources required for physical quantum circuits to behave like genuine random unitaries~\cite{Harrow2009Random2Design, Brandao2016LocalRandomCircuits, Cleve2016ExactDesign, Nakata2021QuantumCircuitsExactDesigns, Haferkamp2022RandomQuantumCircuits, Harrow2023DesignLongRange, Chen2025Incompressibility, Schuster2024RandomUnitaries, Cui2025UnitaryDesignsOptimal, LaRacuente2026LowCommunication, Du2026ComplexityDrivenTransition, Liu2026QuadruplyOptimalUnitary, baer2026random, Haferkamp2023IndependentNonClifford, Zhang2026MagicAugmented, Leone2026NonCliffordCost, Bittle2026AdaptivelySecure}. 
	Achieving this not only minimizes the operational overhead for practical quantum information tasks but also deepens our understanding of complex many-body dynamics. 
	Yet implementing typical Haar-random unitaries is physically prohibitive, as it requires local gates that scale exponentially with system size $n$~\cite{Brandao2021ModelComplexity}. 
	Fortunately, many practical scenarios probe only low-order moments of the Haar measure. 
	Consequently, unitary $k$-designs, which replicate the statistical properties of the Haar measure up to the $k$-th moment, suffice for a wide range of applications.

	Extensive research has therefore focused on minimizing the circuit depth overhead required to generate such designs~\cite{Haferkamp2022RandomQuantumCircuits, Harrow2023DesignLongRange, Schuster2024RandomUnitaries, Chen2025Incompressibility, LaRacuente2026LowCommunication, Cui2025UnitaryDesignsOptimal, Zhang2026MagicAugmented, Du2026ComplexityDrivenTransition, Liu2026QuadruplyOptimalUnitary}. 
	A primary focus has been multiplicative-error unitary designs, which provide efficiency guarantees for diverse quantum applications while admitting relatively simple circuit implementations~\cite{Huang2020Predicting, Schuster2024RandomUnitaries, Brydges2019Probing, Du2026OptimalRandomized, Zhou2026RandomizedMetrology, Du2026ComplexityDrivenTransition, Emerson2005NoiseEstimationRandomUnitary, Dankert2009DesignFidelityEstimation, Magesan2011ScalableRandomizedBenchmarking, Chen2025Incompressibility}. 
	For these designs, seminal works have achieved poly-logarithmic depth scaling, $\cO((\log n)^{1/\delta})$, in $\delta$-dimensional architectures~\cite{Schuster2024RandomUnitaries, Du2026ComplexityDrivenTransition}, and doubly-logarithmic depth scaling, $\cO(\log \log n)$, in all-to-all circuits~\cite{Schuster2024RandomUnitaries, Cui2025UnitaryDesignsOptimal, Zhang2026MagicAugmented}.
	However, with the exception of specific one-dimensional constructions~\cite{Schuster2024RandomUnitaries, Cui2025UnitaryDesignsOptimal, Zhang2026MagicAugmented, Liu2026QuadruplyOptimalUnitary}, these protocols require an extensive number of ancillary qubits, often vastly exceeding the size of the original system. 
	Because spatial overhead is both a fundamental theoretical metric and a severe constraint in near-term quantum devices, a central open question emerges:
	\begin{center}
		\emph{Can we achieve depth-optimal random unitaries in all finite-dimensional and all-to-all architectures without using ancillary qubits?}
	\end{center}

	In this work, we answer this question in the affirmative (Table~\ref{tab:main-comparison}). We construct ancilla-free, multiplicative-error approximate unitary $k$-designs achieving a circuit depth of $\tilO(k) (\log n)^{1/\delta}$ for $\delta$-dimensional architectures, and $\tilO(k) \log \log n$ for all-to-all connectivity. 
	Crucially, these constructions achieve optimal depth scaling with system size $n$ and near-optimal scaling (up to logarithmic factors) with design order $k$ across all considered architectures, without introducing additional spatial overhead.

	Furthermore, we provide ancilla-free constructions for exact unitary $k$-designs, the strongest unitary designs that replicate the low-order moments of Haar measure with zero error. 
	Our circuits implement these exact designs in depth $\tilO(k) n^{1/\delta}$ on $\delta$-dimensional architectures and $\tilO(k) \log n$ for all-to-all connectivity. 
	This yields an exponential improvement over existing state-of-the-art exact constructions~\cite{Nakata2021QuantumCircuitsExactDesigns} while exhibiting optimal scaling in $n$ and near-optimal scaling in $k$. 
	
	
	Our constructions have immediate applications across various quantum information tasks, minimizing the resource costs for protocols ranging from classical shadow estimation~\cite{Huang2020Predicting, Schuster2024RandomUnitaries, Li2025QueryOptimal, Helsen2023GateSetProperty} and metrological measurements~\cite{Zhou2026RandomizedMetrology, Du2026ComplexityDrivenTransition} to broader randomized measurement schemes~\cite{Elben2019StatisticalCorrelations, Brydges2019Probing, Du2026OptimalRandomized, Li2026SingleMeasurement} and higher-order randomized benchmarking~\cite{Nakata2021QuantumCircuitsExactDesigns}.
	
	We first briefly introduce the concept of unitary designs. For any ensemble $\mathcal{E}$ of unitaries, its $k$-th moment channel is defined as
	\begin{equation}
		\Phi^{(k)}_{\mathcal E}(A)
		:=
		\mathbb E_{U\sim\mathcal E}
		\left[
		U^{\otimes k}A\,U^{\dagger\otimes k}
		\right].
		\label{eq:moment-channel}
	\end{equation}
	The ensemble $\mathcal{E}$ forms an exact unitary $k$-design if this channel perfectly matches the corresponding Haar average, namely, $\Phi^{(k)}_{\mathcal E} = \Phi^{(k)}_{\mathrm H}$. Operationally, this implies that no experiment making at most $k$ queries can distinguish $U \sim \mathcal{E}$ from a Haar-random unitary. 
	Relaxing this exact requirement, $\mathcal{E}$ forms an approximate unitary $k$-design if its moments merely approximate those of the Haar ensemble. Here, we adopt the notion of multiplicative $\epsilon$-approximate $k$-designs~\cite{Brandao2016LocalRandomCircuits}:
	\begin{equation}
		(1-\epsilon)\Phi^{(k)}_{\mathrm H}
		\preceq
		\Phi_{\mathcal E}^{(k)}
		\preceq
		(1+\epsilon)\Phi^{(k)}_{\mathrm H},
		\label{eq:multiplicative-design}
	\end{equation}
	where $\Phi\preceq\Phi'$ indicates that $\Phi'-\Phi$ is a completely positive map. Eq.~\eqref{eq:multiplicative-design} guarantees that any such $k$-query experiment behaves approximately identically to one using Haar-random unitaries~\cite{Schuster2024RandomUnitaries}.
	
	
	\begin{table}[!b]
		\centering
		\footnotesize
		\setlength{\tabcolsep}{3pt}
		\renewcommand{\arraystretch}{1.2}
		\begin{tabular}{@{}lll@{}}
			\toprule
			Construction & Circuit Depth & Ancillas \\
			\midrule
			\multicolumn{3}{@{}l}{Multiplicative $\epsilon$-approximate $k$-designs} \\
			\multicolumn{3}{@{}l}{\textit{$\delta$-dimensional architectures}} \\
			Ref.~\cite{Liu2026QuadruplyOptimalUnitary} (1D)
			& $\cO\left(k\log k + \log(n\epsilon^{-1})\right)$
			& $0$ \\
			Ref.~\cite{Du2026ComplexityDrivenTransition}
			& $\tilO(k^{1+1/\delta})[\log(n\epsilon^{-1})]^{1/\delta}$
			& $\cO(kn)$ \\
			\textbf{This work}
			& $\tilO(k)[\log(n\epsilon^{-1})]^{1/\delta}$
			& $\mathbf{0}$ \\
			\addlinespace
			\multicolumn{3}{@{}l}{\textit{All-to-all connectivity}} \\
			Ref.~\cite{Cui2025UnitaryDesignsOptimal}
			& $\tilO(k)\log\log(n\epsilon^{-1})$
			& $kn\cdot\tilO\!\left(\log\log(n\epsilon^{-1})\right)$ \\
			Ref.~\cite{Zhang2026MagicAugmented}
			& $2^{\tilO(k)} + \cO(\log\log(n\epsilon^{-1}))$
			& $\cO\left(n k^2 + n \log(n\epsilon^{-1})\right)$ \\
			\textbf{This work}
			& $\tilO(k) \log\log(n\epsilon^{-1})$
			& $\mathbf{0}$ \\
			\midrule
			\multicolumn{3}{@{}l}{Exact $k$-designs} \\
			\multicolumn{3}{@{}l}{\textit{$\delta$-dimensional architectures}} \\
			Ref.~\cite{Du2026ComplexityDrivenTransition} ($k=2$)
			& $\cO(n^{1/\delta})$
			& $\cO(n)$ \\
			Ref.~\cite{Maslov2007LinearDepthStabilizer} ($k=3$)
			& $\cO(n)$
			& $0$ \\
			\textbf{This work}
			& $\cO(n^{1/\delta}k\log^2 k)$
			& $\mathbf{0}$ \\
			\addlinespace
			\multicolumn{3}{@{}l}{\textit{All-to-all connectivity}} \\
			Ref.~\cite{Jiang2020SpaceDepthTradeoffCNOT} ($k=3$)
			& $\cO(\log n)$
			& $\cO(n^2/\log^2 n)$ \\
			Ref.~\cite{Nakata2021QuantumCircuitsExactDesigns}
			& Gate count $\exp[\cO(\sqrt{k}n)]$
			& Required \\
			\textbf{This work}
			& $\cO(k\log k(\log n+\log k))$
			& $\mathbf{0}$ \\
			\bottomrule
		\end{tabular}
		\caption{Comparison of space-time resource costs for $n$-qubit approximate and exact unitary designs against current state-of-the-art constructions. For a more comprehensive comparison including additional historical baselines, see Appendix~\ref{app:comparison}.}
		\label{tab:main-comparison}
	\end{table}
	
	Our first result provides low-depth constructions for approximate unitary designs on both finite-dimensional and all-to-all architectures, without using any ancillary qubits. Formal statements are provided in Theorems~\ref{thm:fd-large-design} and \ref{thm:a2a-large-design}.
	\begin{theorem}[Low-depth approximate designs without ancillae, informal]
		\label{thm:approximate-design-informal}
		An $n$-qubit multiplicative $\epsilon$-approximate unitary $k$-design can be implemented with a circuit depth of $\tilO(k) \left(\log (n\epsilon^{-1})\right)^{1/\delta}$ on $\delta$-dimensional architectures, and a depth of $\tilO(k) \log \log (n\epsilon^{-1})$ on all-to-all architectures, where $\tilO(k)$ hides logarithmic factors. These implementations require no additional ancillary qubits.  
	\end{theorem}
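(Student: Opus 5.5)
The plan builds on the patched-circuit (two-layer "gluing") framework of Schuster, Haferkamp and Huang. That framework removes the need for ancillae in the two-layer construction itself, but not in the small-patch designs it relies on. The underlying fact is this: if $\mathcal{E}_A$, $\mathcal{E}_B$, $\mathcal{E}_C$ are (approximate) $k$-designs on overlapping regions $AB$, $BC$ with $|B|\gtrsim \log(nk/\epsilon)$, then the two-layer circuit $U_{AB}U_{BC}$ is a multiplicative-error approximate $k$-design. Iterating this over a brickwork of patches of size $\xi=\cO(\log(n/\epsilon)+k\log k)$ reduces the problem to realizing multiplicative (or exact) $k$-designs on small patches of $\xi$ qubits.

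The prior constructions needed ancillae only in this inner step, where the small-patch designs were built, e.g. via Pauli--Clifford ``PFC'' ensembles or ancilla-assisted permutation circuits. The aim is therefore an ancilla-free, depth-$\tilO(k)\cdot \mathrm{depth}_{\mathrm{arch}}(\xi)$ design on a $\xi$-qubit patch. I would realize it as a product ensemble $P\,F\,C$ consisting of:
\begin{itemize}
\item $P$: a uniformly random Pauli, of depth 1;
\item $F$: a random degree-$\cO(k)$ phase, i.e. a low-degree polynomial over $\bF_2$ realized by diagonal gates;
\item $C$: a random element of a $k$-wise independent family of affine reversible maps, or of Clifford/$GL(\xi,\bF_2)$ circuits.
\end{itemize}
Such ensembles give relative-error $k$-designs once the permutation part is $k$-wise independent on distinct tuples. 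The key technical step is to implement $F$ and $C$ in place. A $k$-wise independent permutation family can come from compositions of $\tilO(k)$ rounds of simple Feistel-type or affine layers acting on the qubits directly. A general linear reversible map on $\xi$ qubits has depth $\cO(\xi^{1/\delta})$ on a $\delta$-dimensional lattice and $\cO(\log \xi)$ all-to-all, without ancillae (known CNOT-circuit synthesis results). Diagonal phases of degree $k$ can be computed by parity-network techniques interleaved with these linear layers, so each round costs the linear-map depth and $\tilO(k)$ rounds suffice.

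The depth accounting then follows. With $\xi=\tilO(k)+\cO(\log(n/\epsilon))$, a patch costs $\tilO(k)\,\xi^{1/\delta}$ in $\delta$ dimensions. Because $\xi^{1/\delta}\le (\tilO(k))^{1/\delta}+\cO(\log(n/\epsilon))^{1/\delta}$, this gives $\tilO(k)(\log(n\epsilon^{-1}))^{1/\delta}$ up to the hidden factors. All-to-all the patch costs $\tilO(k)\log\xi=\tilO(k)\log\log(n\epsilon^{-1})$. The gluing layers add only a constant factor, and on a $\delta$-dimensional lattice they can be arranged as $\cO(1)$ layers of $\delta$-dimensional patches of side $\xi^{1/\delta}$. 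Errors compose multiplicatively across the $\cO(n/\xi)$ patches. Choosing the per-patch error to be $\epsilon/n$ is absorbed into $\xi$.

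The main obstacle I expect is achieving $k$-wise independence of the in-place permutation/phase ensemble with only $\tilO(k)$ rounds and no scratch space. Classical reversible computation of nonlinear functions typically needs ancillae, and the multiplicative guarantee (as opposed to additive) is sensitive to small biases on collision tuples. My first attempt would be the following:
\begin{itemize}
\item use random affine maps over $\bF_{2^\xi}$ composed with in-place low-degree Toffoli-type layers;
\item bound the distinguishing error on distinct $k$-tuples via a collision/path-counting argument, which yields a relative error $\cO(k^2 2^{-\xi/2})$;
\item convert this bound to the multiplicative guarantee on the patch.
\end{itemize}
If the in-place nonlinear layers prove too costly, the fallback is to derive $k$-wise independence from repeated random Clifford plus single-$T$-layer mixing with spectral-gap bounds.
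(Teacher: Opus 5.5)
\textbf{There is a genuine gap: the ancilla-free patch design, which is the whole content of the theorem, is never actually constructed.} Your outer layer matches the paper's: independently sampled patch designs are glued in a two-layer brickwork via the gluing lemma (Fact~\ref{fact:app-fd-gluing}), with per-patch error $\epsilon/(8t)$.

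Your PFC-type patch ensemble needs a random Clifford or $GL(\xi,\bF_2)$ element, a $k$-wise independent phase, and a $k$-wise independent permutation. These are exactly the pieces that earlier constructions paid for with ancillae.

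\emph{The linear-map claim is false outside $\delta=1$.} An arbitrary linear reversible map on $\xi$ qubits cannot be realized without ancillae in depth $\cO(\log\xi)$ all-to-all, or $\cO(\xi^{1/\delta})$ on a lattice with $\delta\ge2$. Both $GL(\xi,\bF_2)$ and the Clifford group contain $2^{\Theta(\xi^2)}$ pairwise well-separated unitaries. After discretizing the gates, depth-$D$ circuits on $\xi$ qubits number only $2^{\cO(\xi D\log(\xi D))}$, which forces $D=\Omega(\xi/\log\xi)$. This is consistent with the ancilla-free $\cO(n/\log n)$ Clifford row of Table~\ref{tab:exact-design-comparison}.

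\emph{The phase and permutation parts have the same problem.} A uniformly random $\bF_2$ phase polynomial of degree $t\ge2$ hits the same counting bound, since there are $2^{\Theta(\xi^t)}$ of them. A compact $k$-wise independent function, such as one bit of a degree-$k$ polynomial over $\bF_{2^\xi}$, must instead be evaluated in place, which is precisely the ancilla problem. Nonlinear Feistel rounds face the same issue. The claim that $\tilO(k)$ in-place rounds give $k$-wise independence with a multiplicative guarantee comes with no argument.

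\emph{Smaller accounting slip.} With patch size $\xi=\cO(\log(n/\epsilon)+k\log k)$, the cost $\tilO(k)\,\xi^{1/\delta}$ picks up a factor $k^{1/\delta}$, which is not logarithmic. The gluing lemma only needs $2^{\xi}\gtrsim k^2$. Patches should therefore have size $\Theta(\log n+\log k+\log\epsilon^{-1})$, with all $k$-dependence carried by the number of repetitions.

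\textbf{The missing idea is a structured, low-randomness ensemble whose nonlinear part can be computed using borrowed dirty qubits.} The paper avoids permutations, Cliffords and high-degree phases altogether.

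\emph{The block and its error.} The patch block is $D_3H^{\otimes m}D_2H^{\otimes m}D_1$. With ideal random diagonals its TPE error is at most $9r^4/2^m$ (Theorem~\ref{thm:app-fourier-gap}). By Lemma~\ref{lem:app-HD-perturbation}, each ideal diagonal can be replaced by $\ket{z}\mapsto\exp(\mathrm i\sum_{j\le s}\theta_jh_j(z))\ket{z}$ with only $s=\Theta(m)$ random angles. Here $h_j$ is a random $\bF_2$-linear hash followed by a sparse random quadratic layer with $w=\Theta(\log k)$ terms per bit. The moment error equals the worst-case probability that a nonzero signed multiset profile passes all $s$ hash-sum tests. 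Uniformity on four-label affine spans plus a second- and fourth-moment argument bounds this by $e^{-\Omega(m/\log k)}$.

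\emph{The ancilla-free implementation.} The hash is written into $X$ or $Y$, while the other system registers $Z_3,\dots,Z_{r_0}$ serve as dirty workspace. Their unknown contents are cancelled exactly by the commutator-type sequences of Lemma~\ref{lem:catalytic-operations} and the deleted-control trick of Lemma~\ref{lem:a2a-catalytic}. This gives depth $\cO(m^{1/\delta}\log k)$ or $\cO(\log m+\log k)$ with no ancilla.

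\emph{Finishing the patch design.} Repeating the block $T=\tilO(k)$ times absorbs the $2^{2mk}$ factor in the TPE-to-multiplicative conversion. Random Pauli rotations cover the regime $m<A\log k$.

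Without an ingredient of this kind, your plan does not reach the stated depths.
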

	
	Generating such approximate designs without ancillae requires depth $\Omega\left(k + (\log n)^{1/\delta}\right)$ on $\delta$-dimensional architectures~\cite{Du2026ComplexityDrivenTransition} and $\Omega\left(k + \log \log n\right)$ on all-to-all architectures~\cite{Cui2025UnitaryDesignsOptimal}. 
	Thus, our construction achieves the optimal dependence on the system size $n$, while its $k$-dependence is optimal up to logarithmic factors. 
	Previously, this ancilla-free optimal scaling in $n$ was known only in 1D architectures~\cite{Schuster2024RandomUnitaries}. 
	For finite-dimensional~\cite{Du2026ComplexityDrivenTransition} and all-to-all~\cite{Schuster2024RandomUnitaries, Cui2025UnitaryDesignsOptimal} architectures, existing implementations require $\omega(n)$ ancillary qubits, an overhead that can significantly exceed the system size itself (Table~\ref{tab:main-comparison}). 
	Our result, for the first time, achieves optimal depth with zero spatial overhead simultaneously across all these architectures.
	
	Our second result shows that exact unitary designs can also be implemented using low-depth, ancilla-free circuits. Formal result is given in Theorem~\ref{thm:exact-design-formal}.
	\begin{theorem}[Low-depth exact designs without ancillae, informal]
		\label{thm:exact-design-informal}
		An $n$-qubit exact unitary $k$-design can be implemented with circuit depth $\tilO(k)n^{1/\delta}$ on $\delta$-dimensional architectures, and depth $\tilO(k)\log n$ on all-to-all architectures. These implementations require no additional ancillary qubits.
	\end{theorem}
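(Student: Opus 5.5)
We plan to derive Theorem~\ref{thm:exact-design-informal} from Theorem~\ref{thm:approximate-design-informal} via an exactification lemma. The lemma says: if an ensemble $\mathcal{E}$ is a multiplicative $\epsilon$-approximate $k$-design with $\epsilon$ small enough, it can be turned into an exact $k$-design by mixing it with another ensemble that corrects the residual moment error. The idea is a convex decomposition of the Haar moment channel. From $\Phi^{(k)}_{\mathcal E}\preceq(1+\epsilon)\Phi^{(k)}_{\mathrm H}$ we get that $\Phi^{(k)}_{\mathrm H}-\frac{1}{1+\epsilon}\Phi^{(k)}_{\mathcal E}$ is completely positive. We therefore look for a probability $p=\frac{1}{1+\epsilon}$ and a companion ensemble $\mathcal{F}$ with
\[
\Phi^{(k)}_{\mathrm H}=p\,\Phi^{(k)}_{\mathcal E}+(1-p)\,\Phi^{(k)}_{\mathcal F}.
\]

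A general CP remainder need not be realizable as a moment channel, so we realize $\mathcal{F}$ differently. We take the Haar-twirl structure: compose $\mathcal{E}$ with left and right multiplication by an exact design on a smaller group, or use the fact that the remainder commutes with $U^{\otimes k}$ twirls. Concretely, let $\mathcal{G}$ be an exact design such as a suitable finite group or a Clifford-like subgroup, and $\mathcal{E}$ the approximate design. The product ensemble $V U W$, with $V,W\sim\mathcal{G}$ and $U\sim\mathcal E$, has moment operator $P_{\mathcal G}M_{\mathcal E}P_{\mathcal G}$. Since $M_{\mathcal E}$ is $\epsilon$-close to the Haar projector $P_{\mathrm H}$ and $P_{\mathrm H}\le P_{\mathcal G}$, this gives an operator of the form $P_{\mathrm H}+\Delta$ with $\Delta$ supported on $\mathrm{range}(P_{\mathcal G})$ minus the Haar invariants. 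Repeating $m$ times shrinks $\Delta$ to $\epsilon^m$, but never to zero. Exactness therefore requires an algebraic step. The plan is to represent the Haar moment operator as a finite linear combination with nonnegative weights of product ensembles (exact reweighting). The key ingredient is that the multiplicative condition forces $\Phi_{\mathcal E}$ to be a positive combination of Weingarten/permutation operators close to Haar. Such an operator can be rewritten as $\Phi_{\mathrm H}$ plus a small perturbation, and that perturbation can be cancelled by a random unitary drawn from the approximate design's complement after a sequential product. This is the main obstacle. I would first try to prove that for $\epsilon<1/(\text{poly}(k)\,\cdot)$ the inverse-moment correction is itself representable by a convex mixture of circuits, namely products of two independent approximate-design samples interleaved with exact local designs.

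Given the lemma, the depth accounting is straightforward. The exactified ensemble uses $O(\log(\cdot))$ sequential approximate-design blocks with $\epsilon\le 2^{-\Omega(nk)}$, which is needed because exactness must dominate dimension-dependent factors $d^{k}$. Substituting $\log(n\epsilon^{-1})=\tilO(nk)$ into Theorem~\ref{thm:approximate-design-informal} gives $\tilO(k)(nk)^{1/\delta}$. I expect a finer argument, one that handles the $k$-dependence inside the $\delta$-dimensional gluing, to reduce this to $\tilO(k)n^{1/\delta}$. In the all-to-all case the same substitution gives $\tilO(k)\log\log(2^{nk})=\tilO(k)\log(nk)$, which matches the claimed $\tilO(k)\log n$. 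No ancillae are introduced, since sampling the mixture is classical randomness.
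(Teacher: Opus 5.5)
\textbf{The gap: no exactification lemma.} The exactification step is the entire content of this statement beyond Theorem~\ref{thm:approximate-design-informal}, and it is not established. You call it ``the main obstacle'' and only describe what you would try. The ingredients you list do not close it:
\begin{itemize}
\item Twirling by a subgroup design shrinks the residual geometrically but never to zero, as you note.
\item The proposed key fact is false. For a non-exact ensemble the outputs $\Phi^{(k)}_{\mathcal E}(A)$ need not commute with $V^{\otimes k}$, so $\Phi^{(k)}_{\mathcal E}$ is not a combination of permutation/Weingarten maps, whose outputs all do.
\item ``Cancelling the perturbation by a unitary from the design's complement'' is not a defined construction.
\end{itemize}
What your plan never supplies is a compensating $\mathcal F$ built from circuits of the target depth.

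\textbf{The paper's idea: reweight within the support of $\nu$.} Let $F(U)$ be the realified $\operatorname{vec}R_k(U)$, $\Delta=\overline F_{\mathrm H}-\overline F_\nu$, and $C_\nu$ the covariance of $F$ under $\nu$. Set $\omega(U)=1+\Delta^{\mathsf T}C_\nu^{-1}(F(U)-\overline F_\nu)$. Normalization and exactness are then automatic, since $\mathbb E_\nu[\omega F]=\overline F_\nu+C_\nu C_\nu^{-1}\Delta=\overline F_{\mathrm H}$. Once $\omega\in[3/4,5/4]$, one has $\omega\nu=\tfrac34\nu+\tfrac14\nu'$ with $\nu'$ a probability measure on the same circuits. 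This is exactly your mixing picture, with $\mathcal F$ taken inside the support of $\nu$.

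The real work is positivity (Lemma~\ref{lem:exactification}), which uses three facts:
\begin{itemize}
\item Schur orthogonality gives $C_{\mathrm H}\succeq N^{-1}I_E$, with $N=d^{2k}$, on the span $E$ of the centered features $z(U)=F(U)-\overline F_{\mathrm H}$.
\item Haar invariance makes the leverage $z(U)^{\mathsf T}C_{\mathrm H}^{-1}z(U)=\dim E\le 2N^2$ independent of $U$.
\item $R_k\otimes R_k$ and $R_k\otimes\overline{R_k}$ are reorderings of $R_{2k}$, so the order-$2k$ TPE error controls $C_\nu-C_{\mathrm H}$ and yields $C_\nu\succeq C_{\mathrm H}/2$.
\end{itemize}
Together these give $|\omega-1|\le 8N^2\lambda_k(\nu)\le 1/4$ whenever $\lambda_k(\nu),\lambda_{2k}(\nu)\le 1/(32d^{4k})$.

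Your plan tracks only order-$k$ accuracy, and that cannot suffice for any reweighting. For $k=1$, replace one element $P_1$ of the uniform Pauli ensemble by $P_1e^{\mathrm{i}tX_1}$ with small $t\neq0$. The TPE error stays $\cO(t)$, yet $P^{(1)}_{\mathrm H}$ now lies outside the affine hull of the $d^2$ features $R_1(U)$, so no weights are exact. The order-$2k$ covariance condition is what rules out such degenerate supports.

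\textbf{Depth accounting.} Your spurious $k^{1/\delta}$ comes from plugging $\epsilon=2^{-\Theta(nk)}$ into the glued bound of Theorem~\ref{thm:approximate-design-informal}. At that accuracy the patch size $\Theta(\log(n\epsilon^{-1}))$ exceeds $n$, so no gluing occurs, and the loss is only an artifact of the loose stated bound.

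The paper instead uses a single global $n$-qubit construction (Theorems~\ref{thm:fd-small-design} and~\ref{thm:a2a-small-design}) as a $2k$-design with $\epsilon_\star=1/(64d^{5k})$. This is automatically also a $k$-design with the same error. Fact~\ref{fact:TPE-multiplicative-comparison} then gives $\lambda_r\le 2d^{r/2}\epsilon_\star\le 1/(32d^{4k})$ for $r\in\{k,2k\}$.

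The global depth $\cO_\delta\bigl(n^{1/\delta}(k+\log(\epsilon^{-1})/n)\log^2k\bigr)$ divides $\log\epsilon^{-1}$ by $n$. So $\log\epsilon_\star^{-1}=\cO(nk)$ costs only $\cO(k)$ extra repetitions, giving $\cO_\delta(n^{1/\delta}k\log^2k)$. The all-to-all analogue is $\cO(k\log k(\log n+\log k))$, where your substitution already gave the right answer. Reweighting leaves the circuits unchanged, so depth and ancilla-freeness carry over.
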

	
	Except for specific low-order exact designs~\cite{Cleve2016ExactDesign, Zhu2017MultiqubitClifford, Zhu2016FailsGracefully, Du2026ComplexityDrivenTransition}, prior state-of-the-art constructions of general exact unitary $k$-designs incur a two-qubit gate count that scales exponentially with both $k$ and $n$~\cite{Nakata2021QuantumCircuitsExactDesigns}. 
	In contrast, Theorem~\ref{thm:exact-design-informal} achieves exact designs using ancilla-free circuits of polynomial depth and size, yielding an exponential reduction in total gate count.
	This is achieved via a general exactification lemma (Lemma~\ref{lem:exactification-informal}), which lifts any sufficiently accurate approximate design to an exact one by suitably reweighting its classical probability distribution. 
	Moreover, this depth scaling is optimal in the system size $n$ (via standard light-cone arguments~\cite{Cleve2016ExactDesign}) and near-optimal in the design order $k$~\cite{Cui2025UnitaryDesignsOptimal}.
	We emphasize, however, that evaluating or sampling from this reweighted classical distribution currently demands exponential classical computation time.
	
	Because an exact unitary $k$-design perfectly replicates the Haar average for any operator polynomial in $U$ and $U^{\dagger}$ up to degree $k$, it also forms an exact strong unitary $k$-design~\cite{Schuster2025StrongUnitaries} that remains indistinguishable from Haar-random unitaries under queries to $U$, $U^{\dagger}$, $U^{*}$, and $U^T$. 
	Consequently, Theorem~\ref{thm:exact-design-informal} yields ancilla-free strong unitary designs that match state-of-the-art constructions across both finite-dimensional~\cite{Folkertsma2026ArtsCrafts} and all-to-all~\cite{Parelladilme2026StrongUnitaryOptimalSpace} architectures, while simultaneously achieving zero approximation error.
	
	\begin{figure*}[t]
		\centering
		\includegraphics[width=0.85\linewidth]{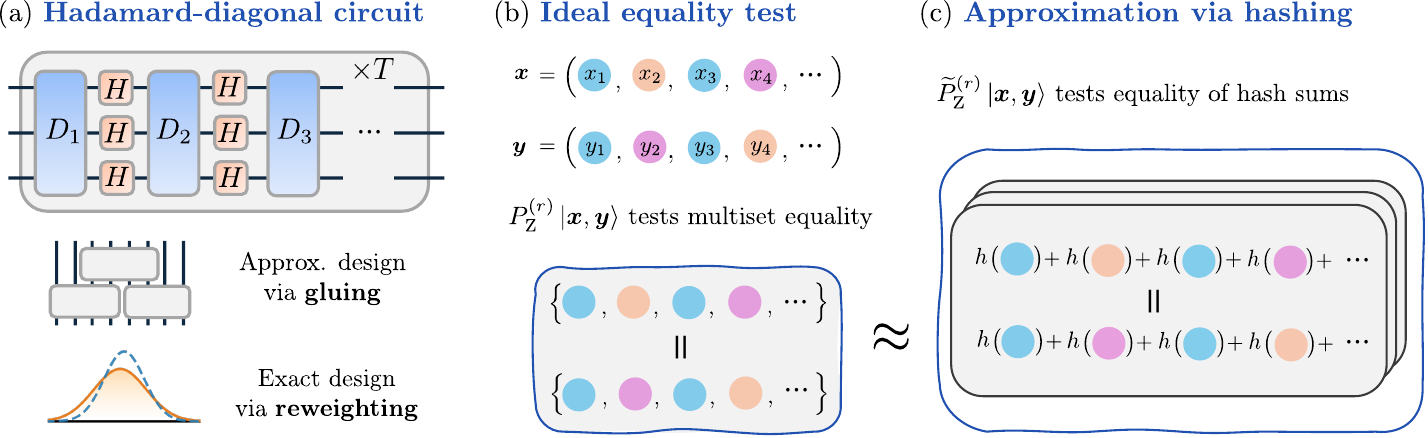}
		
		\caption{
			Overview of the constructions.
			(a) A Hadamard–diagonal block consists of three independently sampled diagonal unitaries $D_1,D_2,D_3$ interleaved with two Hadamard layers. 
			Repeated applications of this block further reduce its approximation error. 
			Gluing such local ensembles yields low-depth approximate designs, whereas reweighting the sampling distribution of a sufficiently accurate approximate design lifts it to an exact design without modifying the underlying circuits.
			(b) The $r$-th moment $P_{\rm Z}^{(r)}$ of the ideal diagonal ensemble performs an exact multiset-equality test, retaining $\ket{\bm{x},\bm{y}}$ if and only if the two tuples contain the same multiset of labels.
			(c) We approximate this ideal test using multiple randomized hash functions $h_j$. Equal multisets pass every hash-sum test, whereas distinct multisets pass all tests only with small probability.
		}
		\label{fig:summary}
	\end{figure*}
	
	At a high level, our construction starts from the Hadamard--diagonal ensemble (Fig.~\ref{fig:summary}a)~\cite{Nakata2017TimeIndependentHamiltonian}. 
	Its ideal diagonal unitaries are exponentially costly, but their moments admit a simple interpretation as multiset-equality tests (Fig.~\ref{fig:summary}b).
	We approximate these tests using randomized hash-sum tests (Fig.~\ref{fig:summary}c), which can be implemented without ancillae through \emph{catalytic computation}. The underlying linear hashing primitives are tailored to the circuit architecture using either finite-field arithmetic or linear codes.
	Repeating and gluing these ensembles then produces the low-depth approximate designs (Fig.~\ref{fig:summary}a). 
	Finally, a classical reweighting of the sampling distribution lifts this construction to an exact design without modifying the underlying quantum circuits (Fig.~\ref{fig:summary}a).

	\emph{Hadamard--diagonal ensemble}---For an $m$-qubit system, the Hadamard--diagonal ensemble $\nu_{\mathrm{HD}}$ interleaves random diagonal unitaries with layers of Hadamard gates:
	\begin{equation}
		U_{\mathrm{HD}} = D_3H^{\otimes m}D_2H^{\otimes m}D_1.
		\label{eq:HD-block}
	\end{equation}
	Here, the operators $D_1, D_2,$ and $D_3$ are drawn independently from the random diagonal ensemble $\cD_{\mathrm Z}$, where each sample takes the form $D = \sum_{\xi\in\{0,1\}^m} e^{i\theta_\xi} \ketbra{\xi}$, with the phases $\theta_\xi$ drawn i.i.d.\ uniformly from $[0,2\pi)$.

	We first establish a substantially sharper moment bound for this ensemble than previously known. To quantify this, we analyze its tensor-product-expander (TPE) error $\lambda_r(\nu_{\mathrm{HD}})$, which measures the gap $\lambda_r(\nu) \coloneqq \| M_r(\nu)-P_{\mathrm H}^{(r)} \|_{\mathrm{op}}$ between the ensemble's $r$-th moment operator $M_r(\nu) \coloneqq \mathbb E_{U\sim\nu} [U^{\otimes r}\otimes\overline U^{\otimes r}]$ and the ideal Haar moment $P_{\mathrm H}^{(r)}$.
	This quantity directly controls the multiplicative design error, up to a dimension-dependent factor.
	We prove that
	\begin{equation}\label{eq:ideal-fourier-gap}
		\lambda_r(\nu_{\mathrm{HD}}) \le \frac{9r^4}{2^m}
	\end{equation}
	for any moment order up to $r = 2^{\cO(m)}$. 
	This bound exponentially improves the previous factorial dependence on $r$~\cite{Nakata2017TimeIndependentHamiltonian}, a reduction that is crucial for minimizing the resource requirements of high-order designs.
	
	However, because the ideal ensemble $\cD_{\mathrm Z}$ assigns an independent phase to each of the $2^m$ computational-basis states $\ket{\xi}$, implementing it generally incurs an exponential gate cost. To overcome this complexity barrier, we replace the ideal ensemble $\cD_{\mathrm Z}$ with a low-depth approximation $\cD_m$ (introduced below). 
	Fortunately, the moment gap remains robust under such approximation. Let $P_{\mathrm Z}^{(r)} := M_r(\cD_{\mathrm Z})$ and $\widetilde P_{\mathrm Z}^{(r)} := M_r(\cD_m)$ denote the moment operators of the ideal and approximate diagonal ensembles, respectively. Define their approximation error as $\epsilon_{\mathrm{diag}}(r) := \| \widetilde P_{\mathrm Z}^{(r)} - P_{\mathrm Z}^{(r)} \|_{\mathrm{op}}$.
	Let $\nu_{\mathrm{HD}}^{(m)}$ denote the ensemble obtained from Eq.~\eqref{eq:HD-block} by drawing $D_1,D_2,D_3$ independently from $\cD_m$, it then follows that 
	\begin{equation}
		\lambda_r\!\left(\nu_{\mathrm{HD}}^{(m)}\right)
		\leq
		\frac{9r^4}{2^m}
		+
		3\epsilon_{\mathrm{diag}}(r).
		\label{eq:HD-master-gap}
	\end{equation}
	The detailed proofs are given in Appendix~\ref{app:spectral-Hadamard-diagonal}.
	We now proceed to construct low-depth realizations of the approximate ensemble $\cD_m$.

	\emph{Approximation via feature extraction}---The operator $P_{\mathrm Z}^{(r)}$ acts as a equality test: for any $r$-fold basis state $\ket{\bm{x}, \bm{y}}$, where $\bm{x}=(x_1,\ldots,x_r)$ and $\bm{y}=(y_1,\ldots,y_r)$ with $x_i, y_i \in \{0,1\}^m$, it preserves the state if and only if $\bm{x}$ and $\bm{y}$ comprise the exact same multiset of basis labels (Fig.~\ref{fig:summary}b).
	Our key insight is to approximate this exact test using a much smaller set of $s = \Theta(m)$ Boolean features.
	Specifically, we construct diagonal unitaries of the form
	\begin{equation}
		D|\xi\rangle
		=
		\exp\left(
		\mathrm{i}\sum_{j=1}^{s}\theta_jh_j(\xi)
		\right)|\xi\rangle,
		\label{eq:hashed-diagonal}
	\end{equation}
	where the phases $\theta_j$ are drawn independently and uniformly from $[0,2\pi)$, and $h_j:\{0,1\}^m\rightarrow\{0,1\}$ are carefully engineered hash functions. 
	Averaging over these random angles retains only the states where the aggregate sum of each feature matches perfectly between $\bm{x}$ and $\bm{y}$, meaning $\sum_{i=1}^r (h_j(x_i) - h_j(y_i)) = 0$ for all $j$ (Fig.~\ref{fig:summary}c). 
	Consequently, the approximation problem reduces to constructing efficiently implementable functions $h_j$ such that the mismatch between the multisets of $\bm{x}$ and $\bm{y}$ is detected (i.e., yields a nonzero sum for at least one feature) with high probability.
	
	Evaluating these randomized hash functions requires temporary memory for intermediate results. To maintain an ancilla-free implementation, we employ \emph{catalytic computation}. 
	Rather than allocating fresh ancillae, we borrow inactive system qubits as dirty workspace and restore them exactly after the computation, irrespective of their initial states.
	Specifically, we partition the system qubits into a constant number of registers, $Z_1, \ldots, Z_{r_0}$. 
	The first two, $X := Z_1$ and $Y := Z_2$, each of size $s = \Theta(m)$, serve as the active registers for the hashing computation.
	The remaining registers $Z_a$ (for $a \ge 3$), each of size at most $s$, function primarily as dirty ancillas. 
	To systematically manage these operations, we develop a dedicated compiler for the catalytic execution and composition of the necessary arithmetic (detailed in Appendix~\ref{app:finite-dimensional-depth}).
	The feature extraction then proceeds in three reversible steps.
	
	First, a concentration step accumulates the data from all $Z_a$ registers into $Y$ via randomized linear maps $L_a$, 
	\begin{equation}\label{eq:main-first-step}
		Y \gets Y \oplus \bigoplus_{a\neq 2} L_a(Z_a).
	\end{equation}
	Second, a hashing step maps this accumulated data into the $X$ register via another random linear map $M$ and a shift $\beta$, 
	\begin{equation}\label{eq:main-second-step}
		X \gets X \oplus M(Y) \oplus \beta.
	\end{equation}
	Together, these two steps map a randomized hash of the full $m$-bit label into the $s$-qubit register $X$.
	
	To better distinguish different multisets, we augment the linear hash functions with a quadratic layer.
	Specifically, we update each bit of $Y$ by adding $w=\Theta(\log k)$ random pairwise products of bits from $X$
	\begin{equation}\label{eq:main-third-step}
		Y_j \gets Y_j \oplus \bigoplus_{l=1}^w \gamma_{j,l} X_{\pi_l(j)} X_{\sigma_l(j)}
	\end{equation}
	using random permutations $\pi_l, \sigma_l$ and random coefficients $\gamma_{j,l}$. 
	As shown in the Appendices, these nonlinear features successfully detect any non-identical multiset of order up to $2k$ with high probability. 
	Finally, we apply random phases $\theta_j$ to the $s$ qubits of $Y$ and reverse steps~\eqref{eq:main-first-step}--\eqref{eq:main-third-step} to uncompute the hash and restore the computational basis.
	
	\emph{Architecture-tailored implementations}---The specific linear primitives ($L_a$ and $M$) are tailored to the underlying architecture to minimize circuit depth. 
	On finite-dimensional architectures, those hashing maps are implemented by random polynomials over $\mathbb F_{2^s}$,
	\begin{equation}
		L_a(z) = \lambda_a z, \qquad M(z) = \mu_0 z + \mu_1 z^2 + \mu_2 z^4,
	\end{equation}
	which are $\mathbb{F}_2$-linear functions because the field characteristic of $\mathbb F_{2^s}$ is $2$. 
	While low-depth implementations of such field arithmetic were previously introduced in Ref.~\cite{Du2026ComplexityDrivenTransition}, they required $\cO(m)$ ancillary qubits. Here, our catalytic compiler eliminates this spatial overhead while achieving the same depth scaling of $\cO(m^{1/\delta})$.
	
	On all-to-all architectures, we replace the finite-field arithmetic with a constant-rate, constant-distance binary linear error-correcting code~\cite{Spielman1996LinearTimeCodes}, denoted by $C$. 
	The linear maps $L_a$ and $M$ are constructed coordinate-wise by randomly sampling the encoded input. 
	Specifically, for a map $H \in \{L_a, M\}$ acting on an input $z$, its $j$-th output bit is given by
	\begin{equation}
		[H(z)]_j = \epsilon_j [C(z)]_{I_j},
	\end{equation}
	where $I_j$ is a uniformly sampled codeword index and $\epsilon_j \in \mathbb{F}_2$ is an independent random bit. 
	The code distance ensures that any nonzero input difference flips a constant fraction of the codeword, enabling the subsequent quadratic layer to detect the difference with high probability.
	The classical encoder for $C$ can be realized as an $\cO(\log m)$-depth, $\cO(m)$-size XOR circuit~\cite{Spielman1996LinearTimeCodes}. 
	We develop a catalytic compiler to implement such XOR networks using solely dirty ancillae while preserving the depth scaling (Lemma~\ref{lem:a2a-catalytic}). 
	Consequently, each such hash can be implemented in $\cO(\log m)$ depth without introducing additional spatial overhead.
	
	In both architectures, these hash families suppress the approximation error to
	\begin{equation}
		\epsilon_{\mathrm{diag}}(r) \leq \exp\left(-\Omega\left(\frac{m}{\log k}\right)\right)
	\end{equation}
	for any $m = \Omega(\log k)$. 
	Substituting this into Eq.~\eqref{eq:HD-master-gap} guarantees a small TPE error for the resulting Hadamard--diagonal block. 
	Repeating this block $T = \tilO(k)$ suppresses this error to the required multiplicative accuracy, while gluing overlapping blocks of size $m=\Theta(\log(nk))$ yields the global designs of Theorem~\ref{thm:approximate-design-informal}. 
	The complementary regime $m=\cO(\log k)$ is handled by random Pauli rotations~\cite{baer2026random}.
	Detailed proofs are provided in Appendices~\ref{app:finite-dimensional} and \ref{app:a2a}.
	
	\emph{Design exactification}---We now show how to construct low-depth exact designs. 
	Our key result shows that any sufficiently accurate approximate design can be lifted to an exact one solely by reweighting its classical sampling distribution, without modifying the underlying quantum circuits.
	
	\begin{lemma}[Design exactification, informal]
		\label{lem:exactification-informal}
		Let $\nu$ be an $n$-qubit unitary ensemble, and let $d = 2^n$. If its TPE errors at both orders $k$ and $2k$ satisfy
		\begin{equation}\label{eq:exactification-threshold}
			\max\{\lambda_k(\nu), \lambda_{2k}(\nu)\} \leq \frac{1}{32d^{4k}},
		\end{equation}
		then there exists a positive weight function $\omega(U) \in [3/4, 5/4]$ such that the reweighted probability measure $d\nu_{\mathrm{ex}}(U) := \omega(U)d\nu(U)$ forms an exact unitary $k$-design. 
	\end{lemma}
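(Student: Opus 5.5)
We look for the weight as a small linear correction of the constant function, $\omega(U) = 1 + \langle A, U^{\otimes k}\otimes \overline U^{\otimes k}\rangle$, where $\langle X,Y\rangle=\mathrm{Re}\,\mathrm{tr}(X^\dagger Y)$ and $A$ is an operator on $(\mathbb C^d)^{\otimes 2k}$ to be determined. This ansatz is natural because the exact-design condition, $\mathbb E_\nu[\omega(U)\,U^{\otimes k}\otimes\overline U^{\otimes k}] = P_{\mathrm H}^{(k)}$, is linear in $\omega$. The normalization $\mathbb E_\nu[\omega]=1$ is included automatically: it is the $k=0$ component, or equivalently it follows by contracting with the identity permutation, since $\mathrm{tr}$ of $U^{\otimes k}\otimes\overline U^{\otimes k}$ against a fixed pairing vector equals $d^k$. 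Write $\Delta := M_k(\nu)-P_{\mathrm H}^{(k)}$, so that $\|\Delta\|_{\mathrm{op}}=\lambda_k(\nu)$. We must then solve
\begin{equation}
\mathbb E_\nu\!\left[\langle A, V_U\rangle\, V_U\right] = -\Delta, \qquad V_U:=U^{\otimes k}\otimes\overline U^{\otimes k}.
\end{equation}
The left-hand side is a linear map $\mathcal G_\nu: A\mapsto \mathbb E_\nu[\langle A,V_U\rangle V_U]$. This is a Gram-type operator, and its entries are degree-$(2k,2k)$ moments of $U$, so they are controlled by $M_{2k}(\nu)$. One has to handle the real-part issue: because $V_U$ is closed under the appropriate conjugation, we can restrict to $A$ in the real span and ensure $\omega$ is real.

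The key step compares $\mathcal G_\nu$ to the Haar Gram operator $\mathcal G_{\mathrm H}$. Because $\langle A,V_U\rangle V_U$ is a linear rearrangement of $V_U\otimes \overline{V_U}\simeq U^{\otimes 2k}\otimes \overline U^{\otimes 2k}$, we get
\[
\|\mathcal G_\nu-\mathcal G_{\mathrm H}\|\le \mathrm{poly}(d^{k})\,\lambda_{2k}(\nu).
\]
Here the dimension factor comes from converting the operator norm on the reshuffled space into a norm on superoperators; a Frobenius-type bound costs at most $d^{2k}$. Next we solve on the relevant subspace $\mathcal S=\mathrm{span}\{V_U\}$, which is the range of $\mathcal G_{\mathrm H}$. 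Note that $\Delta$ lies in $\mathcal S$: $M_k(\nu)$ does, and $P_{\mathrm H}^{(k)}$ is a Haar average of the $V_U$. The main obstacle is a lower bound on the smallest nonzero eigenvalue of $\mathcal G_{\mathrm H}$ restricted to $\mathcal S$, together with the requirement that $\mathcal G_\nu$ also be invertible on $\mathcal S$ with range containing $\Delta$.

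To get that eigenvalue bound, I would decompose $\mathcal S$ into irreducible representations of $U(d)$ appearing in $V_U$. By Schur orthogonality, $\mathcal G_{\mathrm H}$ acts on the isotypic component of irrep $\lambda$ as the identity scaled by $1/\dim\lambda$. Since $\dim\lambda\le d^{2k}$, this gives $\mathcal G_{\mathrm H}\succeq d^{-2k}$ on $\mathcal S$. With the threshold $\lambda_{2k}\le 1/(32d^{4k})$, the perturbation is then at most a small fraction of $d^{-2k}$, so a Neumann series gives $\mathcal G_\nu|_{\mathcal S}$ invertible with inverse norm at most about $2d^{2k}$. Exact matching of spans, meaning that $\mathcal G_\nu$ maps $\mathcal S$ into $\mathcal S$, holds because $V_U\in\mathcal S$ for every $U$.

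Finally we bound the weight. We have $|\omega(U)-1|\le \|A\|_2\|V_U\|_2$ with $\|V_U\|_2=d^{k}$, and $\|A\|_2\le 2d^{2k}\cdot d^{k}\lambda_k$, where the factor $d^k$ converts the operator norm of $\Delta$ to its Frobenius norm. This gives $|\omega-1|\le 2d^{4k}\lambda_k\le 1/16$, which is comfortably inside $[3/4,5/4]$. The constants are loose, but the scalings $d^{4k}$ match the threshold. This confirms positivity, and hence that $\omega\,d\nu$ is a probability measure forming an exact $k$-design.
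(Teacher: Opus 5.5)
Your proposal is correct and is essentially the paper's argument: in both, $\omega$ is one plus a real linear functional of $U^{\otimes k}\otimes\overline U^{\otimes k}$, obtained by inverting a second-moment operator whose deviation from Haar is at most $d^{2k}\lambda_{2k}(\nu)$ (a Frobenius bound on the realigned order-$2k$ moment) and whose Haar version is at least $d^{-2k}$ on the real span by Schur orthogonality. (There the Haar eigenvalues are in fact $m_\lambda/\dim\lambda$ with multiplicity $m_\lambda\geq 1$, not $1/\dim\lambda$, which only helps; restricting to the real span is legitimate because $\operatorname{tr}(V_W^\dagger V_U)=|\operatorname{tr}(W^\dagger U)|^{2k}$ is real.) The differences are bookkeeping only: the paper centers the features at the Haar mean, so normalization is automatic, the covariance comparison picks up an extra $d^{2k}\lambda_k(\nu)^2$ term, and $|\omega-1|$ is bounded through a constant-leverage identity (the Haar-whitened norm of the centered feature is the same for every $U$); you instead use the uncentered Gram operator, recover normalization by contracting with the pairing vector, and bound $|\omega-1|\leq\|A\|_{\mathrm F}\|V_U\|_{\mathrm F}$ directly, and by uniqueness of the solution on $\mathcal S$ the two weights actually coincide.
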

	
	We apply this exactification protocol to our previous constructions. 
	Unlike the approximate-design construction, which glues local patches, we apply the Hadamard--diagonal blocks globally across the entire $n$-qubit system. 
	Repeating $T = \tilO(k)$ independent blocks suffices to reach the exactification threshold in Eq.~\eqref{eq:exactification-threshold}, thereby establishing Theorem~\ref{thm:exact-design-informal}. 
	Detailed proofs are provided in Appendix~\ref{app:exactification}.
	
	\emph{Applications}---We now discuss several applications of our results. 
	The approximate $3$-designs in Theorem~\ref{thm:approximate-design-informal} can be directly applied to classical shadow tomography of quantum states~\cite{Huang2020Predicting} and quantum processes~\cite{Helsen2023GateSetProperty, Li2025QueryOptimal}. 
	The resulting protocols retain the prediction guarantees of global classical shadows~\cite{Schuster2024RandomUnitaries} while simultaneously achieving optimal circuit depth and zero spatial overhead. 
	Moreover, our constructions of higher-order designs enable thrifty shadow estimation, which significantly reduces the number of required experimental measurement settings~\cite{Helsen2023Thrifty}.
	
	Furthermore, approximate $3$-designs serve as universal readouts in quantum metrology, capable of converting a constant fraction of the pure-state quantum Fisher information to classical Fisher information~\cite{Zhou2026RandomizedMetrology, Du2026ComplexityDrivenTransition}.
	Ref.~\cite{Du2026ComplexityDrivenTransition} recently identified a sharp complexity-driven transition in this process: as the measurement circuit depth reaches the threshold required to implement an approximate $3$-design, the extractable classical Fisher information shifts abruptly from an exponentially small fraction to a constant fraction of the quantum Fisher information.
	While this metrological transition was previously established only for measurement circuits that rely on clean ancillary qubits, our ancilla-free constructions demonstrate that this transition persists in closed quantum systems across both finite-dimensional and all-to-all architectures.
	
	Randomized measurement protocols that target nonlinear properties often require both high design orders and many repetitions, so even minute design errors can accumulate across rounds.
	Exact designs eliminate this source of systematic error, making our low-depth constructions natural primitives for estimating nonlinear properties, ranging from the state purity $\tr(\rho^2)$~\cite{Brydges2019Probing} to generalized higher-order expectation values $\tr(O\rho^t)$~\cite{Zhou2024Hybrid, Liu2026AuxiliaryFree, Du2026OptimalRandomized, Li2026SingleMeasurement} and partial-transpose moments~\cite{Zhou2020SingleCopyNegativity}, essential for applications such as virtual cooling~\cite{Cotler2019VirtualCooling, Huggins2021VirtualDistillation, Koczor2021ExponentialSuppression} and entanglement detection and quantification~\cite{Elben2020MixedStateEntanglement, Liu2022PermutationMoments, Zhang2024PurityDetection}. 
	Furthermore, exact unitary $2t$-designs enable $t$-th-order randomized benchmarking, which probes higher-order properties of quantum noise beyond the average gate fidelity~\cite{Nakata2021QuantumCircuitsExactDesigns}.

	\emph{Discussion}---We have constructed ancilla-free approximate and exact unitary designs that exhibit optimal depth scaling with system size $n$ and near-optimal scaling with design order $k$. An important next step is to investigate whether high-order exact designs can be simultaneously quantumly efficient (low circuit depth) and classically efficient (easy to sample). 
	Furthermore, it remains an open question whether circuit depth can be further improved to decouple the scaling between $n$ and $k$. 
	Recently, a construction achieving a depth of $\cO(\log (n\epsilon^{-1}) + k \log k)$ for multiplicative $\epsilon$-approximate $k$-designs in the regime $k=\cO(n)$ was introduced in Ref.~\cite{Liu2026QuadruplyOptimalUnitary}. Generalizing this decoupled scaling to higher-dimensional architectures, all-to-all connectivity, and broader parameter regimes remains a promising avenue for future research.
	
	Beyond optimizing circuit depth, an important future direction is to address the practical resource overhead of implementing these random unitaries, for example, by minimizing the total count of non-Clifford magic gates~\cite{Haferkamp2023IndependentNonClifford, Zhang2026MagicAugmented, Leone2026NonCliffordCost, Bittle2026AdaptivelySecure, Liu2026QuadruplyOptimalUnitary}. 
	Another critical direction is determining the optimal depth at which more physically natural architectures, such as the canonical random brickwork circuits~\cite{Brandao2016LocalRandomCircuits, Nahum2017EntanglementGrowth, HunterJones2019StatisticalMechanics, Haferkamp2022RandomQuantumCircuits, Heinrich2026Anticoncentration}, form unitary designs, which is of fundamental importance to both many-body quantum physics and quantum information theory.
	
	\section*{Acknowledgments}
	We thank Guoding Liu for helpful discussions.
	This work was supported by the National Natural Science Foundation of China Grants No.~12575023,  the Innovation Program for Quantum Science and Technology Grant No.~2021ZD0300804,  No.~2021ZD0300702, the CCF-QuantumCtek Superconducting Quantum Computing Special Cooperation Program (Grant No.~CCF-QC2025005), and the Turing AI Institute of Nanjing.
	
	\bibliographystyle{ref_style}
	\bibliography{ref}
	
	\clearpage
	\onecolumngrid
	
	\clearpage
	\appendix
	
	\begin{center}
		{\large \textbf{Supplementary Material}}
	\end{center}
	
	\section*{Contents} 
	
	\makeatletter
	\let\oldaddcontentsline\addcontentsline
	\renewcommand{\addcontentsline}[3]{%
		\def\target{#1}%
		\def\toclabel{toc}%
		\ifx\target\toclabel
		\oldaddcontentsline{atoc}{atoc#2}{#3}%
		\else
		\oldaddcontentsline{#1}{#2}{#3}%
		\fi
	}
	
	\@starttoc{atoc}
	\makeatother
	
	\vspace{1cm}
	
	\renewcommand{\thetheorem}{S\arabic{theorem}}
	\renewcommand{\thefact}{S\arabic{fact}}
	\renewcommand{\thelemma}{S\arabic{lemma}}
	\renewcommand{\thedefinition}{S\arabic{definition}}
	\renewcommand{\theproposition}{S\arabic{proposition}}
	\renewcommand{\thecorollary}{S\arabic{corollary}}
	\renewcommand{\theclaim}{S\arabic{claim}}
	\renewcommand{\thepage}{S\arabic{page}}
	\renewcommand{\thefigure}{S\arabic{figure}}
	
	\renewcommand{\theHtheorem}{S\arabic{theorem}}
	\renewcommand{\theHfact}{S\arabic{fact}}
	\renewcommand{\theHlemma}{S\arabic{lemma}}
	\renewcommand{\theHdefinition}{S\arabic{definition}}
	\renewcommand{\theHproposition}{S\arabic{proposition}}
	\renewcommand{\theHcorollary}{S\arabic{corollary}}
	\renewcommand{\theHclaim}{S\arabic{claim}}
	\renewcommand{\theHfigure}{S\arabic{figure}}
	
	\setcounter{theorem}{0}
	\setcounter{fact}{0}
	\setcounter{lemma}{0}
	\setcounter{equation}{0}
	\setcounter{definition}{0}
	\setcounter{proposition}{0}
	\setcounter{claim}{0}
	\setcounter{corollary}{0}
	\setcounter{figure}{0}
	\setcounter{page}{1}
	\setcounter{section}{0}
	\setcounter{equation}{0}
	
	\section{Comparison with previous work}~\label{app:comparison}
	Tables~\ref{tab:multiplicative-design-comparison} and \ref{tab:exact-design-comparison} provide a detailed comparison of our multiplicative-error approximate and exact designs against state-of-the-art constructions.
	
	\begin{table}[htbp!]
		\centering
		\small
		\setlength{\tabcolsep}{6pt}
		\renewcommand{\arraystretch}{1.12}
		\begingroup
		\newcommand{\designname}[1]{\parbox[t]{6.75cm}{\raggedright #1\strut}}
		\newcommand{\designdepth}[1]{\parbox[t]{4.55cm}{\raggedright #1\strut}}
		\newcommand{\designancilla}[1]{\parbox[t]{5.05cm}{\raggedright #1\strut}}
		\begin{tabular}{@{}lll@{}}
			\toprule
			\designname{Construction} &
			\designdepth{Circuit depth} &
			\designancilla{Extra ancillas} \\
			\midrule
			\multicolumn{3}{@{}l}{\textit{$\delta$-dimensional architecture}} \\
			\designname{1D brickwork random circuits~\cite{Brandao2016LocalRandomCircuits,Chen2025Incompressibility}}
			& \designdepth{$\tilO\!\left(nk+\log\epsilon^{-1}\right)$}
			& \designancilla{$0$} \\
			\designname{Blocked 1D circuits~\cite{Schuster2024RandomUnitaries}}
			& \designdepth{$\tilO(k)\,\log(n\epsilon^{-1})$}
			& \designancilla{$0$} \\
			\designname{Almost quadruply optimal 1D circuits~\cite{Liu2026QuadruplyOptimalUnitary}}
			& \designdepth{$\cO\left(k\log k + \log(n\epsilon^{-1})\right)$}
			& \designancilla{$0$} \\
			\designname{Finite-dimensional LRFC circuits~\cite{Du2026ComplexityDrivenTransition}}
			& \designdepth{$\tilO(
				k^{1+1/\delta})[\log(n\epsilon^{-1})]^{1/\delta}$}
			& \designancilla{$\cO(kn)$} \\
			\designname{\textbf{This work}}
			& \designdepth{$\tilO(
				k)[\log(n\epsilon^{-1})]^{1/\delta}$}
			& \designancilla{$\mathbf{0}$} \\
			\midrule
			\multicolumn{3}{@{}l}{\textit{All-to-all architecture}} \\
			\designname{Blocked Clifford circuits ($k\leq 3$)~\cite{Schuster2024RandomUnitaries}}
			& \designdepth{$\cO\!\left(\log\log(n\epsilon^{-1})\right)$}
			& \designancilla{$\cO\!\left(n\log(n\epsilon^{-1})\right)$} \\
			\designname{Blocked LRFC circuits (low depth)~\cite{Cui2025UnitaryDesignsOptimal}}
			& \designdepth{$\tilO(k)\log\log(n\epsilon^{-1})$}
			& \designancilla{$kn\cdot\tilO\left(\log\log(n\epsilon^{-1})\right)$} \\
			\designname{Blocked LRFC circuits (low space)~\cite{Cui2025UnitaryDesignsOptimal}}
			& \designdepth{$\tilO(k^2) \log\log(n\epsilon^{-1})$}
			& \designancilla{$n\cdot \tilO(\log\log(nk\epsilon^{-1}))$} \\
			\designname{Magic-augmented Clifford circuits~\cite{Zhang2026MagicAugmented}}
			& \designdepth{$2^{\tilO(k)} + \cO(\log\log(n\epsilon^{-1}))$}
			& \designancilla{$\cO\left(n k^2 + n \log(n\epsilon^{-1})\right)$} \\
			\designname{\textbf{This work}}
			& \designdepth{$\tilO(k) \log\log(n\epsilon^{-1})$}
			& \designancilla{$\mathbf{0}$} \\
			\bottomrule
		\end{tabular}
		\caption{Comparison of circuit depth and ancilla count for $n$-qubit multiplicative $\epsilon$-approximate unitary $k$-designs. The 1D construction in Ref.~\cite{Liu2026QuadruplyOptimalUnitary} assumes $k=\cO(n)$, whereas the magic-augmented all-to-all construction assumes $k=o(\sqrt{n})$~\cite{Zhang2026MagicAugmented}. Other prior constructions assume $k = 2^{\cO(n)}$.}
		\label{tab:multiplicative-design-comparison}
		\endgroup
	\end{table}

	\begin{table}[htbp!]
		\centering
		\small
		\setlength{\tabcolsep}{3pt}
		\renewcommand{\arraystretch}{1.16}
		\begingroup
		\newcommand{\exactname}[1]{\parbox[t]{5.65cm}{\raggedright #1\strut}}
		\newcommand{\exactorder}[1]{\parbox[t]{2.50cm}{\raggedright #1\strut}}
		\newcommand{\exactdepth}[1]{\parbox[t]{5.20cm}{\raggedright #1\strut}}
		\newcommand{\exactancilla}[1]{\parbox[t]{3.20cm}{\raggedright #1\strut}}
		\begin{tabular}{@{}llll@{}}
			\toprule
			\exactname{Construction} &
			\exactorder{Design order} &
			\exactdepth{Circuit depth} &
			\exactancilla{Extra ancillas} \\
			\midrule
			\multicolumn{4}{@{}l}{\textit{$\delta$-dimensional architecture}} \\
			\exactname{Restricted Clifford circuits~\cite{Du2026ComplexityDrivenTransition}}
			& \exactorder{$k=2$}
			& \exactdepth{$\cO(n^{1/\delta})$}
			& \exactancilla{$\cO(n)$} \\
			\exactname{Clifford circuits~\cite{Maslov2007LinearDepthStabilizer}}
			& \exactorder{$k = 3$}
			& \exactdepth{$\cO(n)$}
			& \exactancilla{$0$} \\
			\exactname{\textbf{This work}}
			& \exactorder{arbitrary $k$}
			& \exactdepth{$\cO(n^{1/\delta}k\log^2 k)$}
			& \exactancilla{$\mathbf{0}$} \\
			\midrule
			\multicolumn{4}{@{}l}{\textit{All-to-all architecture}} \\
			\exactname{Near-linear exact-design circuits~\cite{Cleve2016ExactDesign}}
			& \exactorder{$k=2$}
			& \exactdepth{$\cO(\log n)$}
			& \exactancilla{$\widetilde{\cO}(n)$} \\
			\exactname{Clifford circuits (low depth)~\cite{Jiang2020SpaceDepthTradeoffCNOT}}
			& \exactorder{$k = 3$}
			& \exactdepth{$\cO(\log n)$}
			& \exactancilla{$\cO(n^2/\log^2 n)$} \\
			\exactname{Clifford circuits (no ancillas)~\cite{Jiang2020SpaceDepthTradeoffCNOT}}
			& \exactorder{$k = 3$}
			& \exactdepth{$\cO(n/\log n)$}
			& \exactancilla{$0$} \\
			\exactname{Inductive exact-design circuits~\cite{Nakata2021QuantumCircuitsExactDesigns}}
			& \exactorder{arbitrary $k$}
			& \exactdepth{Not reported; two-qubit gate count
				$\exp[\cO(\sqrt{k}n)]$}
			& \exactancilla{Required} \\
			\exactname{\textbf{This work}}
			& \exactorder{arbitrary $k$}
			& \exactdepth{$\cO(k\log k(\log n+\log k))$}
			& \exactancilla{$\mathbf{0}$} \\
			\bottomrule
		\end{tabular}
		\caption{Comparison of circuit depth and ancilla count for $n$-qubit exact designs.}
		\label{tab:exact-design-comparison}
		\endgroup
	\end{table}

	\section{Preliminaries on random unitaries}
	\label{app:moments_random_unitaries}
	
	In this section, we review essential properties of random unitaries. 
	For an $n$-qubit system, let $d=2^n$ denote the dimension of the Hilbert space. 
	For an integer $r\geq1$, we define the representation $R_r(U)\coloneqq  U^{\otimes r}\otimes\overline U^{\otimes r}$, where $\overline U$ denotes entrywise complex conjugation in the computational basis. 
	By definition, $R_r(UV)=R_r(U)R_r(V)$.

	For a probability measure $\nu$ on $\mathrm{U}(d)$, its $r$-th moment operator is given by $M_r(\nu):= \mathbb E_{U\sim\nu}[R_r(U)]$. 
	By the triangle inequality, 
	\begin{equation}
		\left\|M_r(\nu)\right\|_{\mathrm{op}}
		\leq
		\mathbb E_{U\sim\nu}
		\left\|R_r(U)\right\|_{\mathrm{op}}
		=1.
		\label{eq:app-moment-contraction}
	\end{equation}
	The corresponding Haar moment is $P_{\mathrm H}^{(r)} := \bE_{U\sim \rm H}[R_r(U)]$, where $\rm H$ denotes the Haar measure.
	We can now define tensor-product expanders (TPEs) based on their distance to the Haar moment~\cite{Hastings2008QuantumTensorProduct, Brandao2016LocalRandomCircuits}:
	\begin{definition}[TPE $\lambda$-approximate unitary $r$-design]
		A unitary ensemble $\nu$ forms a TPE $\lambda$-approximate unitary $r$-design if 
		\begin{equation}
			\lambda_r(\nu) \coloneqq \left\|
			M_r(\nu)-P_{\mathrm H}^{(r)} \right\|_{\mathrm{op}}
			\le \lambda.
			\label{eq:app-nonhaar-contraction}
		\end{equation}
	\end{definition}
	
	A TPE can be directly converted into a multiplicative-error approximate design via the following bound:
	
	\begin{fact}[Comparison between TPE and multiplicative errors,
		{\cite[Lemma~4]{Brandao2016LocalRandomCircuits}}]
		\label{fact:TPE-multiplicative-comparison}
		Let $\nu$ be a unitary ensemble on a $d$-dimensional Hilbert space.
		If $\nu$ is a TPE $\lambda$-approximate unitary $r$-design, then it is a multiplicative $\epsilon$-approximate unitary $r$-design with
		\begin{equation}
			\epsilon\leq d^{2r}\lambda.
		\end{equation}
		Conversely, if $\nu$ is a multiplicative $\epsilon$-approximate unitary $r$-design, then it is a TPE $\lambda$-approximate unitary $r$-design with
		\begin{equation}
			\lambda\leq 2d^{r/2}\epsilon.
		\end{equation}
	\end{fact}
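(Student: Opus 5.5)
The plan is to pass between the moment operator $M_r(\nu)$ and the moment channel $\Phi^{(r)}_\nu$ using the identity $\mathrm{vec}(U^{\otimes r}XU^{\dagger\otimes r})=R_r(U)\,\mathrm{vec}(X)$, which holds after a fixed reordering of tensor factors. Under this identity, $M_r(\nu)$ is the matrix of $\Phi^{(r)}_\nu$ acting on $d^r\times d^r$ operators with the Hilbert–Schmidt inner product. I will also use the unnormalized Choi operator $J_\nu=\mathbb E_{U\sim\nu}|U^{\otimes r}\rangle\!\rangle\langle\!\langle U^{\otimes r}|$, where $|A\rangle\!\rangle=\mathrm{vec}(A)$. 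The Choi operator is a realignment (reshuffling) of $M_r(\nu)$, so the two have equal Frobenius norms. Write $\Delta M=M_r(\nu)-P^{(r)}_{\mathrm H}$, $\Delta J=J_\nu-J_{\mathrm H}$ and $\Delta\Phi=\Phi^{(r)}_\nu-\Phi^{(r)}_{\mathrm H}$. The multiplicative condition is then equivalent to $-\epsilon J_{\mathrm H}\preceq \Delta J\preceq \epsilon J_{\mathrm H}$.

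\emph{TPE $\Rightarrow$ multiplicative.} First I compute the spectrum of $J_{\mathrm H}$ via Schur–Weyl duality. Decompose $U^{\otimes r}\cong\bigoplus_\lambda q_\lambda(U)\otimes I_{m_\lambda}$, so that $|U^{\otimes r}\rangle\!\rangle=\bigoplus_\lambda|q_\lambda(U)\rangle\!\rangle\otimes|I_{m_\lambda}\rangle\!\rangle$. Schur orthogonality gives $\mathbb E_{\mathrm H}|q_\lambda\rangle\!\rangle\langle\!\langle q_\mu|=\delta_{\lambda\mu}I/d_\lambda$. Hence
\begin{equation*}
J_{\mathrm H}=\bigoplus_\lambda \frac{1}{d_\lambda}\,I_{d_\lambda^2}\otimes|I_{m_\lambda}\rangle\!\rangle\langle\!\langle I_{m_\lambda}|,
\end{equation*}
whose nonzero eigenvalues are $m_\lambda/d_\lambda\ge 1/d^r$. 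Every $|U^{\otimes r}\rangle\!\rangle$ lies in the support of $J_{\mathrm H}$, so $\Delta J$ is supported there as well. On that support $J_{\mathrm H}\succeq d^{-r}\Pi$, where $\Pi$ is the projector onto the support. It therefore suffices to show $\|\Delta J\|_{\mathrm{op}}\le d^{r}\lambda$. Realignment invariance gives
\begin{equation*}
\|\Delta J\|_{\mathrm{op}}\le\|\Delta J\|_2=\|\Delta M\|_2\le \sqrt{d^{2r}}\,\|\Delta M\|_{\mathrm{op}}=d^r\lambda.
\end{equation*}
Combining the two bounds yields $\pm\Delta J\preceq d^{2r}\lambda\, J_{\mathrm H}$, that is, $\epsilon\le d^{2r}\lambda$.

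\emph{Multiplicative $\Rightarrow$ TPE.} We have $\|\Delta M\|_{\mathrm{op}}=\sup_{\|X\|_2=1}\|\Delta\Phi(X)\|_2$. For $P\succeq0$, the sandwich $-\epsilon\Phi_{\mathrm H}(P)\preceq\Delta\Phi(P)\preceq\epsilon\Phi_{\mathrm H}(P)$ gives
\begin{equation*}
\|\Delta\Phi(P)\|_2\le\|\Delta\Phi(P)\|_1\le\epsilon\,\mathrm{tr}\,\Phi_{\mathrm H}(P)=\epsilon\,\mathrm{tr}P.
\end{equation*}
For Hermitian $X=P-Q$ (Jordan decomposition), we get $\|\Delta\Phi(X)\|_2\le\epsilon\|X\|_1\le\epsilon d^{r/2}\|X\|_2$ by Cauchy–Schwarz on the $d^r$-dimensional space. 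For general $X=A+\mathrm{i}B$ with $A,B$ Hermitian and $\|A\|_2,\|B\|_2\le\|X\|_2$, the triangle inequality gives a factor $2$. Hence $\lambda\le 2d^{r/2}\epsilon$.

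The main obstacle is the first direction, specifically justifying that $\Delta J$ lives inside the support of $J_{\mathrm H}$ and lower-bounding the smallest nonzero eigenvalue of $J_{\mathrm H}$. I will handle both through the Schur–Weyl block form above: it makes the support containment immediate, since each $|U^{\otimes r}\rangle\!\rangle$ has the displayed block structure, and it gives the eigenvalue bound $m_\lambda/d_\lambda\ge 1/d^r$ because $d_\lambda\le d^r$ and $m_\lambda\ge1$. The remaining steps are norm comparisons.
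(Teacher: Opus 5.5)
Your proof is correct, and it reproduces exactly the constants in the statement. The paper gives no argument of its own for this fact; it simply imports it from Lemma~4 of Brand\~ao--Harrow--Horodecki. Your write-up is therefore a self-contained proof along standard lines.

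For the forward direction you use realignment together with the lower bound $m_\lambda/d_\lambda\ge d^{-r}$ on the nonzero spectrum of the Haar Choi operator. This is the same Schur-orthogonality estimate the paper later uses to bound the Haar covariance in Lemma~\ref{lem:exactification}, applied there to $R_k(U)$ rather than to $U^{\otimes r}$. Your support-containment step does not actually need the Schur--Weyl form, although it follows from it as you say: $J_{\mathrm H}$ is a Haar average of $|U^{\otimes r}\rangle\!\rangle\langle\!\langle U^{\otimes r}|$, so its support is the span of all $|U^{\otimes r}\rangle\!\rangle$. That span contains the range of every $J_\nu$.

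For the converse you bound the trace norm of the output on Hermitian inputs, then use $\|X\|_1\le d^{r/2}\|X\|_2$. The factor $2$ is not needed. Since $\Delta\Phi$ is Hermiticity-preserving, for Hermitian $A,B$ you have $\|\Delta\Phi(A+\mathrm{i}B)\|_2^2=\|\Delta\Phi(A)\|_2^2+\|\Delta\Phi(B)\|_2^2$ and $\|A+\mathrm{i}B\|_2^2=\|A\|_2^2+\|B\|_2^2$. Together these give $\lambda\le d^{r/2}\epsilon$, which of course still implies the stated bound.
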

	
	\noindent Consequently, if $\lambda_r(\nu) = 0$, the ensemble $\nu$ forms an exact unitary $r$-design.
	
	For two probability measures $\mu$ and $\nu$ on $\mathrm{U}(d)$, let $\mu*\nu$ denote their convolution, the distribution of the product $UV$ for independent samples $U\sim\mu$ and $V\sim\nu$. 
	Let $\nu^{*T}$ denote the $T$-fold convolution, which represents the distribution of a product of $T$ independent samples drawn from $\nu$. 
	This repetition exponentially suppresses the TPE error~\cite[Eq.~(3)]{Brandao2016LocalRandomCircuits}:
	\begin{equation}
		\lambda_r(\nu^{*T})
		\leq
		\lambda_r(\nu)^T.
		\label{eq:app-gap-amplification}
	\end{equation}
	
	\section{Hadamard-diagonal ensembles}\label{app:spectral-Hadamard-diagonal}
	
	In this section, we prove the moment bound for the Hadamard--diagonal ensemble presented in Eqs.~\eqref{eq:ideal-fourier-gap} and \eqref{eq:HD-master-gap}. 
	Throughout this section, we index the computational basis by $G:=\mathbb{F}_2^m$ and denote the Hilbert space dimension by $d:=|G|=2^m$.
	
	\subsection{Ideal diagonal moments}
	
	Recall that a random unitary drawn from the ideal diagonal ensemble $\mathcal{D}_{\mathrm Z}$ takes the form
	\begin{equation}
		D_{\mathrm Z}
		=
		\sum_{x\in G}e^{i\theta_x}|x\rangle\langle x|,
		\qquad
		\theta_x\overset{\mathrm{iid}}{\sim}
		\mathrm{Unif}([0,2\pi)).
	\end{equation}
	Its $r$-th moment operator, $P_{\mathrm Z}^{(r)}:=M_r(\mathcal{D}_{\mathrm Z})$, acts as an orthogonal projector. Concretely, consider basis states labeled by $\bm x=(x_1,\ldots,x_r)$ and $\bm y=(y_1,\ldots,y_r)$ in $G^r$. The action of the moment operator is given by
	\begin{equation}
		P_{\mathrm Z}^{(r)}|\bm x,\bm y\rangle
		=
		\begin{cases}
			|\bm x,\bm y\rangle,
			& \bm y=\bm x_\sigma\text{ for some }\sigma\in S_r,\\
			0,
			& \text{otherwise},
		\end{cases}
		\label{eq:app-diagonal-projector-action}
	\end{equation}
	where $\bm x_\sigma:=(x_{\sigma(1)},\ldots,x_{\sigma(r)})$. Thus, $P_{\mathrm Z}^{(r)}$ acts as a projector that retains precisely the subspace spanned by basis vectors where $\bm x$ and $\bm y$ contain the same multiset of labels.
	
	Then, let $H_m:=H^{\otimes m}$ denote the $m$-qubit Hadamard transform, and define
	\begin{equation}
		W_r
		:=
		H_m^{\otimes r}\otimes\overline{H_m}^{\otimes r},
		\qquad
		P_{\mathrm X}^{(r)}
		:=
		W_rP_{\mathrm Z}^{(r)}W_r^\dagger.
		\label{eq:app-X-projector}
	\end{equation}
	Because $H_m$ is real and Hermitian, $W_r=W_r^\dagger$.
	Independence of the three diagonal unitaries in $U_{\mathrm{HD}}=D_3H_mD_2H_mD_1$ then gives
	\begin{align}
		M_r(\nu_{\mathrm{HD}})
		&=
		P_{\mathrm Z}^{(r)}W_rP_{\mathrm Z}^{(r)}
		W_rP_{\mathrm Z}^{(r)}
		\nonumber\\
		&=
		P_{\mathrm Z}^{(r)}P_{\mathrm X}^{(r)}P_{\mathrm Z}^{(r)}.
		\label{eq:app-ideal-HD-factorization}
	\end{align}
	
	While alternating random diagonal unitaries between two Hadamard-related bases is a well-established strategy for generating unitary designs~\cite{Nakata2017TimeIndependentHamiltonian}, the following theorem establishes a substantially tighter bound on the resulting TPE error.
	
	\begin{theorem}[TPE error of Hadamard--diagonal ensembles]
		\label{thm:app-fourier-gap}
		If $d\geq4r^4$, then
		\begin{equation}
			\left\|
			P_{\mathrm Z}^{(r)}P_{\mathrm X}^{(r)}P_{\mathrm Z}^{(r)}
			-P_{\mathrm H}^{(r)}
			\right\|_{\mathrm{op}}
			\leq
			\frac{9r^4}{d}.
			\label{eq:app-fourier-gap}
		\end{equation}
	\end{theorem}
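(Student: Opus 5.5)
Since $P_{\mathrm H}^{(r)}$ is the projector onto the commutant (the span of the permutation vectors $|\sigma\rangle\rangle:=\sum_{\bm x}|\bm x,\bm x_\sigma\rangle$, $\sigma\in S_r$), and both $P_{\mathrm Z}^{(r)}$ and $P_{\mathrm X}^{(r)}$ fix every $|\sigma\rangle\rangle$, we have $P_{\mathrm H}^{(r)}\le P_{\mathrm Z}^{(r)}$, $P_{\mathrm H}^{(r)}\le P_{\mathrm X}^{(r)}$. Hence
\[
P_{\mathrm Z}P_{\mathrm X}P_{\mathrm Z}-P_{\mathrm H}=(P_{\mathrm Z}-P_{\mathrm H})(P_{\mathrm X}-P_{\mathrm H})(P_{\mathrm Z}-P_{\mathrm H}) + (\text{cross terms that vanish}),
\]
and in fact $P_{\mathrm Z}P_{\mathrm X}P_{\mathrm Z}-P_{\mathrm H}=Q_{\mathrm Z}P_{\mathrm X}Q_{\mathrm Z}$ with $Q_{\mathrm Z}:=P_{\mathrm Z}-P_{\mathrm H}$. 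Its norm equals $\|Q_{\mathrm Z}P_{\mathrm X}\|^2=\|Q_{\mathrm Z}(P_{\mathrm X}-P_{\mathrm H})\|^2$, i.e.\ the squared cosine of the angle between the two subspaces after removing their intersection. So the plan is to show $\|P_{\mathrm Z}P_{\mathrm X}-P_{\mathrm H}\|_{\mathrm{op}}\le 3r^2/\sqrt d$, which squares to $9r^4/d$.

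To bound this, I will use explicit overcomplete spanning sets. The range of $P_{\mathrm Z}$ is spanned by orthonormal vectors $|\bm x,\bm x_\sigma\rangle$ grouped by multiset type; the range of $P_{\mathrm X}$ is spanned by $W_r|\bm p,\bm p_\tau\rangle$. Compute overlaps: $\langle \bm x,\bm x_\sigma|W_r|\bm p,\bm p_\tau\rangle = d^{-r}(-1)^{\sum_i x_i\cdot p_i + \sum_i x_{\sigma(i)}\cdot p_{\tau(i)}}$. Summing over $\bm x$ with fixed pattern, the character sum is $d^{r}$-large only when the induced linear condition on $\bm p$ is trivial, i.e.\ when $\bm p$ satisfies $p_i=p_{\tau\sigma^{-1}\ldots}$-type coincidences, which is exactly the permutation-vector (Haar) component. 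I would organize this with the Gram/Schur approach: write $P_{\mathrm Z}$ restricted to the "generic" part (all $x_i$ distinct) and the "collision" part (some $x_i=x_j$); the collision subspace has relative weight $O(r^2/d)$ and is handled crudely. On the generic part, the vectors $|\bm x,\bm x_\sigma\rangle$ with distinct entries, symmetrized over $\bm x$-orbits, form an orthonormal family, and $P_{\mathrm X}$ applied to them yields the $|\sigma\rangle\rangle$ plus an error whose norm is controlled by counting solutions of $\sum_i (e_i + e_{\pi(i)})\cdot x$-type constraints: each nontrivial constraint costs a factor $1/d$ by character orthogonality, and there are at most $r^2$ pairs $(i,j)$ to union bound over, giving $O(r^2/\sqrt d)$ in operator norm after a Schur test ($\|A\|\le\sqrt{\max_{\text{row}}\sum|A|\cdot\max_{\text{col}}\sum|A|}$).

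The Gram matrix of the permutation vectors $|\sigma\rangle\rangle$ is $d^{r}(I+E)$ with $\|E\|\le r^2/d$ (standard Weingarten estimate), which is where the hypothesis $d\ge 4r^4$ enters, ensuring near-orthonormality so that $P_{\mathrm H}$ can be replaced by $\sum_\sigma d^{-r}|\sigma\rangle\rangle\langle\langle\sigma|$ up to $O(r^2/d)$. The main obstacle is the Schur-test bound on the cross block $Q_{\mathrm Z}P_{\mathrm X}$: one must carefully count, for a fixed generic $\bm x$-orbit, the weight of $\bm p$-orbits with nonzero overlap beyond the Haar component, avoiding $r!$ factors from summing over $\sigma,\tau$. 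I would first try to collapse the $\sigma,\tau$ sums by working with symmetrized (multiset) basis vectors on both sides, so that each overlap is a single normalized character sum, and then show each off-Haar overlap is nonzero only under a linear relation among $p_i$'s, counted by at most $\binom{r}{2}\cdot d^{-1}$ per unit mass, which yields the claimed $3r^2/\sqrt d$ and thus the bound $9r^4/d$.
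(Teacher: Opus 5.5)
Your reduction is correct and is the same one the paper uses. Since $P_{\mathrm H}^{(r)}\le P_{\mathrm Z}^{(r)},P_{\mathrm X}^{(r)}$, you have $P_{\mathrm Z}P_{\mathrm X}P_{\mathrm Z}-P_{\mathrm H}=(P_{\mathrm Z}P_{\mathrm X}-P_{\mathrm H})(P_{\mathrm Z}P_{\mathrm X}-P_{\mathrm H})^\dagger$. Also $P_{\mathrm Z}P_{\mathrm X}-P_{\mathrm H}=(P_{\mathrm Z}W_rP_{\mathrm Z}-P_{\mathrm H})W_r$, so your target $\|P_{\mathrm Z}P_{\mathrm X}-P_{\mathrm H}\|_{\mathrm{op}}\le 3r^2/\sqrt d$ is exactly the paper's $\|\mathsf A_\perp\|_{\mathrm{op}}\le\eta$.

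The gap is that you never establish this estimate, and the mechanisms you sketch would not give it:
\begin{itemize}
\item Every overlap $\langle\bm x,\bm x_\sigma|W_r|\bm p,\bm p_\tau\rangle$ has modulus exactly $d^{-r}$. The matrix is dense, and the smallness of the off-Haar part comes entirely from sign cancellation. An entrywise Schur test $\sqrt{\max_{\mathrm{row}}\sum|A|\cdot\max_{\mathrm{col}}\sum|A|}$ discards this cancellation and gives row sums of order $r!\,d^{r}\cdot d^{-r}=r!$, not $r^2/\sqrt d$.
\item Off-Haar overlaps are also not sparse. In any symmetrized basis they are sums of signs $(-1)^{\langle\bm x,L_p\bm p\rangle}$. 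The condition for such a term to cancel is a single $\mathbb F_2$ parity, which holds for about half of all $\bm p$, so it costs a factor $1/2$, not $1/d$.
\item Saying the collision sector has ``relative weight $O(r^2/d)$'' is a statement about dimension or mass, not operator norm. A unit vector supported entirely on collision tuples still has to be controlled, so collisions cannot be ``handled crudely.''
\item A union bound over $\binom r2$ pairs covers only transpositions. You must sum over all $r!$ relative permutations, and that is precisely where the $r!$ blow-up you flag as the main obstacle lives.
\end{itemize}

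The idea you are missing is to get the cancellation at the level of operator-norm blocks rather than entries.
\begin{itemize}
\item Lift with $Q|\bm x,\sigma\rangle=\varsigma(\bm x)^{-1/2}|\bm x,\bm x_\sigma\rangle$. Then $QQ^\dagger=P_{\mathrm Z}^{(r)}$, and $\widehat{\mathsf A}=Q^\dagger W_rQ$ has the same nonzero singular values as $\mathsf A=P_{\mathrm Z}W_rP_{\mathrm Z}$.
\item View $\widehat{\mathsf A}$ as an $S_r\times S_r$ block matrix. The $(\sigma,\tau)$ block is $DK_pD$ with $p=\tau\sigma^{-1}$ and $\|D\|_{\mathrm{op}}\le1$, which absorbs all collision tuples uniformly.
\item Compute $K_pK_p^\dagger$, where one does sum over a full variable $\bm u\in G^r$. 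This product is a direct sum of copies of $d^{-r}J_{d^{r-\ell(p)}}$, so $\|K_p\|_{\mathrm{op}}=d^{-\ell(p)/2}$ exactly. The $d^{-1}$-per-constraint savings you want appear here, as $m\ell(p)$ bit constraints $L_p^{\mathsf T}(\bm x-\bm y)=0$.
\item Apply the Schur test at the block level with these block norms. Counting $\#\{p:\ell(p)=\ell\}\le r^{2\ell}$ gives $\sum_{p\neq e}d^{-\ell(p)/2}\le 2r^2/\sqrt d$. This geometric series is where $d\ge4r^4$ is needed; the Gram matrix of the permutation vectors only needs $d\gtrsim r^2$.
\item The diagonal blocks are $|a\rangle\langle a|$ with $1-\|a\|^2\le\binom r2/d$. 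This is the only place collisions enter.
\item Rather than explicitly subtracting $d^{-r}\sum_\sigma|\sigma\rangle\!\rangle\langle\!\langle\sigma|$, the paper notes that $\widehat{\mathsf A}$ is $\eta$-close to a rank-$r!$ projector while $\mathsf A$ fixes the $r!$-dimensional Haar sector. Hence $\|\mathsf A_\perp\|_{\mathrm{op}}=s_{r!+1}(\widehat{\mathsf A})\le\eta$.
\end{itemize}
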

	
	\begin{proof}
		\smallskip
		For $\bm x\in G^r$, define the size of its stabilizer under coordinate
		permutations by
		\begin{equation}
			\varsigma(\bm x)
			:=
			\left|
			\{\pi\in S_r:\bm x_\pi=\bm x\}
			\right|.
		\end{equation}
		
		Let $S_r$ denote the symmetric group on $[r]$.
		Let $\mathcal K$ be the auxiliary Hilbert space with orthonormal basis
		$\{|\bm x,\sigma\rangle:\bm x\in G^r,\ \sigma\in S_r\}$ and define
		\begin{equation}
			Q|\bm x,\sigma\rangle
			:=
			\varsigma(\bm x)^{-1/2}
			|\bm x,\bm x_\sigma\rangle.
			\label{eq:app-lifting-map}
		\end{equation}
		Every basis vector in $\operatorname{ran}P_{\mathrm Z}^{(r)}$ has
		exactly $\varsigma(\bm x)$ preimages under $Q$, and hence
		\begin{equation}
			QQ^\dagger=P_{\mathrm Z}^{(r)}.
			\label{eq:app-lifting-coisometry}
		\end{equation}
		
		Next, we define the operators
		\begin{equation}
			\mathsf A
			:=
			P_{\mathrm Z}^{(r)}W_rP_{\mathrm Z}^{(r)},
			\qquad
			\widehat{\mathsf A}
			:=
			Q^\dagger W_rQ.
			\label{eq:app-lifted-operators}
		\end{equation}
		Eq.~\eqref{eq:app-lifting-coisometry} implies that $Q^\dagger$ restricts to a unitary transformation from $\operatorname{ran}(P_{\mathrm Z}^{(r)})$ onto $\operatorname{ran}(Q^\dagger)$.
		Consequently, the operators are unitarily equivalent on their respective supports:
		\begin{equation}
			Q^\dagger
			\left(
			\mathsf A|_{\operatorname{ran}(P_{\mathrm Z}^{(r)})}
			\right)
			Q
			=
			\widehat{\mathsf A}|_{\operatorname{ran}(Q^\dagger)}.
			\label{eq:app-lifted-unitary-equivalence}
		\end{equation}
		Thus, $\mathsf A$ and $\widehat{\mathsf A}$ share exactly the same nonzero singular values.
		
		\smallskip
		\noindent\emph{Bounding off-diagonal blocks.}---
		View $\widehat{\mathsf A}$ as an $S_r\times S_r$ block matrix, with
		the rows and columns within each block indexed by
		$\bm x,\bm u\in G^r$. We have
		\begin{align}
			\langle\bm x,\sigma|
			\widehat{\mathsf A}
			|\bm u,\tau\rangle
			=
			\frac{d^{-r}}
			{\sqrt{\varsigma(\bm x)\varsigma(\bm u)}}
			(-1)^{
				\sum_{j=1}^r
				x_j\cdot
				[u_j+u_{(\tau\sigma^{-1})(j)}]
			}.
			\label{eq:app-lifted-Walsh-kernel}
		\end{align}
		Put $p:=\tau\sigma^{-1}$, let $c(p)$ be the number of cycles of $p$,
		and define
		\begin{equation}
			\ell(p):=r-c(p).
		\end{equation}
		Let $P_p$ permute the $r$ coordinates according to
		$(P_p\bm u)_j=u_{p(j)}$, and set
		$L_p:=I+P_p$ over $\bF_2$. The kernel of $L_p$ consists of the vectors
		that are constant on every cycle of $p$. Hence
		\begin{equation}
			\operatorname{rank}_{\bF_2}L_p=m \ell(p),
			\qquad
			|\ker L_p^{\mathsf T}|=d^{r-\ell(p)}
		\end{equation}
		when $L_p$ acts on $G^r$.
		
		Define a diagonal contraction $D$ and an unweighted Fourier block
		$K_p$ by
		\begin{equation}
			D|\bm x\rangle
			:=
			\varsigma(\bm x)^{-1/2}|\bm x\rangle,
			\qquad
			(K_p)_{\bm x,\bm u}
			:=
			d^{-r}(-1)^{\langle\bm x,L_p\bm u\rangle},
		\end{equation}
		where
		$\langle\bm x,\bm v\rangle:=\sum_jx_j\cdot v_j$ is the standard inner product over 
		$\bF_2$. Eq.~\eqref{eq:app-lifted-Walsh-kernel} gives
		$\widehat{\mathsf A}_{\sigma,\tau}=DK_pD$. Furthermore,
		\begin{align}
			(K_pK_p^\dagger)_{\bm x,\bm y}
			&=
			d^{-2r}
			\sum_{\bm u\in G^r}
			(-1)^{\langle\bm x-\bm y,L_p\bm u\rangle}
			\nonumber\\
			&=
			d^{-r}\,
			\mathbbm{1}
			\left\{
			L_p^{\mathsf T}(\bm x-\bm y)=0
			\right\}.
			\label{eq:app-Fourier-block-product}
		\end{align}
		This matrix is a direct sum of copies of
		$d^{-r}J_{d^{r-\ell(p)}}$, where $J_s$ denotes the $s\times s$
		all-ones matrix. Since $\|J_s\|_{\mathrm{op}}=s$, we obtain
		\begin{equation}
			\|K_p\|_{\mathrm{op}}^2=d^{-\ell(p)}.
		\end{equation}
		Using $\|D\|_{\mathrm{op}}\leq1$, we conclude that
		\begin{equation}
			\left\|
			\widehat{\mathsf A}_{\sigma,\tau}
			\right\|_{\mathrm{op}}
			\leq
			d^{-\ell(\tau\sigma^{-1})/2}.
			\label{eq:app-permutation-block-bound}
		\end{equation}
		
		\smallskip
		\noindent\emph{Approximation by a rank-$r!$ projector.}---
		Let $\mathsf B$ denote the block-diagonal part of
		$\widehat{\mathsf A}$. For a diagonal block, $\tau=\sigma$, so
		$p$ is the identity and $L_p=0$. Therefore
		\begin{equation}
			\langle\bm x,\sigma|
			\widehat{\mathsf A}
			|\bm u,\sigma\rangle
			=
			\frac{d^{-r}}
			{\sqrt{\varsigma(\bm x)\varsigma(\bm u)}}
			=
			a_{\bm x}a_{\bm u},
		\end{equation}
		where
		\begin{equation}
			a_{\bm x}
			:=
			d^{-r/2}\varsigma(\bm x)^{-1/2},
			\qquad
			\zeta
			:=
			\langle a,a\rangle
			=
			d^{-r}\sum_{\bm x\in G^r}\varsigma(\bm x)^{-1}.
		\end{equation}
		Thus every diagonal block equals $|a\rangle\langle a|$. Clearly
		$\zeta\leq1$. On the other hand, whenever the entries of $\bm x$ are
		pairwise distinct, $\varsigma(\bm x)=1$. There are
		$d(d-1)\cdots(d-r+1)$ such tuples, and therefore
		\begin{align}
			\zeta
			&\geq
			d^{-r}d(d-1)\cdots(d-r+1)
			\nonumber\\
			&=
			\prod_{j=0}^{r-1}
			\left(1-\frac jd\right)
			\geq
			1-\frac{\binom r2}{d}.
		\end{align}
		Let $|\widehat a\rangle:=\zeta^{-1/2}|a\rangle$ and define
		\begin{equation}
			P_0
			:=
			\bigoplus_{\sigma\in S_r}
			|\widehat a\rangle\langle\widehat a|.
		\end{equation}
		Then $P_0$ is an orthogonal projector of rank $r!$, and
		\begin{equation}
			\|\mathsf B-P_0\|_{\mathrm{op}}
			=
			1-\zeta
			\leq
			\frac{\binom r2}{d}.
			\label{eq:app-diagonal-block-approximation}
		\end{equation}
		
		For the block matrix $T=(T_{\sigma,\tau})_{\sigma,\tau\in S_r}$, define
		\begin{equation}
			R:=\max_{\sigma}\sum_{\tau}
			\|T_{\sigma,\tau}\|_{\mathrm{op}},
			\qquad
			C:=\max_{\tau}\sum_{\sigma}
			\|T_{\sigma,\tau}\|_{\mathrm{op}}.
		\end{equation}
		For any block vector \(v=(v_\tau)_{\tau\in S_r}\), the triangle
		inequality and Cauchy--Schwarz inequality give
		\begin{align}
			\|Tv\|^2
			&=
			\sum_{\sigma}
			\left\|
			\sum_{\tau}T_{\sigma,\tau}v_\tau
			\right\|^2
			\nonumber\\
			&\leq
			\sum_{\sigma}
			\left(
			\sum_{\tau}
			\|T_{\sigma,\tau}\|_{\mathrm{op}}\,
			\|v_\tau\|
			\right)^2
			\nonumber\\
			&\leq
			R\sum_{\sigma,\tau}
			\|T_{\sigma,\tau}\|_{\mathrm{op}}\,
			\|v_\tau\|^2
			\leq
			RC\|v\|^2.
		\end{align}
		Taking the supremum over \(v\neq0\) therefore yields
		\begin{equation}
			\|T\|_{\mathrm{op}}
			\leq
			\sqrt{RC}
			=
			\left(
			\max_{\sigma}\sum_{\tau}
			\|T_{\sigma,\tau}\|_{\mathrm{op}}
			\right)^{1/2}
			\left(
			\max_{\tau}\sum_{\sigma}
			\|T_{\sigma,\tau}\|_{\mathrm{op}}
			\right)^{1/2}.
		\end{equation}

		Applying this to $\widehat{\mathsf A}-\mathsf B$ and using
		Eq.~\eqref{eq:app-permutation-block-bound}, both block sums are bounded by
		\begin{equation}
			\sum_{p\neq e}d^{-\ell(p)/2}.
		\end{equation}
		A permutation satisfying $\ell(p)=\ell$ can be expressed as a product
		of $\ell$ transpositions. Hence the number of such permutations is at
		most $\binom r2^\ell\leq r^{2\ell}$. Because
		$d\geq4r^4$, we have $r^2/\sqrt d\leq1/2$, and therefore
		\begin{align}
			\|\widehat{\mathsf A}-\mathsf B\|_{\mathrm{op}}
			&\leq
			\sum_{p\neq e}d^{-\ell(p)/2}
			\nonumber\\
			&\leq
			\sum_{\ell\geq1}
			\left(\frac{r^2}{\sqrt d}\right)^\ell
			\leq
			\frac{2r^2}{\sqrt d}.
			\label{eq:app-offdiagonal-block-bound}
		\end{align}
		Combining Eqs.~\eqref{eq:app-diagonal-block-approximation} and
		\eqref{eq:app-offdiagonal-block-bound}, we find
		\begin{equation}
			\|\widehat{\mathsf A}-P_0\|_{\mathrm{op}}
			\leq
			\eta
			:=
			\frac{2r^2}{\sqrt d}
			+
			\frac{\binom r2}{d}.
			\label{eq:app-lifted-rank-approximation}
		\end{equation}
		
		\smallskip
		\noindent\emph{Identifying the Haar sector.}---
		For \(d\ge r\), the Haar moment projector has rank
		\begin{equation}
			\operatorname{rank}P_{\mathrm H}^{(r)}=r!
		\end{equation}
		by Schur--Weyl duality~\cite[Theorem~9 and Propositions~11 and~25]{Mele2024HaarTools}.
		If $v\in\operatorname{ran}P_{\mathrm H}^{(r)}$, then
		$R_r(U)v=v$ for every $U$. Averaging over diagonal unitaries gives
		$P_{\mathrm Z}^{(r)}v=v$, while taking $U=H_m$ gives $W_rv=v$.
		Consequently,
		\begin{equation}
			\mathsf Av
			=
			P_{\mathrm Z}^{(r)}W_rP_{\mathrm Z}^{(r)}v
			=v.
		\end{equation}
		Moreover, $W_r=W_r^\dagger$ implies
		$\mathsf A=\mathsf A^\dagger$. Thus the orthogonal complement of the
		Haar-invariant subspace is invariant under $\mathsf A$, and
		\begin{equation}
			\mathsf A
			=
			I_{\operatorname{ran}P_{\mathrm H}^{(r)}}
			\oplus
			\mathsf A_\perp.
			\label{eq:app-A-Haar-decomposition}
		\end{equation}
		
		Let $s_j(T)$ denote the $j$-th largest singular value of $T$. The
		variational characterization
		\begin{equation}
			s_{L+1}(T)
			=
			\inf_{\operatorname{rank}R\leq L}
			\|T-R\|_{\mathrm{op}},
		\end{equation}
		together with Eqs.~\eqref{eq:app-lifted-unitary-equivalence} and
		\eqref{eq:app-lifted-rank-approximation}, gives
		\begin{equation}
			\|\mathsf A_\perp\|_{\mathrm{op}}
			=
			s_{r!+1}(\mathsf A)
			=
			s_{r!+1}(\widehat{\mathsf A})
			\leq
			\eta.
			\label{eq:app-A-perp-bound}
		\end{equation}
		Finally, using $W_r^\dagger=W_r$,
		\begin{equation}
			P_{\mathrm Z}^{(r)}P_{\mathrm X}^{(r)}P_{\mathrm Z}^{(r)}
			=
			P_{\mathrm Z}^{(r)}W_rP_{\mathrm Z}^{(r)}W_r
			P_{\mathrm Z}^{(r)}
			=
			\mathsf A^2.
		\end{equation}
		Eq.~\eqref{eq:app-A-Haar-decomposition} therefore yields
		\begin{align}
			\left\|
			P_{\mathrm Z}^{(r)}P_{\mathrm X}^{(r)}P_{\mathrm Z}^{(r)}
			-P_{\mathrm H}^{(r)}
			\right\|_{\mathrm{op}}
			&=
			\|\mathsf A^2-P_{\mathrm H}^{(r)}\|_{\mathrm{op}}
			\nonumber\\
			&=
			\|\mathsf A_\perp^2\|_{\mathrm{op}}
			=
			\|\mathsf A_\perp\|_{\mathrm{op}}^2
			\leq
			\eta^2.
		\end{align}
		Because $\binom r2\leq r^2$ and $d^{-1}\leq d^{-1/2}$, $\eta \leq \frac{3r^2}{\sqrt d}$, which proves Eq.~\eqref{eq:app-fourier-gap}.
	\end{proof}

	\subsection{Stability under diagonal approximation}
	We now examine the stability of the Hadamard-diagonal construction when the ideal diagonal ensemble is replaced by an arbitrary ensemble $\mathcal{D}_m$. Define
	\begin{equation}
		\widetilde P_{\mathrm Z}^{(r)}
		:=
		M_r(\cD_m),
		\qquad
		\epsilon_{\mathrm{diag}}(r)
		:=
		\left\|
		\widetilde P_{\mathrm Z}^{(r)}-P_{\mathrm Z}^{(r)}
		\right\|_{\mathrm{op}},
		\label{eq:app-diagonal-approximation-error}
	\end{equation}
	and
	\begin{equation}
		\widetilde P_{\mathrm X}^{(r)} :=
		W_r\widetilde P_{\mathrm Z}^{(r)}W_r^\dagger.
	\end{equation}
	Let $\nu_{\mathrm{HD}}^{(m)}$ denote the ensemble in which the three
	diagonal layers of the Hadamard--diagonal block are sampled
	independently from $\cD_m$.
	Then, we have
	\begin{equation}
		M_r\left(\nu_{\mathrm{HD}}^{(m)}\right) =
		\widetilde P_{\mathrm Z}^{(r)}
		\widetilde P_{\mathrm X}^{(r)}
		\widetilde P_{\mathrm Z}^{(r)},
		\label{eq:app-implemented-HD-factorization}
	\end{equation}
	The following lemma establishes that the resulting TPE error degrades at most linearly with the diagonal approximation error.
	
	\begin{lemma}[Stability under diagonal approximation]
		\label{lem:app-HD-perturbation}
		For every $r\geq1$ satisfying $d=2^m\geq4r^4$,
		\begin{equation}
			\lambda_r\left(\nu_{\mathrm{HD}}^{(m)}\right)
			\leq
			\frac{9r^4}{2^m}
			+
			3\epsilon_{\mathrm{diag}}(r).
			\label{eq:app-HD-master-gap}
		\end{equation}
	\end{lemma}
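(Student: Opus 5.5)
We reduce to Theorem~\ref{thm:app-fourier-gap} by a telescoping (hybrid) argument on the product structure in Eq.~\eqref{eq:app-implemented-HD-factorization}. By the triangle inequality,
\begin{equation}
	\lambda_r\left(\nu_{\mathrm{HD}}^{(m)}\right)
	\leq
	\left\|
	\widetilde P_{\mathrm Z}^{(r)}\widetilde P_{\mathrm X}^{(r)}\widetilde P_{\mathrm Z}^{(r)}
	-P_{\mathrm Z}^{(r)}P_{\mathrm X}^{(r)}P_{\mathrm Z}^{(r)}
	\right\|_{\mathrm{op}}
	+
	\left\|
	P_{\mathrm Z}^{(r)}P_{\mathrm X}^{(r)}P_{\mathrm Z}^{(r)}-P_{\mathrm H}^{(r)}
	\right\|_{\mathrm{op}} .
\end{equation}
Since $d=2^m\geq 4r^4$, Theorem~\ref{thm:app-fourier-gap} applies directly to the second term and bounds it by $9r^4/2^m$. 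It remains to show the first term is at most $3\epsilon_{\mathrm{diag}}(r)$.

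For the first term, we swap one factor at a time:
\begin{align}
	\widetilde P_{\mathrm Z}\widetilde P_{\mathrm X}\widetilde P_{\mathrm Z}-P_{\mathrm Z}P_{\mathrm X}P_{\mathrm Z}
	&=(\widetilde P_{\mathrm Z}-P_{\mathrm Z})\widetilde P_{\mathrm X}\widetilde P_{\mathrm Z}
	+P_{\mathrm Z}(\widetilde P_{\mathrm X}-P_{\mathrm X})\widetilde P_{\mathrm Z}
	\nonumber\\
	&\quad+P_{\mathrm Z}P_{\mathrm X}(\widetilde P_{\mathrm Z}-P_{\mathrm Z}),
\end{align}
suppressing the superscript $(r)$. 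We then use submultiplicativity of the operator norm.

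All of $\widetilde P_{\mathrm Z}$, $\widetilde P_{\mathrm X}$, $P_{\mathrm Z}$ and $P_{\mathrm X}$ have operator norm at most $1$. For the moment operators this is Eq.~\eqref{eq:app-moment-contraction}. For the conjugated ones it holds because $W_r$ is unitary: it is a tensor product of the unitaries $H_m$ and $\overline{H_m}$. Unitary invariance of the operator norm likewise gives
\begin{equation}
	\|\widetilde P_{\mathrm X}-P_{\mathrm X}\|_{\mathrm{op}}
	=\|W_r(\widetilde P_{\mathrm Z}-P_{\mathrm Z})W_r^\dagger\|_{\mathrm{op}}
	=\epsilon_{\mathrm{diag}}(r).
\end{equation}
Each of the three telescoping terms is therefore at most $\epsilon_{\mathrm{diag}}(r)$, and summing gives the claimed bound.

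There is no real obstacle here. The only points to verify are that the factorization in Eq.~\eqref{eq:app-implemented-HD-factorization} holds for an arbitrary diagonal ensemble $\cD_m$, and that the $\widetilde P$ operators, which need not be projectors, are still contractions. The factorization follows from independence of $D_1,D_2,D_3$ and multiplicativity of $R_r$, with $R_r(H_m)=W_r$ since $H_m$ is real. The contraction property is supplied by Eq.~\eqref{eq:app-moment-contraction}.
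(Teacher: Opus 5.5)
Your proposal is correct and matches the paper's proof essentially step for step: the same three-term telescoping decomposition, the contraction bound from Eq.~\eqref{eq:app-moment-contraction}, unitary invariance of the operator norm for the conjugated term, and Theorem~\ref{thm:app-fourier-gap} for the ideal gap. One small correction: $R_r(H_m)=W_r$ holds by definition. What realness and Hermiticity of $H_m$ actually provide is $W_r=W_r^\dagger$, which is what lets you identify the middle factor $W_r\widetilde P_{\mathrm Z}^{(r)}W_r$ with $\widetilde P_{\mathrm X}^{(r)}$.
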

	
	\begin{proof}
		By the unitary invariance of the operator norm,
		\begin{equation}
			\left\|
			\widetilde P_{\mathrm X}^{(r)}-P_{\mathrm X}^{(r)}
			\right\|_{\mathrm{op}}
			=
			\epsilon_{\mathrm{diag}}(r).
		\end{equation}
		Moreover, 
		\begin{align}
			\widetilde P_{\mathrm Z}^{(r)}
			\widetilde P_{\mathrm X}^{(r)}
			\widetilde P_{\mathrm Z}^{(r)}
			-
			P_{\mathrm Z}^{(r)}P_{\mathrm X}^{(r)}P_{\mathrm Z}^{(r)} =
			\left(\widetilde P_{\mathrm Z}^{(r)}-P_{\mathrm Z}^{(r)}\right)
			\widetilde P_{\mathrm X}^{(r)}
			\widetilde P_{\mathrm Z}^{(r)}
			+
			P_{\mathrm Z}^{(r)}
			\left(\widetilde P_{\mathrm X}^{(r)}-P_{\mathrm X}^{(r)}\right)
			\widetilde P_{\mathrm Z}^{(r)} +
			P_{\mathrm Z}^{(r)}P_{\mathrm X}^{(r)}
			\left(\widetilde P_{\mathrm Z}^{(r)}-P_{\mathrm Z}^{(r)}\right).
		\end{align}
		Notice that all ideal and approximate moment operators have operator norms bounded by $1$. 
		Therefore,
		\begin{equation}
			\left\|
			M_r\left(\nu_{\mathrm{HD}}^{(m)}\right)
			-
			P_{\mathrm Z}^{(r)}P_{\mathrm X}^{(r)}P_{\mathrm Z}^{(r)}
			\right\|_{\mathrm{op}}
			\leq
			3\epsilon_{\mathrm{diag}}(r).
		\end{equation}
		Combining this estimate with
		Theorem~\ref{thm:app-fourier-gap} proves
		Eq.~\eqref{eq:app-HD-master-gap}.
	\end{proof}
	
	\section{Finite-dimensional implementation}
	\label{app:finite-dimensional}
	This section presents the construction of low-depth designs on finite-dimensional architectures.
	Throughout this section, we assume the design order $k \ge 2$.
	
	We begin by constructing the diagonal ensemble. Recall that $P_{\mathrm Z}^{(r)}=M_r(\mathcal{D}_{\mathrm Z})$ denotes the $r$-th moment operator of the ideal diagonal ensemble, in which every computational-basis state receives an independent random phase. 
	Our goal is to approximate these $2^m$ independent phases using only $\Theta(m)$ random phases generated via a low-depth circuit.
	To achieve this, the construction first mixes the input states using a randomized hash function, subsequently applies independent single-qubit phases, and finally reverses the mixing computation.
	
	Specifically, our protocol operates on an $m$-qubit system where the qubits are organized into prefix windows, defined as follows:

	\begin{definition}[Prefix windows]
		\label{def:prefix-window}
		Fix a dimension $\delta\geq 1$ and a width $L$. 
		Define the width-$L$ tube by
		\begin{equation}
			\mathcal Z_{\delta,L}:=\mathbb N\times[L]^{\delta-1},
		\end{equation}
		Number the sites in lexicographical order of coordinates.
		The prefix window $H_\delta(L,N)$ is the subgraph induced by the first $N$ sites in this order.
		
		The prefix window is called \emph{balanced} if
		\begin{equation}
			L^\delta\leq N\leq \max(4,2^\delta)\cdot L^\delta.
			\label{eq:app-fd-balanced-prefix}
		\end{equation}
	\end{definition}
	
	The performance guarantees of our construction are given in the following theorem.
	
	\begin{theorem}[Low-depth diagonal blocks on finite-dimensional architectures]
		\label{thm:finite-dimensional-ensemble}
		There exist constants $A_\delta, c_\delta, C_\delta > 0$ such that for every $m \geq A_\delta \log k$ and every $m$-qubit balanced prefix window, there exists an ensemble $\nu_{\delta,m}$ of diagonal unitaries acting on the $m$ qubits that satisfies
		\begin{equation}
			\label{eq:finite-dimensional-distance}
			\|M_r(\nu_{\delta,m}) - P_{\mathrm Z}^{(r)}\|_{\mathrm{op}} \leq C_\delta \exp\left(-\frac{c_\delta m}{\log k}\right),
		\end{equation}
		simultaneously for all $1 \leq r \leq 2k$. Furthermore, any unitary sampled from $\nu_{\delta,m}$ can be implemented with a circuit depth of
		\begin{equation}
			\label{eq:finite-dimensional-depth}
			\cO_\delta\left(m^{1/\delta}\log k\right).
		\end{equation}
		These circuits require no ancillary qubit.
	\end{theorem}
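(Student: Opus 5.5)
We plan to prove the theorem in two parts: a statistical part, bounding $\|M_r(\nu_{\delta,m})-P_{\mathrm Z}^{(r)}\|_{\mathrm{op}}$, and an implementation part, giving the depth bound without ancillae. Both parts use the hashed-diagonal form of Eq.~\eqref{eq:hashed-diagonal} with the three-step feature extraction of Eqs.~\eqref{eq:main-first-step}--\eqref{eq:main-third-step}.

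\emph{Statistical part.} Write $F_\kappa:\{0,1\}^m\to\{0,1\}^s$ for the feature map determined by the random seed $\kappa=(\lambda_a,\mu_i,\beta,\gamma,\pi,\sigma)$. Averaging over the phases $\theta_j$ turns $M_r$ into a random diagonal $0/1$ operator. On $|\bm x,\bm y\rangle$ it retains the state exactly when $\sum_i F_\kappa(x_i)=\sum_i F_\kappa(y_i)$ holds coordinatewise over $\mathbb Z$. After averaging over $\kappa$, the moment operator is diagonal with entries $\Pr_\kappa[\text{pass}]$.

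Equal multisets always pass, so $M_r\succeq P_{\mathrm Z}^{(r)}$ entrywise. The operator-norm difference is therefore $\max_{(\bm x,\bm y)\ \text{unequal}}\Pr_\kappa[\text{pass}]$. We must bound this false-acceptance probability by $C_\delta e^{-c_\delta m/\log k}$ uniformly over all $r\le 2k$.

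\begin{enumerate}
\item Cancel the labels common to $\bm x$ and $\bm y$. This leaves a nonempty signed multiset of at most $4k$ distinct labels with nonzero integer multiplicities.
\item The linear steps give injectivity with good probability. The map $z\mapsto \bigoplus_a \lambda_a z_a$ followed by $M$ sends distinct labels to distinct values. Moreover, for distinct labels, the values $\mu_0z+\mu_1z^2+\mu_2z^4+\beta$ behave as a family of independent, uniform random $s$-bit strings with limited independence. A union bound over $\binom{4k}{2}$ pairs costs $k^2 2^{-\Omega(s)}$, which is negligible once $m\ge A_\delta\log k$.
\item The quadratic layer detects the mismatch. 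Each output bit $Y_j$ is a fixed function plus $\bigoplus_l\gamma_{j,l}X_{\pi_l(j)}X_{\sigma_l(j)}$. Fix one label $x^\star$ with nonzero multiplicity and condition on $X$. Because the $w=\Theta(\log k)$ random products are chosen so that they separate $x^\star$ from the $\le 4k$ other labels, each coordinate $j$ has constant probability $\Omega(1)$ of giving a nonzero integer sum. This requires that $2^{-w}\cdot 4k$ be small, i.e.\ $w\gtrsim\log k$.
\item The coordinates $j$ are nearly independent, since they use disjoint randomness $\gamma_{j,\cdot}$ and are linked only through the random permutations. A martingale or Chernoff argument over $\Theta(s/w)$ nearly independent blocks then gives failure probability $\exp(-\Omega(m/\log k))$. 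This loss of $\log k$ in the exponent is the source of the $\log k$ in the theorem.
\end{enumerate}

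I expect step 4 to be the main obstacle. The integer-sum condition is not linear over $\mathbb F_2$, and the permutations correlate coordinates. My first approach would be to reduce modulo $2$ on a single label of odd multiplicity, or otherwise use the $2$-adic valuation to choose a sublattice on which the sum is odd. This reduces the problem to an $\mathbb F_2$ polynomial-identity (Schwartz--Zippel-type) bound for degree-$2$ polynomials, applied per coordinate.

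\emph{Implementation part.} Partition the balanced prefix window into $r_0=\cO_\delta(1)$ sub-windows of size $\le s$ that are geometrically contiguous. Balancedness ensures each sub-window has diameter $\cO(m^{1/\delta})$.

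\begin{enumerate}
\item Multiplication by a fixed field element, squaring and fourth power are $\mathbb F_2$-linear maps. Using the finite-field arithmetic of Ref.~\cite{Du2026ComplexityDrivenTransition}, each can be compiled into depth $\cO(m^{1/\delta})$ given $\cO(m)$ workspace.
\item To avoid fresh ancillae, use the catalytic trick. An update $Y\gets Y\oplus L(Z)$ with dirty workspace $W$ can be computed as $W\gets W\oplus g(\cdot)$, then $Y\gets Y\oplus W$, then uncompute, then repeat. The dirty contributions cancel, so $Y$ gets exactly $L(Z)$ and $W$ is restored regardless of its initial state. This costs a constant-factor depth overhead.
\item The quadratic layer has $w=\Theta(\log k)$ Toffoli-type rounds. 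Each round routes bits according to $\pi_l,\sigma_l$ by a permutation network of depth $\cO(m^{1/\delta})$ on the $\delta$-dimensional grid. This gives depth $\cO(m^{1/\delta}\log k)$ in total.
\item The single-qubit phases take depth $1$. Reversing all steps doubles the depth.
\end{enumerate}

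Altogether the depth is $\cO_\delta(m^{1/\delta}\log k)$ and no ancillary qubits are used.
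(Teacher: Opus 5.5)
\textbf{The gap is in Steps 3--4 of your statistical part}, i.e.\ in the mechanism that should give each coordinate $j$ a constant probability of detecting an unequal profile $c$.

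Your union bound $2^{-w}\cdot 4k$ only separates $x^\star$ from each other label individually. At best it makes $h_j(x^\star)\oplus h_j(\xi)$ uniform for each $\xi\neq x^\star$ separately. This pairwise property does not imply $\Pr[\sum_\xi c_\xi h_j(\xi)\neq 0]=\Omega(1)$. A uniformly random affine $h=b+\langle\alpha,\cdot\rangle$ already has it. Yet take $c_\xi=(-1)^{\langle u,\xi\rangle}$ on a linear subspace $V$ with $2^{\dim V}=4k$ (realizable with $r=2k$, say for $k$ a power of $2$). Then $\sum_\xi c_\xi h(\xi)=-\tfrac{(-1)^b}{2}\sum_{\xi\in V}(-1)^{\langle u+\alpha,\xi\rangle}$, which is nonzero only with probability $2^{-\dim V}=1/(4k)$. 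This is exactly why purely linear hashing forces $s\gtrsim k$.

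Your fallback does not repair this. Since $x$ and $y$ are $\mathbb F_2$-affine in $\xi$, every $h_j$ has degree at most $2$ on every affine subspace. Let $c$ be a nontrivial $\pm1$ character on a $3$-dimensional affine cube $V'$. Then $\bigoplus_{\xi\in V'}h_j(\xi)=0$ identically, so reduction modulo $2$ sees nothing and no Schwartz--Zippel bound applies. Also, no $\mathbb F_2$-combination of the product vectors $(x_{\pi_l(j)}(\xi)x_{\sigma_l(j)}(\xi))_{\xi\in V'}$ can equal the indicator of a point: the former sum to $0$ over $V'$, the latter to $1$. So no label can ever be isolated, whatever $w$ is.

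Step 2 is also off as stated. The values of the affine map $x$ on affinely dependent labels are deterministically related, not independent uniform strings. And pairwise distinctness of $y$ on labels is not the property needed.

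\textbf{The missing idea} is to keep the integer sum and require only four-wise cancellation. Put $S_j(c)=\sum_\xi c_\xi(-1)^{h_j(\xi)}=-2\sum_\xi c_\xi h_j(\xi)$, using $\sum_\xi c_\xi=0$. It suffices that for at least $s/2$ coordinates $j$, every nonempty $T\subseteq\operatorname{supp}(c)$ with $|T|\le4$ has some $l$ with $\bigoplus_{\xi\in T}x_{\pi_l(j)}(\xi)x_{\sigma_l(j)}(\xi)=1$. Averaging over the private bits $\gamma_j$ then kills every character product of size at most $4$. This gives $\mathbb E S_j^2=\sum_\xi c_\xi^2$, $\mathbb E S_j^4\le 3(\sum_\xi c_\xi^2)^2$, and hence $\Pr[S_j\neq0]\ge1/3$, independently over such $j$ given $(\mathsf A,\Pi)$. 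This also covers the cube example.

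Obtaining this needs four ingredients:
\begin{itemize}
\item injectivity of $y$ on the affine span of each ${\le}4$-label set, so that the Moore determinant makes $x=\xi_1+\mu_0y+\mu_1y^2+\mu_2y^4+\beta$ a uniformly random affine map on that span (this is why $M$ has exactly three Frobenius terms);
\item a lower bound $1/4$ on the quadratic parity of two random affine functions over at most four points, with the affine-plane case handled separately;
\item McDiarmid's inequality, to get a constant density of detecting pairs;
\item a bounded-differences inequality for random permutations, which is where $e^{-cs/w}$ comes from. The union bound now runs over $(4k+1)^4$ sets $T$, still compatible with $w=C_w\log k$.
\end{itemize}

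\textbf{On the implementation side}, your toggling trick only handles a dirty target. The clean scratch space inside the cited field-arithmetic circuits (e.g.\ intermediate Toom--Cook products) must also be made catalytic. Doing this level by level in a recursive multiplier compounds a constant factor over $\Theta(\log m)$ levels. The paper avoids this by choosing the radix $\varrho$ large enough that $D(s)=C_0s^{1/\delta}+C_1D(s/\varrho)$ stays geometric. Alternatively, you can exploit that Steps 1--2 are $\mathbb F_2$-linear and cancel the dirty-ancilla contribution of a CNOT-only circuit in one shot.
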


	Substituting this ensemble into Eq.~\eqref{eq:app-implemented-HD-factorization} gives an approximate design, whose error can then be suppressed via the repetition bound in Eq.~\eqref{eq:app-gap-amplification}.
	Gluing multiple such designs together produces a global design with a reduced asymptotic depth, as detailed in Sec.~\ref{app:finite-dimensional-repetition}.
	
	The remainder of this section is organized as follows. We first prove Theorem~\ref{thm:finite-dimensional-ensemble} and then promote its diagonal block to a full unitary design. Specifically, Sec.~\ref{app:finite-dimensional-construction} constructs the diagonal ensemble, Sec.~\ref{app:finite-dimensional-distance} establishes the moment estimate bounded in Eq.~\eqref{eq:finite-dimensional-distance}, and Sec.~\ref{app:finite-dimensional-depth} details its ancilla-free $\delta$-dimensional implementation achieving the depth stated in Eq.~\eqref{eq:finite-dimensional-depth}. Finally, Sec.~\ref{app:finite-dimensional-repetition} integrates this diagonal ensemble into the Hadamard--diagonal block, suppresses the resulting TPE error via independent repetitions, and glues the local designs to obtain a low-depth design on a larger system.

	\subsection{Ensemble construction}
	\label{app:finite-dimensional-construction}
	We first provide the construction of the ensemble $\nu_{\delta,m}$ in Theorem~\ref{thm:finite-dimensional-ensemble}.
	Our construction employs several circuit primitives on specific qubit subsets that conventionally require ancillary space. To maintain an entirely ancilla-free architecture, our key insight is to catalytically borrow inactive qubits while operations are performed on the active registers.
	
	Concretely, we first partition the $m$ system qubits into a sufficiently large constant number $r_0 = r_0(\delta)$ of registers, $Z_1, Z_2, \ldots, Z_{r_0}$. We choose this partition such that $|Z_1| = |Z_2| = s$ for some $s = \Theta_\delta(m)$, while every other register has a size of at most $s$. Each register is then treated as an element of the finite field $\mathbb{F}_{2^s}$, where any register shorter than $s$ is implicitly padded with zeros. Accordingly, we express every computational-basis state in the block form
	\begin{equation}
		\xi = (\xi_1, \ldots, \xi_{r_0}), \qquad \xi_a \in \mathbb{F}_{2^s},
		\label{eq:app-fd-block-label}
	\end{equation}
	where $\xi_a$ denotes the zero-padded bit string associated with register $Z_a$. Finally, we designate the first two registers as our primary operational workspaces, defining
	\begin{equation}
		X := Z_1, \quad Y := Z_2.
	\end{equation}
	Thus, $X$ and $Y$ initially contain the field elements $\xi_1$ and $\xi_2$, respectively, and serve as the main registers for our subsequent circuit operations.
	
	A sample from the ensemble is generated by the following procedure:
	\begin{enumerate}
		\item \emph{Concentration.} Apply a random linear map from all other registers to concentrate their weighted sum into $Y$. Specifically, sample the coefficients $\lambda_a$ independently and uniformly from $\mathbb{F}_{2^s}$ for all $a \neq 2$, and perform the update
		\begin{equation}
			Y \gets Y \oplus \bigoplus_{a \neq 2} \lambda_a Z_a.
		\end{equation}
		
		\item \emph{Hashing.} Mix the concentrated register $Y$ into $X$ using a random polynomial. Sample a function
		\begin{equation}
			M(z) = \mu_0 z + \mu_1 z^2 + \mu_2 z^4
		\end{equation}
		with coefficients $\mu_0, \mu_1, \mu_2 \in \mathbb{F}_{2^s}$, together with a shift $\beta\in \mathbb{F}_{2^s}$, chosen independently and uniformly at random, and apply the update
		\begin{equation}
			X \gets X \oplus M(Y) \oplus \beta.
		\end{equation}
		
		\item \emph{Quadratic mixing.} Add a random sparse quadratic function of $X$ into $Y$. Fix $w = \lceil C_w \log k \rceil$ for a sufficiently large constant $C_w$. Sample permutations $\pi_1, \sigma_1, \ldots, \pi_w, \sigma_w \in S_s$ independently and uniformly, alongside independent, uniformly random bits $\gamma_{j,l} \in \mathbb{F}_2$ for $j \in [s]$ and $l \in [w]$. Apply the quadratic mixing
		\begin{equation}
			\label{eq:rand-quad}
			Y_j \gets Y_j \oplus \bigoplus_{l=1}^w \gamma_{j,l} X_{\pi_l(j)} X_{\sigma_l(j)},
		\end{equation}
		where $Y_j$ indicates the $j$-th bit of the register $Y$.

		\item \emph{Twirling.} Sample random phases $\theta_1,\cdots,\theta_s\in[0,2\pi)$ uniformly and independently.
		For each $j\in [s]$, apply the single-qubit phase gate
		\begin{equation}
			\operatorname{diag}(1,e^{\mathrm{i}\theta_j})
		\end{equation}
		to the $j$-th qubit of the register $Y$.
		
		\item \emph{Uncomputation.} Invert the first three steps.
	\end{enumerate}
	
	
	\subsection{Approximation error analysis}
	\label{app:finite-dimensional-distance}
	We now prove Eq.~\eqref{eq:finite-dimensional-distance}. 
	We first characterize the Boolean functions computed by the circuit.  
	Using the block notation $\xi=(\xi_1,\ldots,\xi_{r_0})$ from Eq.~\eqref{eq:app-fd-block-label}, the field values carried by $Y$ and $X$ after the affine mixer are, respectively,
	\begin{equation}\begin{aligned}
			\label{eq:hashxy}
			y(\xi)&:={\xi_2}+\sum_{a\neq 2}\lambda_a\xi_a,\\
			x(\xi)&:={\xi_1}+\mu_0y(\xi)+\mu_1y(\xi)^2+\mu_2y(\xi)^4+\beta.
	\end{aligned}\end{equation}
	
	Fix a binary basis of $\mathbb F_{2^s}$ and write $[z]_p$ for the $p$-th bit of $z$. 
	We use the coordinate notation
	\begin{equation}\begin{aligned}
			\label{eq:bits-feature}
			x_p(\xi)&:=[x(\xi)]_p,\qquad  y_p(\xi):=[y(\xi)]_p,\qquad   p\in[s],\\
			h_j(\xi)&:=y_j(\xi)\oplus\bigoplus_{l=1}^{w}\gamma_{j,l}x_{\pi_l(j)}(\xi)x_{\sigma_l(j)}(\xi), \qquad j\in[s].
	\end{aligned}\end{equation}
	Thus a sample from $\nu_{\delta,m}$ acts as
	\begin{equation}
		\ket{\xi} \longmapsto \exp\!\left(
		\mathrm i\sum_{j=1}^{s}\theta_jh_j(\xi)
		\right)\ket{\xi}.
		\label{eq:app-fd-diagonal-action}
	\end{equation}
	
	Write $\mathsf A
	:=\bigl(\{\lambda_a\}_{a\neq2},\mu_0,\mu_1,\mu_2,\beta\bigr)$, $\Pi:=\{\pi_l,\sigma_l\}_{l=1}^{w}$ and $\Gamma:=\{\gamma_{j,l}\}_{j\in[s],l\in[w]}$
	for the randomness in the procedure. 
	We first express the desired operator-norm distance as a failure probability over these three blocks of randomness.
	
	For $\bm u,\bm v\in(\{0,1\}^m)^r$, define their signed multiplicity profile by
	\begin{equation}
		c_\xi(\bm u,\bm v):=\bigl|\{a:u_a=\xi\}\bigr|-\bigl|\{a:v_a=\xi\}\bigr|.
		\label{eq:app-fd-signed-profile}
	\end{equation}
	Let $c(\bm u,\bm v):=(c_\xi(\bm u,\bm v))_{\xi\in \{0,1\}^m}$, and define $\mathcal C_r$ be the set of all realizable nonzero profiles:
	\begin{equation}
		\mathcal C_r:=\left\{c(\bm u,\bm v):
		\bm u,\bm v\in(\{0,1\}^m)^r,
		c(\bm u,\bm v)\not\equiv0
		\right\}.
		\label{eq:app-fd-profile-class}
	\end{equation}
	
	By Eq.~\eqref{eq:app-diagonal-projector-action}, for a computational basis $\ket{\bm u,\bm v}$, $P_{\mathrm{Z}}^{(r)}$ leaves it invariant if and only if $c(\bm u,\bm v)\equiv 0$; otherwise $P_{\mathrm{Z}}^{(r)}$ eliminates it.
	Our objective is therefore to show that $M_r(\nu_{\delta,m})$ approximately reproduces this exact behavior.
	\begin{lemma}[Approximation error as a worst-case failure probability]
		\label{lem:app-fd-distance-probability}
		For $j\in[s]$, define the integer-valued signed test
		\begin{equation}
			S_j(c):=\sum_{\xi\in\operatorname{supp}(c)}c_\xi(-1)^{h_j(\xi)}.
			\label{eq:app-fd-signed-test}
		\end{equation}
		Then for every $r\geq1$, the distance between the target $P_{\mathrm Z}^{(r)}$ and our construction $M_r(\nu_{\delta,m})$ can be written as
		\begin{equation}
			\left\|M_r(\nu_{\delta,m})-P_{\mathrm Z}^{(r)}\right\|_{\mathrm{op}}
			=
			\max_{c\in\mathcal C_r}
			\Pr_{\mathsf A,\Pi,\Gamma}
			\left[S_j(c)=0\ \text{for every }j\in[s]\right].
			\label{eq:app-fd-distance-as-probability}
		\end{equation}
	\end{lemma}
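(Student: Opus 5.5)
The plan is to compute both operators explicitly in the computational basis $\{\ket{\bm u,\bm v}\}$ of the $2r$-fold space. Both turn out to be diagonal there, so the operator-norm distance is simply the largest absolute diagonal entry of the difference.

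\emph{Step 1: the sampled moment for fixed randomness.} Fix $\mathsf A,\Pi,\Gamma$ and the phases $\theta_1,\ldots,\theta_s$. By Eq.~\eqref{eq:app-fd-diagonal-action}, the sampled unitary $D$ is diagonal with phase $\exp(\mathrm i\sum_j\theta_jh_j(\xi))$ on $\ket{\xi}$. Hence $R_r(D)=D^{\otimes r}\otimes\overline D^{\otimes r}$ is also diagonal, and it acts on $\ket{\bm u,\bm v}$ by the phase
\begin{equation*}
\exp\Bigl(\mathrm i\sum_{j=1}^s\theta_j\sum_{a=1}^r\bigl[h_j(u_a)-h_j(v_a)\bigr]\Bigr)
=\exp\Bigl(\mathrm i\sum_{j=1}^s\theta_j\sum_{\xi}c_\xi(\bm u,\bm v)\,h_j(\xi)\Bigr).
\end{equation*}

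\emph{Step 2: average over the phases.} The quantity $\sum_\xi c_\xi h_j(\xi)$ is an integer. Since the $\theta_j$ are independent and uniform on $[0,2\pi)$, averaging over them produces the indicator that $\sum_\xi c_\xi h_j(\xi)=0$ for every $j$. To match the statement, note that every profile satisfies $\sum_\xi c_\xi(\bm u,\bm v)=r-r=0$. Therefore
\begin{equation*}
S_j(c)=\sum_\xi c_\xi\bigl(1-2h_j(\xi)\bigr)=-2\sum_\xi c_\xi h_j(\xi),
\end{equation*}
so the indicator is exactly $\mathbbm 1\{S_j(c)=0\ \forall j\}$. Averaging next over $\mathsf A,\Pi,\Gamma$ shows that $M_r(\nu_{\delta,m})$ is diagonal, with entry $\Pr_{\mathsf A,\Pi,\Gamma}[S_j(c(\bm u,\bm v))=0\ \forall j]$ on $\ket{\bm u,\bm v}$.

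\emph{Step 3: compare with the target.} By Eq.~\eqref{eq:app-diagonal-projector-action}, $P_{\mathrm Z}^{(r)}$ is diagonal with entry $\mathbbm 1\{c(\bm u,\bm v)\equiv0\}$.
\begin{itemize}
\item When $c\equiv0$, every $S_j$ vanishes identically. The sampled entry is then $1$, and the difference entry is $0$.
\item When $c\neq0$, the target entry is $0$. The difference entry is then the failure probability above, which is nonnegative.
\end{itemize}
The operator norm of a diagonal matrix is the maximum absolute entry. This maximum runs over all pairs $(\bm u,\bm v)$ with nonzero profile, which is precisely the set $\mathcal C_r$ of realizable profiles, giving Eq.~\eqref{eq:app-fd-distance-as-probability}. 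If $\mathcal C_r$ were empty the statement would read $0=0$, but $\mathcal C_r$ is nonempty because $m\geq1$.

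\emph{Main obstacle.} The only slightly delicate point is the translation between the integer linear condition $\sum_\xi c_\xi h_j=0$ and the signed test $S_j$. This needs the zero-sum property of realizable profiles, which is also why the maximum ranges over realizable profiles rather than arbitrary integer vectors. Everything else is bookkeeping.
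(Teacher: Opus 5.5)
Your proposal is correct and follows essentially the same route as the paper. Both arguments diagonalize the two moment operators in the computational moment basis and average the independent phases to obtain the indicator of $\sum_\xi c_\xi h_j(\xi)=0$ for all $j$. They then use $\sum_\xi c_\xi=0$ to rewrite this as $S_j(c)=0$, and read off the operator norm as the largest diagonal entry over nonzero profiles $c\in\mathcal C_r$.
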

	\begin{proof}
		Fix a computational moment-basis vector $\ket{\bm u,\bm v}$ and let $c$ be its signed multiplicity profile. 
		For fixed $\mathsf A,\Pi,\Gamma$, averaging the independent angles in Eq.~\eqref{eq:app-fd-diagonal-action} gives the diagonal multiplier
		\begin{equation}
			\prod_{j=1}^{s}
			\mathbb E_{\theta_j}
			\exp\!\left(
			\mathrm i\theta_j
			\sum_\xi c_\xi h_j(\xi)
			\right).
			\label{eq:app-fd-angle-average}
		\end{equation}
		The coefficient of each $\theta_j$ is an integer, so its factor is one if $\sum_\xi c_\xi h_j(\xi)=0$ and zero otherwise. 
		Since $c$ is the signed multiplicity profile of two $r$-tuples, $\sum_\xi c_\xi=0$.
		By Eq.~\eqref{eq:app-fd-signed-test},
		\begin{equation}
			S_j(c)
			=\sum_\xi c_\xi\bigl(1-2h_j(\xi)\bigr)
			=-2\sum_\xi c_\xi h_j(\xi).
			\label{eq:app-fd-signed-test-identity}
		\end{equation}
		Hence Eq.~\eqref{eq:app-fd-angle-average} is the indicator of the event $S_j(c)=0$ for every $j$.
		
		Both moment operators are diagonal in the computational moment basis.
		When $c=0$, their eigenvalues are both one.  
		When $c\neq0$, the eigenvalue of $P_{\mathrm Z}^{(r)}$ is zero, whereas that of $M_r(\nu_{\delta,m})$ is the probability on the right-hand side of Eq.~\eqref{eq:app-fd-distance-as-probability}. 
		The operator norm is therefore the largest of these diagonal entries, which proves the identity.
	\end{proof}
	
	Now we can work for each fixed $c\in \mathcal{C}_r$.

	\begin{lemma}[Second- and fourth-moment estimate]
		\label{lem:app-fd-moment-estimate}
		There are constants $A_\delta,c_0,c_1,C_0>0$ such that the following holds when the constant $C_w$ in $w=\lceil C_w\log k\rceil$ is sufficiently large. 
		For every $1\leq r\leq2k$, every $c\in\mathcal{C}_r$, and every $m\geq A_\delta\log k$, there is an event $\mathcal{E}_{c}$ in the probability space of $(\mathsf A,\Pi)$ with
		\begin{equation}
			\Pr_{\mathsf A,\Pi}[\mathcal E_c]
			\geq 1-C_0e^{-c_0s}-e^{-c_1s/w}
			\label{eq:app-fd-structural-failure}
		\end{equation}
		and the following property holds.
		For every realization $(\mathsf A,\Pi)\in\mathcal E_c$, there is a set $G_c=G_c(\mathsf A,\Pi)\subseteq[s]$ with $|G_c|\geq s/2$ such that, for every $j\in G_c$,
		\begin{equation}\begin{aligned}
				\mathbb E_{\gamma_j}S_j(c)^2&=\sum_\xi c_\xi^2,\\
				\mathbb E_{\gamma_j}S_j(c)^4&=3\left(\sum_\xi c_\xi^2\right)^2-2\sum_\xi c_\xi^4\leq 3\left(\sum_\xi c_\xi^2\right)^2,
				\label{eq:app-fd-second-fourth-moments}
		\end{aligned}\end{equation}
		where $\gamma_j=(\gamma_{j,1},\ldots,\gamma_{j,w})$. 
	\end{lemma}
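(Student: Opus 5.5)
The plan is to compute the $\gamma_j$-moments of $S_j(c)$ exactly and reduce both identities in Eq.~\eqref{eq:app-fd-second-fourth-moments} to a combinatorial \emph{pairing condition} on the quadratic feature vectors. Fix $j$ and write, for $\xi\in\operatorname{supp}(c)$,
\[
Q_j(\xi):=\bigl(x_{\pi_l(j)}(\xi)\,x_{\sigma_l(j)}(\xi)\bigr)_{l=1}^{w}\in\bF_2^{w},
\]
so that $(-1)^{h_j(\xi)}=(-1)^{y_j(\xi)}(-1)^{\langle\gamma_j,Q_j(\xi)\rangle}$. Expanding $S_j(c)^2$ and $S_j(c)^4$ and averaging over the uniform $\gamma_j$ gives
\[
\mathbb E_{\gamma_j}\prod_{t}(-1)^{h_j(\xi_t)}=(-1)^{\sum_t y_j(\xi_t)}\,\mathbbm 1\Bigl\{\textstyle\sum_t Q_j(\xi_t)=0\Bigr\}.
\]
Call $j$ \emph{good} for $c$ if the following two conditions hold.
\begin{enumerate}[label=(\roman*)]
\item The map $\xi\mapsto Q_j(\xi)$ is injective on $\operatorname{supp}(c)$.
\item No four pairwise distinct $\xi_1,\dots,\xi_4\in\operatorname{supp}(c)$ satisfy $\sum_t Q_j(\xi_t)=0$.
\end{enumerate}
For a good $j$, the only surviving terms in the fourth moment are those whose indices pair up. The sign factors then cancel. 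Summing over the three pairings and correcting for the fully coincident terms gives exactly $3(\sum c_\xi^2)^2-2\sum c_\xi^4$. Condition (i) alone yields $\mathbb E S_j^2=\sum_\xi c_\xi^2$. Three distinct indices plus a repeated one also reduce to a two-term condition, which (i) already excludes. It therefore remains to build an event $\mathcal E_c$ on which at least $s/2$ indices $j$ are good. Note that $|\operatorname{supp}(c)|\le 2r\le 4k$, so there are at most $(4k)^4$ relevant tuples.

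The first step uses only the randomness $\mathsf A$ and makes the values $x(\xi)$ generic. For distinct $\xi\neq\eta$, the difference $y(\xi)-y(\eta)$ is either nonzero deterministically, when the labels differ only in block $2$, or uniform in $\bF_{2^s}$ over the $\lambda_a$. Using the random $\mu_0$ together with the $\xi_1$-offset, the differences $x(\xi)-x(\eta)$ and the four-fold sums $\sum_t x(\xi_t)$ over distinct quadruples are then nonzero with probability $1-O(2^{-s})$. I would go further and require that they have Hamming weight at least $c's$ and be "spread", meaning no low-weight linear dependence among the at most $4k$ vectors $x(\xi)$ in the chosen basis. A uniform field element has weight $\ge s/4$ except with probability $e^{-\Omega(s)}$. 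A union bound over $(4k)^4$ tuples, combined with $m\ge A_\delta\log k$, gives the failure term $C_0e^{-c_0s}$.

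The second step uses the permutations $\Pi$. Condition on the generic $x$-values and fix one tuple, either a pair or a quadruple of distinct points. The relevant $\bF_2$-quadratic form $(p,q)\mapsto\sum_t x_p(\xi_t)x_q(\xi_t)$ is the matrix $\sum_t x(\xi_t)x(\xi_t)^{\mathsf T}$. By the weight and independence guarantees it is nonzero on a constant fraction $\kappa$ of coordinate pairs. For a single $j$, the pairs $(\pi_l(j),\sigma_l(j))$ are independent and uniform across $l$, so the probability that all $w$ coordinates of $\sum_tQ_j(\xi_t)$ vanish is at most $(1-\kappa)^w\le k^{-\Omega(C_w)}$. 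After a union bound over $(4k)^4$ tuples, each $j$ is bad with probability at most $1/8$ once $C_w$ is large. Hence the expected number of bad $j$ is at most $s/8$.

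The main obstacle is upgrading this expectation bound to $|G_c|\ge s/2$ with probability $1-e^{-c_1s/w}$, because different $j$ share the same permutations $\pi_l,\sigma_l$. I would handle this with a martingale (McDiarmid-type) bound for uniformly random permutations. Applying a transposition to one of the $2w$ permutations changes the badness of only $O(1)$ indices $j$. The standard concentration for Lipschitz functions on $S_s^{2w}$ therefore gives deviations of order $s/4$ with probability $\exp(-\Omega(s^2/(s\cdot w)))=e^{-c_1s/w}$. If this proves delicate, a fallback is to partition $[s]$ into blocks on which the pairs $(\pi_l(j),\sigma_l(j))$ are nearly independent, using sampling without replacement, and apply a Chernoff bound with negative association. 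Intersecting the two events gives $\mathcal E_c$ with the stated probability. Good $j$ then satisfy both moment identities by the first paragraph, and the final inequality is immediate since $\sum c_\xi^4\ge0$.
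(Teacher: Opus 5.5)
\paragraph{A gap in the constant-density step.} Your reduction is sound, and slightly leaner than the paper's. Because the moments are of even order, only odd-multiplicity sets of size $2$ or $4$ arise, so conditions (i)--(ii) suffice. Your permutation step (per-$j$ failure $(1-\kappa)^w$, a union bound over at most $(4k)^4$ sets, McDiarmid for permutations with Lipschitz constant $2$) is exactly the paper's shared-permutation coverage lemma.

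The gap is the claim everything rests on. You need that for every pair or quadruple $T\subseteq\operatorname{supp}(c)$, the $\mathbb F_2$-matrix $\sum_{\xi\in T}x(\xi)x(\xi)^{\mathsf T}$ is nonzero on a constant fraction $\kappa$ of coordinate pairs. You justify this only by ``weight and independence guarantees,'' and some of the guarantees you propose cannot hold.

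For every realization of $\mathsf A$, the map $\xi\mapsto x(\xi)$ is $\mathbb F_2$-affine. So whenever four distinct labels form an affine plane, $\xi^{(1)}\oplus\xi^{(2)}\oplus\xi^{(3)}\oplus\xi^{(4)}=0$ (already possible for $r=2$), one has $\sum_t x(\xi^{(t)})=0$ identically. Hence the four-fold sums are not nonzero with probability $1-\cO(2^{-s})$, and ``no low-weight linear dependence among the $x(\xi)$'' fails deterministically. In this case the diagonal of the matrix, which equals $\sum_t x(\xi^{(t)})$, vanishes. The matrix collapses to $uv^{\mathsf T}+vu^{\mathsf T}$ with $u=x(\xi^{(1)})\oplus x(\xi^{(2)})$ and $v=x(\xi^{(1)})\oplus x(\xi^{(3)})$. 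This is exactly the degenerate case the paper treats separately in its quadratic-parity lemma, where the parity reduces to $a_1b_2\oplus a_2b_1$.

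\paragraph{How to repair it.} The claim is true but needs an actual argument. One repair keeps your deterministic style. Require only that every pairwise difference $x(\xi)\oplus x(\eta)$ with $\xi\neq\eta$ have weight at least $c's$. This holds except with probability $(4k)^2e^{-\Omega(s)}$, because given $y(\xi)\neq y(\eta)$ the difference is uniform through $\mu_0$. Then set $v_p:=(x_p(\xi))_{\xi\in T}\in\mathbb F_2^{|T|}$, so that entry $(p,q)$ equals $\langle v_p,v_q\rangle$. If the density were below $\kappa$, the profiles carrying mass at least $\sqrt\kappa$ would form a self-orthogonal set. For $|T|\in\{2,4\}$, any such set lies in some $\operatorname{span}\{e_t+e_{t'},\mathbf 1\}$. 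That would force $\operatorname{wt}(x(\xi^{(t)})\oplus x(\xi^{(t')}))<16\sqrt\kappa\,s$, a contradiction for small $\kappa$. This route uses only the randomness of the $\lambda_a$ and $\mu_0$.

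The paper argues differently. It conditions on injectivity of $y$ on the affine span $W_T$ and uses the Moore-matrix rank to make $x_1|_{W_T},\dots,x_s|_{W_T}$ independent uniform affine functions. That gives $\Pr[Q_T(p,q)=1]\ge 1/4$ for each coordinate pair, with the plane handled by an explicit $3/8$ computation. It then applies McDiarmid over the $s$ coordinates. Until you supply one of these arguments, the constant $\kappa$ that your permutation step relies on is not established.
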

	
	Before proving Lemma~\ref{lem:app-fd-moment-estimate}, we first show that it
	immediately implies the claimed distance bound. 
	Conditional on $\mathcal E_c$, Cauchy--Schwarz gives, for every $j\in G_c$,
	\begin{equation}
		\Pr_{\gamma_j}[S_j(c)\neq0]
		\geq \frac{\bigl(\mathbb E_{\gamma_j}S_j(c)^2\bigr)^2}
		{\mathbb E_{\gamma_j}S_j(c)^4}
		\geq\frac{1}{3}.
		\label{eq:app-fd-single-target-detection}
	\end{equation}
	The independence of the private vectors $\gamma_j$ therefore yields
	\begin{equation}\begin{aligned}
			\Pr_{\mathsf A,\Pi,\Gamma}
			\left[S_j(c)=0\ \text{for every }j\in[s]\right]
			&\leq
			\Pr_{\mathsf A,\Pi}[\mathcal E_c^{\mathrm c}]
			+\max_{(\mathsf A,\Pi)\in\mathcal E_c}
			\Pr_{\Gamma}
			\left[S_j(c)=0\ \text{for every }j\in G_c(\mathsf A,\Pi)\right]
			\nonumber\\
			&\leq
			C_0e^{-c_0s}+e^{-c_1s/w}+(2/3)^{s/2}.
			\label{eq:app-fd-profile-failure-bound}
	\end{aligned}\end{equation}
	This estimate is uniform over $c\in\mathcal C_r$ and
	$1\leq r\leq2k$.  Since $s=\Theta_\delta(m)$ and
	$w=\Theta(\log k)$, Lemma~\ref{lem:app-fd-distance-probability}
	implies
	\begin{equation}
		\left\|M_r(\nu_{\delta,m})-P_{\mathrm Z}^{(r)}\right\|_{\mathrm{op}}
		\leq
		C_\delta
		\exp\!\left[-\frac{c_\delta m}{\log k}\right],
		\qquad 1\leq r\leq2k,
	\end{equation}
	after adjusting the constants.
	
	We finish the subsection by proving the claimed moment estimate.
	The first lemma shows that for any four labels, with high probability over the randomness in Step 1, the function $x$ restricted on their difference space is a uniformly random affine function.
	
	\begin{lemma}[Uniform affine hashing on four-label spans]
		\label{lem:app-fd-affine-restriction}
		Let $T\subseteq\{0,1\}^m$ be a non-empty set of at most four labels.
		Fix $\xi^{(0)}\in T$, and define the difference space
		\begin{equation}
			V:=\operatorname{span}_{\mathbb F_2}\{\xi-\xi^{(0)}:\xi\in T\},\quad
			W=\xi^{(0)}+V.
		\end{equation}
		With probability at least $1-7\cdot 2^{-s}$ over randomness of $\{\lambda_a\}_{a\neq2}$, the restricted map $y|_W$ is injective on the affine subspace $W$ for the function $y$ defined in Eq.~\eqref{eq:hashxy}.
		
		Conditional on $\{\lambda_a\}_{a\neq2}$ for which this injectivity property holds, the functions $x_1|_W,\ldots,x_s|_W$ defined in Eq.~\eqref{eq:bits-feature} restricted on the affine subspace $W$ are independent uniform affine Boolean functions, where the remaining randomness is over $\mu_0,\mu_1,\mu_2,\beta$.
	\end{lemma}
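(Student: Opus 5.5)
The plan has two parts. First I show that $y$ is $\mathbb F_2$-linear, which turns injectivity on $W$ into a union bound over the nonzero vectors of $V$. Then I show that the linear part of $x|_W$ is governed by a Moore matrix, whose invertibility gives uniformity.

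\emph{Injectivity.} For fixed $\{\lambda_a\}_{a\neq2}$, the map $\xi\mapsto y(\xi)=\xi_2+\sum_{a\neq2}\lambda_a\xi_a$ is $\mathbb F_2$-linear in $\xi$. This holds because each $\xi\mapsto\xi_a$ (zero-padding included) is $\mathbb F_2$-linear and multiplication by $\lambda_a$ is $\mathbb F_2$-linear on $\mathbb F_{2^s}$. Hence $y|_W$ is injective if and only if $y(v)\neq0$ for every nonzero $v\in V$. Since $V$ is spanned by at most three vectors, $\dim V=:t\leq3$, so there are at most $7$ such $v$. Fix one and write $v=(v_1,\ldots,v_{r_0})$ in block form. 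There are two cases.
\begin{itemize}
\item If $v_a\neq0$ for some $a\neq2$, then $\lambda_a v_a$ is uniform on $\mathbb F_{2^s}$ and independent of the other terms. Hence $\Pr[y(v)=0]=2^{-s}$.
\item If $v_a=0$ for all $a\neq2$, then $v_2\neq0$ and $y(v)=v_2\neq0$ deterministically.
\end{itemize}
A union bound over the at most $7$ nonzero $v$ gives failure probability at most $7\cdot2^{-s}$.

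\emph{Uniform affine restriction.} Now fix $\lambda$ with $y|_W$ injective. The map $M(z)=\mu_0z+\mu_1z^2+\mu_2z^4$ is $\mathbb F_2$-linear because squaring is the Frobenius automorphism. Therefore, for $v\in V$,
\[
x(\xi^{(0)}+v)=x(\xi^{(0)})+\ell(v),\qquad \ell(v):=v_1+M(y(v)),
\]
so $x|_W$ is affine. Its constant term $x(\xi^{(0)})$ contains the summand $\beta$, so it is uniform and independent of $(\mu_0,\mu_1,\mu_2)$. It therefore suffices to show that the linear part $\ell:V\to\mathbb F_{2^s}$ is a uniformly random $\mathbb F_2$-linear map. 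Choose a basis $v^{(1)},\ldots,v^{(t)}$ of $V$ and set $u_i:=y(v^{(i)})$. By injectivity, $u_1,\ldots,u_t$ are $\mathbb F_2$-linearly independent. The map
\[
(\mu_0,\mu_1,\mu_2)\mapsto\bigl(M(u_i)\bigr)_{i\leq t}
\]
is given by the $t\times3$ matrix $(u_i^{2^{j}})_{i\leq t,\,0\leq j\leq2}$ over $\mathbb F_{2^s}$. Its leading $t\times t$ block is a Moore matrix. By the classical Moore determinant formula, this block is nonsingular exactly when the $u_i$ are $\mathbb F_2$-linearly independent. Hence the map is surjective onto $\mathbb F_{2^s}^t$, and uniform $(\mu_0,\mu_1,\mu_2)$ gives uniform $(M(u_i))_i$. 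Adding the fixed linear map $v\mapsto v_1$ preserves uniformity, so $\ell$ is uniform over $\mathrm{Hom}_{\mathbb F_2}(V,\mathbb F_{2^s})$.

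\emph{Bitwise independence.} A uniform affine map $W\to\mathbb F_2^s$ has independent uniform affine Boolean functions as its coordinates, after identifying $\mathbb F_{2^s}\cong\mathbb F_2^s$ through the fixed basis. This gives the claim for $x_1|_W,\ldots,x_s|_W$.

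The main obstacle is the Moore-determinant step. It needs $t\leq3$, which matches the three coefficients $\mu_0,\mu_1,\mu_2$. This is exactly why $T$ is restricted to at most four labels.
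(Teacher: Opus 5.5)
Your proposal is correct and matches the paper's proof essentially step for step. Both arguments use $\mathbb F_2$-linearity of $y$ plus a union bound over the at most $7$ nonzero vectors of $V$, then the Moore-determinant criterion to make $(\mu_0,\mu_1,\mu_2)\mapsto(M(u_i))_{i\le t}$ surjective, with $\beta$ supplying an independent uniform base value; phrasing the linear part as a uniform element of $\mathrm{Hom}_{\mathbb F_2}(V,\mathbb F_{2^s})$ instead of the paper's uniform tuple $(B_0,\ldots,B_d)$ is only cosmetic.
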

	\begin{proof}
		Since $T$ contains at most four labels, $V$ is generated by at most three vectors, and hence has dimension $d:=\dim_{\mathbb F_2}V\leq 3$.
		As $y$ is $\mathbb F_2$-linear, its restriction to $W=\xi^{(0)}+V$ is injective if and only if $y(v)\neq 0$ for every $0\neq v\in V$.
		For fixed $v\neq 0$, either $v_a=0$ for all $a\neq 2$, in which case
		$y(v)=v_2\neq 0$, or some $v_a\neq 0$ and the uniform coefficient
		$\lambda_a$ makes $y(v)$ uniform in $\mathbb F_{2^s}$. 
		A union bound over the at most $2^{\dim V}-1\leq 7$ nonzero $v$ proves the first claim.
		
		We now condition on a choice of $\{\lambda_a\}_{a\neq2}$ for which
		$y|_W$ is injective. 
		Thus, $y|_V$ is also injective.
		Choose a basis $v^{(1)},\ldots,v^{(d)}$ of $V$, and define
		\begin{equation}
			\eta_0:=y(\xi^{(0)}),
			\qquad
			\eta_i:=y(v^{(i)}),
			\quad i\in[d].
			\label{eq:app-fd-eta-basis}
		\end{equation}
		The injectivity of $y$ on $V$ implies that $\eta_1,\ldots,\eta_d$ are linearly independent over $\mathbb F_2$.
		
		By Eq.~\eqref{eq:hashxy},
		\begin{equation}
			x(\xi)
			=
			\xi_1+\beta
			+\mu_0y(\xi)
			+\mu_1y(\xi)^2
			+\mu_2y(\xi)^4.
			\label{eq:app-fd-x-expanded}
		\end{equation}
		The maps $z\mapsto z^2$ and $z\mapsto z^4$ are $\mathbb F_2$-linear. 
		It follows that $x|_W$ is an $\mathbb F_2$-affine function on $W$. In particular, each coordinate function $x_p=[x]_p$ is an affine Boolean function on this affine subspace.
		
		It remains to prove that these coordinate functions are jointly uniform.
		An $\mathbb F_{2^s}$-valued affine function on $\xi^{(0)}+V$ is uniquely determined by its value at $\xi^{(0)}$ and its increments along the basis vectors $v^{(1)},\ldots,v^{(d)}$.
		Accordingly, define
		\begin{equation}
			B_0:=x(\xi^{(0)}),
			\qquad
			B_i:=x(\xi^{(0)}+v^{(i)})-x(\xi^{(0)})
			\label{eq:app-fd-affine-data}
		\end{equation}
		for every $i\in[d]$.
		Writing $\xi^{(0)}_1$ and $v^{(i)}_1$ for the first field-valued blocks
		of $\xi^{(0)}$ and $v^{(i)}$, respectively, Eq.~\eqref{eq:app-fd-x-expanded}
		gives
		\begin{equation}
			B_0=
			\xi^{(0)}_1
			+\mu_0\eta_0
			+\mu_1\eta_0^2
			+\mu_2\eta_0^4
			+\beta
			\label{eq:app-fd-affine-base-value}
		\end{equation}
		and
		\begin{equation}
			B_i=
			v^{(i)}_1
			+\mu_0\eta_i
			+\mu_1\eta_i^2
			+\mu_2\eta_i^4,
			\qquad i\in[d].
			\label{eq:app-fd-affine-increments}
		\end{equation}
		Consider the $d\times3$ Moore matrix
		\begin{equation}
			\mathcal M
			:=
			\begin{pmatrix}
				\eta_1 & \eta_1^2 & \eta_1^4\\
				\vdots & \vdots   & \vdots\\
				\eta_d & \eta_d^2 & \eta_d^4
			\end{pmatrix}.
			\label{eq:app-fd-moore-matrix}
		\end{equation}
		Because $d\leq3$ and $\eta_1,\ldots,\eta_d$ are linearly independent
		over $\mathbb F_2$, the Moore determinant criterion implies that the
		leading $d\times d$ submatrix of $\mathcal M$ is nonsingular.
		Therefore,
		\begin{equation}
			\operatorname{rank}_{\mathbb F_{2^s}}\mathcal M=d.
		\end{equation}
		Consequently, the linear map
		\begin{equation}
			(\mu_0,\mu_1,\mu_2)
			\longmapsto
			\left(
			\mu_0\eta_i+\mu_1\eta_i^2+\mu_2\eta_i^4
			\right)_{i=1}^d
			\label{eq:app-fd-random-increment-map}
		\end{equation}
		is surjective from $\mathbb F_{2^s}^3$ onto
		$\mathbb F_{2^s}^d$. 
		Since $\mu_0,\mu_1,\mu_2$ are independent and uniform, Eq.~\eqref{eq:app-fd-affine-increments} gives that
		\begin{equation}
			(B_1,\ldots,B_d)
		\end{equation}
		is uniform in $\mathbb F_{2^s}^d$. The deterministic translations $v^{(i)}_1$ do not affect this uniformity.
		
		Moreover, conditional on any realization of
		$\mu_0,\mu_1,\mu_2$, Eq.~\eqref{eq:app-fd-affine-base-value} shows that
		$B_0$ is uniform in $\mathbb F_{2^s}$ because $\beta$ is independent and uniform. Thus,
		\begin{equation}
			(B_0,B_1,\ldots,B_d)
		\end{equation}
		is uniform in $\mathbb F_{2^s}^{d+1}$. 
		Hence $x|_W$ is a uniformly random $\mathbb F_{2^s}$-valued affine function .
		Therefore, each coordinate $x_1|_W,\dots,x_s|_W$ forms independent, uniformly random Boolean coordinate functions on $W$.
	\end{proof}
	
	We next show that random affine functions give an odd quadratic parity with at least constant probability.
	
	\begin{lemma}[Quadratic parity of random affine functions]
		\label{lem:app-fd-quadratic-parity}
		Let $W$ be an affine space over $\mathbb F_2$, let $T\subseteq W$ be a non-empty set containing at most four distinct points, and let $f,g:W\rightarrow\mathbb F_2$ be independent uniformly random affine functions. 
		Then
		\begin{equation}
			\Pr_{f,g}\left[
			\bigoplus_{\xi\in T}f(\xi)g(\xi)=1
			\right]
			\geq \frac{1}{4}.
			\label{eq:app-fd-random-affine-quadratic-parity}
		\end{equation}
	\end{lemma}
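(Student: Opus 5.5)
The plan is to condition on $f$. Then the quadratic parity becomes a linear functional of $g$, and such a functional on uniformly random affine functions is either identically zero or an unbiased bit. That reduces the statement to a small combinatorial bound on the random subset of $T$ on which $f$ equals one.

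\emph{Step 1 (conditioning on $f$).} Fix $f$ and set $S_f:=\{\xi\in T: f(\xi)=1\}$. Then
\begin{equation}
\bigoplus_{\xi\in T}f(\xi)g(\xi)=\bigoplus_{\xi\in S_f}g(\xi)=:\Lambda_{S_f}(g).
\end{equation}
The space of affine functions $W\to\mathbb F_2$ is an $\mathbb F_2$-vector space, and $g$ is uniform on it. The map $g\mapsto\Lambda_{S_f}(g)$ is $\mathbb F_2$-linear on this space. Hence $\Lambda_{S_f}(g)$ is either identically $0$ or a uniformly random bit. Therefore
\begin{equation}
\Pr_{f,g}\Bigl[\bigoplus_{\xi\in T}f(\xi)g(\xi)=1\Bigr]=\tfrac12\Pr_f\bigl[\Lambda_{S_f}\not\equiv0\bigr],
\end{equation}
and it suffices to show $\Pr_f[\Lambda_{S_f}\equiv0]\le 1/2$.

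\emph{Step 2 (when the functional vanishes).} Writing $g(\xi)=\langle a,\xi\rangle+b$ in affine coordinates, $\Lambda_S\equiv0$ holds exactly when $|S|$ is even and $\sum_{\xi\in S}\xi=0$. Since the points of $T$ are distinct and $|S|\le4$, this leaves only two cases. Either $S=\emptyset$, or $|S|=4$ with the four points summing to zero. The case $|S|=2$ is excluded because two distinct points are separated by some affine $g$. The four-point case forces $S=T$ with $T$ an affine $2$-flat.

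\emph{Step 3 (bounding the bad probability).} A uniform affine $f$ has a uniform constant term, so $f(\xi^{(0)})$ is uniform for any fixed $\xi^{(0)}\in T$. Hence $\Pr[S_f=\emptyset]\le\Pr[f(\xi^{(0)})=0]=1/2$, which settles the case where $T$ is not a $2$-flat. If $T$ is an affine $2$-flat, the restriction map from affine functions on $W$ to affine functions on $T$ is a surjective linear map. So $f|_T$ is uniform among the $8$ affine functions on a $2$-dimensional affine space, exactly $2$ of which are constant. The bad events $S_f=\emptyset$ or $S_f=T$ then have total probability $1/4\le1/2$.

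In both cases $\Pr_f[\Lambda_{S_f}\not\equiv0]\ge1/2$, which gives the bound $1/4$. The only delicate point is the characterization in Step 2, namely confirming that with distinct points and $|S|\le4$ the only degenerate configurations are $\emptyset$ and a full $2$-flat. I expect this to follow directly from the coordinate form of affine functions.
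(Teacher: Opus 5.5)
Your proof is correct, and it takes a different route from the paper.

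The paper splits on whether $T$ is affinely independent. If it is, the evaluation vectors $(f(\xi))_{\xi\in T}$ and $(g(\xi))_{\xi\in T}$ are independent and uniform on $\mathbb F_2^{|T|}$. So the products $f(\xi)g(\xi)$ are i.i.d.\ Bernoulli variables with mean $1/4$, and the parity equals one with probability $(1-2^{-|T|})/2$. The only remaining configuration is a four-point affine plane. There the paper writes $f,g$ in affine coordinates and computes $\bigoplus_{z\in\mathbb F_2^2}f(z)g(z)=a_1b_2\oplus a_2b_1$, which equals one with probability $3/8$.

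You instead condition on $f$ and observe that the parity is an $\mathbb F_2$-linear functional of the uniform $g$, hence either identically zero or an unbiased bit. The problem then reduces to characterizing the degenerate subsets $S_f$: the empty set, or all of $T$ when $T$ is a $2$-flat. Your bounds reproduce the paper's exact values, $\tfrac12(1-2^{-|T|})$ and $\tfrac12\cdot\tfrac34=\tfrac38$.

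What your route buys is a single mechanism for both cases. It needs full uniformity only of $g$. Of $f$ you use only that $f(\xi^{(0)})$ is unbiased, plus uniformity of $f|_T$ in the plane case. This conditioning-on-$f$ structure is essentially what the paper itself uses later in the affine-plane case of Lemma~\ref{lem:app-a2a-balanced-quadratic-parity}, where uniformity is no longer available. The paper's version is shorter and more computational: it reads the exact distribution directly off the independence of the evaluations.
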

	\begin{proof}
		If the labels in $T$ are affinely independent, the evaluation vectors $(f(\xi))_{\xi\in T}$ and $(g(\xi))_{\xi\in T}$ are independent uniform elements of $\mathbb F_2^{|T|}$. 
		Hence the variables $f(\xi)g(\xi)$ are independent Bernoulli variables with mean $1/4$, and
		\begin{equation}
			\Pr_{f,g}\left[
			\bigoplus_{\xi\in T}f(\xi)g(\xi)=1
			\right]
			=\frac{1-(1-2\cdot\frac14)^{|T|}}{2}
			=\frac{1-2^{-|T|}}{2}
			\geq\frac{1}{4}.
			\label{eq:app-fd-affinely-independent-parity}
		\end{equation}
		Every set of at most three distinct points is affinely independent over $\mathbb F_2$. 
		Thus, the only remaining case is $|T|=4$ and $T$ is an affine plane. 
		After choosing affine coordinates on this plane, write
		\begin{equation}
			f(z_1,z_2)=a_0+a_1z_1+a_2z_2,
			\qquad
			g(z_1,z_2)=b_0+b_1z_1+b_2z_2.
		\end{equation}
		Summing over the four points of the plane gives
		\begin{equation}
			\bigoplus_{z\in\mathbb F_2^2}f(z)g(z)
			=a_1b_2\oplus a_2b_1.
		\end{equation}
		The right-hand side equals one for six of the sixteen ordered pairs of $(a_1,a_2),(b_1,b_2)$, so its probability is $3/8$. 
		This proves the claim.
	\end{proof}

	With the above two lemmas, we next show that for any signed multiplicity profile $c$, there are a constant ratio of pair of coordinates in the quadratic mixing that can detect $c\neq 0$ with high probability.
	
	Fix $c\in\mathcal C_r$, write $\Omega:=\operatorname{supp}(c)$, $R:=|\Omega|\leq2r\leq4k$ and
	\begin{equation}
		\mathcal F_4(\Omega)
		:=\{\varnothing\neq S\subseteq\Omega:|S|\leq4\}.
		\label{eq:app-fd-four-subsets}
	\end{equation}
	For a nonempty set $T\subseteq\Omega$ with $|T|\leq 4$ and two coordinate indices $p,q\in[s]$, define the quadratic parity
	\begin{equation}
		Q_T(p,q):=
		\bigoplus_{\xi\in T}x_p(\xi)x_q(\xi)
		\in\mathbb F_2.
		\label{eq:app-fd-quadratic-parity}
	\end{equation}
	
	\begin{lemma}[Dense detecting pairs]
		\label{lem:app-fd-dense-detecting-pairs}
		There are constants $A_0,c_2,C_2>0$ such that, if $s\geq A_0\log k$, then with probability at least $1-C_2e^{-c_2s}$ over $\mathsf A$, 
		for every $T\in\mathcal F_4(\Omega)$,
		\begin{equation}
			\mathcal D_T
			:=
			\left\{
			(p,q)\in[s]^2:p\neq q,\ Q_T(p,q)=1
			\right\}
			\label{eq:app-fd-detecting-pair-set}
		\end{equation}
		satisfies
		\begin{equation}
			|\mathcal D_T|
			\geq \frac18s(s-1).
			\label{eq:app-fd-detecting-pair-density}
		\end{equation}
	\end{lemma}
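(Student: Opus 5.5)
The plan is a union bound over $T\in\mathcal F_4(\Omega)$, and for each fixed $T$ a concentration argument over the $s(s-1)$ ordered coordinate pairs. The number of sets is
\[
|\mathcal F_4(\Omega)|\leq\sum_{i=1}^4\binom{R}{i}\leq R^4\leq (4k)^4 .
\]
So it suffices to show that, for each fixed $T$, the bad event $|\mathcal D_T|<\tfrac18 s(s-1)$ has probability at most $C e^{-c s}$ over $\mathsf A$. The condition $s\geq A_0\log k$ with $A_0$ large then absorbs the factor $(4k)^4$ into $e^{-c_2 s}$.

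For a fixed $T$, I would first invoke Lemma~\ref{lem:app-fd-affine-restriction}. With probability at least $1-7\cdot2^{-s}$ over $\{\lambda_a\}_{a\neq 2}$, the map $y|_W$ is injective on the affine span $W$ of $T$. Conditional on this, the coordinate functions $x_1|_W,\ldots,x_s|_W$ are independent uniform affine Boolean functions, with the randomness now over $\mu_0,\mu_1,\mu_2,\beta$. For each ordered pair $p\neq q$, Lemma~\ref{lem:app-fd-quadratic-parity} applied to $f=x_p|_W$ and $g=x_q|_W$ gives $\Pr[Q_T(p,q)=1]\geq 1/4$. Hence
\[
\mathbb E|\mathcal D_T|\geq \tfrac14 s(s-1),
\]
and it remains to show that $|\mathcal D_T|$ rarely falls below half its mean.

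Concentration is the main obstacle, because the indicators $Q_T(p,q)$ are not independent: pairs sharing a coordinate are correlated. My first attempt would be to decompose the ordered pairs into about $s$ perfect-matching-type classes. Pair the indices by a round-robin tournament schedule, so that $[s]^2\setminus\{\text{diagonal}\}$ is covered by about $2(s-1)$ matchings, each with about $s/2$ disjoint pairs. Within one matching, the pairs use disjoint coordinates, so their indicators are independent Bernoulli variables with mean at least $1/4$. A Chernoff bound then says that each matching has at least a $1/8+\text{small}$ fraction of successes, except with probability $e^{-\Omega(s)}$. A union bound over the $O(s)$ matchings costs a factor $s$, which is harmless, and summing the counts over matchings gives $|\mathcal D_T|\geq\tfrac18 s(s-1)$. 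To avoid slack issues, I would aim for a threshold such as $3/16$ per matching and use Hoeffding with deviation $1/16$.

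An alternative is McDiarmid's inequality applied to the $s$ independent functions $x_p|_W$. Changing one of them alters at most $2(s-1)$ pairs, so $|\mathcal D_T|/(s(s-1))$ has bounded differences $O(1/s)$, which gives a deviation probability $e^{-\Omega(s)}$ directly. I would present this version because it is cleaner. Finally, I combine the injectivity failure probability $7\cdot2^{-s}$, the concentration failure $e^{-cs}$, and the union bound over at most $(4k)^4$ sets $T$. Choosing $A_0$ so that $(4k)^4\leq e^{c s/2}$ gives the stated bound with $C_2,c_2$ adjusted accordingly.
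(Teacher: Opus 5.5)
Your proposal is correct and follows the paper's proof essentially step for step. The steps are: injectivity of $y|_{W_T}$ via Lemma~\ref{lem:app-fd-affine-restriction}, the pairwise bound $\Pr[Q_T(p,q)=1]\geq 1/4$ from Lemma~\ref{lem:app-fd-quadratic-parity}, McDiarmid's inequality on the normalized density with bounded differences $2/s$ (giving $e^{-s/128}$ at threshold $1/8$), and a union bound over the $\cO(k^4)$ sets $T$, absorbed by $s\geq A_0\log k$. The only cosmetic difference is that you charge the injectivity failure per $T$, whereas the paper first conditions on simultaneous injectivity; the two are equivalent.
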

	\begin{proof}
		For $T\in\mathcal F_4(\Omega)$, fix $\xi^{(0)}\in T$, define
		\begin{equation}
			V_T:=\operatorname{span}_{\mathbb F_2}
			\{\xi-\xi^{(0)}:\xi\in T\},
			\qquad
			W_T:=\xi^{(0)}+V_T,
		\end{equation}
		and apply Lemma~\ref{lem:app-fd-affine-restriction}. 
		That lemma shows that $y$ is injective on $W_T$ except with probability at most $7\cdot2^{-s}$.
		Since
		\begin{equation}
			|\mathcal F_4(\Omega)|
			\leq
			\sum_{t=1}^4\binom{R}{t}
			\leq (R+1)^4
			\leq(4k+1)^4,
			\label{eq:app-fd-number-four-subsets}
		\end{equation}
		a union bound shows that, with failure probability at most $7(4k+1)^4 2^{-s}$, this injectivity holds simultaneously for every $T\in\mathcal F_4(\Omega)$.
		
		Condition on any realization of $\{\lambda_a\}_{a\neq2}$ with this simultaneous injectivity property. For a fixed $T$, the restrictions
		\begin{equation}
			x_1|_{W_T},\ldots,x_s|_{W_T}
		\end{equation}
		are independent uniformly random affine Boolean functions by Lemma~\ref{lem:app-fd-affine-restriction}. 
		Consequently, Lemma~\ref{lem:app-fd-quadratic-parity} gives, for every $p\neq q$,
		\begin{equation}
			\Pr_{\mu_0,\mu_1,\mu_2,\beta}
			\left[Q_T(p,q)=1\right]
			\geq\frac14.
			\label{eq:app-fd-coordinate-pair-detection}
		\end{equation}
		Define the normalized density
		\begin{equation}
			Z_T
			:=\frac{1}{s(s-1)}
			\sum_{p\neq q}\mathbbm 1[Q_T(p,q)=1].
		\end{equation}
		Then $\mathbb E[Z_T]\geq1/4$. 
		Replacing one of the independent coordinate functions $x_p|_{W_T}$ changes $Z_T$ by at most $2/s$.
		McDiarmid's inequality therefore yields
		\begin{equation}
			\Pr\left[Z_T<\frac18\right]
			\leq
			\exp\left(-\frac{s}{128}\right).
			\label{eq:app-fd-pair-density-concentration}
		\end{equation}
		Another union bound over $T\in\mathcal F_4(\Omega)$ gives total failure probability
		\begin{equation}
			7(4k+1)^4 2^{-s}
			+(4k+1)^4e^{-s/128}.
			\label{eq:app-fd-affine-density-failure}
		\end{equation}
		If $s\geq A_0\log k$ with $A_0$ sufficiently large, the expression in Eq.~\eqref{eq:app-fd-affine-density-failure} is at most $C_2e^{-c_2s}$ after adjusting the constants.
	\end{proof}
	
	We next show that the sampled permutations give at least one detecting pair on most output coordinates with high probability.
	
	\begin{lemma}[Shared-permutation coverage]
		\label{lem:app-fd-shared-permutation-coverage}
		Assume that Eq.~\eqref{eq:app-fd-detecting-pair-density} holds for every $T\in\mathcal F_4(\Omega)$. 
		If the constant $C_w$ in $w=\lceil C_w\log k\rceil$ is sufficiently large, then, with
		probability at least $1-e^{-c_3s/w}$ over $\Pi$, there is a set $G_c\subseteq[s]$ with $|G_c|\geq s/2$ such that for every
		$j\in G_c$ and every $T\in\mathcal F_4(\Omega)$, at least one $l\in[w]$ satisfies
		\begin{equation}
			Q_T\bigl(\pi_l(j),\sigma_l(j)\bigr)=1.
			\label{eq:app-fd-simultaneous-permutation-detection}
		\end{equation}
		Here $c_3>0$ is an absolute constant.
	\end{lemma}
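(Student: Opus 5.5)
We fix $c$ and condition on a realization of $\mathsf A$ for which Eq.~\eqref{eq:app-fd-detecting-pair-density} holds for every $T\in\mathcal F_4(\Omega)$. Only $\Pi$ remains random. For each $T$ and each output coordinate $j$, call $j$ \emph{bad for $T$} if $Q_T(\pi_l(j),\sigma_l(j))=0$ for every $l\in[w]$. Then call $j$ bad if it is bad for some $T\in\mathcal F_4(\Omega)$. The set $G_c$ will be the complement of the bad coordinates, so the task is to show that fewer than $s/2$ coordinates are bad, except with probability $e^{-c_3s/w}$.

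\emph{First step (single coordinate).} For a fixed $j$ and $l$, the pair $(\pi_l(j),\sigma_l(j))$ is a uniform pair in $[s]^2$, since $\pi_l$ and $\sigma_l$ are independent uniform permutations. With probability $1-1/s$ it has distinct entries, and conditioned on that it is uniform among ordered pairs with $p\neq q$. So by Eq.~\eqref{eq:app-fd-detecting-pair-density} it lies in $\mathcal D_T$ with probability at least $\tfrac18(1-1/s)\geq 1/9$ for $s$ large. The $w$ indices $l$ are independent, so $\Pr[j\text{ bad for }T]\leq(8/9)^w$. A union bound over the at most $(4k+1)^4$ sets $T$ gives $\Pr[j\text{ bad}]\leq(4k+1)^4(8/9)^{w}\leq 1/8$, once $C_w$ is a large enough constant that $(8/9)^{C_w\log k}$ beats $(4k+1)^4$ with room to spare (for $k\ge2$). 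Hence the expected number of bad coordinates is at most $s/8$.

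\emph{Second step (concentration), the main obstacle.} The bad events for different $j$ are not independent, because the same $2w$ permutations serve all coordinates. I will apply McDiarmid's inequality with the $2w$ independent permutations $\pi_1,\sigma_1,\dots,\pi_w,\sigma_w$ as the underlying variables, and $N:=\#\{\text{bad }j\}$ as the function. Changing one permutation can alter every coordinate, giving a bounded difference of $s$, which is useless. I therefore plan to expose each permutation as a sequence of $s$ values $\pi_l(1),\pi_l(2),\dots$ and use the martingale version for random permutations (equivalently, McDiarmid for permutations via swaps). Applying a transposition to $\pi_l$ changes at most two values $\pi_l(j)$, and hence changes $N$ by at most $2$. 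The standard concentration inequality for Lipschitz functions on $S_s$ under transpositions (Maurey / McDiarmid), applied to the product of $2w$ symmetric groups, gives
\[
\Pr\bigl[N\geq \mathbb E N+t\bigr]\leq \exp\!\bigl(-t^2/(C\cdot 2w\cdot s)\bigr).
\]
Taking $t=s/4$ yields $\Pr[N\geq 3s/8]\leq \exp(-c_3 s/w)$. On the complementary event $|G_c|\geq 5s/8\geq s/2$, as required. The $1/w$ in the exponent matches exactly the loss from having $w$ shared permutations, consistent with the stated bound.

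Finally, since $\mathcal F_4(\Omega)$ depends only on $c$, and the coverage property was shown for every $T$ simultaneously, the set $G_c$ works uniformly over $T$, which proves the lemma. If the permutation concentration inequality turns out awkward to cite, a fallback is to split $[s]$ into blocks and use the near-independence of the values of a random permutation on disjoint small sets. I expect the transposition-Lipschitz route to be cleaner.
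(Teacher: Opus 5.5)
Your proposal is correct and matches the paper's proof step for step. The paper also bounds the single-coordinate failure using the uniformity of $(\pi_l(j),\sigma_l(j))$ and independence over $l$, takes a union bound over the at most $(4k+1)^4$ sets in $\mathcal F_4(\Omega)$ to make the expected number of bad coordinates a small constant fraction of $s$, and then applies McDiarmid's bounded-differences inequality for $2w$ independent random permutations with transposition-Lipschitz constant $2$. The only differences are the numerical constants: the paper uses $s(s-1)/(8s^2)\geq 1/16$, which holds for $s\geq 2$, whereas your $1/9$ needs $s\geq 9$.
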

	
	\begin{proof}
		Fix $j\in[s]$ and $T\in\mathcal F_4(\Omega)$. For each $l$, the pair $(\pi_l(j),\sigma_l(j))$ is uniform in $[s]^2$. 
		Thus, for $s\geq2$, Eq.~\eqref{eq:app-fd-detecting-pair-density} implies
		\begin{equation}
			\Pr_\Pi\left[
			Q_T\bigl(\pi_l(j),\sigma_l(j)\bigr)=1
			\right]
			\geq
			\frac{s(s-1)}{8s^2}
			\geq\frac1{16}.
		\end{equation}
		The permutation pairs are independent across $l$, and hence
		\begin{equation}
			\Pr_\Pi\left[
			Q_T\bigl(\pi_l(j),\sigma_l(j)\bigr)=0
			\text{ for every }l\in[w]
			\right]
			\leq e^{-w/16}.
			\label{eq:app-fd-one-set-uncovered}
		\end{equation}
		Call $j$ bad if Eq.~\eqref{eq:app-fd-simultaneous-permutation-detection} fails for at least one $T\in\mathcal F_4(\Omega)$, and let $B$ be the number of bad coordinates. 
		By Eqs.~\eqref{eq:app-fd-number-four-subsets}
		and~\eqref{eq:app-fd-one-set-uncovered},
		\begin{equation}
			\Pr_\Pi[j\text{ is bad}]
			\leq(4k+1)^4e^{-w/16}.
		\end{equation}
		Taking $C_w$ sufficiently large makes the right-hand side at most $1/16$, and therefore $\mathbb E_\Pi B\leq s/16$.
		
		We then apply the bounded-differences inequality for random permutations~\cite{McDiarmid2002Permutations}: if a function of $q$ independent uniform permutations changes by at most $L$ when one permutation is modified by a transposition, then, for a universal constant $C_{\mathrm{perm}}>0$,
		\begin{equation}
			\Pr[F-\mathbb EF\geq t]
			\leq
			\exp\left(-\frac{t^2}{C_{\mathrm{perm}}qsL^2}\right).
			\label{eq:app-fd-permutation-bounded-difference}
		\end{equation}
		Here $q=2w$. A transposition in one $\pi_l$ or $\sigma_l$ changes the sampled pair only at the two transposed input coordinates, so it can change $B$ by at most $L=2$. 
		Applying Eq.~\eqref{eq:app-fd-permutation-bounded-difference}, and using
		$\mathbb EB\leq s/16$, gives
		\begin{equation}
			\Pr_\Pi[B>s/2]
			\leq
			\exp\left(-c_3\frac{s}{w}\right)
			\label{eq:app-fd-bad-coordinate-concentration}
		\end{equation}
		for an absolute constant $c_3>0$. 
		Taking $G_c$ to be the complement of the bad coordinates proves the lemma.
	\end{proof}
	
	We now return to Lemma~\ref{lem:app-fd-moment-estimate}.
	\begin{proof}[Proof of Lemma~\ref{lem:app-fd-moment-estimate}]
		Let $\mathcal E_c$ be the event that the conclusion of Lemma~\ref{lem:app-fd-dense-detecting-pairs} holds and that, conditional on it, the shared permutations satisfy the conclusion of Lemma~\ref{lem:app-fd-shared-permutation-coverage}. 
		Combining the two lemmas gives
		\begin{equation}
			\Pr_{\mathsf A,\Pi}[\mathcal E_c]
			\geq
			1-C_2e^{-c_2s}-e^{-c_3s/w}.
			\label{eq:app-fd-combined-structural-event}
		\end{equation}
		After renaming constants, this is Eq.~\eqref{eq:app-fd-structural-failure}.
		
		Fix $(\mathsf A,\Pi)\in\mathcal E_c$ and $j\in G_c$, and write
		\begin{equation}
			\chi_j(\xi):=(-1)^{h_j(\xi)}.
		\end{equation}
		For every nonempty $T\subseteq\Omega$ with $|T|\leq 4$, Eqs.~\eqref{eq:bits-feature} and \eqref{eq:app-fd-quadratic-parity} give
		\begin{equation}
			\begin{aligned}
				\mathbb E_{\gamma_j}
				\prod_{\xi\in T}\chi_j(\xi)
				&={(-1)}^{\sum_{\xi\in T}y_j(\xi)}
				\prod_{l=1}^{w}
				\mathbb E_{\gamma_{j,l}}
				{(-1)}^{
					\gamma_{j,l}
					Q_T(\pi_l(j),\sigma_l(j))
				}.
				\label{eq:app-fd-character-factorization}
			\end{aligned}
		\end{equation}
		By Eq.~\eqref{eq:app-fd-simultaneous-permutation-detection}, some $l\in[w]$ has $Q_T(\pi_l(j),\sigma_l(j))=1$. 
		The corresponding factor in Eq.~\eqref{eq:app-fd-character-factorization} is $\mathbb E_{\gamma_{j,l}}(-1)^{\gamma_{j,l}}=0$. 
		Consequently,
		\begin{equation}
			\mathbb E_{\gamma_j}
			\prod_{\xi\in T}\chi_j(\xi)=0
			\qquad
			\text{for every }\varnothing\neq T\subseteq\Omega,
			\quad |T|\leq4.
			\label{eq:app-fd-character-cancellation}
		\end{equation}
		
		Since $S_j(c)=\sum_{\xi\in\Omega}c_\xi\chi_j(\xi)$, expanding the second moment and using Eq.~\eqref{eq:app-fd-character-cancellation} for each pair of distinct labels gives
		\begin{equation}
			\begin{aligned}
				\mathbb E_{\gamma_j}S_j(c)^2
				&=
				\sum_{\xi,\eta\in\Omega}
				c_\xi c_\eta
				\mathbb E_{\gamma_j}[\chi_j(\xi)\chi_j(\eta)]
				=\sum_{\xi\in\Omega}c_\xi^2.
				\label{eq:app-fd-second-moment-expansion}
			\end{aligned}
		\end{equation}
		
		For the fourth moment, consider an ordered quadruple $(\xi_1,\xi_2,\xi_3,\xi_4)\in\Omega^4$.  
		Because $\chi_j(\xi)^2=1$, the associated character is supported precisely on the set of labels appearing an odd number of times.  
		If this set is nonempty, it has size at most four, and the term vanishes by Eq.~\eqref{eq:app-fd-character-cancellation}.  
		Hence, a term survives only when every label occurs an even number of times: either one label occurs four times, or two distinct labels occur twice each.  
		There are $\binom{4}{2}=6$ orderings of the latter pattern, and therefore
		\begin{equation}
			\begin{aligned}
				\mathbb E_{\gamma_j}S_j(c)^4
				&=
				\sum_{\xi\in\Omega}c_\xi^4
				+6\sum_{\substack{\xi,\eta\in\Omega\\\xi<\eta}}
				c_\xi^2c_\eta^2
				\\
				&=
				3\left(\sum_{\xi\in\Omega}c_\xi^2\right)^2
				-2\sum_{\xi\in\Omega}c_\xi^4
				\\
				&\leq
				3\left(\sum_{\xi\in\Omega}c_\xi^2\right)^2.
				\label{eq:app-fd-fourth-moment-expansion}
			\end{aligned}
		\end{equation}
		These are exactly the two identities in Eq.~\eqref{eq:app-fd-second-fourth-moments}.
	\end{proof}

	\subsection{Circuit implementation}
	\label{app:finite-dimensional-depth}
	This subsection implements the ensemble and establishes the depth bound in Eq.~\eqref{eq:finite-dimensional-depth} using only the $m$ occupied system qubits.
	The idea for eliminating initialized auxiliary qubits is catalytic computation, in which an inactive portion of the data may be borrowed as workspace in an arbitrary state, provided that it is returned exactly after the operation.
	
	For the convenience of data distribution, recursive subdivision, and routing, we work in regular boxes defined as follows.
	\begin{definition}[Regular box]
		\label{def:regular-box}
		A set $S$ of sites is a regular box, if it can be written as $S=[l_1,r_1]\times[l_2,r_2]\times\cdots\times[l_\delta,r_\delta]$, where $r_i-l_i+1=2^{t_i}$ for some integer $t_i\in\mathbb{N}$ for each dimension, and $|t_i-t_j|\leq 1$ for every $i\neq j$.
		In other words, a regular box is a rectangular region whose side lengths are powers of two and differ by at most a factor of two.
		Its volume is denoted by $V:=|S|$.
	\end{definition}
	
	\begin{definition}[Routing numbers~\cite{Alon1993RoutingPermutation}]
		For a graph $G=(V,E)$ and a permutation $\pi$, define $\mathrm{rt}(G,\pi)$ as the minimum depth of swap gates to achieve the permutation $\pi$ on the architecture $G$.
		Define the routing number of $G$ as $\mathrm{rt}(G)=\max_\pi\mathrm{rt}(G,\pi)$.
	\end{definition}
	
	The following lemma shows that the cost of such a rearrangement is of the same order as the side lengths of the regular rectangular box, and there is a Hamiltonian path on which disjoint local interactions can be scheduled.
	\begin{lemma}[Routing in a regular box]
		\label{lem:app-fd-regular-routing}
		Let $B$ be a $\delta$-dimensional regular box of volume $V$.
		The routing number $\mathrm{rt}(B)$ is at most 
		$\cO_\delta(V^{1/\delta})$. Moreover, $B$ also contains a nearest-neighbor Hamiltonian path. 
	\end{lemma}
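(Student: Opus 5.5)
The plan is to reduce the statement to the classical result that the $\delta$-dimensional grid $[n_1]\times\cdots\times[n_\delta]$ has routing number $\cO_\delta(\sum_i n_i)$. This is the Alon--Chung--Graham bound for products of paths: route a $d$-dimensional grid by three rounds of routing along lower-dimensional slices. We then combine it with the regularity condition on the side lengths. A regular box has sides $2^{t_i}$ with $|t_i-t_j|\le 1$, so every side length lies between $V^{1/\delta}/2$ and $2V^{1/\delta}$. The sum of side lengths is therefore at most $2\delta V^{1/\delta}=\cO_\delta(V^{1/\delta})$, which is the claimed bound.

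For the routing bound itself we induct on $\delta$, in the style of the standard row--column--row argument.

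\emph{Base case $\delta=1$.} A path of length $\ell$ has routing number at most $\ell$, by odd--even transposition sort.

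\emph{Inductive step.} View $B=B'\times[\ell_\delta]$ as $\ell_\delta$ copies of a $(\delta-1)$-dimensional box $B'$. Given a permutation $\pi$, form the bipartite multigraph between layers (source layer versus destination layer). It is $|B'|$-regular, so by Hall's theorem / K\H{o}nig's edge-colouring theorem it decomposes into $|B'|$ perfect matchings. These matchings assign each token an intermediate position inside its fibre so that, within each $B'$-layer, all tokens have distinct destination layers.

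The routing then proceeds in three phases:
\begin{enumerate}
\item route within each layer $B'$ so that every token sits at the position labelled by its matching colour;
\item route along the $\delta$-th coordinate lines, i.e.\ paths of length $\ell_\delta$, to bring each token to its destination layer;
\item route within each destination layer to the final position.
\end{enumerate}
This gives $\mathrm{rt}(B)\le 2\,\mathrm{rt}(B')+\ell_\delta$. Unrolling the recursion yields $\cO(2^\delta\sum_i\ell_i)=\cO_\delta(V^{1/\delta})$.

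For the Hamiltonian path we use a boustrophedon (snake) order, constructed inductively. In one dimension the path is the segment itself. Given a Hamiltonian path $P'$ of $B'$ running from endpoint $u$ to endpoint $w$, traverse layer $1$ along $P'$, then step in the $\delta$-th coordinate to layer $2$ and traverse it along $P'$ reversed, and continue alternating. Consecutive layers are joined by one nearest-neighbour edge, because the end of one traversal and the start of the next occupy the same $B'$-site in adjacent layers. The result is a nearest-neighbour Hamiltonian path of $B$.

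I expect no serious obstacle. The only point needing care is that the intermediate phases stay inside the box, and each phase routes within slices that are sub-boxes. Since the lemma only claims $\cO_\delta$, the $2^\delta$ blow-up from the recursion is harmless.
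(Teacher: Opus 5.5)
Your proposal is correct and follows essentially the same route as the paper: induction on $\delta$ with odd--even transposition sort as the base case, K\H{o}nig's edge-colouring of the $|B'|$-regular source/destination slice multigraph, three-phase slice--line--slice routing giving $\mathrm{rt}(B)\le 2\,\mathrm{rt}(B')+\ell_\delta$, regularity of the side lengths to reach $\cO_\delta(V^{1/\delta})$, and the alternating snake for the Hamiltonian path. One wording fix: the colouring guarantees that the tokens placed on each $\delta$-th coordinate line (all of one colour) have distinct destination layers. It does not guarantee this for tokens within a single $B'$-layer, and it is the line property that makes phase~2 a permutation of each line.
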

	\begin{proof}
		After translating coordinates, write
		$B=[L_1]\times\cdots\times[L_\delta]$, where each $L_i$ is a power of two and the side lengths differ by at most a factor of two. 
		We prove the statement by induction on $\delta$.
		For $\delta=1$, the box is the path $P_V=[V]$, which is itself a nearest-neighbor Hamiltonian path. 
		Consider an arbitrary permutation of the qubits.
		Route the qubits using odd--even transposition sort.  
		In every odd round, consider the disjoint edges $(2i+1,2i+2)$ whereas in every even round, consider $(2i,2i+1)$.
		On each active edge, swap its two qubits if and only if their destination labels appear in decreasing order.
		Each round consists of a single layer of disjoint nearest-neighbor swap gates.  
		Odd--even transposition sort terminates after at most $V$ rounds, so every qubit reaches its target position within $V$ nearest-neighbor layers~\cite[Proposition~2.3]{Casanova2008ParallelAlgorithms}.

		Suppose now that $\delta\geq 2$, and partition $B$ into the $L_\delta$ slices obtained by fixing the last coordinate.
		Each slice is a $(\delta-1)$-dimensional regular box of volume $V/L_\delta$.
		Selecting the same Hamiltonian path in each slice and connecting them end-to-end in alternating directions obtains a Hamiltonian path for the entire box.

		Suppose we want to implement a permutation $\tau$.
		Construct a bipartite multigraph $G$ whose left and right vertex sets are two copies of $[L_\delta]$.
		For each qubit, add one edge from its source slice to its destination slice.
		Let $C=[L_1]\times\cdots\times[L_{\delta-1}]$.
		The multigraph is $|C|=V/L_\delta$-regular, and hence admits a proper edge coloring with $|C|$ colors by K\H{o}nig's line-coloring theorem. 
		Every color appears exactly once at every vertex, so each color class is a perfect matching between the source and destination slices.
		Fix a bijection between the $|C|$ colors and the positions in $C$.
		The routing is carried out in three stages.
		\begin{enumerate}
			\item Inside every source slice, permute its $|C|$ qubits so that the qubit whose edge has color $c$ occupies the position of $C$ assigned to $c$. By the induction hypothesis, all source slices can perform these permutations in parallel in depth $\cO_{\delta}\!\left(|C|^{1/(\delta-1)}\right)$.
			\item For every color $c$, consider the sites assigned to $c$ in all $L_\delta$ slices.
			Because the color-$c$ edges form a perfect matching, the desired movements of the qubits on this path constitute a permutation of its $L_\delta$ sites.
			We route these permutations in parallel in depth at most $L_\delta$ by the one-dimensional case.
			\item Within each destination slice, route the qubits from their color positions to their prescribed destination sites. This also takes $\cO_{\delta}\!\left(|C|^{1/(\delta-1)}\right)$ depth.
		\end{enumerate}
		It follows that the routing depth satisfies
		\begin{equation}
			\begin{aligned}
				\operatorname{rt}(B)
				&\leq
				2O_{\delta}\!\left(|C|^{1/(\delta-1)}\right)+L_\delta\\
				&=\cO_\delta\!\left(V^{1/\delta}\right).
			\end{aligned}
		\end{equation}
		Here the last equality uses the regular-box condition that all $L_i$ differ by at most a factor of two.
		This completes the induction.
	\end{proof}
	
	For every integer $0\leq t\leq\log_2 V$, a regular box can also be partitioned into $2^t$ congruent regular subboxes.
	To obtain such a partition, repeat the following operation for $t$ rounds: bisect every current subbox along the same longest coordinate direction.
	
	The following lemma also establishes the routing property of balanced prefix windows.
	\begin{lemma}[Routing in a prefix window]
		\label{lem:app-fd-prefix-routing}
		For every $\delta\geq 1$, every constant $a\geq 1$ and prefix window $H_\delta(L,N)$ with $N\leq aL^\delta$,
		\begin{equation}
			\mathrm{rt}\bigl(H_\delta(L,N)\bigr)=\cO_{\delta,a}(L).
			\label{eq:app-fd-prefix-routing}
		\end{equation}
		In particular, $\mathrm{rt}\bigl(H_\delta(L,N)\bigr)=\cO_{\delta}(L)$ for balanced prefix windows.
	\end{lemma}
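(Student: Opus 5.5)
The plan is to reduce routing on the prefix window $H_\delta(L,N)$ to routing on a constant number of regular boxes of side length $\Theta_{\delta,a}(L)$, where Lemma~\ref{lem:app-fd-regular-routing} applies, glued along the first coordinate. Write $N = qL^{\delta-1} + \rho$ with $0\le \rho < L^{\delta-1}$. In lexicographic order, $H_\delta(L,N)$ then consists of $q\le a L$ complete slices $\{i\}\times[L]^{\delta-1}$ for $i\in[q]$, together with a partial slice. The partial slice is itself a $(\delta-1)$-dimensional prefix window $H_{\delta-1}(L,\rho)$.

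I will proceed by induction on $\delta$. For $\delta=1$ the window is a path of length $N\le aL$, and odd--even transposition sort routes it in depth $N=\cO_a(L)$. For $\delta\ge 2$ I first handle the full part $F=[q]\times[L]^{\delta-1}$. This is a box with sides $q$ and $L$, where $q\le aL$. I cannot assume it is regular, so I will route it directly with the same three-stage K\H{o}nig edge-coloring argument as in Lemma~\ref{lem:app-fd-regular-routing}. The slices $[L]^{\delta-1}$ play the role of the $(\delta-1)$-dimensional pieces; each slice is contained in a regular box (pad $L$ up to a power of two $L'<2L$) and is itself routable in $\cO_\delta(L)$ depth by the inductive hypothesis applied to the full cube. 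The columns are paths of length $q$ and are routed by odd--even sort. This gives $\mathrm{rt}(F)=\cO_{\delta,a}(L)$.

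To absorb the partial slice, I will treat the whole window as the slice-graph $[q+1]$ with unequal slice sizes. The standard trick is a three-phase routing. First, move every qubit whose destination lies in the partial slice $P$, or which currently sits in $P$, so that the two sets exchange places. Then route within $F$ and within $P$ separately. Concretely: the set of qubits in $F$ destined for $P$ has the same size as the set in $P$ destined for $F$. In the first phase, route within $F$ (in $\cO(L)$ depth) so that the qubits of $F$ destined for $P$ occupy the sites of slice $q$ adjacent to the occupied sites of $P$; these are exactly the sites $\{q\}\times$(first $\rho$ sites in lexicographic order of $[L]^{\delta-1}$). Simultaneously, route within $P$ by induction, using $\mathrm{rt}(H_{\delta-1}(L,\rho))=\cO_\delta(L)$, so that the qubits destined for $F$ sit on a prefix of $P$ matching those positions. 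A single layer of swaps along the first coordinate then exchanges them. Finally, route within $F$ and within $P$ again to the exact destinations. The total depth is $\cO_{\delta,a}(L)+\cO_\delta(L)=\cO_{\delta,a}(L)$.

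The main obstacle is the matching step. The two exchanged sets must be placed at aligned positions, and the partial slice $P$ is not a box, so the inductive hypothesis must be invoked on prefix windows rather than boxes. This is why the induction is phrased for $H_{\delta-1}(L,\cdot)$ with ratio constant $1$. I also need to check that sites $(q,z)$ and $(q+1,z)$ are adjacent for every $z$ among the first $\rho$ lexicographic sites; this holds in the tube by construction. The balanced case then follows with $a=\max(4,2^\delta)$.
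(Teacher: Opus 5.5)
Your proof is correct, but it handles the partial slice differently from the paper.

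\textbf{The paper's route.} The paper does a single three-stage K\H{o}nig routing on the whole window. It forms the source/destination slice multigraph on $[q+1]$, which has maximum degree $L^{\delta-1}$, and edge-colors it with $L^{\delta-1}$ colors. Each color appears exactly once at every complete slice. Counting the edges of one color from either side gives $q$ plus an incidence indicator, so a color meets the partial source slice if and only if it meets the partial destination slice. Hence the $\rho$ such colors can be assigned to the sites of $H_{\delta-1}(L,\rho)$, and the corresponding columns simply have length $q+1$.

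\textbf{Your route.} You use K\H{o}nig only on the complete part $[q]\times[L]^{\delta-1}$, where the multigraph is genuinely regular. You then absorb the partial slice by an exchange step. You align the $t\le\rho$ crossing qubits on facing sites $(q,z_i)$ and $(q+1,z_i)$, swap them in one layer, and route each part internally.

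\textbf{Comparison.} Your version sidesteps the incidence bookkeeping at the partial vertex. It is also modular: it proves $\mathrm{rt}(G_1\cup G_2)\le 2\max\{\mathrm{rt}(G_1),\mathrm{rt}(G_2)\}+1$ whenever the two parts are joined by a matching covering the smaller one. The price is roughly doubling the constant, which is irrelevant for an $\cO_{\delta,a}(L)$ bound.

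\textbf{A cosmetic point.} Your opening plan of reducing to regular boxes plays no role, and neither does padding $L$ to a power of two; the padded sites are not even in the graph. What you actually use, correctly, is that $[L]^{\delta-1}=H_{\delta-1}(L,L^{\delta-1})$ and $H_{\delta-1}(L,\rho)$ are prefix windows with ratio $1$. The inductive hypothesis therefore applies to them directly.
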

	\begin{proof}
		The proof is similar to that of Lemma~\ref{lem:app-fd-regular-routing}.
		We prove the claim by induction on $\delta$. 
		For $\delta=1$, the window is a line and the proof follows Lemma~\ref{lem:app-fd-regular-routing}.
		
		Suppose $\delta\geq 2$.
		Write $N=qL^{\delta-1}+r$ for $0\leq r<L^{\delta-1}$,
		then $H_\delta(L,N)$ can be written as 
		\begin{equation}
			H_\delta(L,N)=([q]\times C)\cup(\{q+1\}\times F),
		\end{equation}
		where $C=[L]^{\delta-1}$ is a $(\delta-1)$-dimensional hypercube, and $F=H_{\delta-1}(L,r)$ is a $(\delta-1)$-dimensional prefix window.
		
		Fix a permutation $\tau$ of the $N$ qubits.
		Construct a bipartite multigraph whose left and right vertices are copies of the occupied slices and whose edges represent qubits, with each edge connecting its source slice to its destination slice. 
		Every complete-slice vertex has degree $|C|$, while every partial-slice vertex has degree $|F|$. 
		Thus the maximum degree is $|C|$, and K\H{o}nig's line-coloring theorem gives a proper edge coloring with $|C|$ colors.
		
		Every color occurs exactly once at each complete-slice vertex. 
		A color is incident to the partial source slice if and only if it is incident to the partial destination slice, as the number of edges of that color is $q$ plus the corresponding incidence indicator when counted from either side.
		Choose a bijection from the colors to the sites of $C$ that maps the $|F|$ colors incident to the partial slices onto the sites of $F$.
		
		The remaining routing follows the proof of Lemma~\ref{lem:app-fd-regular-routing}.
		The complete-slice $C$ and the partial-slice $F$ routings also have depth $\cO_{\delta-1}(L)$ by the induction hypothesis, since $|C|,|F|\leq L^{\delta-1}$. 
		So the total depth is
		\begin{equation}
			2O_{\delta-1}(L)+(q+1)=\cO_{\delta,a}(L),
		\end{equation}
		since $q\leq aL$.
	\end{proof}
	
	We now introduce the formalization of catalytic computation.
	
	\begin{definition}[Catalytic computation]
		\label{def:app-fd-catalytic-target-add}
		Let $W$ be a workspace containing a source register $A$ and a disjoint target register $T$.
		A circuit catalytically implements the target addition for function $f$ if its action on computational-basis labels is
		\begin{equation}
			\ket{a,t,\omega}
			\longmapsto
			\ket{a,t+f(a),\omega}
			\label{eq:app-fd-catalytic-target-add}
		\end{equation}
		for every $a,t,\omega$.
		Here $\omega$ denotes the joint state of all other qubits inside the workspace $W$.
		Equivalently, the implemented unitary is $U_f\otimes I_{W\setminus A\setminus T}$ with $U_f:\ket{a}_A\ket{t}_T\longmapsto\ket{a}_A\ket{t+f(a)}_T$.
		Note that the catalysts may start in an arbitrary state, possibly entangled with $A$, $T$, or an external reference.
	\end{definition}
	
	One challenge in designing catalytic circuits is that we cannot explicitly store intermediate values, which makes it difficult to compose basic operations.
	The next lemma introduces a compiler for fixed, constant-size straight-line programs.
	Once an ordinary arithmetic straight-line program has been written using constant, linear, and bilinear instructions, the compiler converts it into a catalytic circuit.
	With this compiler, we can reduce our task to constructing suitable straight-line programs and implementing their primitive instructions.
	
	\begin{lemma}[A compiler for catalytic computation]
		\label{lem:catalytic-operations}
		Let $R$ be a commutative ring, and let $\mathcal P$ be a fixed arithmetic straight-line program whose registers range over finite free $R$-modules.
		The instructions of $\mathcal P$ have the following forms:
		\begin{enumerate}
			\item Constant instruction: output $T\gets C$ for some hardwired constant $C$.
			\item Linear instruction: given an input $E$, output $T\gets L(E)$ for some hardwired linear map $L$.
			\item Bilinear instruction: given two inputs $E$ and $F$, output $T\gets B(E,F)$ for some hardwired bilinear map $B$.
		\end{enumerate}
		Denote the output of $\mathcal{P}$ on the source registers $X$ with value $x$ by $\mathcal P(x)$.
		
		Suppose that, for every instruction occurring in $\mathcal P$, a catalytic target addition circuits of the corresponding form:
		\begin{equation}
			\ket{t,\omega}\longmapsto \ket{t+C,\omega},
		\end{equation}
		\begin{equation}
			\ket{e,t,\omega}\longmapsto \ket{e,t+L(e),\omega}
		\end{equation}
		or
		\begin{equation}
			\ket{e,f,t,\omega}\longmapsto \ket{e,f,t+B(e,f),\omega}
		\end{equation}
		is available, where $t$ is the initial value in the output register. 
		Then, using only dirty registers to hold intermediate values, the compiler produces a catalytic implementation
		\begin{equation}
			\label{eq:catalytic-program}
			\ket{x,t,\omega}  \longmapsto
			\ket{x,t+\mathcal P(x),\omega}
		\end{equation}
		of $\mathcal{P}$.
		
		The circuit for every primitive instruction is invoked at most $\cO_{|\mathcal P|}(1)$ times. 
		In particular, if all primitive circuits, including the required routing, have depth at most $D$, then the compiled circuit has depth at most $\cO_{|\mathcal P|}(D)$.
	\end{lemma}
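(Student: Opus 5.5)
The plan is to prove the lemma by structural recursion on the straight-line program. I would build, for every register value $v$ computed by $\mathcal P$ (viewed as a polynomial expression $v(x)$ in the source data), a catalytic circuit $\mathsf{Add}_v$ acting as $\ket{x,t,\omega}\mapsto\ket{x,t+v(x),\omega}$ on an arbitrary target register $T$. Its inverse $\mathsf{Add}_v^{-1}$ is obtained by running the gates in reverse; it implements $t\mapsto t-v(x)$ and is also catalytic, since the action on $\omega$ is the identity. I would assign to every instruction node of $\mathcal P$ its own dirty register of the appropriate free $R$-module type. Because $\mathcal P$ is fixed, this uses only $\cO(|\mathcal P|)$ borrowed registers, all disjoint from the source $X$ and from the output $T$.

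The key device is the standard dirty-ancilla ``toggling'' (difference) trick.
\begin{itemize}
\item For a constant instruction, $\mathsf{Add}_C$ is the given primitive.
\item For a source register, $\mathsf{Add}$ is the primitive with $L=\mathrm{id}$.
\item For a linear instruction $v=L(u)$, use a dirty register $U$ holding arbitrary $\omega_U$. Apply $\mathsf{Add}_u$ into $U$, so that $U=\omega_U+u$. Then apply the primitive $T\mathrel{+}=L(U)$, apply $\mathsf{Add}_u^{-1}$ into $U$, and finally apply the inverse primitive $T\mathrel{-}=L(U)$. The net change of $T$ is $L(\omega_U+u)-L(\omega_U)=L(u)$, and $U$ is restored to $\omega_U$.
\item For a bilinear instruction $v=B(u,w)$, take dirty $U,W$ and realize the four-term identity
\begin{equation}
B(\omega_U+u,\omega_W+w)-B(\omega_U+u,\omega_W)-B(\omega_U,\omega_W+w)+B(\omega_U,\omega_W)=B(u,w).
\end{equation}
This is done by interleaving $\mathsf{Add}_u^{\pm1}$ into $U$ and $\mathsf{Add}_w^{\pm1}$ into $W$ with four calls of the bilinear primitive or its inverse on $(U,W,T)$. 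For example: add $u$; add $w$; apply $+B$; remove $w$; apply $-B$; remove $u$; apply $+B$ with $W$ alone shifted (add $w$ first); and so on. The ordering is chosen so that every dirty register ends in its initial state.
\end{itemize}
Because every step is a target addition whose correctness holds for every basis label $\omega$, the whole circuit acts as $U_f\otimes I$ even when the catalysts are entangled with external systems. Applying $\mathsf{Add}$ to the output node of $\mathcal P$ gives Eq.~\eqref{eq:catalytic-program}.

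For the resource count, note that $\mathsf{Add}_v$ for a node at dependency depth $h$ makes at most $4$ recursive calls to subnodes and $\cO(1)$ primitive calls. Unrolling gives at most $4^{|\mathcal P|}\cdot\cO(1)=\cO_{|\mathcal P|}(1)$ invocations of each primitive, including the routing needed to bring operand registers together. Composing these sequentially yields depth $\cO_{|\mathcal P|}(D)$.

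The step I expect to be the main obstacle is register hygiene during the recursion. When $\mathsf{Add}_u$ is called while $U$ and $W$ are temporarily shifted, its own borrowed registers must not coincide with any register currently holding a shifted value. Otherwise correctness for arbitrary $\omega$ breaks, because the inner call would see a nonrandom but modified catalyst and interfere with the outer cancellation. I would handle this by the per-node allocation above: a call for node $v$ only touches the registers of nodes in the sub-DAG below $v$, which are distinct from those of its ancestors. I would then verify by induction on depth that each $\mathsf{Add}_v$ is exactly the identity on every register outside $T$. Shared subexpressions in the DAG are harmless, since repeated calls are sequential and each call restores its registers before returning.
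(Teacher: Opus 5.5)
Your proposal is correct and follows essentially the same route as the paper: recursion over the program using the dirty-register difference trick, which realizes $L(e+E(x))-L(e)$ via $\Phi^{-1};\Psi;\Phi;\Psi^{-1}$ and the four-term bilinear identity via $\Phi;\Psi_e;\Phi^{-1};\Psi_f;\Phi;\Psi_e^{-1};\Phi^{-1};\Psi_f^{-1}$. One slip: in your sample bilinear ordering, the call with only $W$ shifted must apply $-B$, not $+B$ (your ordering also uses six recursive calls rather than the four of the paper's Gray-code order, which is harmless), and the register-hygiene worry is unnecessary, because each sub-circuit acts as the identity on all other registers for every basis label and so may even borrow a temporarily shifted register, which is why the paper can treat $\Psi$ as a black box.
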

	\begin{proof}
		The proof is by induction on the number of instructions in $\mathcal P$.
		The single-instruction case follows from the assumed primitive circuits.
		For a larger program, consider its final instruction and distinguish the following three cases.
		\begin{enumerate}
			\item A constant instruction $T\gets C$.
			The action
			\begin{equation}
				\ket{e}_E\ket{t}_T\ket{\omega}\longmapsto\ket{e}_E\ket{t+C}_T\ket{\omega}
			\end{equation}
			guaranteed by the condition is exactly what we want.
			\item A linear instruction $T\gets L(E)$. 
			By the induction hypothesis and the condition, we already have
			\begin{equation}
				\Phi:\ket{e}_E\ket{t}_T\ket{\omega}\longmapsto\ket{e}_E\ket{t+L(e)}_T\ket{\omega}
			\end{equation}
			and
			\begin{equation}
				\Psi:\ket{x}_X\ket{e}_E\ket{\omega}\longmapsto\ket{x}_X\ket{e+E(x)}_E\ket{\omega}.
			\end{equation}
			Executing
			\begin{equation}
				\Phi^{-1};\quad \Psi;\quad \Phi;\quad \Psi^{-1};
			\end{equation}
			implements the required target addition.
			
			\item A bilinear instruction $T\gets B(E,F)$. 
			By the induction hypothesis and the condition, we already have
			\begin{equation}
				\Phi:\ket{e}_E\ket{f}_F\ket{t}_T\ket{\omega}\longmapsto\ket{e}_E\ket{f}_F\ket{t+B(e,f)}_T\ket{\omega},
			\end{equation}
			\begin{equation}
				\Psi_e:\ket{x}_X\ket{e}_E\ket{\omega}\longmapsto\ket{x}_X\ket{e+E(x)}_E\ket{\omega}
			\end{equation}
			and
			\begin{equation}
				\Psi_f:\ket{x}_X\ket{f}_F\ket{\omega}\longmapsto\ket{x}_X\ket{f+F(x)}_F\ket{\omega}.
			\end{equation}
			Executing
			\begin{equation}\begin{aligned}
					\Phi;\quad \Psi_e;\quad \Phi^{-1};\quad \Psi_f;\\
					\Phi;\quad \Psi_e^{-1};\quad \Phi^{-1};\quad \Psi_f^{-1}.
			\end{aligned}\end{equation}
			implements the required target addition.
			To verify the construction, suppose that the initial values in $E$ and $F$ are $e$ and $f$, respectively. The net increment of the target is
			\begin{equation}
				B(e,f)-B(e+E(x),f)+B(e+E(x),f+F(x))-B(e,f+F(x))=B(E(x),F(x)).
			\end{equation}
		\end{enumerate}
		Each induction step invoke the preceding circuits a constant number of times.
		So the compiled program invokes each primitive $\cO_{|\mathcal{P}|}(1)$ times.
	\end{proof}

	We first show how to implement sparse linear maps on finite-dimensional architectures. 
	To facilitate qubit routing, we catalytically borrow inactive system qubits to serve as temporary routing space.
	
	\begin{lemma}[Catalytic implementations of sparse linear maps]
		\label{lem:sparse-linear-map}
		Fix $\delta\geq 1$, a finite field $S$, and a sparsity $\Delta\geq 1$. Let $A$ and $T$ be disjoint registers containing $n_A,n_T$ elements of $S$.
		Suppose $A,T$, together with some retained registers, occupy a regular box $B$ of volume $V$.
		Suppose that $V\geq\kappa_{\delta,S}(n_A+n_T)$ for a sufficiently large fixed constant $\kappa_{\delta,S}>0$.
		Let
		\begin{equation}
			L:S^{n_A}\longrightarrow S^{n_T}
		\end{equation}
		be a hardwired linear map over $S$ whose matrix has at most $\Delta$ nonzero entries in each row and each column. 
		Then the target-add
		\begin{equation}
			\ket{a,t,\omega}\longmapsto\ket{a,t+L(a),\omega}
			\label{eq:app-fd-sparse-linear-target-add}
		\end{equation}
		can be implemented catalytically inside $B$ by a nearest-neighbor circuit of depth $\cO_{\delta,S,\Delta}(V^{1/\delta})$. 
	\end{lemma}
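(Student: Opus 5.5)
The plan is to reduce the sparse linear target-add to a constant number of layers of local two-site operations, interleaved with permutations of the box. These permutations are implemented by the routing of Lemma~\ref{lem:app-fd-regular-routing}, at cost $\cO_\delta(V^{1/\delta})$ each. First decompose the matrix of $L$. Consider the bipartite graph between input coordinates $i\in[n_A]$ and output coordinates $j\in[n_T]$, with an edge whenever $L_{ji}\neq 0$. Its maximum degree is at most $\Delta$, so by K\H{o}nig's line-coloring theorem (already used in the routing lemmas) its edges split into at most $\Delta$ matchings. Thus $L=\sum_{c=1}^{\Delta}L_c$, where each $L_c$ has at most one nonzero entry per row and column. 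Target additions compose: $t\mapsto t+L_1(a)$, then $+L_2(a)$, and so on. It therefore suffices to implement $\ket{a,t,\omega}\mapsto\ket{a,t+L_c(a),\omega}$ for a single matching in depth $\cO_{\delta,S}(V^{1/\delta})$, and multiply the depth by $\Delta$.

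For a single matching $L_c$, with pairs $(i,j)$ and coefficients $\alpha_{ji}\in S$, use routing. Choose a permutation $\pi$ of the sites of $B$ that places each matched input site $A_i$ adjacent to its partner output site $T_j$. The Hamiltonian path of Lemma~\ref{lem:app-fd-regular-routing} makes this concrete: send matched pairs to consecutive positions $(2p-1,2p)$ along the path, and send all remaining qubits (unmatched $A,T$ sites and the retained registers $\omega$) to the rest of the path. Each element of $S$ is a block of $\lceil\log_2|S|\rceil=\cO_S(1)$ qubits. Pairing therefore means placing two such blocks in a constant-size contiguous segment of the path, and the hypothesis $V\geq\kappa_{\delta,S}(n_A+n_T)$ guarantees there is room to lay all blocks out in disjoint segments. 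After applying $\pi$ by swap layers, apply in parallel, on each segment, the fixed reversible gate $\ket{u,v}\mapsto\ket{u,v+\alpha_{ji}u}$ on the two field blocks. This gate acts on $\cO_S(1)$ qubits along a path segment, so it compiles into a nearest-neighbor circuit of depth $\cO_S(1)$. Then apply $\pi^{-1}$. Swaps only relocate data, and the gate touches only the target block, so the retained qubits and the source are returned exactly. The map is also manifestly catalytic: it is a fixed permutation circuit acting identically for every $\omega$, which may even be entangled with $a,t$ or an external reference. The total depth is $2\,\mathrm{rt}(B)+\cO_S(1)=\cO_{\delta,S}(V^{1/\delta})$ per matching, hence $\cO_{\delta,S,\Delta}(V^{1/\delta})$ overall.

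I expect the main obstacle to be the layout bookkeeping: making sure that multi-qubit field elements end up contiguous and adjacent along a nearest-neighbor path, rather than merely "close" in the box. The fix I would use is to route at the granularity of qubits, and to choose the target arrangement along the Hamiltonian path so that each matched pair occupies a consecutive segment of $2\lceil\log_2|S|\rceil$ path sites. The constant $\kappa_{\delta,S}$ only needs to absorb this block size (and, if convenient, slack to pad segments). A secondary point is that routing permutes all $V$ qubits, including the retained registers. Since routing is exactly reversible and independent of the data, this costs nothing beyond the depth bound.
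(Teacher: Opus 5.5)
Your proposal is correct and follows the paper's proof essentially step for step: a K\H{o}nig edge-coloring of the bipartite support graph of $L$ into at most $\Delta$ matchings, then, for each matching, a routing via Lemma~\ref{lem:app-fd-regular-routing} that places every matched source/target pair in adjacent constant-length segments of the Hamiltonian path, parallel local multiply-add gates, and the inverse routing, for total depth $\cO_{\delta,S,\Delta}(V^{1/\delta})$. As a minor aside, the routing only permutes the $V$ occupied sites of $B$, so room for the contiguous layout is automatic and the volume hypothesis $V\geq\kappa_{\delta,S}(n_A+n_T)$ is not what supplies it.
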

	\begin{proof}
		Index the matrix of $L$ by target coordinates in its rows and source coordinates in its columns. $L_{v,u}$ is the coefficient with which the source entry $a_u$ contributes to the target entry $(L(a))_v$. 
		Construct a bipartite graph $G$ with vertex classes $A$ and $T$, placing an edge $(u,v)$ if and only if $L_{v,u}\neq 0$.
		Since $G$ has maximum degree at most $\Delta$, K\H{o}nig's line-coloring theorem partitions its edge set into at most $\Delta$ matchings,  written as $E=M_1\sqcup\cdots\sqcup M_\Delta$.
		
		Consider one matching $M_i$.
		Select a Hamiltonian path $P$ of $B$.
		Route the registers so that, for every edge $(u_j,v_j)\in M_i$, the field elements indexed by $u_j$ and $v_j$ occupy adjacent constant-length segments of $P$.
		Then perform the local updates
		\begin{equation}
			\ket{a_{u_j},t_{v_j},\omega_j}\longmapsto\ket{a_{u_j},t_{v_j}+L_{v_j,u_j}a_{u_j},\omega_j}
		\end{equation}
		for all $j$ in parallel.
		Applying the inverse routing permutation then returns every source, target, and catalytic qubit to its original site.
		
		For each matching $M_i$, the routing step takes $\cO_{\delta}(V^{1/\delta})$ depth by Lemma~\ref{lem:app-fd-regular-routing}, and the arithmetic takes $\cO_{S}(1)$ depth.
		Therefore, all $\Delta$ matchings require a total depth of $\cO_{\delta,S,\Delta}(V^{1/\delta})$.
	\end{proof}
	
	Next, we show how to implement polynomial evaluation over finite fields, a crucial primitive required for the hashing step of our low-depth random unitary construction, via the following two lemmas.
	
	\begin{lemma}[Catalytic polynomial multiplication]
		\label{lem:poly-mul}
		Fix $\delta\geq 1$ and a finite field $S$ of characteristic $2$. 
		Let $A,B\in S[z]_{<s}$ be two polynomials of degree at most $s-1$, written as
		\begin{equation}
			A(z)=\sum_{i=0}^{s-1}a_iz^i,\qquad B(z)=\sum_{i=0}^{s-1}b_iz^i.
		\end{equation}
		Let $C\in S[z]_{<2s}$ be an arbitrary target polynomial with degree at most $2s-1$.
		Suppose the workspace $W$ containing the registers of $A,B,C$ occupies a regular box of volume $V=\Theta_{\delta,S}(s)$, with $V\geq \kappa_{\delta,S}s$ for a sufficiently large constant $\kappa_{\delta,S}>0$.
		Then the operator
		\begin{equation}
			\ket{A,B,C,\omega}\longmapsto\ket{A,B,C+AB,\omega}
		\end{equation}
		polynomial multiplication can be realized by a $\delta$-dimensional nearest-neighbor circuit of depth $\cO_{\delta,S}(s^{1/\delta})$.
	\end{lemma}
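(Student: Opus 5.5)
Catalytic polynomial multiplication in depth $\cO_{\delta,S}(s^{1/\delta})$ inside a box of volume $\Theta(s)$: I would realize it by a constant-size Karatsuba-free \emph{divide-and-conquer on space} rather than on arithmetic. The approach is to write $AB$ as a bilinear straight-line program in which each primitive bilinear instruction is itself a multiplication of half-size polynomials placed in a half-size subbox, and to use Lemma~\ref{lem:catalytic-operations} to compose these catalytically. Concretely, split $A=A_0+z^{s/2}A_1$, $B=B_0+z^{s/2}B_1$. Then
\begin{equation}
C+AB = C + A_0B_0 + z^{s/2}(A_0B_1+A_1B_0) + z^{s}A_1B_1 .
\end{equation}
Each of the four products is a half-size multiplication, and the shifts by $z^{s/2},z^s$ are just relabelings of target coefficients. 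The program has constant size, and its instructions are four bilinear ones. The additions into overlapping coefficient ranges of $C$ are harmless, since every instruction is a target-add.

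The plan is therefore a recursion. I partition the regular box into $2^\delta$ (or a constant number of) congruent regular subboxes via repeated bisection. For each product $A_iB_j$, I route the relevant coefficients of $A_i$, $B_j$ and the corresponding window of $C$ into one subbox. The routing is a permutation of the whole box, costing $\cO_\delta(V^{1/\delta})$ by Lemma~\ref{lem:app-fd-regular-routing}. I then recurse inside that subbox, using the other qubits residing there as dirty catalysts, and route back. The volume condition $V\ge\kappa s$ is preserved at half scale: a subbox has volume $V/2^{c}$ while the data size is $s/2$, so the constant $\kappa$ must be chosen to absorb the factor between the subbox fraction and $1/2$. One can alternatively choose the split so that each subproblem sits in a subbox of volume $V/2$, which is possible by a single bisection. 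This works because the four products can be executed sequentially, each using a subbox of half the volume, since there is only a constant number of them.

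The depth recurrence is $D(s)\le 4D(s/2)+\cO(s^{1/\delta})$ if done sequentially. That gives $s^{2}$ for $\delta=1$, which is too much, and this is the main obstacle. To fix it, I would run the four half-size multiplications \emph{in parallel} in four disjoint subboxes of volume $V/4$. This requires the data to be duplicated. Each of $A_0,A_1,B_0,B_1$ is needed twice, and catalytic copying is impossible in the naive sense, but Lemma~\ref{lem:catalytic-operations}'s bilinear gadget ($\Phi;\Psi_e;\Phi^{-1};\Psi_f;\dots$) only needs target-\emph{adds} of the inputs into dirty registers. A linear target-add of a coefficient vector into a dirty register elsewhere is a sparse (degree-1) linear map, implementable by Lemma~\ref{lem:sparse-linear-map} in depth $\cO(V^{1/\delta})$. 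So I would treat each level as a straight-line program whose linear instructions (copies into dirty sub-registers, and adding the four partial results into $C$) cost $\cO(V^{1/\delta})$, and whose bilinear instructions are four parallel recursive calls on volume $V/4$ with data size $s/2$. Then $V/4\ge\kappa s/2$ fails by a factor 2. To handle this, I would use a balanced three-way/eight-way split (e.g.\ split into $t$ pieces giving $t^2$ products on volume $V/t^2$ versus data $s/t$), or keep $V\ge\kappa s$ and note that the recursion only needs to go until size $\cO(1)$. The cleaner fix is to recurse with $\kappa$ absorbing the loss while tolerating geometric depth growth: the recurrence $D(V)\le \cO(D(V/4))+\cO(V^{1/\delta})$ with the constant multiplicity from Lemma~\ref{lem:catalytic-operations} must remain strictly below $4^{1/\delta}$ per level to sum geometrically. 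That is the delicate point. If it fails, the fallback is to use the Kronecker/evaluation structure and implement multiplication as a constant number of sparse linear maps and pointwise products via Lemma~\ref{lem:sparse-linear-map}, following the ancilla-based construction of Ref.~\cite{Du2026ComplexityDrivenTransition} with ancillae replaced by dirty registers through Lemma~\ref{lem:catalytic-operations}.
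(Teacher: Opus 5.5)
There is a genuine gap, and it is the obstacle you flag but do not resolve. With a fixed split factor, no schedule achieves both of the following at once: the recursion fits in the workspace with a single constant $\kappa$ at all $\Theta(\log s)$ levels, and the depth decays geometrically. Your parallel variant puts subproblems of data size $s/2$ into subboxes of volume $V/4$, so the ratio $V/s$ halves at every level. After $\log_2 s$ levels you would need $V=\Omega(s^2)$, and no constant $\kappa$ can absorb a loss that compounds. The $t$-way schoolbook variant, with $t^2$ products of size $s/t$ in volume $V/t^2$, loses a factor $t$ per level and is worse. Suppose instead you keep the ratio fixed, placing subproblems of size $s/t$ in subboxes of volume $V/t$. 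Then at most $t$ subproducts run per parallel round. The recurrence is $D(s)\le C_1D(s/t)+\cO(s^{1/\delta})$, with $C_1$ at least the number of rounds, and it is geometric only if $C_1<t^{1/\delta}\le t$. Schoolbook needs at least $t$ rounds. For $t=2$ even Karatsuba needs two rounds, so $C_1\ge 2=t$, and the recurrence gives only $\cO(s\log s)$ for $\delta=1$ and $\cO(s)$ for $\delta\ge2$. Moreover, once inputs must be duplicated, the bilinear gadget of Lemma~\ref{lem:catalytic-operations} calls the product circuit four times. Your condition ``multiplicity strictly below $4^{1/\delta}$'' is therefore never satisfied.

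The missing idea, which is what the paper uses, is a constant-radix Toom--Cook evaluation--interpolation recursion with a \emph{large} radix $\varrho$. Split $A,B$ into $\varrho$ blocks of size $s/\varrho$. Evaluate at $2\varrho-1$ distinct points of a constant-degree extension $S'\supseteq S$; this is needed because, e.g., $\mathbb F_2$ has too few points, and the $S\leftrightarrow S'$ conversions are sparse linear maps. Then multiply pointwise recursively and interpolate. Evaluation and interpolation have $\cO_\varrho(1)$ nonzeros per row and column, so Lemma~\ref{lem:sparse-linear-map} implements them in depth $\cO_{\delta,\varrho}(s^{1/\delta})$. The whole step is a four-instruction program, each instruction acting on all evaluation points in parallel, compiled by Lemma~\ref{lem:catalytic-operations}.

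The decisive feature is that the number of subproducts, $2\varrho-1$, is \emph{linear} in $\varrho$. They fit into two rounds of at most $\varrho$ parallel calls, each in a subbox of volume $V/\varrho$ holding data of size $s/\varrho$, so $V/s$ is preserved exactly. The recurrence becomes $D(s)\le C_0s^{1/\delta}+C_1D(s/\varrho)$, where $C_1$ (two rounds times the fixed compiler multiplicity) is independent of $\varrho$. Choosing $\varrho$ with $C_1\varrho^{-1/\delta}<1$ finishes the proof.

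Your fallback points vaguely in this direction, but as stated it does not work at a single level. Evaluating at $\Theta(s)$ points requires a field of size $\Omega(s)$ and dense transforms that Lemma~\ref{lem:sparse-linear-map} does not cover. Sparsity therefore forces a constant radix, and hence exactly the recursion analysis above.
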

	\begin{proof}
		We use a constant-radix Toom--Cook evaluation--interpolation recursion.
		Choose a sufficiently large constant radix $\varrho$ which is a power of two, and set $n_{\mathrm{ev}}=2\varrho-1$. 
		Pad $s$ to a power of $\varrho$.
		Work in a constant-degree extension $S'\supseteq S$ containing distinct points $\alpha_0,\ldots,\alpha_{n_{\mathrm{ev}}-1}$. 
		Fix an $S$-basis of $S'$, and encode each $S'$-coefficient by the corresponding constant number of $S$-symbols.
		For given polynomials $A,B,C$, we transform them to polynomials over $S'$, run the following algorithm, and then transform them back to $S$.
		
		We first describe the algebraic operations. 
		For given polynomials $A,B$ with degree at most $s-1$, put $b=\lceil s/\varrho\rceil$. 
		After hardwired zero padding, split
		\begin{equation}
			A(z)=\sum_{q=0}^{\varrho-1}z^{qb}A^{(q)}(z),
			\qquad
			B(z)=\sum_{q=0}^{\varrho-1}z^{qb}B^{(q)}(z),
		\end{equation}
		where $A^{(q)},B^{(q)}\in S[z]_{<b}$.  
		
		Introduce an auxiliary indeterminate $X$ and define
		\begin{equation}
			\tilde A(X,z)=\sum_{u=0}^{\varrho-1}X^{u}A^{(u)}(z),\quad \tilde B(X,z)=\sum_{u=0}^{\varrho-1}X^{u}B^{(u)}(z).
		\end{equation}
		For $i\in[0,n_{\mathrm{ev}}-1]$, define the point evaluations
		\begin{equation}
			A_i(z):=\tilde A(\alpha_i,z)=\sum_{q=0}^{\varrho-1}\alpha_i^qA^{(q)}(z),
			\qquad
			B_i(z):=\tilde B(\alpha_i,z)=\sum_{q=0}^{\varrho-1}\alpha_i^qB^{(q)}(z).
		\end{equation}
		
		Let $\beta_{l,i}$ be the entries of the inverse of the $n_{\mathrm{ev}}\times n_{\mathrm{ev}}$ Vandermonde matrix $(\alpha_i^l)_{i,l}$.
		Define the fixed additive maps
		\begin{equation}
			L_i(T):=\sum_{l=0}^{2\varrho-2}
			z^{l b}\beta_{l,i}T.
			\label{eq:app-fd-interpolation-map}
		\end{equation}
		Vandermonde interpolation gives
		\begin{equation}\begin{aligned}
				\sum_{i=0}^{n_{\mathrm{ev}}-1}L_i(A_iB_i)
				&=\sum_{l=0}^{2\varrho-2}\sum_{i=0}^{n_{\mathrm{ev}}-1}z^{l b}\beta_{l,i}A_iB_i\\
				&=\sum_{l=0}^{2\varrho-2}z^{l b}\sum_{i=0}^{n_{\mathrm{ev}}-1}\sum_{q=0}^{n_{\mathrm{ev}}-1}\beta_{l,i}\alpha_{i}^q\sum_{j=0}^qA^{(j)}B^{(q-j)}\\
				&=\sum_{l=0}^{2\varrho-2}z^{l b}\sum_{j=0}^lA^{(j)}B^{(l-j)}\\
				&=AB.
				\label{eq:app-fd-evaluation-interpolation}
		\end{aligned}\end{equation}
		Each multiplication $A_iB_i$ can be computed recursively, yielding an algorithm for computing $AB$.
		
		It remains to implement the recursion catalytically on the $\delta$-dimensional architecture. At scale $s$, the source registers $A,B$, the target register $C$, and the remaining workspace occupy a regular box $W$, with all coefficients represented over $S'$. The required update is
		\begin{equation}
			\ket{A,B,C,\omega}\longmapsto\ket{A,B,C+AB,\omega}.
		\end{equation}
		For $s$ below a fixed constant, the update is implemented by a constant-size local unitary.
		For larger $s$, the algorithm first allocates three registers $E_i,F_i,T_i$ in $W$ for each $i\in [0,2\varrho-2]$.
		We define the straight-line program $\mathcal{P}(A,B)$ to compute $AB$ by the following four instructions:
		\begin{enumerate}
			\item Set $E_i\gets A_i(A)$ for every $i$ in parallel.
			\item Set $F_i\gets B_i(B)$ for every $i$ in parallel.
			\item Set $T_i\gets E_iF_i$ for every $i$ in parallel.
			\item Output $\sum_i L_i(T_i)$.
		\end{enumerate}
		Following Eq.~\eqref{eq:app-fd-evaluation-interpolation}, the output will be the product $AB$.
		
		Let 
		\begin{equation}\begin{aligned}
				\Phi&:\ket{A}\otimes(\bigotimes_i\ket{E_i})\otimes\ket{\omega}\longmapsto\ket{A}\otimes(\bigotimes_i\ket{E_i+A_i(A)})\otimes\ket{\omega},\\
				\Psi&:\ket{B}\otimes(\bigotimes_i\ket{F_i})\otimes\ket{\omega}\longmapsto\ket{B}\otimes(\bigotimes_i\ket{F_i+B_i(B)})\otimes\ket{\omega},\\
				\mathsf{M}&:(\bigotimes_i\ket{E_i,F_i,T_i})\otimes \ket{\omega}\longmapsto(\bigotimes_i\ket{E_i,F_i,T_i+E_iF_i})\otimes \ket{\omega},\\
				\mathsf{L}&:(\bigotimes_i\ket{T_i})\otimes\ket{C}\otimes \ket{\omega}\longmapsto(\bigotimes_i\ket{T_i})\otimes\ket{C+\sum_i L_i(T_i)}\otimes \ket{\omega}
		\end{aligned}\end{equation}
		be the corresponding catalytic target addition mapping of these instructions, respectively. 
		Lemma~\ref{lem:catalytic-operations} therefore implements $C\gets C+\mathcal P(A,B)=C+AB$ catalytically, invoking these primitives only a constant number of times.
		
		Each of $\Phi,\Psi,\mathsf{L}$ can be implemented in depth $\cO_{\delta,\varrho}(s^{1/\delta})$ by Lemma~\ref{lem:sparse-linear-map}, as the maps are linear maps with sparsity $\cO_{\varrho}(1)$.
		The remaining task is to implement $\mathsf{M}$.
		Partition the regular box into $\varrho$ congruent regular subboxes $W_0,\ldots,W_{\varrho-1}$.
		Route $E_{2i},F_{2i},T_{2i}$ to $W_i$ in one routing process for each $i\in[0,\varrho-1]$.
		Then we recursively invoke the algorithm for each regular box $W_i$.
		Because each recursive invocation is confined to its own subbox $W_i$, all invocations run in parallel.
		After that, we route $E_{2i+1},F_{2i+1},T_{2i+1}$ to $W_i$ in one routing process for each $i\in[0,\varrho-2]$, and recursively invoke the algorithm for them in parallel.
		Finally, route the registers back to the original positions.
		The resulting depth is $\cO_\delta(s^{1/\delta})$ plus a constant number of recursive depths at scale $s/\varrho$.
		
		The workspace does not accumulate across recursion levels, as the inactive data of previous levels remain inside $W$ and serve as catalysts at the next level.
		
		It remains to analyze the depth and catalyst count. If $D(s)$ denotes the circuit depth at scale $s$, then
		\begin{equation}
			D(s)=C_0s^{1/\delta}+C_1D(s/\varrho).
		\end{equation}
		Here $C_0$ may depend on $\delta,S$ and $\varrho$, whereas $C_1$ is an absolute constant.
		Choosing the constant radix $\varrho$ so that $C_1\varrho^{-1/\delta}<1$ gives $D(s)=\cO_{\delta,S}(s^{1/\delta})$.
		
		The required size $\kappa_{\delta,S}s$ of workspace is proved by induction.
		A recursion step with constant scale $s$ takes constant space for multiplication.
		For a recursion step with larger scale $s$, the workspace required to implement the maps $\Phi,\Psi,\mathsf{L}$ is $\cO_{\delta,S}(s)$ by Lemma~\ref{lem:sparse-linear-map}.
		For the map $\mathsf{M}$, we have $|W|/\varrho$ qubits for each subproblem of scale $s/\varrho$.
		This satisfies $|W|/\varrho\geq \frac{\kappa_{\delta,S}s}{\varrho}=\kappa_{\delta,S}\frac{s}{\varrho}$ and is therefore sufficient by the induction hypothesis.
		As each map requires $\kappa_{\delta,S}s$ space, their finite composition also takes $\kappa_{\delta,S}s$ space.
		
		The algorithm over $S'$ is already implemented.
		For a polynomial over $S$, conversions between its $S$-representation and its $S'$-representation are sparse linear maps over $S$.
		So these conversions have catalytic circuit depth $\cO_{\delta,S}(s^{1/\delta})$ by Lemma~\ref{lem:sparse-linear-map}.
		As polynomial multiplication over $S'$ is bilinear, one more application of Lemma~\ref{lem:catalytic-operations} completes the proof.
	\end{proof}
	
	\begin{lemma}[Catalytic fixed-degree finite-field arithmetic]
		\label{lem:app-fd-catalytic-field-arithmetic}
		Fix $\delta\geq 1$ and a constant degree $d\geq0$. 
		For $s\geq 1$, define the field
		\begin{equation}
			K:=\mathbb F_{2^s}
			\simeq \mathbb F_2[z]/(p(z)),
			\label{eq:app-fd-field-representation}
		\end{equation}
		for some hardwired monic irreducible polynomial $p$.
		Let
		\begin{equation}
			P(u)=\sum_{i=0}^{d}a_iu^i\in K[u]
		\end{equation}
		be a hardwired constant-degree polynomial.
		Suppose the workspace $W$ containing registers $X,Y$ occupies regular box of volume $V=\Theta_{\delta,d}(s)$, with $V\geq\kappa_{\delta,d}s$ for a sufficiently large constant $\kappa_{\delta,d}>0$. Then the operator
		\begin{equation}
			\ket{x,y,\omega}
			\longmapsto
			\ket{x,y+P(x),\omega}
			\label{eq:app-fd-catalytic-polynomial-add}
		\end{equation}
		has a nearest-neighbor implementation of depth $\cO_{\delta,d}(s^{1/\delta})$.
		
		The same statement holds when the source contains only $u\leq s$ physical bits and is embedded into $K$ via hardwired zero-padding.
	\end{lemma}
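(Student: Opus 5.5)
The plan is to express $P(x)$ as a constant-size arithmetic straight-line program over $\mathbb F_2$-modules and compile it with Lemma~\ref{lem:catalytic-operations}. This reduces the task to catalytic primitives of depth $\cO_{\delta,d}(s^{1/\delta})$. By Horner's rule, or simply by computing powers by repeated multiplication, $P(x)$ is obtained from the source $x$ using $\cO(d)$ instructions of three kinds:
\begin{enumerate}
\item the constant instruction $T\gets a_0$;
\item the linear instruction $T\gets x$, or more generally multiplication by a hardwired field constant $a_i$;
\item the bilinear field multiplication $(u,v)\mapsto uv$ in $K$.
\end{enumerate}
Each intermediate register is a field element of $s$ bits, allocated from the dirty workspace $W$. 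This is possible since $V\geq\kappa_{\delta,d}s$ and the program has size $\cO_d(1)$. The compiler then invokes each primitive $\cO_d(1)$ times. The overall depth is therefore $\cO_{\delta,d}(s^{1/\delta})$, provided each primitive, together with the routing needed to bring its registers into a suitable regular subbox, has that depth. The routing part is supplied by Lemma~\ref{lem:app-fd-regular-routing}.

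Constant addition is a layer of $X$ gates. For the bilinear primitive $\ket{u,v,t,\omega}\mapsto\ket{u,v,t+uv\bmod p,\omega}$, I would write field multiplication as a second straight-line program. It consists of polynomial multiplication $c=uv\in\mathbb F_2[z]_{<2s}$, followed by the $\mathbb F_2$-linear reduction $c\mapsto c\bmod p$. The first step is exactly Lemma~\ref{lem:poly-mul} with $S=\mathbb F_2$. The composition, a bilinear instruction followed by a linear one, is again handled by Lemma~\ref{lem:catalytic-operations}. Multiplication by a hardwired constant $a_i$ is $\mathbb F_2$-linear, and I would treat it as a bilinear multiplication with one input fed by a constant instruction. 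This avoids needing a sparse description of it.

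The main obstacle is the reduction modulo $p$. For a generic irreducible $p$, this linear map is dense, so Lemma~\ref{lem:sparse-linear-map} does not apply directly. My first attempt would be to choose the hardwired $p$ to be sparse, namely a trinomial or pentanomial. Reduction then becomes a composition of a constant number of passes, each a sparse linear map with $\cO(1)$ nonzeros per row and column. If such $p$ does not exist for some $s$, the fallback is Barrett-style reduction. It writes $c\bmod p=c-p\cdot\lfloor c\,\rho/z^{2s}\rfloor$ with the hardwired constant $\rho=\lfloor z^{2s}/p\rfloor$. Here the floors are truncations, which are coordinate projections and hence sparse. The reduction then becomes two more polynomial multiplications by hardwired polynomials, which Lemma~\ref{lem:poly-mul} handles with a constant register fed in, plus sparse projections. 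Either way, the full program has constant length, every primitive runs in $\cO_{\delta}(s^{1/\delta})$ depth inside a regular box of volume $\Theta(s)$, and Lemma~\ref{lem:catalytic-operations} yields the stated depth.

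For the zero-padded variant, the embedding $u$ bits $\to K$ is a hardwired coordinate injection. It is therefore absorbed into routing, or into a sparse linear instruction at the start of the program, and the depth bound is unchanged.
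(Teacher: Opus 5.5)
Your proposal is correct and follows the paper's proof. Like the paper, you write $P(x)$ as a constant-length straight-line program of powers and accumulations, with multiplication by $a_i$ treated as a bilinear instruction fed by a constant, and compile it via Lemma~\ref{lem:catalytic-operations}, realizing field multiplication as Lemma~\ref{lem:poly-mul} followed by reduction modulo $p$ through multiplications by hardwired polynomials and sparse truncations. Your Barrett quotient $\lfloor c\rho/z^{2s}\rfloor$ with $\rho=\lfloor z^{2s}/p\rfloor$ is exact over $\mathbb F_2[z]$ and is equivalent to the paper's reversal-based quotient using $h=(\operatorname{rev}_s p)^{-1}\bmod z^s$, so the trinomial/pentanomial shortcut is unnecessary.
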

	\begin{proof}
		Let $\mathrm{Red}:\mathbb F_2[z]_{<2s}\longrightarrow\mathbb F_2[z]_{<s}$ be the reduction map modulo $p$. 
		First, consider an arithmetic algorithm for computing $\mathrm{Red}$.
		For an input $T$, write 
		\begin{equation}
			\label{eq:divmod}
			T=pQ+R,
		\end{equation}
		where $Q=Q(T)$ and $R=R(T)$ has degree less than $s$.
		
		For a polynomial $f$ with degree at most $e$, define its reversal $\mathrm{rev}_e(f)=z^ef(z^{-1})$.
		Reversing Eq.~\eqref{eq:divmod} at length $2s-1$ gives
		\begin{equation}
			\mathrm{rev}_{2s-1}(T)=\mathrm{rev}_{s}(p)\mathrm{rev}_{s-1}(Q)+\mathrm{rev}_{2s-1}(R).
		\end{equation}
		The last term is divisible by $z^s$ as $\operatorname{deg}(R)<s$. So
		\begin{equation}
			\mathrm{rev}_{2s-1}(T)\bmod z^s=\mathrm{rev}_{s}(p)\mathrm{rev}_{s-1}(Q)\bmod z^s.
		\end{equation}
		As $p$ is monic, $\operatorname{rev}_s p$ has constant coefficient $1$ and is therefore invertible modulo $z^s$.
		Define
		\begin{equation}
			h:=(\operatorname{rev}_s p)^{-1}\bmod z^s,
		\end{equation}
		which can be hardwired in the circuit.
		Then
		\begin{equation}\begin{aligned}
				Q(T)&=\operatorname{rev}_{s-1}[(\operatorname{rev}_{2s-1}T)h\bmod z^s],\\
				R(T)&=\!\left[T-pQ(T)\right]\bmod z^s.
				\label{eq:app-fd-reduction}
		\end{aligned}\end{equation}
		
		Therefore, the map $\mathrm{Red}(T)=R(T)$ can be written as a straight-line program $\mathcal{R}(T)$ containing reversal, truncation, subtraction and polynomial multiplication.
		Reversal and truncation are sparse linear maps, which are handled by Lemma~\ref{lem:sparse-linear-map}.
		The multiplications are implemented by Lemma~\ref{lem:poly-mul}.
		Each map requires depth $\cO_{\delta}(s^{1/\delta})$.
		Lemma~\ref{lem:catalytic-operations} then compiles the entire reduction map into a catalytic circuit of depth $\cO_{\delta,d}(s^{1/\delta})$.
		
		Because $\mathrm{Red}$ is linear, it can be treated as an atomic linear instruction of a straight-line program.
		Combining polynomial multiplication with this reduction implements field multiplication in catalytic depth $\cO_{\delta,d}(s^{1/\delta})$ using the compiler.
		
		Now we express $P(x)$ as a straight-line program.
		Start with the constant instuctions $E_0\gets 1$ and $S_0\gets a_0$, set $E_i\gets E_{i-1}X$ for $1\leq i\leq d$, and accumulate $S_i\gets S_{i-1}+a_iE_i$ for $1\leq i\leq d$.
		Then the result $P(x)$ is stored in $S_d$.
		Each power update is bilinear, while each accumulation is linear together with a bilinear multiplication.
		The compiler, the catalytic field-multiplication circuit above, and Lemma~\ref{lem:sparse-linear-map} therefore give the claimed depth.
	\end{proof}
	
	\begin{proof}[Proof of Theorem~\ref{thm:finite-dimensional-ensemble}]
		The previous subsection proves Eq.~\eqref{eq:finite-dimensional-distance}.
		Now we show how to implement the random unitaries described in Sec. \ref{app:finite-dimensional-construction}.
		
		Inside the balanced prefix window, we can find a regular box $B$ of size $\Omega_{\delta}(m)$ by selecting the maximum hypercube with side length a power of two.
		Choose $r_0$ in Eq.~\eqref{eq:app-fd-block-label} to be sufficiently large so that the computations in Lemma~\ref{lem:poly-mul} and Lemma~\ref{lem:app-fd-catalytic-field-arithmetic} can be finished in this regular box.
		By Lemma~\ref{lem:app-fd-prefix-routing}, the system has routing number at most $\cO_\delta(m^{1/\delta})$. 
		Therefore, we can permute the data to be processed into the regular box $B$ with circuit depth $\cO_\delta(m^{1/\delta})$ each time.
		
		The updates in Steps 1 and 2 can each be implemented by sequentially applying Eq.~\eqref{eq:app-fd-catalytic-polynomial-add} a constant number of times with different choices of systems and $P$.
		This can be done in depth $\cO_{\delta}(m^{1/\delta})$ by rearranging the source and target qubits into $B$ and invoking Lemma~\ref{lem:app-fd-catalytic-field-arithmetic} for each of the constantly many polynomials $P$.
		
		Step~3 is performed separately for each $l\in[w]$.
		For a fixed pair of permutation $\pi_l$ and $\sigma_l$, first separate the indices with $\pi_l(j)=\sigma_l(j)$.
		Since $X_p^2=X_p$ on computational-basis bits, all corresponding terms are disjoint CNOT updates $Y_j\gets Y_j\oplus\gamma_{j,l}X_p$ and can be executed in one batch.
		After removing these loops, the multigraph on $[s]$ with edges $\{\pi_l(j),\sigma_l(j)\}$ has maximum degree at most two, and hence can be 
		partitioned into three matchings.
		For each matching $M_i$, we rearrange $X_{\pi_l(j)},X_{\sigma_l(j)}$ and $Y_j$ into adjacent positions on the Hamiltonian path of $B$ for each $j\in M_i$.
		The corresponding disjoint three-qubit unitaries realize Eq.~\eqref{eq:rand-quad} in constant depth locally.
		This step costs $\cO_{\delta}(wm^{1/\delta})=\cO_{\delta}(m^{1/\delta}\log k)$ depth.
		
		Step~4 consists only of single-qubit gates, and Step~5 reverses Steps~1--3. The total circuit depth is therefore $\cO_\delta(m^{1/\delta}\log k)$, as claimed in Eq.~\eqref{eq:finite-dimensional-depth}.
	\end{proof}
	
	\subsection{Repetition and gluing to form designs}
	\label{app:finite-dimensional-repetition}
	Theorem~\ref{thm:finite-dimensional-ensemble} constructs an approximation of a random diagonal unitary.
	To achieve a high-precision approximate design, the subsequent step integrates three independent samples from this ensemble into a Hadamard--diagonal block, and then amplifies the design by repeatedly applying this combined block.
	
	For system sizes $m$ below the threshold $A_\delta\log k$ required by Theorem~\ref{thm:finite-dimensional-ensemble}, we replace the Hadamard--diagonal construction with random Pauli rotations. 
	This alternative approach provides a constant spectral gap, guaranteed as follows:
	
	\begin{fact}[Random Pauli rotation gap~{\cite[Theorem~3.16]{baer2026random}}]
		\label{fact:app-fd-brickwork-gap}
		For any $m \geq 1$, let $\mathcal{P}_m$ denote the set of non-identity $m$-qubit Pauli operators. Let $\nu_{\mathrm{RPR},m}$ be the ensemble generated by uniformly and independently sampling an operator $P \in \mathcal{P}_m$ and an angle $\theta \in [0, 2\pi)$, and applying the rotation
		\begin{equation}
			R(P,\theta) := e^{\mathrm{i}\theta P}.
		\end{equation}
		Then, there exists a universal constant $g > 0$ (which can be taken as $g = 1/16$) such that, simultaneously for all moment orders $r \geq 1$, the spectral gap satisfies
		\begin{equation}
			\label{eq:app-fd-rpr-gap}
			\lambda_r(\nu_{\mathrm{RPR},m}) \leq 1-g.
		\end{equation}
	\end{fact}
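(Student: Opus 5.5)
Since this statement is imported from~\cite[Theorem~3.16]{baer2026random}, the cleanest option is to cite it. If we wanted a self-contained argument, I would first reduce the moment operator to an average of projectors. For a fixed non-identity Pauli $P$ we have $P^2=I$, so $e^{\mathrm i\theta P}$ has eigenvalues $e^{\pm\mathrm i\theta}$ on the $\pm1$ eigenspaces of $P$. Consequently $R_r(e^{\mathrm i\theta P})$ is diagonal in the joint eigenbasis of the $2r$ copies of $P$ (conjugated on the $\overline U$ factors). Its eigenvalues are $e^{\mathrm i\theta q}$, where $q\in\mathbb Z$ is the signed ``charge'' counting $+1$ versus $-1$ eigenvalues across the $U$ and $\overline U$ copies. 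Averaging over uniform $\theta$ then yields the orthogonal projector $\Pi_P$ onto the charge-zero subspace, and hence
\begin{equation}
M_r(\nu_{\mathrm{RPR},m})=\mathbb E_{P\in\mathcal P_m}\Pi_P .
\end{equation}
This operator is self-adjoint and positive semidefinite, and it fixes the Haar-invariant subspace $\operatorname{ran}P_{\mathrm H}^{(r)}$. Therefore $\lambda_r=\sup_{v\perp\operatorname{ran}P_{\mathrm H}^{(r)},\|v\|=1}\bigl(1-\mathbb E_P\|(1-\Pi_P)v\|^2\bigr)$. The claim thus reduces to the uniform lower bound $\mathbb E_P\|(1-\Pi_P)v\|^2\ge g$ for every unit $v$ orthogonal to the Haar sector.

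Qualitatively, the joint fixed space $\bigcap_P\operatorname{ran}\Pi_P$ equals the Haar sector. This holds because the one-parameter groups $\{e^{\mathrm i\theta P}\}$ generate $\mathrm{SU}(2^m)$: their generators $\mathrm iP$ span $\mathfrak{su}(2^m)$. The difficulty is to make this quantitative with a constant $g$ independent of both $m$ and $r$. For that I would exploit the Clifford covariance of $\mathcal P_m$: conjugation by any Clifford $C$ permutes the non-identity Paulis up to sign, so $M_r$ commutes with $R_r(C)$. Moreover, the special angle $\theta=\pi/2$ gives $e^{\mathrm i\pi P/2}=\mathrm iP$, so Pauli conjugations already appear inside the ensemble.

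The plan is then to compare a constant power $M_r^t$, or the product of $\Pi_P$ over a random pair of anticommuting Paulis, against an ensemble with a known gap. Because $M_r\succeq0$, we have $\|M_r|_\perp\|^t=\|M_r^t|_\perp\|$, so a gap for $M_r^t$ transfers directly to $M_r$. The natural comparison is a detectability-lemma or quantum-union-bound argument applied to $H=\mathbb E_P(1-\Pi_P)$. For each pair of anticommuting $P,Q$, the rotations $e^{\mathrm i\theta P}$ and $e^{\mathrm i\varphi Q}$ generate a full $\mathrm{SU}(2)$ acting on the two-dimensional structure they define. One would then bound $\langle v,(2-\Pi_P-\Pi_Q)v\rangle$ from below by a constant times the $\mathrm{SU}(2)$-twirl defect, and average over such pairs.

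The main obstacle I expect is precisely this $r$-independence. Clifford twirling is only a $3$-design, so any comparison with a Clifford ensemble loses control of the higher-order commutant, which involves non-permutation elements of the Clifford commutant for $r\ge4$. The first thing I would try is to show that each non-permutation commutant element is moved by a constant fraction of Pauli rotations, using that the averaged charge-zero projectors break stabilizer-code invariants. Failing a clean argument, I would simply invoke the cited theorem, which supplies $g=1/16$.
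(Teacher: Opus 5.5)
Your proposal agrees with the paper: the paper gives no proof of this fact and simply imports it from Theorem~3.16 of Ref.~\cite{baer2026random}, and your reduction $M_r(\nu_{\mathrm{RPR},m})=\mathbb E_{P}\Pi_P$ (a positive semidefinite average of charge-zero projectors whose joint fixed space is exactly the Haar sector) is correct. Your self-contained sketch, however, stops before its one nontrivial step, a lower bound on $\mathbb E_P\|(1-\Pi_P)v\|^2$ uniform in both $m$ and $r$, so the citation is what actually establishes the statement, exactly as in the paper.
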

	
	\begin{theorem}[$m$-qubit approximate designs without ancillae on finite-dimensional lattice]
		\label{thm:fd-small-design}
		Let $0<\epsilon\leq 1$ and $k\geq 2$, on every $m$-qubit balanced prefix window, there exists a multiplicative $\epsilon$-approximate unitary $k$-design with ancilla-free circuit depth
		\begin{equation}
			\label{eq:app-fd-m-qubit-design-depth}
			\cO_\delta\left(m^{1/\delta}\left(k+\frac{\log(\epsilon^{-1})}{m}\right)\log^2 k\right)
		\end{equation}
		on a $\delta$-dimensional architecture.
	\end{theorem}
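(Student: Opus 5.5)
Split into two regimes according to whether $m\geq A_\delta\log k$, the threshold in Theorem~\ref{thm:finite-dimensional-ensemble}. In both regimes we amplify a single-block TPE gap by repetition, Eq.~\eqref{eq:app-gap-amplification}, and then convert to multiplicative error with Fact~\ref{fact:TPE-multiplicative-comparison}. The target is $\lambda_k(\nu^{*T})\leq \epsilon\, 2^{-2mk}$, since then $\epsilon_{\mathrm{mult}}\leq d^{2k}\lambda\leq\epsilon$ with $d=2^m$. This means we need $T\cdot\log(1/\lambda_k(\nu))\geq 2mk\ln 2+\ln\epsilon^{-1}$.

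\emph{Large regime, $m\geq A_\delta\log k$ (enlarging $A_\delta$ if needed).} Take $\nu$ to be the Hadamard--diagonal block with $D_1,D_2,D_3$ drawn from $\nu_{\delta,m}$. The Hadamard layers have depth one and the diagonals cost $\cO_\delta(m^{1/\delta}\log k)$, so one block has depth $\cO_\delta(m^{1/\delta}\log k)$. By Lemma~\ref{lem:app-HD-perturbation} with $r=k$ (valid since $2^m\geq 4k^4$ once $A_\delta$ is large) together with Theorem~\ref{thm:finite-dimensional-ensemble},
\begin{equation}
\lambda_k(\nu)\leq \frac{9k^4}{2^m}+3C_\delta e^{-c_\delta m/\log k}\leq C' e^{-c' m/\log k},
\end{equation}
using $k^4 2^{-m}\leq e^{-c m/\log k}$ when $m\gtrsim\log k$. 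For $A_\delta$ large this is at most $1/e$, and $\log(1/\lambda_k)\geq \Omega(m/\log k)$. Choosing
\begin{equation}
T=\cO\!\left(\frac{\log k\,(mk+\log\epsilon^{-1})}{m}\right)=\cO\!\left(\left(k+\frac{\log\epsilon^{-1}}{m}\right)\log k\right)
\end{equation}
then meets the target. The total depth is $T\cdot\cO_\delta(m^{1/\delta}\log k)$, which is exactly Eq.~\eqref{eq:app-fd-m-qubit-design-depth}. The main point to verify carefully is that the constant in $\log(1/\lambda_k)$ is genuinely $\Omega(m/\log k)$, not merely $\Omega(1)$, uniformly in $k$. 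This is where both error terms must be absorbed into a single exponential.

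\emph{Small regime, $m<A_\delta\log k$.} Use $\nu_{\mathrm{RPR},m}$ from Fact~\ref{fact:app-fd-brickwork-gap}, which gives $\lambda_k\leq 1-g$, so $T=\cO(mk+\log\epsilon^{-1})$ repetitions suffice. Each rotation $e^{\mathrm i\theta P}$ is implemented ancilla-free as follows. Conjugate by single-qubit Cliffords to map $P$ to a $Z$-string. Compute the parity of the support onto one qubit with a CNOT network, routed inside the balanced prefix window in depth $\cO_\delta(m^{1/\delta})$ via Lemma~\ref{lem:app-fd-prefix-routing} and a Hamiltonian-path parity ladder. Apply $R_z$, then uncompute. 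A plain ladder along the path costs $\cO(m)$, which is fine when $m\leq A_\delta\log k$, but to obtain the stated bound we route to a regular box and use a tree reduction to get depth $\cO_\delta(m^{1/\delta})$. The total depth is $\cO_\delta(m^{1/\delta}(mk+\log\epsilon^{-1}))$. Since $m=\cO(\log k)$, we have $mk\lesssim k\log k$ and $\log\epsilon^{-1}=m\cdot\frac{\log\epsilon^{-1}}{m}\lesssim\log k\cdot\frac{\log\epsilon^{-1}}{m}$. This gives $\cO_\delta(m^{1/\delta}(k+\log(\epsilon^{-1})/m)\log k)$, within Eq.~\eqref{eq:app-fd-m-qubit-design-depth}.

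The step I expect to be most delicate is the small-regime gate implementation. One has to make sure the Pauli-rotation circuit stays within the prefix window without ancillae and achieves the $m^{1/\delta}$ depth, although the slack factor $\log^2 k$ in the statement should absorb any crude $\cO(m)\leq\cO(\log k)$ implementation anyway. The large regime is a direct combination of the earlier results.
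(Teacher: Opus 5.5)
Your proposal is correct and follows the paper's proof essentially step for step: the same split at $m\gtrsim\log k$; in the large regime, the Hadamard--diagonal block with Lemma~\ref{lem:app-HD-perturbation} and Theorem~\ref{thm:finite-dimensional-ensemble} giving $\lambda_k\leq e^{-\Omega(m/\log k)}$, followed by $T=\cO((k+\log(\epsilon^{-1})/m)\log k)$ repetitions; and the random-Pauli-rotation fallback when $m=\cO(\log k)$. The only difference is the per-rotation cost in the small regime: the paper routes each of the $\lceil\log_2|S|\rceil$ layers of a balanced CNOT tree at cost $\cO_\delta(m^{1/\delta})$, giving $\cO_\delta(m^{1/\delta}\log m)$ per rotation and hence its $\log^2 k$ factor, and as you note, any of these implementations, even an $\cO(m)$ ladder, fits inside the stated bound.
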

	\begin{proof}
		First consider the case that $m\geq A_{\delta}\log k$ where Theorem~\ref{thm:finite-dimensional-ensemble} applies.
		Let $\mu_{\delta,m}$ denote the ensemble of unitaries
		\begin{equation}
			U=D_3H^{\otimes m}D_2H^{\otimes m}D_1,
		\end{equation}
		where $D_1,D_2,D_3$ are independently sampled from the ensemble $\nu_{\delta,m}$ constructed in Theorem~\ref{thm:finite-dimensional-ensemble}.
		The stability estimate in Lemma~\ref{lem:app-HD-perturbation} together with Eq.~\eqref{eq:finite-dimensional-distance} gives
		\begin{equation}\begin{aligned}
				\label{eq:fd-block-gap}
				\lambda_k(\mu_{\delta,m})
				&\leq
				\frac{9k^4}{2^m}
				+3C_{\delta}
				\exp\!\left[-\frac{c_{\delta}m}{\log k}\right]\\
				&\leq \exp\!\left[-\frac{a_{\delta}m}{\log k}\right]
		\end{aligned}\end{equation}
		for some constant $a_\delta>0$.
		Independent repetition and Eq.~\eqref{eq:app-gap-amplification} imply
		\begin{equation}\label{eq:gap_repetition_hadamard-diagonal}
			\lambda_k(\mu_{\delta,m}^{*T})\leq \exp\!\left[-\frac{a_{\delta}mT}{\log k}\right]
		\end{equation}
		for every integer $T\geq 1$.
		Fact~\ref{fact:TPE-multiplicative-comparison} bounds the multiplicative error by
		\begin{equation}
			\epsilon_{\mathrm{mult}}\leq 2^{2mk}\lambda_k(\mu_{\delta,m}^{*T})\leq \exp(2mk\log 2-\frac{a_{\delta}mT}{\log k}).
		\end{equation}
		Selecting $T
		=\left\lceil
		\frac{\log k}{a_{\delta}m}
		\left(2mk\log 2+\log(\epsilon^{-1})\right)\right\rceil$ makes $\epsilon_{\mathrm{mult}} \le \epsilon$.
		One sample of $\mu_{\delta,m}$ contains three diagonal circuits of depth $\cO_\delta\left(m^{1/\delta}\log k\right)$ and two transversal Hadamard layers.
		Repeating it $T$ times proves the depth claimed in Eq.~\eqref{eq:app-fd-m-qubit-design-depth}.
		
		It remains to consider the case that $m<A_{\delta}\log k$.
		Use the random-Pauli-rotation ensemble $\nu_{\mathrm{RPR},m}$ from Fact~\ref{fact:app-fd-brickwork-gap}.
		First note that every $R(P,\theta)$ has an ancilla-free implementation of depth
		\begin{equation}
			\label{eq:app-fd-rpr-single-step-depth}
			\cO_\delta\!\left(m^{1/\delta}\log m\right)
		\end{equation}
		on the given architecture.
		To see this, let $S\subseteq[m]$ be the support of $P$, write $Z_S:=\prod_{j\in S}Z_j$, and choose a tensor product $C_P$ of single-qubit Clifford gates such that $C_PP C_P^\dagger=Z_S$.
		A balanced binary tree computes the parity $\bigoplus_{j\in S}x_j$ into one of the qubits in $S$ using $\lceil\log_2|S|\rceil$ layers of disjoint CNOT gates.
		Applying $e^{\mathrm i\theta Z}$ to the parity qubit and reversing the tree implements $e^{\mathrm i\theta Z_S}$.  
		Conjugating $e^{\mathrm i\theta Z_S}$ by $C_P$ gives
		\begin{equation}
			R(P,\theta)=C_P^\dagger e^{\mathrm i\theta Z_S}C_P.
		\end{equation}
		
		It remains to compile every CNOT layer into the physical geometry. 
		The balanced prefix window contains a regular box of volume $\Omega_\delta(m)$ and has routing number $\cO_\delta(m^{1/\delta})$ by Lemma~\ref{lem:app-fd-prefix-routing}.
		The regular box contains a nearest-neighbor matching of size $\Omega_\delta(m)$. 
		Partition a CNOT layer into $\cO_\delta(1)$ batches, route the endpoints in each batch to distinct edges of this physical matching, apply the CNOT gates in parallel, and undo the routing. 
		Hence each CNOT layer has physical depth $\cO_\delta(m^{1/\delta})$. 
		Thus each of the $\cO(\log m)$ layers has depth $\cO_\delta(m^{1/\delta})$, proving Eq.~\eqref{eq:app-fd-rpr-single-step-depth}.
		The construction changes only the $m$ system qubits and exactly reverses all temporary parity computations, so it introduces no ancillary qubits.
		Fact~\ref{fact:app-fd-brickwork-gap} and Eq.~\eqref{eq:app-gap-amplification} give
		\begin{equation}
			\lambda_k(\nu_{\mathrm{RPR},m}^{*T})
			\leq(1-g)^T
			=e^{-\log(\frac{1}{1-g})T}.
		\end{equation}
		The TPE-to-multiplicative conversion therefore shows that
		\begin{equation}
			\label{eq:app-fd-rpr-repetition-count}
			T=
			\left\lceil
			\frac{2mk\log 2+\log(\epsilon^{-1})}{\log(\frac{1}{1-g})}
			\right\rceil
		\end{equation}
		repetitions suffice.
		Combining Eqs.~\eqref{eq:app-fd-rpr-single-step-depth} and~\eqref{eq:app-fd-rpr-repetition-count}, their total depth is
		\begin{equation}
			\label{eq:app-fd-small-register-depth}
			\cO_\delta\!\left(
			m^{1/\delta}\log m
			\left[mk+\log(\epsilon^{-1})\right]
			\right).
		\end{equation}
		Finally, $m<A_{\delta}\log k$ implies $m\log m=\cO_\delta\!\left(\log^2 k\right)$. 
		Consequently, Eq.~\eqref{eq:app-fd-small-register-depth} becomes
		\begin{equation}
			\cO_\delta\!\left(
			m^{1/\delta}
			\left(k+\frac{\log(\epsilon^{-1})}{m}\right)
			\log^2 k
			\right),
		\end{equation}
		which establishes Eq.~\eqref{eq:app-fd-m-qubit-design-depth}.
		Both regimes use only the $m$ system qubits, completing the proof.
	\end{proof}
	
	The following gluing lemma combines overlapping block designs into a design on their union.
	\begin{fact}[Gluing lemma~{\cite[Lemma 8]{Schuster2024RandomUnitaries}}]
		\label{fact:app-fd-gluing}
		Let $A,B,C$ be disjoint qubit registers with $|B|=\xi$, and suppose $2^{\xi}\geq2k^2$.
		Independently sample $U_{AB}$ and $U_{BC}$ from multiplicative $\epsilon_{AB}$- and multiplicative $\epsilon_{BC}$-approximate unitary $k$-designs on $AB$ and $BC$, respectively.
		Then the ensemble of
		\begin{equation}
			V_{ABC}:=U_{AB}U_{BC}
		\end{equation}
		is a multiplicative $\epsilon_{ABC}$-approximate unitary $k$-design on $ABC$, where
		\begin{equation}
			1+\epsilon_{ABC}\leq
			(1+\epsilon_{AB})(1+\epsilon_{BC})
			\left(1+\cO(k^2 2^{-\xi})\right).
		\end{equation}
	\end{fact}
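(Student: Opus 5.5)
The plan is to work at the level of moment channels and reduce the statement to a purely Haar-theoretic comparison. Write $\Phi_{AB}:=\Phi^{(k)}_{U_{AB}}\otimes\mathrm{id}_C$ and $\Phi_{BC}:=\mathrm{id}_A\otimes\Phi^{(k)}_{U_{BC}}$. Because $U_{AB}$ and $U_{BC}$ are independent, the $k$-th moment channel of $V_{ABC}=U_{AB}U_{BC}$ factorizes as $\Phi_V=\Phi_{AB}\circ\Phi_{BC}$. The relation $\preceq$ is preserved under tensoring with $\mathrm{id}$ and under pre- and post-composition with completely positive maps. Applying this one factor at a time gives
\begin{equation}
(1-\epsilon_{AB})(1-\epsilon_{BC})\,\Phi^{AB}_{\mathrm H}\circ\Phi^{BC}_{\mathrm H}\preceq\Phi_V\preceq(1+\epsilon_{AB})(1+\epsilon_{BC})\,\Phi^{AB}_{\mathrm H}\circ\Phi^{BC}_{\mathrm H},
\end{equation}
where $\Phi^{AB}_{\mathrm H}$ denotes the Haar twirl on $AB$ tensored with the identity on $C$. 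It therefore suffices to prove
\begin{equation}
(1-\eta)\Phi^{ABC}_{\mathrm H}\preceq\Phi^{AB}_{\mathrm H}\circ\Phi^{BC}_{\mathrm H}\preceq(1+\eta)\Phi^{ABC}_{\mathrm H},\qquad \eta=\cO(k^2 2^{-\xi}).
\end{equation}
This is the only step that uses $2^{\xi}\ge 2k^2$.

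For this comparison I would use Choi operators together with the Weingarten expansion $\Phi_{\mathrm H}(X)=\sum_{\sigma,\tau\in S_k}\mathrm{Wg}(\sigma^{-1}\tau,D)\,\tr(P_\sigma^\dagger X)\,P_\tau$. Here $P_\sigma$ permutes the $k$ copies and factorizes over subsystems as $P_\sigma^{ABC}=P_\sigma^A\otimes P_\sigma^B\otimes P_\sigma^C$. Substituting the expansions for $AB$ and for $BC$ and contracting the intermediate trace over $B$ introduces the Gram factor $\tr(P^{B\dagger}_{\tau}P^B_{\sigma'})=d_B^{\,c(\tau^{-1}\sigma')}$. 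The composed map then becomes
\begin{equation}
\Phi^{AB}_{\mathrm H}\Phi^{BC}_{\mathrm H}(X)=\sum_{\sigma,\tau'}N_{\sigma,\tau'}\,\tr(P^{\dagger}_{\sigma'}X)\,P_{\tau},
\end{equation}
with mixed permutation operators on $A$, $B$, $C$ and an explicit coefficient matrix. Its Choi operator, like that of $\Phi^{ABC}_{\mathrm H}$, is supported on the span of the vectors $|P_\sigma^{A}\otimes P_\tau^{BC}\rangle$-type products.

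I would normalize both Choi operators by the Gram matrices $G_X=\bigl(\tr P^{X\dagger}_\sigma P^X_\tau\bigr)_{\sigma,\tau}=d_X^{\,k}(I+\Delta_X)$. For any register with $d_X\ge 2k^2$, the bound of Eq.~\eqref{eq:app-offdiagonal-block-bound} (counting permutations by $\ell(p)$) gives $\|\Delta_X\|_{\mathrm{op}}\le\sum_{p\ne e}d_X^{-\ell(p)}=\cO(k^2/d_X)$. Write $J_{\mathrm{glue}}$ for the Choi operator of the composed map and $J_{\mathrm{Haar}}$ for that of $\Phi^{ABC}_{\mathrm H}$. The goal is to show $J_{\mathrm{glue}}=J_{\mathrm{Haar}}^{1/2}(I+E)J_{\mathrm{Haar}}^{1/2}$ on the support of $J_{\mathrm{Haar}}$, with $\|E\|_{\mathrm{op}}=\cO(\|\Delta_B\|)=\cO(k^2 2^{-\xi})$. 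The two-sided PSD inequality between the Choi operators, and hence between the channels, then follows directly. Heuristically this is right because the only source of mismatch is the middle contraction on $B$: it replaces the identity $\delta_{\tau,\sigma'}$ by $d_B^{-k}G_B=I+\Delta_B$. Every Weingarten inverse is also controlled by $(I+\Delta)^{-1}$, and its deviation from $I$ is of the same order.

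The main obstacle is the support issue in the last step. The vectors appearing in $J_{\mathrm{glue}}$ are not all in the span of the diagonal products $P_\sigma^{ABC}$: terms with $\tau\ne\sigma'$ produce mixed vectors such as $P^A_\tau\otimes P^B_\tau\otimes P^C_{\sigma'}$. These terms must be shown to be small in PSD order relative to $J_{\mathrm{Haar}}$, not merely small in norm, since only a relative bound gives a multiplicative guarantee. I would first bound each such cross term by $d_B^{-\ell(\tau^{-1}\sigma')}$ times a product of Haar-type projectors, using operator Cauchy--Schwarz ($XY^\dagger+YX^\dagger\preceq tXX^\dagger+t^{-1}YY^\dagger$), and then sum over $\ell$ as in Eq.~\eqref{eq:app-offdiagonal-block-bound}. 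The condition $2^\xi\ge 2k^2$ makes this geometric series converge to $\cO(k^2 2^{-\xi})$. Finally I would multiply out $(1\pm\epsilon_{AB})(1\pm\epsilon_{BC})(1\pm\eta)$ to obtain the stated bound.
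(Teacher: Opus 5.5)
The paper gives no proof of this fact; it imports it from Schuster et al. So I assess your argument on its own. Your reduction is correct: $\Phi_V=\Phi_{AB}\circ\Phi_{BC}$, and $\preceq$ is stable under tensoring with $\mathrm{id}$ and under composition with CP maps, so everything reduces to the Haar comparison.

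The gap is in that comparison: you conflate the Choi operator with the superoperator (moment, or Liouville) matrix. The Choi operator of $X\mapsto\tr(P_{\sigma'}^\dagger X)P_\tau$ is $P_\tau\otimes\overline{P_{\sigma'}}$ on $\mathcal H_{\mathrm{out}}^{\otimes k}\otimes\mathcal H_{\mathrm{in}}^{\otimes k}$. The object that is ``supported on the span of the vectors $|P_\sigma\rangle\!\rangle$'' and normalized by the Gram matrices $G_X$ is $|P_\tau\rangle\!\rangle\langle\!\langle P_{\sigma'}|$, i.e.\ the moment operator. In that picture the plan fails on two counts.

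First, PSD order between moment operators is not CP order. The identity channel dominates the completely depolarizing channel in the former order but not in the latter. So ``hence between the channels'' does not follow.

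Second, the obstacle you single out is fatal, not technical. $M_{AB}M_{BC}$ genuinely has range outside $\operatorname{ran}M_{ABC}$: it outputs $P^{AB}_\beta\otimes P^{C}_\tau$ with $\beta\neq\tau$. No operator with a component off $\operatorname{supp}J_{\mathrm{Haar}}$ can be bounded by a multiple of $J_{\mathrm{Haar}}$. Your Cauchy--Schwarz step could therefore only dominate the cross terms by projectors that are not themselves below $J_{\mathrm{Haar}}$.

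The missing idea is that the true Choi operators all live on the pair-symmetric subspace $\mathcal S:=\mathrm{Sym}^k(\mathcal H_{\mathrm{out}}\otimes\mathcal H_{\mathrm{in}})$; the glued one, for instance, is an average of $|V\rangle\!\rangle\langle\!\langle V|^{\otimes k}$. On $\mathcal S$ the Haar Choi operator is essentially a multiple of the projector. Write $\tau=\sigma p$ and use that $P_\sigma\otimes\overline{P_\sigma}$ permutes output--input pairs:
\begin{equation*}
J_{\mathrm{Haar}}=\sum_{\sigma,\tau}\mathrm{Wg}(\sigma^{-1}\tau,d)\,P_\tau\otimes\overline{P_\sigma}=\frac{k!}{d^{k}}\,\Pi_{\mathcal S}\bigl((I+E_d)^{-1}\otimes I\bigr),\qquad E_d:=\sum_{p\neq e}d^{-\ell(p)}P_p .
\end{equation*}
The two factors commute, and $\|E_d\|_{\mathrm{op}}\leq\eta_d:=\sum_{p\neq e}d^{-\ell(p)}$, the same quantity you used for $\Delta_X$. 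Hence each Haar twirl is multiplicatively $(1\pm\cO(\eta_d))$-close to $\widetilde\Phi(X):=d^{-k}\sum_\sigma\tr(P_\sigma^\dagger X)P_\sigma$, whose Choi operator is exactly $\frac{k!}{d^k}\Pi_{\mathcal S}$.

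Replacing all three twirls by $\widetilde\Phi$ costs only factors $1\pm\cO(k^2/d_B)$, and
\begin{equation*}
J\bigl(\widetilde\Phi_{AB}\widetilde\Phi_{BC}\bigr)=d_{ABC}^{-k}\sum_{\alpha,\sigma}d_B^{-\ell(\alpha^{-1}\sigma)}\,\bigl(P^{AB}_\alpha\otimes P^{C}_\sigma\bigr)\otimes\overline{\bigl(P^{A}_\alpha\otimes P^{BC}_\sigma\bigr)} .
\end{equation*}
Its $\alpha=\sigma$ part is $J(\widetilde\Phi_{ABC})=\frac{k!}{d_{ABC}^{k}}\Pi_{\mathcal S}$. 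The mixed terms are unitaries. Their weighted sum is invariant under pair permutations on both sides, so it is supported on $\mathcal S$, and its norm is at most $\frac{k!}{d_{ABC}^{k}}\eta_{d_B}$.

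Since the reference operator is a multiple of $\Pi_{\mathcal S}$, this plain norm bound is already the relative PSD bound you wanted. No support problem arises, and no Cauchy--Schwarz splitting is needed. Your count $\eta_{d_B}=\prod_{j=1}^{k-1}(1+j/d_B)-1=\cO(k^2/d_B)$ under $d_B\geq 2k^2$ then finishes the proof.
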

	
	The result is summarized by the following theorem.
	\begin{theorem}[Low-depth approximate designs without ancillae on finite-dimensional lattice]
		\label{thm:fd-large-design}
		Fix $\delta\geq 1$. For every $n,k\geq 2$ and $2^{-\eta nk}\leq \epsilon\leq 1$ for any fixed constant $\eta>0$, there exists an $n$-qubit multiplicative $\epsilon$-approximate unitary $k$-design with ancilla-free circuit depth
		\begin{equation}
			\label{eq:fd-large-design-depth}
			\cO_{\delta,\eta}\left((\log(n)+\log(k)+\log(\epsilon^{-1}))^{1/\delta}k\log^2 k\right)
		\end{equation}
		on a $\delta$-dimensional architecture.
	\end{theorem}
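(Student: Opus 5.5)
The plan is to follow the blocked ("gluing") strategy of Ref.~\cite{Schuster2024RandomUnitaries}, using Theorem~\ref{thm:fd-small-design} for the local patches and Fact~\ref{fact:app-fd-gluing} to merge them. Fix a patch size $\xi=\Theta(\log(nk/\epsilon))$, chosen as $\xi=\lceil C(\log n+\log k+\log\epsilon^{-1})\rceil$ with $C$ large enough that $2^{\xi}\geq 2k^2$ and $k^2 2^{-\xi}\leq \epsilon/(4n)$. If $n\leq 2\xi$, we apply Theorem~\ref{thm:fd-small-design} directly on the whole lattice (viewed as a balanced prefix window, or after routing into one), and the depth bound follows immediately.

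Otherwise, we tile the $\delta$-dimensional lattice of $n$ qubits into regions of $\Theta(\xi)$ qubits. Each region is chosen so that, as an induced subgraph, it is a balanced prefix window $H_\delta(L,N)$ with $L^\delta=\Theta(\xi)$, or it can be routed onto one by a nearest-neighbour circuit of depth $\cO_\delta(\xi^{1/\delta})$. We form a two-layer brickwork of patches: in layer one, patches $A_iB_i$; in layer two, patches $B_iC_i$ that straddle the boundaries. Every overlap $B$ must have at least $\xi$ qubits. For $\delta\geq 2$, a single two-layer brickwork in one direction does not yield an overlapping chain covering everything. The cleanest choice is therefore to order the $n$ sites along the lexicographic (snake) order used in Definition~\ref{def:prefix-window}, with tube width $L=\Theta(\xi^{1/\delta})$. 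We cut the tube into consecutive chunks of $\xi$ sites along this order. Consecutive pairs of chunks are again prefix-window-shaped, since they are translates of $H_\delta(L,2\xi)$ up to partial slices, and are balanced by choice of $L$. We then apply the odd pairs in parallel, followed by the even pairs, giving the 1D chain structure of Fact~\ref{fact:app-fd-gluing} with $|B|=\xi$. Embedding the $n$-qubit $\delta$-dimensional lattice into such a tube ordering, with each window's sites physically local, is the geometric step I expect to require the most care. If the native lattice is a cube rather than a tube, I would either reinterpret the architecture as $n/L^{\delta-1}$ parallel tubes glued by a second brickwork sweep in the remaining direction (a constant number of sweeps, each of Fact~\ref{fact:app-fd-gluing} type), or route each window into a regular box using Lemma~\ref{lem:app-fd-regular-routing}, at cost $\cO_\delta(\xi^{1/\delta})$.

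Each patch $U$ is sampled from Theorem~\ref{thm:fd-small-design} with $m=\Theta(\xi)$ and local error $\epsilon'=\epsilon/(4n)$. Since $\log(\epsilon'^{-1})=\cO(\xi)$, the term $\log(\epsilon'^{-1})/m$ is $\cO(1)\leq \cO(k)$, so each patch has depth $\cO_\delta(\xi^{1/\delta}k\log^2k)$. The assumption $\epsilon\geq 2^{-\eta nk}$ keeps $\xi$ at most $\cO_\eta(nk)$, and this ensures $m$ remains comparable to the available region size (in the degenerate case where $\xi>n$ we simply use the single-patch case). Iterating Fact~\ref{fact:app-fd-gluing} across the $\cO(n/\xi)$ patches gives $1+\epsilon_{\rm tot}\leq (1+\epsilon')^{\cO(n)}(1+\cO(k^22^{-\xi}))^{\cO(n)}\leq e^{\epsilon/2}\leq 1+\epsilon$ after adjusting constants. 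This order of gluing, sequentially merging the chain, is legitimate because patches in the same layer act on disjoint qubits and commute, so the product equals a product of $U_{AB}U_{BC}$ steps with freshly glued registers. The total depth is two layers times the patch depth, which gives Eq.~\eqref{eq:fd-large-design-depth}. No ancillae are introduced, since every patch circuit is ancilla-free and acts only on its own region.
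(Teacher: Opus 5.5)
Your plan is essentially the paper's proof. The paper also sets $M=\log n+\log k+\log\epsilon^{-1}$ and treats $n<2l^\delta$ by one application of Theorem~\ref{thm:fd-small-design}. Otherwise it places the $n$ qubits on the tube-shaped prefix window $H_\delta(l,n)$ with $l=\Theta(M^{1/\delta})$, cuts it into consecutive blocks $Q_i$, and glues the patches $R_i=Q_i\cup Q_{i+1}$ in a two-layer brickwork using Fact~\ref{fact:app-fd-gluing} with local error $\epsilon/(8t)$. Since the statement lets you choose the layout, your digression about a cube-shaped native lattice is not needed.

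Two details need tightening.

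\emph{Chunk boundaries.} Cutting the lexicographic order into chunks of exactly $\xi$ sites generally puts boundaries in the middle of a slice. The union of two consecutive chunks then begins with a proper suffix of a slice, which is not a translate of any $H_\delta(L,N)$. Theorem~\ref{thm:fd-small-design} is stated only for balanced prefix windows, and its routing comes from Lemma~\ref{lem:app-fd-prefix-routing}, so it does not apply as written. A short final chunk must also be absorbed into its neighbour to keep the last patch balanced. The paper avoids both issues by making every $Q_i$ a union of complete slices with $l^\delta\le|Q_i|\le 2l^\delta$, with only $Q_t$ absorbing the final partial slice. Each $R_i$ is then a translated prefix window with $2l^\delta\le|R_i|\le 4l^\delta$, hence balanced.

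\emph{Role of the hypothesis.} The condition $\epsilon\ge 2^{-\eta nk}$ enters only in the small case. There Theorem~\ref{thm:fd-small-design} gives depth $\cO_\delta\bigl(n^{1/\delta}(k+\log(\epsilon^{-1})/n)\log^2 k\bigr)$. You need $\log(\epsilon^{-1})/n\le\eta k$ to reach the target, so the bound does not ``follow immediately'' without invoking it. The role you assign to it, keeping $\xi=\cO_\eta(nk)$, is not the operative one, and the large case does not use it at all.
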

	\begin{proof}
		Set $M=\log(n)+\log(k)+\log(\epsilon^{-1})$ as the order of the size of a block design.
		Let $l=2^{\lceil\frac{\log_2(\kappa_\delta M)}{\delta}\rceil}$ for some constant $\kappa_\delta>0$ to be fixed.
		Then $l$ has order $\Theta(M^{1/\delta})$.
		
		First, consider the case $n<2l^\delta$. 
		Let $L_n:=2^{\lfloor\delta^{-1}\log_2 n\rfloor}$ and place the qubits on the balanced prefix window $H_\delta(L_n,n)$. 
		Since $l^\delta=\cO_\delta(M)$ and $n<2l^\delta$, one has $n=\cO_\delta(M)$. 
		Applying Theorem~\ref{thm:fd-small-design} to the whole system gives depth
		\begin{equation}
			\cO_\delta\!\left(
			n^{1/\delta}\left(k+\frac{\log(\epsilon^{-1})}{n}\right)\log^2 k
			\right)
			=\cO_{\delta,\eta}\!\left(M^{1/\delta}k\log^2 k\right)
		\end{equation}
		as $\epsilon\geq 2^{-\eta nk}$. Hence, it remains to treat $n\geq2l^\delta$.
		
		Place the $n$ qubits on the prefix window $H_\delta(l,n)$.
		Partition the occupied slices consecutively from left to right into $t\geq 2$ blocks $Q_1,Q_2,\ldots,Q_t$ such that 
		\begin{equation}
			l^\delta\leq |Q_i|\leq 2l^\delta.
		\end{equation}
		All blocks except possibly $Q_t$ are unions of complete slices, and $Q_t$ may additionally contain the final partial slice.
		
		Let $R_i:=Q_i\cup Q_{i+1}$ for $i\in[t-1]$. Then
		\begin{equation}
			\label{eq:R-size}
			2l^\delta\leq |R_i|\leq 4l^\delta.
		\end{equation}
		We will construct an approximate unitary design on each patch $R_i$, and use gluing lemma to glue them together.
		To construct a unitary design on $R_i$, we need to prove that each of the $R_i$ is a balanced prefix window.
		As each $Q_i$ is some consecutive slice inside the tube $H_\delta(l,n)$, so does $R_i$.
		Therefore, each $R_i$ forms a prefix window after translation.
		The balance condition comes from Eq.~\eqref{eq:R-size}.

		Now we show the unitary design construction.
		Let $\epsilon'=\frac{\epsilon}{8t}$.
		For each $i\in[t-1]$, independently sample a multiplicative $\epsilon'$-approximate unitary $k$-design $U_i$ on $R_i$ from Theorem~\ref{thm:fd-small-design}.
		Arrange these local unitaries in the two-layer brickwork product
		\begin{equation}
			\label{eq:app-fd-two-layer-product}
			U_{\mathrm{glue}}:=\left(\prod_{\substack{i\in[t-1]\\ i\ \mathrm{even}}}U_i\right)
			\left(\prod_{\substack{i\in[t-1]\\ i\ \mathrm{odd}}}U_i\right).
		\end{equation}
		Define $W_1:=U_1$ and
		\begin{equation}
			W_i:=\begin{cases}
				U_iW_{i-1},&i\text{ even},\\
				W_{i-1}U_i,&i\text{ odd}
			\end{cases}
		\end{equation}
		for $i\geq 2$.
		So $W_{t-1}=U_{\mathrm{glue}}$.
		
		For $W_{i}$, the old region $Q_1\cup\cdots\cup Q_{i}$ and the new patch $R_{i}=Q_{i}\cup Q_{i+1}$ overlap on $Q_i$.
		Let $\epsilon_i$ be the multiplicative error of $W_i$.
		Then $\epsilon_1=\epsilon'$.
		As $M\geq 1+\log k$, by setting $\kappa_\delta$ sufficiently large, $2^{|Q_i|}\geq 2^{l^\delta}\geq 2^{\kappa_\delta M}\geq 2k^2$.
		Hence by Fact~\ref{fact:app-fd-gluing},
		\begin{equation}
			1+\epsilon_i\leq (1+\epsilon_{i-1})(1+\epsilon')(1+\cO(k^22^{-|Q_i|})).
		\end{equation}
		Iterating this inequality gives
		\begin{equation}\begin{aligned}
				1+\epsilon_{t-1}
				&\leq (1+\epsilon')^{t-1}(1+\cO(k^22^{-l^\delta}))^{t-2}\\
				&\leq \exp\!\left(t\epsilon'+\cO(tk^22^{-l^\delta})\right)\\
				&\leq \exp\!\left(t\epsilon'+\cO(tk^22^{-\kappa_\delta M})\right).
		\end{aligned}\end{equation}
		Since $t\leq n$ and $M=\log(n)+\log(k)+\log(\epsilon^{-1})$, setting $\kappa_\delta$ sufficiently large makes $\cO(tk^22^{-\kappa_\delta M})\leq\epsilon/8$. Together with $t\epsilon'\leq\epsilon/8$, this yields
		\begin{equation}
			\epsilon_{t-1} \leq \exp(\epsilon/4)-1 \leq \epsilon,
		\end{equation}
		where the last inequality holds for $0<\epsilon\leq 1$. Hence $U_{\mathrm{glue}}$ is a multiplicative $\epsilon$-approximate unitary $k$-design on all $n$ qubits.

		It remains to analyze the circuit depth of $U_{\mathrm{glue}}$.
		The factors within either product have disjoint supports and may therefore be executed in parallel.
		By Theorem~\ref{thm:fd-small-design}, each single unitary $U_i$ can be computed in depth 
		\begin{equation}\begin{aligned}
				\cO_\delta(M^{1/\delta}(k+\frac{\log(8t\epsilon^{-1})}{M})\log^2 k)
				&=\cO_\delta(M^{1/\delta}(k+\frac{3+\log(n)+\log(\epsilon^{-1})}{M})\log^2 k)\\
				&=\cO_\delta(M^{1/\delta}k\log^2 k).
		\end{aligned}\end{equation}
		The two brickwork layers therefore have total depth $\cO_\delta(M^{1/\delta}k\log^2 k)$, which is Eq.~\eqref{eq:fd-large-design-depth}.
	\end{proof}
	
	\begin{remark}[Very small errors]\label{rmk:very-small-errors}
		While our low-depth constructions for multiplicative errors $\epsilon \geq 2^{-\eta nk}$ suffice for most practical applications, we can also maintain the depth scaling of Eq.~\eqref{eq:fd-large-design-depth} in the highly precise $\epsilon < 2^{-\eta nk}$ regime. This is achieved via the exactification lemma (Lemma~\ref{lem:exactification}), which converts an approximate design with a sufficiently small error into an exact one. By generating an initial approximate design with error $\epsilon = 2^{-\eta nk}$ for a sufficiently large constant $\eta$, utilizing Eq.~\eqref{eq:gap_repetition_hadamard-diagonal} for both design orders $k$ and $2k$, we satisfy the precondition of Lemma~\ref{lem:exactification}. 
		This yields an exact unitary design, and therefore, a valid multiplicative design for any arbitrarily small $\epsilon > 0$. 
		However, this comes with a computational caveat: sampling from the exactified distribution may be inefficient in practice, as the classical reweighting factors introduced by Lemma~\ref{lem:exactification} are not currently known to be efficiently computable.
	\end{remark}
	
	\section{All-to-all implementation}
	\label{app:a2a}
	We now present the low-depth construction for all-to-all architectures.
	This construction mirrors the finite-dimensional case (Appendix~\ref{app:finite-dimensional}), differing only in the hash functions used to mix the input during the first two steps. 
	To further compress circuit depth, we construct these hash functions using a good binary linear code whose encoder admits a linear-size, logarithmic-depth circuit. 
	We summarize the properties of the resulting diagonal block in the following theorem.
	
	\begin{theorem}[Low-depth diagonal blocks with all-to-all connectivity]
		\label{thm:a2a-ensemble}
		There exist constants $A,c,C>0$ such that, for every $k\geq 2$ and $m\geq A\log k$, there is an ensemble $\nu_{\mathrm{a2a},m}$ of diagonal unitaries on $m$ qubits satisfying
		\begin{equation}
			\label{eq:a2a-distance}
			\left\|M_r(\nu_{\mathrm{a2a},m})-P_{\mathrm Z}^{(r)}\right\|_{\mathrm{op}}
			\leq C\exp\!\left[-\frac{cm}{\log k}\right]
		\end{equation}
		simultaneously for every $1\leq r\leq2k$.
		Every sample has an ancilla-free all-to-all circuit of depth
		\begin{equation}
			\label{eq:a2a-diagonal-depth}
			\cO\!\left(\log m+\log k\right).
		\end{equation}
	\end{theorem}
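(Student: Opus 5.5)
The plan mirrors Theorem~\ref{thm:finite-dimensional-ensemble} and changes only the linear hash primitives and the circuit compilation. We partition the $m$ qubits into a constant number $r_0$ of registers $Z_1,\ldots,Z_{r_0}$, with $X=Z_1$ and $Y=Z_2$ of size $s=\Theta(m)$. We fix a binary linear code $C:\mathbb F_2^{u}\to\mathbb F_2^{N}$ with constant rate and constant relative distance $\rho>0$ whose encoder is an $\cO(\log u)$-depth, $\cO(u)$-size XOR circuit~\cite{Spielman1996LinearTimeCodes}. The five steps are the same as before. In Step~1 we set $[L_a(z)]_j=\epsilon^{(a)}_j[C(z)]_{I^{(a)}_j}$. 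In Step~2 we set $[M(y)]_j=\epsilon_j[C(y)]_{I_j}$ and add a uniform shift $\beta$. Steps~3--5 (quadratic mixing with $w=\lceil C_w\log k\rceil$, twirling, uncomputation) are unchanged. Lemma~\ref{lem:app-fd-distance-probability} holds verbatim, so Eq.~\eqref{eq:a2a-distance} again reduces to bounding $\Pr[S_j(c)=0\ \forall j]$ uniformly over $c\in\mathcal C_r$, $r\le 2k$.

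For the error analysis we reuse Lemmas~\ref{lem:app-fd-quadratic-parity}, \ref{lem:app-fd-shared-permutation-coverage} and the fourth-moment argument of Lemma~\ref{lem:app-fd-moment-estimate} without change. Only Lemma~\ref{lem:app-fd-dense-detecting-pairs} needs a replacement. There we must show that for each $T\in\mathcal F_4(\Omega)$, with probability $1-e^{-\Omega(s)}$, the restricted bits $x_p|_{W_T}$ give at least $s(s-1)/8$ detecting pairs. Exact uniformity of the $x_p|_{W_T}$ is no longer available, so we aim for approximate independence instead.

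\textbf{Step (a).} We show $y|_{W_T}$ is injective with high probability. Fix $0\ne v\in V_T$ with some $v_a\ne 0$, $a\ne2$. Then $C(v_a)$ has weight at least $\rho N$, so each output bit $\epsilon_j[C(v_a)]_{I_j}$ is $1$ with probability at least $\rho/2$, independently across $j$. Hence $y(v)=0$ with probability at most $(1-\rho/2)^s$. A union bound over the at most $7$ vectors $v$ and the $(4k+1)^4$ sets $T$ finishes this step.

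\textbf{Step (b).} Given injectivity, let $\eta_1,\ldots,\eta_d$ ($d\le3$) be the images of a basis of $V_T$. Every nonzero combination $\sum_i b_i\eta_i$ has a codeword of weight at least $\rho N$. For a random coordinate $I_p$, the vector $([C(\eta_i)]_{I_p})_i\in\mathbb F_2^d$ therefore has bias at most $1-\rho$ against every nonzero character. Multiplying by the random sign $\epsilon_p$ and adding $\beta_p$ makes each $x_p|_{W_T}$ an affine function whose linear part has distribution $\rho$-close to uniform in every Fourier coefficient, independently across $p$. We then redo Lemma~\ref{lem:app-fd-quadratic-parity} for this near-uniform distribution. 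Because $V_T$ has only constantly many points, it suffices that every affine function has probability at least some constant $\kappa(\rho)>0$, which holds since all $2^d$ linear parts get mass at least about $(\rho/2)^{3}$. This gives $\Pr[Q_T(p,q)=1]\ge c(\rho)>0$. McDiarmid then yields density at least $c(\rho)/2$, and Lemma~\ref{lem:app-fd-shared-permutation-coverage} holds with $1/16$ replaced by a constant after enlarging $C_w$. The bound Eq.~\eqref{eq:a2a-distance} follows exactly as in Eq.~\eqref{eq:app-fd-profile-failure-bound}.

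\textbf{Main obstacle: ancilla-free depth.} Step~1 must compute $C(Z_a)$, which is longer than $Z_a$, with no clean space. We treat the encoder as a linear straight-line program of $\cO(\log m)$ layers of XORs. Using Lemma~\ref{lem:catalytic-operations}, each $L_a$ and $M$ becomes a linear target-add assembled from XOR-layer target-adds. The remaining registers $Z_{a'}$, together with idle parts of the system, serve as dirty workspace, and $r_0$ is chosen large enough that a constant factor of $m$ dirty qubits is available. The risk is that the recursive conjugations $\Phi^{-1}\Psi\Phi\Psi^{-1}$ multiply depth with the number of encoder layers. To avoid this, we would compile the whole XOR network as one linear map, in the style of the catalytic lemma Lemma~\ref{lem:a2a-catalytic}: with all-to-all connectivity, an arbitrary dirty-ancilla XOR circuit of depth $D$ is implemented by doubling it (compute into dirty registers, add, uncompute, repeat) to cancel the dirty garbage. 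This gives depth $\cO(D)=\cO(\log m)$. Selecting coordinates $I_j$ and signs $\epsilon_j$ costs depth $\cO(1)$ via parallel CNOTs with bounded fan-out, using an $\cO(\log m)$-depth fan-out tree for repeated indices. Step~3 needs $w$ rounds of constant-depth Toffoli layers, since each permutation pair forms at most three matchings, giving depth $\cO(\log k)$. In total the depth is $\cO(\log m+\log k)$.
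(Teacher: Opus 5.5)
The gap is in Step~(b), the one genuinely new ingredient of your error analysis. Your construction, the reduction through Lemma~\ref{lem:app-fd-distance-probability}, the injectivity argument in Step~(a), the independence across $p$, the McDiarmid and permutation-coverage endgame, and the circuit plan all match the paper.

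In Step~(b) you claim that each of the $2^d$ linear parts of $x_p|_{W_T}$ has mass at least about $(\rho/2)^3$, so every affine function on $W_T$ has probability at least $\kappa(\rho)$. This does not follow from the code distance, and it is false in general. Up to the fixed translate from the bits $[v^{(i)}_1]_p$, the linear part equals $0$ with probability $1/2$. Otherwise it equals the column pattern $c_{I_p}:=([C(\eta_i)]_{I_p})_{i\le d}$. The distance only guarantees $\Pr_I[\langle b,c_I\rangle=1]\ge\rho$ for each nonzero $b$. That bounds Fourier coefficients, not point masses. For $d=2$, nothing prevents $C(\eta_1)$ and $C(\eta_2)$ from having disjoint supports; all three nonzero combinations still have weight at least $\rho N$. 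The pattern $(1,1)$ then never occurs, and one of the four linear parts has probability zero. More generally the support overlap can be an arbitrarily small fraction, so no uniform $\kappa(\rho)$ exists.

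The conclusion $\Pr[Q_T(p,q)=1]\ge c(\rho)$ is still true, but you must derive it only from what you have. That is: independent uniform constant terms (from $\beta$), and balance of every single increment, $\Pr[x_p(\xi)\neq x_p(\xi^{(0)})]\in[\rho/2,1-\rho/2]$ for $\xi\ne\xi^{(0)}$. The paper does this in Lemma~\ref{lem:app-a2a-balanced-quadratic-parity}, which replaces Lemma~\ref{lem:app-fd-quadratic-parity} with a case split.
\begin{itemize}
\item If $|T|$ is odd, the product of the two constant terms appears in the parity, giving probability at least $1/4$.
\item If $|T|$ is even and affinely independent, the coefficient of $f$'s constant term is $\bigoplus_{\xi\in T}\ell_g(\xi)$. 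This is a single increment of $g$ along a nonzero direction, hence balanced, and the uniform constant term then makes the parity uniform.
\item If $T$ is an affine plane, condition on $f$ flipping along one direction $u$ (probability $\Omega(\rho)$). This reduces the parity to a single balanced increment of $g$ along $v$ or $u+v$.
\end{itemize}
With this lemma in place of your full-support claim, the rest of your argument goes through.

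A smaller point on the circuit: in the dirty-ancilla trick, the second pass must be the XOR network with every gate controlled by the source register deleted (the map $\Psi$ in Lemma~\ref{lem:a2a-catalytic}). Literally repeating compute--copy--uncompute cancels $C(x)$ together with the garbage.
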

	
	Substituting this ensemble into Eq.~\eqref{eq:app-implemented-HD-factorization} and repeating the process sequentially suppresses the TPE error. Gluing these local designs together then yields a low-depth global design, as detailed in Sec.~\ref{app:a2a-repetition}.
	
	The section is organized as follows.
	Sec.~\ref{app:a2a-construction} defines the code-based diagonal ensemble, Sec.~\ref{app:a2a-distance} proves its moment estimate, Sec.~\ref{app:a2a-depth} gives its ancilla-free all-to-all implementation, and Sec.~\ref{app:a2a-repetition} performs repetition and gluing to obtain the global approximate design.

	\subsection{Code-based ensemble construction}
	\label{app:a2a-construction}
	
	We first employ a binary linear code from Ref.~\cite{Spielman1996LinearTimeCodes}, whose properties are summarized below.
	\begin{fact}[Good binary linear code,~{\cite{Spielman1996LinearTimeCodes}}]
		\label{fact:app-a2a-code}
		There exist two constants $\tau\geq 1$ and $0<\Delta\leq 1$, such that for every $u\geq 1$, there exists a binary linear code $C_u$ where
		\begin{equation}
			C_u:\mathbb{F}_2^{u}\longrightarrow\mathbb{F}_2^{N_u}
		\end{equation}
		for some $u\leq N_u\leq \tau u$ with code distance at least $\Delta N_u$.
		In other words,
		\begin{equation}
			\operatorname{wt}(C_u(z))\geq \Delta N_u
		\end{equation}
		for every nonzero $z\in\{0,1\}^{u}$.
		Moreover, the encoding algorithm $C_u$ has a linear-size, logarithmic-depth, bounded-fan-in, bounded-fan-out circuit with only classical XOR gates.
	\end{fact}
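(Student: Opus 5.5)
Since this is a cited fact from Ref.~\cite{Spielman1996LinearTimeCodes}, my plan is to reconstruct the recursive expander-based encoder of that reference rather than invent a new one. The reason is that a random linear code easily has constant rate and relative distance, but its encoder is a dense matrix. The extra requirements here are a linear-size, logarithmic-depth, bounded fan-in/fan-out XOR circuit.

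\emph{Definition of the code.} I fix a constant $u_0$ and, for $u\le u_0$, take any binary linear code with rate and relative distance bounded below by fixed constants; one exists by Gilbert--Varshamov and can be hardwired. For $u>u_0$ I define the code recursively:
\begin{equation}
	C_u(z)=\bigl(z,\;c_1,\;c_2\bigr),\qquad c_1:=C_{u/4}(A z),\qquad c_2:=B\,(z,c_1),
\end{equation}
where $A:\mathbb F_2^u\to\mathbb F_2^{u/4}$ and $B$ are parity-check maps of fixed bounded-degree bipartite unique-neighbor expanders. Concretely, I use explicit lossless or sufficiently good spectral expanders with left degree $D$ and right degree $O(1)$. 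The parameters are chosen so that the total length obeys
\begin{equation}
	N_u=u+N_{u/4}+O(u),
\end{equation}
which solves to $N_u\le\tau u$. The same recurrence shows the XOR circuit has linear size. Each level applies constant-degree parity maps, so every output bit is an XOR of $O(1)$ inputs and each input feeds $O(1)$ gates; this gives bounded fan-in and fan-out, with fan-out for copying handled by constant-depth duplication. A level therefore costs $O(1)$ depth. The recursion has $O(\log u)$ levels that run sequentially, giving logarithmic total depth.

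\emph{Distance.} I would argue by induction that $\operatorname{wt}(C_u(z))\ge\Delta N_u$ for every $z\ne0$, splitting into two cases.
\begin{enumerate}
	\item If $\operatorname{wt}(z)\ge\Delta' u$, the systematic part already gives enough weight.
	\item If $\operatorname{wt}(z)$ is small, unique-neighbor expansion of $A$ makes $Az$ nonzero. The induction hypothesis then gives $\operatorname{wt}(c_1)\ge\Delta N_{u/4}$. However, this may still be too small relative to $N_u$, which is what the map $B$ is for.
\end{enumerate}
For $B$, I use the property that, applied to a nonzero vector $(z,c_1)$ of weight at most a small constant fraction, a good expander produces check weight proportional to that weight. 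I expect the main obstacle here: the constants need weight amplification, not merely detection. Balancing $D$, the expansion ratio, and the rate so that the induction closes with a uniform $\Delta$ requires lossless expansion $(1-\varepsilon)D$ with small $\varepsilon$, together with the explicit existence of such graphs of every size (or with padding to the nearest available size). The first thing I would try is Spielman's original choice of parameters, with a random-graph existence argument for the expanders. This suffices because the Fact only asserts existence, and a hardwired nonuniform choice is allowed in the circuits that use it.
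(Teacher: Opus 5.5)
\paragraph{Comparison with the paper.} The paper gives no proof of this Fact. It cites Spielman's construction and adds only a remark: the natively supported input lengths, which double in Spielman's recursion, extend to every $u$ by hardwired zero-padding, at the cost of a constant factor in $\tau$. Rebuilding a Spielman-type recursion and proving the distance by direct induction is therefore a different, self-contained route.

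\paragraph{The distance induction closes.} The step you single out as the obstacle does go through. Fix $\beta<3$, so that the length recursion stays linear, and fix $\epsilon<1/2$. Let $B$ have $n_L=u+N_{u/4}$ left vertices, $\beta n_L$ right vertices, left degree $D_B$ and right degree $O(D_B/\beta)$. Require $B$ to be $(1-\epsilon)D_B$-lossless on sets of size below $\alpha n_L$. Random biregular graphs give this; a spectral gap alone certifies only about $D_B/2$ expansion, which does not guarantee unique neighbours.
\begin{itemize}
\item Length: $N_u=(1+\beta)(u+N_{u/4})\le 5(1+\beta)N_{u/4}$, using $N_{u/4}\ge u/4$.
\item Low weight: if $w:=\operatorname{wt}(z,c_1)<\alpha n_L$, unique neighbours give $\operatorname{wt}(c_2)\ge(1-2\epsilon)D_B w$. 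Hence $\operatorname{wt}(C_u(z))\ge\bigl(1+(1-2\epsilon)D_B\bigr)\Delta N_{u/4}\ge\Delta N_u$ whenever $1+(1-2\epsilon)D_B\ge 5(1+\beta)$.
\item Other cases: the weight is at least $\min\{\Delta',\alpha\}u$.
\end{itemize}
Increasing $D_B$ raises the amplification without changing the ratio $N_u/N_{u/4}$, so the condition does not involve $\Delta$. Fix $D_B$ and $\alpha$ first, then choose $\Delta\le\min\{\Delta'/\tau,\alpha/\tau,\Delta_{\mathrm{base}}\}$, and the induction closes.

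\paragraph{Gap: bounded fan-out.} Your bounded fan-out claim is wrong as justified. Since $c_2=B(z,c_1)$ reads the \emph{entire} inner codeword $c_1=C_{u/4}(Az)$, a bit created at recursion depth $j$ is an input to every enclosing map $B_{j-1},\dots,B_1$. Its fan-out is $\Theta(jD_B)$, reaching $\Theta(\log u)$; ``each input feeds $O(1)$ gates'' holds only level by level. Spielman's encoder shares this feature, since its outer error-reduction code is applied to the whole inner codeword.

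Constant-depth duplication does not repair this. Balanced copy trees insert $\Theta(\log j)$ extra depth between consecutive stages of the upward pass $B_L\to\cdots\to B_1$, for total depth $\Theta(\log u\log\log u)$.

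\paragraph{Fix.} Stagger the copies. Because $B_{j-1},B_{j-2},\dots,B_1$ are evaluated in consecutive stages, route each depth-$j$ bit along a chain of copy gates that advances one stage at a time. Each copy feeds the $O(D_B)$ gates of the current $B_i$ and the next copy. This keeps fan-out $O(D_B)$ and depth $O(\log u)$, and adds only $\sum_j O(j\,u/4^j)=O(u)$ gates. This matters: Lemma~\ref{lem:a2a-catalytic} relies on constant fan-out to conclude that fan-out reduction changes depth by only a constant factor.
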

	
	\begin{remark}
		The construction initially supports a sequence of input lengths $u$ that increase by a factor of two.
		For an intermediate input length, one may use the next supported length and pad the message by hardwired zeros.
		This gives the formulation in Fact~\ref{fact:app-a2a-code} for every $u$.
	\end{remark}
	
	We construct a random hash function $H: \mathbb{F}_2^u \to \mathbb{F}_2^v$ as follows. For each coordinate $j \in [v]$, independently and uniformly sample $I_j \in [N_u]$ and $\epsilon_j \in \mathbb{F}_2$, and set
	\begin{equation}
		\label{eq:a2a-hash}
		[H(z)]_j:=\epsilon_{j}[C_{u}(z)]_{I_{j}}.
	\end{equation}
	Set $\rho:=\Delta/2\leq \frac{1}{2}$, then for every $z\neq 0$,
	\begin{equation}
		\rho \leq \Pr_{I_j,\epsilon_j}\bigl([H(z)]_j=1\bigr)
		=\frac{\operatorname{wt}(C_u(z))}{2N_u}
		\leq \frac{1}{2}.
		\label{eq:app-a2a-one-coordinate-balance}
	\end{equation}
	Thus, each output coordinate takes either bit value with probability at least $\rho$.
	Furthermore, the output coordinates of $H(z)$ are mutually independent.

	As in Appendix~\ref{app:finite-dimensional}, we partition the $m$ qubits into a sufficiently large constant number $r_0$ of nonempty
	registers $Z_1,\ldots,Z_{r_0}$ such that $X:=Z_1$, $Y:=Z_2$, $|X|=|Y|=s=\Theta(m)$ and $|Z_a|\leq s$ for every $a\in[r_0]$.
	A sample from $\nu_{\mathrm{a2a},m}$ is obtained from the following five steps.
	\begin{enumerate}
		\item \emph{Concentration.}
		For each $a\neq2$ and each $j\in[s]$, independently and uniformly sample
		\begin{equation}
			I_{a,j}\in [N_{|Z_a|}],
			\qquad
			\epsilon_{a,j}\in\mathbb{F}_2,
		\end{equation}
		and define the hash functions
		\begin{equation}
			[L_a(z)]_j:=\epsilon_{a,j}[C_{|Z_a|}(z)]_{I_{a,j}}.
			\label{eq:app-a2a-concentration-map}
		\end{equation}
		Apply
		\begin{equation}
			Y\gets Y\oplus\bigoplus_{a\neq2}L_a(Z_a).
		\end{equation}
		
		\item \emph{Hashing.} Independently and uniformly sample $I'_j\in[N_s]$ and
		$\epsilon'_j\in \mathbb F_2$ for $j\in[s]$, and set
		\begin{equation}
			[M(z)]_j:=\epsilon'_j[C_s(z)]_{I'_j}.
			\label{eq:app-a2a-hashing-map}
		\end{equation}
		Sample a uniform shift $\beta\in\mathbb F_2^s$.
		Apply
		\begin{equation}
			X\gets X\oplus M(Y)\oplus\beta.
		\end{equation}
		\item Sample the same randomness $\Pi,\Gamma$ as in Step~3 of Sec.~\ref{app:finite-dimensional-construction}, with $w=\lceil C_w\log k\rceil$, and apply
		\begin{equation}
			Y_j\gets Y_j\oplus
			\bigoplus_{l=1}^w\gamma_{j,l}X_{\pi_l(j)}X_{\sigma_l(j)}.
			\label{eq:app-a2a-quadratic-shear}
		\end{equation}
		\item Apply independent uniform one-qubit phases to the qubits of $Y$.
		\item Reverse the first three steps.
	\end{enumerate}
	
	\subsection{Approximation error analysis}
	\label{app:a2a-distance}
	This subsection proves the distance claimed in Eq.~\eqref{eq:a2a-distance}. 
	
	Use the same notation defined in Sec.~\ref{app:finite-dimensional-distance} to represent the intermediate values, the randomness, and the signed multiplicity profile.
	The definitions of $y$ and $x$ are modified as
	\begin{equation}\begin{aligned}
			\label{eq:a2a-hashxy}
			y(\xi)&:=\xi_2\oplus\bigoplus_{a\neq 2}L_a(\xi_a),\\
			x(\xi)&:=\xi_1\oplus M(y(\xi))\oplus \beta
	\end{aligned}\end{equation}
	according to the new code-based implementation in Sec.~\ref{app:a2a-construction}.
	The action of the ensemble $\nu_{\mathrm{a2a},m}$ is again
	\begin{equation}
		\ket{\xi} \longmapsto \exp\!\left(
		\mathrm i\sum_{j=1}^{s}\theta_jh_j(\xi)
		\right)\ket{\xi}
	\end{equation}
	with
	\begin{equation}
		h_j(\xi):=y_j(\xi)\oplus\bigoplus_{l=1}^{w}\gamma_{j,l}x_{\pi_l(j)}(\xi)x_{\sigma_l(j)}(\xi).
	\end{equation}
	
	The randomness for the first two steps is now 
	\begin{equation}
		\mathsf{A}:=(\{I_{a,j},\epsilon_{a,j}\}_{a\neq 2,j\in [s]},\{(I'_j,\epsilon'_j)_{j\in[s]}\},\beta).
	\end{equation}
	
	The distance reduction in Lemma~\ref{lem:app-fd-distance-probability} uses only the diagonal action and the independence of the phase angles, but not the concrete function $h$, and therefore applies without change.
	This gives
	\begin{equation}
		\left\|M_r(\nu_{\mathrm{a2a},m})-P_{\mathrm Z}^{(r)}\right\|_{\mathrm{op}}
		=\max_{c\in\mathcal C_r}
		\Pr_{\mathsf A,\Pi,\Gamma}
		\left[S_j(c)=0\ \text{for every }j\in[s]\right].
		\label{eq:app-a2a-distance-as-probability}
	\end{equation}
	
	It remains to establish the four-label randomness property used by the rest
	of the argument in Appendix~\ref{app:finite-dimensional}.
	
	We first give the balanced behavior of $x$ on a difference space spanned by four labels, which replaces Lemma~\ref{lem:app-fd-affine-restriction}.
	\begin{lemma}[Balanced affine hashing on four-label spans]
		\label{lem:app-a2a-affine-restriction}
		Let $T\subseteq\{0,1\}^m$ be a non-empty set of at most four labels.
		Fix $\xi^{(0)}\in T$, and define the difference space
		\begin{equation}
			V:=\operatorname{span}_{\mathbb F_2}\{\xi-\xi^{(0)}:\xi\in T\},\quad
			W=\xi^{(0)}+V.
		\end{equation}
		With probability at least $1-7\cdot e^{-\rho s}$ over randomness of the maps $\{L_a\}_{a\neq2}$, the restricted map $y|_W$ is injective on the affine subspace $W$ for the function $y$ defined in Eq.~\eqref{eq:a2a-hashxy}.
		
		Conditional on this event, the functions $x_1|_W,\ldots,x_s|_W$ are independent random affine Boolean functions.
		The base values $x_1(\xi^{(0)}),\ldots,x_s(\xi^{(0)})$ are independent uniform bits conditional on their linear parts.
		For every $j\in [s]$ and $\xi\in W$ with $\xi\neq\xi^{(0)}$,
		\begin{equation}
			\rho\leq \Pr_{M,\beta}[x_j(\xi^{(0)})\neq x_j(\xi)]\leq 1-\rho.
		\end{equation}
	\end{lemma}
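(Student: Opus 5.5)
The plan follows the proof of Lemma~\ref{lem:app-fd-affine-restriction}, with the Moore-matrix surjectivity argument replaced by the coordinatewise independence and balance of the code-based hash in Eq.~\eqref{eq:app-a2a-one-coordinate-balance}.

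\emph{Injectivity of $y$.} The map $y$ in Eq.~\eqref{eq:a2a-hashxy} is $\mathbb F_2$-linear, so $y|_W$ is injective if and only if $y(v)\neq0$ for every nonzero $v\in V$. Since $\dim V\le 3$, there are at most $7$ such $v$. Fix one. If $v_a=0$ for all $a\neq2$, then $y(v)=v_2\neq0$ deterministically. Otherwise pick some $a\neq2$ with $v_a\neq0$. The coordinates $[L_a(v_a)]_j=\epsilon_{a,j}[C(v_a)]_{I_{a,j}}$ are independent across $j$, each equal to $1$ with probability at least $\rho$. They are also independent of the other maps $L_{a'}$, so conditioning on everything except $L_a$ leaves each coordinate of $y(v)$ nonzero with probability at least $\rho$. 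Indeed, each coordinate of $L_a(v_a)$ equals $\epsilon_{a,j}b_j$, where $b_j$ has $\Pr[b_j=1]\ge\Delta$. This makes it uniform with probability at least $\Delta$, so the coordinate hits any fixed target value with probability at most $1-\Delta/2=1-\rho$. Hence $\Pr[y(v)=0]\le(1-\rho)^s\le e^{-\rho s}$, and a union bound over the $7$ vectors gives the first claim.

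\emph{Structure of $x$ on $W$.} Condition on injectivity. Each coordinate $x_j(\xi)=\xi_{1,j}\oplus\epsilon'_j[C_s(y(\xi))]_{I'_j}\oplus\beta_j$ is affine in $\xi$ on $W$, because $C_s$ and $y$ are linear. The coordinates depend on disjoint randomness $(I'_j,\epsilon'_j,\beta_j)$, so they are mutually independent. The linear part of $x_j$ is determined by $(I'_j,\epsilon'_j)$, and the base value $x_j(\xi^{(0)})$ has the additive shift $\beta_j$, which is uniform and independent of $(I'_j,\epsilon'_j)$. So the base values are independent uniform bits conditional on the linear parts.

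\emph{Balance of the increments.} For $\xi\neq\xi^{(0)}$ in $W$, we have
\[
x_j(\xi)\oplus x_j(\xi^{(0)})=v_{1,j}\oplus\epsilon'_j[C_s(y(v))]_{I'_j},\qquad v=\xi-\xi^{(0)},
\]
and $y(v)\neq0$ by injectivity. By Eq.~\eqref{eq:app-a2a-one-coordinate-balance}, the second term equals $1$ with probability in $[\rho,1/2]$. XOR-ing with the fixed bit $v_{1,j}$ keeps the probability in $[\rho,1-\rho]$.

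The step I expect to need the most care is the first one. There the target $v_2$ is arbitrary, not necessarily zero. The fix is the random sign $\epsilon_{a,j}$, which makes each coordinate at least $\rho$-balanced regardless of the target value, as used above.
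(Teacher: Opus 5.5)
Your proof is correct and follows the paper's argument step for step. For injectivity, both check that $y(v)\neq 0$ for the at most seven nonzero $v\in V$, getting the $(1-\rho)^s$ bound after conditioning on all $L_b$ with $b\neq a$, and both use the increment formula $v_{1,j}\oplus\epsilon'_j[C_s(y(v))]_{I'_j}$ together with Eq.~\eqref{eq:app-a2a-one-coordinate-balance}; your explicit handling of the arbitrary target $v_2\oplus\bigoplus_{b\neq a,2}L_b(v_b)$ simply unpacks the two-sided bound in Eq.~\eqref{eq:app-a2a-one-coordinate-balance}, which the paper invokes without comment.
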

	\begin{proof}
		The restriction $y|_W$ is injective if and only if $y(v)\neq 0$ for every nonzero $v\in V$.
		For fixed $v\neq 0$, if $v_a=0$ for every $a\neq 2$, then $y(v)=v_2\neq 0$.
		Otherwise, choose $a\neq2$ with $v_a\neq0$ and condition on all maps $L_b$ with $b\neq a$.
		By Eq.~\eqref{eq:app-a2a-one-coordinate-balance} and the independence,
		\begin{equation}
			\Pr_{L_a}[y(v)=0]\leq (1-\rho)^s\leq e^{-\rho s}.
		\end{equation}
		A union bound over at most $7$ nonzero labels in $V$ proves the injectivity claim.
		
		Condition on a choice of the maps $L_a$ for which $y|_W$ is injective.
		Since $y$ and $M$ are linear, $x|_W$ is affine.
		Moreover, the triples $(I'_j,\epsilon'_j,\beta_j)$ are independent across $j$, so the coordinate functions $x_1|_W,\ldots,x_s|_W$ are independent.
		Conditional on $(I'_j,\epsilon'_j)$, the independent uniform bit $\beta_j$ makes $x_j(\xi^{(0)})$ uniform.
		
		Consider the value difference between $\xi^{(0)}$ and $\xi=\xi^{(0)}+v$,
		\begin{equation}
			[x(\xi)\oplus x(\xi^{(0)})]_j=[v_1]_j\oplus \epsilon'_j[C_s(y(v))]_{I_j'}.
		\end{equation}
		By the injectivity of $y$, $y(v)\neq 0$ for every nonzero $v\in V$.
		Thus the second term has probability between $\rho$ and $\frac{1}{2}$ to be $1$ by Eq.~\eqref{eq:app-a2a-one-coordinate-balance}.
		After the first term $[v_1]_j$ is fixed, the probability of a nonzero difference lies between $\rho$ and $1-\rho$.
	\end{proof}
	
	The proof of Lemma~\ref{lem:app-fd-quadratic-parity} only used uniformity to obtain a constant lower bound.  
	The balance above gives the corresponding bound directly, as the following lemma shows.
	\begin{lemma}[Quadratic parity of balanced affine functions]
		\label{lem:app-a2a-balanced-quadratic-parity}
		Let $W$ be an affine space over $\mathbb F_2$, fix a base point $z^{(0)}\in W$,
		and let $f,g:W\to\mathbb F_2$ be independent random affine functions.
		Suppose that, conditional on their linear parts, the constant terms of $f$ and $g$ are independent uniform bits and that, for every nonzero $v$ with $z=z^{(0)}+v\in W$,
		\begin{equation}
			\label{eq:app-a2a-balanced-affine-assumption}
			\rho\leq\Pr_f[f(z^{(0)})\neq f(z)]\leq 1-\rho,\quad \rho\leq \Pr_g[g(z^{(0)})\neq g(z)]\leq1-\rho
		\end{equation}
		for some $0<\rho\leq \frac{1}{2}$.
		Then for every nonempty set $T\subseteq W$ of at most four labels,
		\begin{equation}
			\Pr_{f,g}\!\left[\bigoplus_{\xi\in T}f(\xi)g(\xi)=1\right]
			\geq \rho^2.
			\label{eq:app-a2a-balanced-quadratic-parity}
		\end{equation}
	\end{lemma}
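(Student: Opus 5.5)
The plan is to condition on the linear parts of $f$ and $g$ and exploit two sources of randomness. The first is the uniform constant terms, which are independent of the linear parts by hypothesis. The second is the balance assumption~\eqref{eq:app-a2a-balanced-affine-assumption}. Because $f$ is affine, $f(z^{(0)}+v)\oplus f(z^{(0)})=\ell_f(v)$, where $\ell_f$ is the linear part of $f$. So the assumption says exactly that
\begin{equation*}
\rho\leq\Pr[\ell_f(v)=1]\leq 1-\rho \qquad\text{for every nonzero } v\in V,
\end{equation*}
and likewise for $\ell_g$. I write $f(\xi)=a\oplus\ell_f(\xi-z^{(0)})$ and $g(\xi)=b\oplus\ell_g(\xi-z^{(0)})$, with $a,b$ uniform bits independent of $(\ell_f,\ell_g)$. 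Expanding over $T$ gives
\begin{equation*}
\bigoplus_{\xi\in T}f(\xi)g(\xi)=|T|\,ab\oplus a\,\alpha\oplus b\,\beta\oplus\gamma,
\end{equation*}
where $\alpha=\bigoplus_{\xi\in T}\ell_g(\xi-z^{(0)})$, $\beta=\bigoplus_{\xi\in T}\ell_f(\xi-z^{(0)})$, and $\gamma$ depends only on the linear parts. I then split on the parity of $|T|$.

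\emph{Odd $|T|$ (sizes $1$ and $3$).} The expression equals $(a\oplus\beta)(b\oplus\alpha)\oplus\alpha\beta\oplus\gamma$. For every fixed choice of linear parts, this takes the value $1$ with probability $1/4$ or $3/4$ over uniform $(a,b)$. Hence the probability is at least $1/4\geq\rho^2$.

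\emph{Even $|T|$ (sizes $2$ and $4$).} The $ab$ term vanishes. Because $|T|$ is even, the base point cancels in $\beta$: we get $\beta=\ell_f(u)$ with $u:=\sum_{\xi\in T}\xi\in V$, and similarly $\alpha=\ell_g(u)$. If $u\neq0$, then $\beta=1$ with probability at least $\rho$, and on that event the uniform bit $b$ makes the parity equal to $1$ with probability exactly $1/2$. This gives at least $\rho/2\geq\rho^2$, using $\rho\leq 1/2$. This case covers $|T|=2$, since there $u$ is the difference of two distinct points and is nonzero. It also covers every four-point set whose points sum to a nonzero vector.

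\emph{The affine plane.} The remaining case, which I expect to be the main obstacle, is $|T|=4$ with $u=0$. Then $T=\{p,p+v_1,p+v_2,p+v_1+v_2\}$ is an affine plane. As in Lemma~\ref{lem:app-fd-quadratic-parity}, the parity reduces to
\begin{equation*}
a_1b_2\oplus a_2b_1, \qquad a_i=\ell_f(v_i),\quad b_i=\ell_g(v_i).
\end{equation*}
Here the joint law of $(a_1,a_2)$ is not uniform, so the $3/8$ count of Lemma~\ref{lem:app-fd-quadratic-parity} is unavailable. Instead I fix $(a_1,a_2)\neq(0,0)$. Then $a_1b_2\oplus a_2b_1=\ell_g(a_2v_1+a_1v_2)$, with $a_2v_1+a_1v_2\neq0$, so by independence of $g$ from $f$ this equals $1$ with probability at least $\rho$. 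Moreover $\Pr[(a_1,a_2)\neq0]\geq\Pr[\ell_f(v_1)=1]\geq\rho$. Multiplying gives $\rho^2$, which is precisely the claimed bound~\eqref{eq:app-a2a-balanced-quadratic-parity}. Combining the cases completes the proof.
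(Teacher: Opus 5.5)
Your proof is correct and follows essentially the same route as the paper: the same expansion into constant-term and linear-part contributions, the odd-$|T|$ case via the uniform constant bits (probability $1/4$ or $3/4$), the even case with nonzero point-sum (equivalently, affinely independent $T$) via balance plus one uniform constant bit giving $\rho/2\geq\rho^2$, and the affine-plane case by conditioning on the linear part of $f$ being nonzero on the plane and then using the balance of $g$ along a nonzero direction. Your formulation of the plane case as $\ell_g(a_2v_1+a_1v_2)$ on the event $(a_1,a_2)\neq(0,0)$ is only a reparametrization of the paper's version, which writes the parity as the sum of $g$ at two distinct points on the event $\ell_f(u)=1$.
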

	\begin{proof}
		Write
		\begin{equation}
			P_T(f,g):=\bigoplus_{\xi\in T}f(\xi)g(\xi),\quad f(\xi)=l_f(\xi)+\alpha,\quad g(\xi)=l_g(\xi)+\beta,
		\end{equation}
		where $l_f$ and $l_g$ are both linear functions.
		Then
		\begin{equation}
			P_T(f,g)=\left(\bigoplus_{\xi\in T}l_f(\xi)l_g(\xi)\right)
			\oplus \left(\beta\bigoplus_{\xi\in T}l_f(\xi)\right)
			\oplus \left(\alpha\bigoplus_{\xi\in T}l_g(\xi)\right)
			\oplus (\alpha\beta\cdot [|T|\bmod 2]).
		\end{equation}
		Denote $A=\bigoplus_{\xi\in T}l_f(\xi)$, $B=\bigoplus_{\xi\in T}l_g(\xi)$ and $C=\bigoplus_{\xi\in T}l_f(\xi)l_g(\xi)$, then
		\begin{equation}
			\label{eq:a2a-quad-expansion}
			P_T(f,g)=C\oplus \beta A\oplus \alpha B \oplus \alpha\beta\cdot[|T|\bmod 2].
		\end{equation}
		
		For the case where $|T|$ is odd, the result $P_T(f,g)$ has the last term $\alpha\beta$.
		Write
		\begin{equation}
			P_T(f,g)=C\oplus (A\oplus \alpha)(B\oplus \beta)\oplus AB.
		\end{equation}
		Fix the linear parts $l_f,l_g$, so that $A,B,C$ are all determined.
		As $\alpha$ and $\beta$ are independent and uniformly random, 
		\begin{equation}
			\Pr_{\alpha,\beta}[(A\oplus \alpha)(B\oplus \beta)=1]=\frac{1}{4}.
		\end{equation}
		Therefore, $P_T(f,g)$ has probability at least $\frac{1}{4}\geq \rho^2$ to be $1$.
		
		It remains to consider sets $T$ with even size, for which the term $\alpha\beta$ in Eq.~\eqref{eq:a2a-quad-expansion} vanishes.
		If $T$ is affinely independent, fix a label $\xi^{(0)}\in T$.
		Then $B$ can be expressed as
		\begin{equation}\begin{aligned}
				B=\bigoplus_{\xi\in T} l_g(\xi)
				&=\bigoplus_{\xi\in T,\xi\neq \xi^{(0)}}(l_g(\xi)-l_g(\xi^{(0)}))\\
				&=\bigoplus_{\xi\in T}(g(\xi)-g(\xi^{(0)}))\\
				&=g\left(\bigoplus_{\xi\in T,\xi\neq \xi^{(0)}}\xi\right)-g(\xi^{(0)}),
		\end{aligned}\end{equation}
		as $|T|$ is even.
		Since $T$ is affinely independent, $\bigoplus_{\xi\in T,\xi\neq \xi^{(0)}}\xi\neq \xi^{(0)}$.
		By Eq.~\eqref{eq:app-a2a-balanced-affine-assumption}, $\Pr_{l_g}[B=1]\geq \rho$.
		Condition on this event, for every fixed $\beta$ and $l_f$, the term $\alpha B$ is uniformly distributed over $\mathbb F_2$ as $\alpha$ varies.
		Therefore, $\Pr_{f,g}[P_T(f,g)=1]\geq \rho\cdot \frac{1}{2}\geq \rho^2$ in this case.
		
		Finally, consider the case where $T$ is affinely dependent.
		In this case, $|T|=4$ and the four labels form an affine plane.
		Suppose $T=\{\xi^{(0)},\xi^{(0)}+u,\xi^{(0)}+v,\xi^{(0)}+u+v\}$.
		By Eq.~\eqref{eq:app-a2a-balanced-affine-assumption} and the affine property of $f$,  the inequalities $f(\xi^{(0)})\neq f(\xi^{(0)}+u)$ and $f(\xi^{(0)}+v)\neq f(\xi^{(0)}+v+u)$ hold with probability at least $\rho$ over the randomness of $f$.
		Under this event,
		\begin{equation}\begin{aligned}
				\label{eq:a2a-quad-affine-dependent}
				P_T(f,g)&=\bigoplus_{\xi\in T}f(\xi)g(\xi)\\
				&=g\left(\xi^{(0)}+f(\xi^{(0)}+u)u\right)+g\left(\xi^{(0)}+v+f(\xi^{(0)}+v+u)u\right).
		\end{aligned}\end{equation}
		The two evaluation points are always distinct as their difference is either $v$ or $u+v$.
		Another Eq.~\eqref{eq:app-a2a-balanced-affine-assumption} on $g$ together with the affine property shows that Eq.~\eqref{eq:a2a-quad-affine-dependent} has probability at least $\rho$ to be $1$, regardless of the values of $f(\xi^{(0)}+u)$ and $f(\xi^{(0)}+v+u)$.
		Therefore, $\Pr_{f,g}[P_T(f,g)=1]\geq \rho^2$ in this case.
	\end{proof}
	
	Fix $1\leq r\leq2k$ and a signed multiplicity profile $c\in\mathcal C_r$ (Eq.~\eqref{eq:app-fd-profile-class}). Let $\Omega:=\operatorname{supp}(c)$, $R:=|\Omega|\leq2r\leq4k$, and 
	\begin{equation}
		\mathcal F_4(\Omega)
		:=\{\varnothing\neq S\subseteq\Omega:|S|\leq4\}.
	\end{equation}
	Using the newly constructed functions $x_p$, we define the quadratic parity
	\begin{equation}
		Q_T(p,q):=
		\bigoplus_{\xi\in T}x_p(\xi)x_q(\xi)
		\in\mathbb F_2
	\end{equation}
	and the detecting pairs
	\begin{equation}
		\mathcal D_T
		:=
		\left\{
		(p,q)\in[s]^2:p\neq q,\ Q_T(p,q)=1
		\right\}
	\end{equation}
	analogously to Eqs.~\eqref{eq:app-fd-four-subsets}, \eqref{eq:app-fd-quadratic-parity}, and \eqref{eq:app-fd-detecting-pair-set}. 
	The following lemma provides the code-based counterpart to Lemma~\ref{lem:app-fd-dense-detecting-pairs}.

	\begin{lemma}[Dense detecting pairs]
		\label{lem:app-a2a-dense-detecting-pairs}
		There are constants $A_0,c_0,C_0,\zeta>0$ such that, if $s\geq A_0\log k$, then with probability at least $1-C_0e^{-c_0s}$ over
		$\mathsf A$,
		\begin{equation}
			|\mathcal D_T|\geq\zeta s(s-1)
			\label{eq:app-a2a-detecting-pair-density}
		\end{equation}
		simultaneously for every $T\in\mathcal F_4(\Omega)$.
	\end{lemma}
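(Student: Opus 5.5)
The proof will follow Lemma~\ref{lem:app-fd-dense-detecting-pairs} step by step. Lemma~\ref{lem:app-a2a-affine-restriction} and Lemma~\ref{lem:app-a2a-balanced-quadratic-parity} take the place of the uniform-hashing and uniform-parity inputs used there.

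\emph{Injectivity for all small subsets.} For each $T\in\mathcal F_4(\Omega)$, fix $\xi^{(0)}\in T$ and form $V_T$ and $W_T$ as before. By Lemma~\ref{lem:app-a2a-affine-restriction}, $y|_{W_T}$ fails to be injective with probability at most $7e^{-\rho s}$ over $\{L_a\}$. We have
\begin{equation*}
|\mathcal F_4(\Omega)|\le(4k+1)^4 .
\end{equation*}
A union bound therefore shows that injectivity holds simultaneously for every such $T$, except with probability at most $7(4k+1)^4e^{-\rho s}$.

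\emph{Constant detection probability per pair.} Condition on a realization of the maps $L_a$ in this good event. For fixed $T$, Lemma~\ref{lem:app-a2a-affine-restriction} makes $x_1|_{W_T},\dots,x_s|_{W_T}$ independent random affine Boolean functions. Their constant terms are uniform given the linear parts, and every nonzero increment flips with probability in $[\rho,1-\rho]$. These are exactly the hypotheses of Lemma~\ref{lem:app-a2a-balanced-quadratic-parity} with base point $\xi^{(0)}$. Hence, for every $p\neq q$, independence of $x_p$ and $x_q$ gives
\begin{equation*}
\Pr_{M,\beta}[Q_T(p,q)=1]\ge\rho^2 .
\end{equation*}
Consequently the normalized density
\begin{equation*}
Z_T=\frac{1}{s(s-1)}\sum_{p\neq q}\mathbbm 1[Q_T(p,q)=1]
\end{equation*}
satisfies $\mathbb E Z_T\ge\rho^2$.

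\emph{Concentration and union bound.} $Z_T$ is a function of the $s$ independent triples $(I'_j,\epsilon'_j,\beta_j)$. Changing one triple changes only $x_j$, which alters at most $2(s-1)$ ordered pairs, so $Z_T$ moves by at most $2/s$. McDiarmid's inequality then gives
\begin{equation*}
\Pr[Z_T<\rho^2/2]\le\exp(-\rho^4 s/8).
\end{equation*}
Set $\zeta:=\rho^2/2$. A union bound over $T\in\mathcal F_4(\Omega)$ gives total failure probability at most
\begin{equation*}
(4k+1)^4\bigl(7e^{-\rho s}+e^{-\rho^4 s/8}\bigr).
\end{equation*}
If $s\ge A_0\log k$ with $A_0$ large, the factor $(4k+1)^4$ is absorbed, and the bound becomes $C_0e^{-c_0s}$ for suitable constants.

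\emph{Main obstacle.} The delicate point is verifying that the conditional structure required by Lemma~\ref{lem:app-a2a-balanced-quadratic-parity} holds jointly for the pair $(x_p,x_q)$. The triples $(I'_p,\epsilon'_p,\beta_p)$ and $(I'_q,\epsilon'_q,\beta_q)$ are independent, so $x_p$ and $x_q$ are independent once we condition on $\{L_a\}$. Lemma~\ref{lem:app-a2a-affine-restriction} supplies the remaining properties: the base bit is uniform given the linear part, and the increments are balanced. With this checked, the bounded-difference step is routine. It uses independence across coordinates $j$, which holds because each $j$ has private randomness.
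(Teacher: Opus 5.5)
Your proposal is correct and matches the paper's proof essentially line for line. Both use a union bound giving $7(4k+1)^4e^{-\rho s}$ for simultaneous injectivity, the per-pair bound $\rho^2$ from Lemmas~\ref{lem:app-a2a-affine-restriction} and~\ref{lem:app-a2a-balanced-quadratic-parity}, McDiarmid with bounded differences $2/s$ yielding $\exp(-\rho^4 s/8)$, and $\zeta=\rho^2/2$; your explicit check that changing one triple $(I'_j,\epsilon'_j,\beta_j)$ alters only $x_j$, hence at most $2(s-1)$ ordered pairs, spells out a step the paper states tersely.
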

	\begin{proof}
		By Eq.~\eqref{eq:app-fd-number-four-subsets}, there are at most
		$(4k+1)^4$ relevant sets $T$.  For each such $T$, fix
		$\xi_T^{(0)}\in T$ and write
		\begin{equation}
			V_T:=\operatorname{span}_{\mathbb F_2}
			\{\xi-\xi_T^{(0)}:\xi\in T\},
			\qquad W_T:=\xi_T^{(0)}+V_T.
		\end{equation}
		Lemma~\ref{lem:app-a2a-affine-restriction}
		and a union bound show that $y$ is injective on every associated affine span $W_T$,
		except with probability at most
		\begin{equation}
			7(4k+1)^4e^{-\rho s}.
			\label{eq:app-a2a-simultaneous-injectivity-failure}
		\end{equation}
		Condition on simultaneous injectivity.  For every fixed $T$ and
		$p\neq q$, Lemmas~\ref{lem:app-a2a-affine-restriction}
		and~\ref{lem:app-a2a-balanced-quadratic-parity} imply
		\begin{equation}
			\Pr[Q_T(p,q)=1]\geq \rho^2.
			\label{eq:app-a2a-one-pair-detection}
		\end{equation}
		Define
		\begin{equation}
			Z_T:=\frac{1}{s(s-1)}
			\sum_{p\neq q}\mathbbm1[Q_T(p,q)=1].
		\end{equation}
		Then $\mathbb E[Z_T]\geq \rho^2$.  
		Replacing one of the independent affine coordinate functions $x_p|_{W_T}$ changes $Z_T$ by at most $2/s$.
		McDiarmid's inequality therefore gives
		\begin{equation}
			\Pr[Z_T<\rho^2/2]\leq
			\exp(-\rho^4s/8).
			\label{eq:app-a2a-pair-density-concentration}
		\end{equation}
		A second union bound over $T\in\mathcal F_4(\Omega)$ together with Eq.~\eqref{eq:app-a2a-simultaneous-injectivity-failure} gives total failure probability
		\begin{equation}
			7(4k+1)^4 e^{-\rho s}
			+(4k+1)^4e^{-\rho^4 s/8},
			\label{eq:app-a2a-affine-density-failure}
		\end{equation}
		which proves the lemma with $\zeta=\rho^2/2$,
		for sufficiently large $A_0$.
	\end{proof}
	
	The remainder of the argument is the same as in
	Sec.~\ref{app:finite-dimensional-distance}.  
	Indeed, the proof of Lemma~\ref{lem:app-fd-shared-permutation-coverage} uses the value $1/8$ in Eq.~\eqref{eq:app-fd-detecting-pair-density} only
	as a positive lower bound on the density of $\mathcal D_T$.  
	Replacing it by $\zeta$ shows that, for sufficiently large $C_w$, with probability at least
	\begin{equation}
		1-C_0e^{-c_0s}-e^{-c_1s/w}
		\label{eq:app-a2a-structural-event}
	\end{equation}
	over $(\mathsf A,\Pi)$, there is a set $G_c\subseteq[s]$ of size at least $s/2$ such that every $j\in G_c$ and every $T\in\mathcal F_4(\Omega)$, at least one $l\in[w]$ satisfies
	\begin{equation}
		Q_T(\pi_l(j),\sigma_l(j))=1.
		\label{eq:app-a2a-simultaneous-permutation-detection}
	\end{equation}
	
	On this event, the character factorization from
	Eq.~\eqref{eq:app-fd-character-factorization} has a zero factor for every nonempty $T\subseteq\Omega$ with $|T|\leq4$. 
	Hence the character cancellation in Eq.~\eqref{eq:app-fd-character-cancellation}, and therefore the second- and fourth-moment identities in Eq.~\eqref{eq:app-fd-second-fourth-moments}, hold without change. 
	The Cauchy--Schwarz estimate in Eq.~\eqref{eq:app-fd-single-target-detection} and the independence of the private vectors $\gamma_j$ now give, uniformly in $c\in\mathcal C_r$,
	\begin{equation}
		\Pr_{\mathsf A,\Pi,\Gamma}
		\left[S_j(c)=0\ \text{for every }j\in[s]\right]
		\leq C_0e^{-c_0s}+e^{-c_1s/w}+(2/3)^{s/2}.
		\label{eq:app-a2a-profile-failure-bound}
	\end{equation}
	Finally, plug in $s=\Theta(m)$ and $w=\Theta(\log k)$. 
	Substituting Eq.~\eqref{eq:app-a2a-profile-failure-bound} into Eq.~\eqref{eq:app-a2a-distance-as-probability} and adjusting constants completes the proof of Eq.~\eqref{eq:a2a-distance} for every $1\leq r\leq2k$.
	
	\subsection{Circuit implementation}
	\label{app:a2a-depth}
	This subsection implements the constructed ensemble with the resources claimed in Theorem~\ref{thm:a2a-ensemble}.
	
	We begin by implementing the code introduced in Fact~\ref{fact:app-a2a-code} catalytically.
	The following lemma converts any bounded-fan-in bounded-fan-out XOR circuit into a catalytic target-add quantum circuit with the same asymptotic size and depth.
	
	\begin{lemma}[Catalytic implementation of classical XOR circuits]
		\label{lem:a2a-catalytic}
		Let $C:\mathbb F_2^n\to\mathbb F_2^q$ be a circuit containing constant fan-in, constant fan-out XOR gates.
		Suppose $C$ has $g$ gates of depth $d$.
		Let $X$ and $Y$ contain $n$ and $q$ bits, respectively.
		Then the target addition
		\begin{equation}
			\ket{x}_X\ket{y}_Y\ket{a}_A\longmapsto \ket{x}_X\ket{y+C(x)}_Y\ket{a}_A
		\end{equation}
		of $C$ can be implemented catalytically with an $\cO(q+g)$-sized catalytic workspace $A$ in $\cO(d)$ depth on an all-to-all architecture.
	\end{lemma}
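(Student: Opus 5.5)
The approach is to give each gate of $C$ its own dirty wire in $A$ and compile $C$ level by level with the linear case of the compiler in Lemma~\ref{lem:catalytic-operations}. Concretely, stratify the gates of $C$ into layers $\ell=1,\ldots,d$ by depth. For every gate $\gamma$ allocate one dirty bit $a_\gamma\in A$. The output wires of $C$ are either input bits or gate outputs; add their values into $Y$ with a final fan-out layer. Because of the bounded fan-out, this layer can be scheduled in $\cO(1)$ depth after splitting it into $\cO(1)$ matchings.

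The core construction is recursive. Let $\Psi_\ell$ denote the catalytic map $\ket{x}\ket{a_{\le \ell}}\mapsto \ket{x}\ket{a_{\le\ell}\oplus G_{\le \ell}(x)}$, where $G_{\le\ell}$ collects the true values of all gates up to layer $\ell$. The naive commutator recursion $\Phi^{-1};\Psi;\Phi;\Psi^{-1}$ from the compiler doubles the cost per level and would give depth $2^d$, so it cannot be used level by level. Instead I would use the standard "toggle" trick for XOR (linear) circuits: because every gate is linear, the dirty garbage propagates linearly and can be cancelled. Define $\Lambda_\ell$ as the single layer that XORs into each $a_\gamma$ ($\gamma$ in layer $\ell$) the XOR of its input wires, where each input wire is either an $x$-bit or the register $a_{\gamma'}$ of a lower-layer gate. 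Applying $\Lambda_1,\Lambda_2,\ldots,\Lambda_d$ in order gives $a_\gamma\mapsto a_\gamma\oplus G_\gamma(x)\oplus \Lambda$-linear garbage $g_\gamma(a)$. The garbage $g_\gamma(a)$ is a linear function of the initial dirty values only, independent of $x$.

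The cancellation step runs the same sequence a second time with the $x$-inputs disconnected, i.e. with $x$ replaced by $0$. This second pass produces exactly the garbage term, because the map is affine in $(x,a)$ and the $x$-part vanishes. The plan is therefore as follows. Perform the forward sweep $\Lambda_1\cdots\Lambda_d$ (depth $d$, each layer $\cO(1)$ depth because fan-in and fan-out are bounded). Apply the fan-out into $Y$. Undo the forward sweep. Finally, repeat the whole sequence with the $x$-connections removed, so that the $a$-dependent contributions added to $Y$ cancel.

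Checking the pieces: the value added to $Y$ in the first round is $C(x)\oplus \ell(a)$ for a linear function $\ell$, and the second round adds $\ell(a)$. Each sweep is exactly reversed, so $A$ is restored. The total depth is $4d+\cO(1)=\cO(d)$ and the workspace is $g$ bits, or $\cO(q+g)$ if output copies are included. All-to-all connectivity means no routing cost arises.

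The main obstacle is verifying the garbage cancellation rigorously: one has to check that the composite unitary is an affine map over $\mathbb F_2$ and that the $a$-dependent part of the increment to $Y$ is identical in the two rounds. I would establish this by induction on the layers, writing each intermediate $a_\gamma$ as $a_\gamma\oplus G_\gamma(x)\oplus g_\gamma(a)$ with the same $g_\gamma$ in both rounds. A second point to check is that the gates within one layer, acting via bounded fan-in and fan-out on disjoint targets, schedule into $\cO(1)$ layers of CNOTs by edge-coloring the bounded-degree bipartite source/target graph (Kőnig's theorem).
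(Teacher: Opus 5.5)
Your proposal is correct and follows essentially the same route as the paper. The paper's $\tilde\Phi=\Phi;\mathrm{COPY}_{A\to Y};\Phi^{-1}$ is your forward sweep, fan-out into $Y$, and reverse sweep. Its correction circuit $\Psi$, obtained by deleting every CNOT controlled by $X$, is exactly your second pass with the $x$-connections removed, which cancels the linear garbage $Q(a)$. The paper additionally reduces fan-out to two before this, a step you rightly skip because fan-out is already constant.
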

	\begin{proof}
		First, reduce the fan-out of the XOR circuit to $2$.
		For each XOR gate wire with fan-out larger than $2$, replace the wire with a binary fan-out tree.
		Each tree node is a copy gate with its parent as the input.
		The gates that take this wire as input are connected with the leaves of the tree.
		This process only increases the depth $d$ and the gate count $g$ by a constant multiple as the circuit has constant fan-out.
		
		Allocate an ancilla qubit for each gate wire, including the output wires, on $A$.
		We first construct a quantum circuit $\Phi$ which can output a correct result when $A$ is clean.
		For each gate $T\gets E\oplus F$ in $C$, $\Phi$ applies
		\begin{equation}
			\ket{e}_E\ket{f}_F\ket{t}_T\mapsto \ket{e}_E\ket{f}_F\ket{t+e+f}_T.
		\end{equation}
		Each layer of gates can be implemented in $\cO(1)$ depth using K\H{o}nig's line-coloring theorem.
		Then
		\begin{equation}
			\Phi:\ket{x}_X\ket{y}_Y\ket{0,0}_A\longmapsto \ket{x}_X\ket{y}_Y\ket{C(x),P(x)}_A
		\end{equation}
		for some linear map $P$.
		
		Using the uncomputing trick, construct a circuit $\tilde\Phi$, which performs
		\begin{equation}
			\Phi; \quad \mathrm{COPY}_{A \rightarrow Y}: \ket{x}_X\ket{y}_Y\ket{c,a'}_A\longmapsto \ket{x}_X\ket{y+c}_Y\ket{c,a'}_A; \quad \Phi^{-1}
		\end{equation}
		sequentially. 
		Then
		\begin{equation}
			\tilde\Phi:\ket{x}_X\ket{y}_Y\ket{0}_A\longmapsto \ket{x}_X\ket{y+C(x)}_Y\ket{0}_A
		\end{equation}
		can restore the ancilla qubits.
		
		Moreover, because $\mathrm{COPY}_{A \rightarrow Y}$ preserves the $A$ register and $\tilde\Phi$ consists entirely of CNOT gates (which implement the XOR target additions), the overall action of $\tilde\Phi$ on the computational basis is linear and leaves $A$ invariant. 
		Therefore, given a dirty ancillary space $A$, $\tilde\Phi$ acts as
		\begin{equation}
			\tilde\Phi:\ket{x}_X\ket{y}_Y\ket{a}_A\longmapsto \ket{x}_X\ket{y+C(x)+Q(a)}_Y\ket{a}_A
		\end{equation}
		for some linear map $Q$.
		
		Define $\Psi$ by deleting from $\tilde\Phi$ every CNOT gate whose control lies in $X$.
		Then the action of $\Psi$ is
		\begin{equation}
			\Psi:\ket{x}_X\ket{y}_Y\ket{a}_A\longmapsto \ket{x}_X\ket{y+Q(a)}_Y\ket{a}_A.
		\end{equation}
		Therefore, we can implement the target addition of $C$ catalytically with
		\begin{equation}
			\Psi \circ\tilde\Phi 
		\end{equation}
		which has circuit depth $\cO(d)$.
	\end{proof}
	
	\begin{proof}[Proof of Theorem~\ref{thm:a2a-ensemble}]
		The previous subsection proves Eq.~\eqref{eq:a2a-distance}.
		It remains to show that the ensemble $\nu_{\mathrm{a2a},m}$ can be implemented within the claimed depth using an ancilla-free circuit.
		
		Choose the number $r_0$ of data registers sufficiently large that the $\cO(s)$ catalytic qubits required below can always be borrowed from inactive registers.
		
		We first implement the target addition of hash functions $L_a$ and $M$, which have the form Eq.~\eqref{eq:a2a-hash}, in Steps 1 and 2.
		For a hash function $H:\mathbb{F}_2^u\to \mathbb{F}_2^v$ with $[H(z)]_j:=\epsilon_j[C_u(z)]_{I_j}$, picking a dirty intermediate register $c$ of size $N_u$, the target addition 
		\begin{equation}
			\label{eq:a2a-hash-encoding}
			\ket{z,c,\omega}\longmapsto\ket{z,c+C_u(z),\omega}
		\end{equation}
		of the encoder $C_u$ of Fact~\ref{fact:app-a2a-code} can be implemented by Lemma~\ref{lem:a2a-catalytic} 
		with depth $\cO(d)=\cO(\log u)$ and $\cO(q+g)=\cO(u)$ catalytic qubits.
		Given the codeword $C$, each bit $[C]_i$ is added into sites $j$ with $I_j=i$ and $\epsilon_j=1$ of the output.
		We then implement the target addition
		\begin{equation}
			\label{eq:a2a-hash-distributing}
			\ket{c,(h_j)_{j\in[v]},\omega}\longmapsto\ket{c,(h_j+\epsilon_j[c]_{I_j})_{j\in[v]},\omega}
		\end{equation}
		computing the output of $H$.
		Since there are at most $v$ sites, with a binary fan-out tree, there exists a classical XOR circuit with depth $\cO(\log v)$ and size $\cO(v)$ to compute $(\epsilon_j [C]_{I_j})_{j\in[v]}$, which has fan-in and fan-out two.
		Lemma~\ref{lem:a2a-catalytic} gives a catalytic implementation of Eq.~\eqref{eq:a2a-hash-distributing} with depth $\cO(\log v)$ and catalytic size $\cO(v)$.
		Applying the compiler in Lemma~\ref{lem:catalytic-operations} to combine Eq.~\eqref{eq:a2a-hash-encoding} and Eq.~\eqref{eq:a2a-hash-distributing} gives the target addition for the hash function in $\cO(\log u+\log v)=\cO(\log m)$ depth with $\cO(u+v)=\cO(m)$ catalytic qubits.
		There are only a constant number of such hash functions in Steps 1 and 2.
		One can perform them sequentially.
		The hardwired shift $\beta$ in Step~2 is implemented by parallel Pauli-$X$ gates in depth one.
		
		In Step~3, for a fixed $l\in[w]$, first implement the loop terms with $\pi_l(j)=\sigma_l(j)$ as the CNOT update $Y_j\gets Y_j\oplus\gamma_{j,l}X_{\pi_l(j)}$, which are disjoint because $\pi_l$ is a permutation.
		After removing the loops, the multigraph with edges $\{\pi_l(j),\sigma_l(j)\}_j$ has maximum degree two and hence splits into at most three matchings.
		Within one matching, the corresponding $\mathrm{CC}X$ gates in Eq.~\eqref{eq:app-a2a-quadratic-shear} act on disjoint triples and have constant all-to-all depth.
		Thus the entire quadratic mixing has depth $\cO(w)=\cO(\log k)$.
		
		The phase layer in Step~4 has depth one, and the inversions in Step~5 have the same depths as the corresponding forward operations.
		The total circuit has depth $\cO(\log m+\log k)$, which completes the proof.
	\end{proof}

	\subsection{Repetition and gluing to form designs}
	\label{app:a2a-repetition}
	Theorem~\ref{thm:a2a-ensemble} supplies a shallow approximation to an ideal diagonal unitary.
	As in Sec.~\ref{app:finite-dimensional-repetition}, three independent diagonal samples are first inserted into a Hadamard--diagonal block, and independent repetitions amplify its moment contraction.
	The small-system regime is again handled by the random-Pauli-rotation gap in Fact~\ref{fact:app-fd-brickwork-gap}.
	
	\begin{theorem}[$m$-qubit approximate designs without ancillae with all-to-all connectivity]
		\label{thm:a2a-small-design}
		For every $m\geq1$, $k\geq2$, and $0<\epsilon\leq1$, there exists a multiplicative $\epsilon$-approximate unitary $k$-design on $m$ qubits with an ancilla-free all-to-all circuit of depth
		\begin{equation}
			\label{eq:app-a2a-m-qubit-design-depth}
			\cO\!\left(
			(\log m+\log k)\log k
			\left[k+\frac{\log(\epsilon^{-1})}{m}\right]
			\right).
		\end{equation}
	\end{theorem}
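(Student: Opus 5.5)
We follow the proof of Theorem~\ref{thm:fd-small-design} verbatim, replacing the finite-dimensional diagonal ensemble with the all-to-all ensemble of Theorem~\ref{thm:a2a-ensemble}. The proof splits into two regimes according to whether $m\geq A\log k$, with $A$ the constant of Theorem~\ref{thm:a2a-ensemble}.

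\emph{Large regime, $m\geq A\log k$.} Let $\mu_{\mathrm{a2a},m}$ be the Hadamard--diagonal block $D_3H^{\otimes m}D_2H^{\otimes m}D_1$ with $D_i$ drawn independently from $\nu_{\mathrm{a2a},m}$. Lemma~\ref{lem:app-HD-perturbation} requires $2^m\geq 4k^4$, which we secure by enlarging $A$. Combined with Eq.~\eqref{eq:a2a-distance}, it gives
\begin{equation}
\lambda_k(\mu_{\mathrm{a2a},m})\leq \frac{9k^4}{2^m}+3C e^{-cm/\log k}\leq e^{-am/\log k}
\end{equation}
for some constant $a>0$. Here $9k^4 2^{-m}$ is absorbed because $m\geq A\log k$ makes $2^{-m/2}\cdot 9k^4\leq 1$, and the leftover constant $3C$ is absorbed after adjusting $A$ so that $e^{-cm/(2\log k)}$ dominates. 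Eq.~\eqref{eq:app-gap-amplification} then yields $\lambda_k(\mu^{*T})\leq e^{-amT/\log k}$. Fact~\ref{fact:TPE-multiplicative-comparison} shows that
\begin{equation}
T=\left\lceil \frac{\log k}{am}\left(2mk\log 2+\log\epsilon^{-1}\right)\right\rceil=\cO\!\left(\log k\left[k+\frac{\log(\epsilon^{-1})}{m}\right]\right)
\end{equation}
repetitions suffice. Each block costs depth $\cO(\log m+\log k)$ by Eq.~\eqref{eq:a2a-diagonal-depth}, plus two Hadamard layers, so the total depth matches Eq.~\eqref{eq:app-a2a-m-qubit-design-depth}. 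No ancillae are used, since each diagonal sample is ancilla-free.

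\emph{Small regime, $m<A\log k$.} We use random Pauli rotations from Fact~\ref{fact:app-fd-brickwork-gap}. With all-to-all connectivity, $R(P,\theta)$ is implemented by single-qubit Clifford conjugation and a balanced CNOT parity tree of depth $\cO(\log m)$, followed by a $Z$-rotation and uncomputation. This needs no routing and no ancillae. The repetition count is $T=\cO(mk+\log\epsilon^{-1})$, giving depth $\cO(\log m\,[mk+\log\epsilon^{-1}])$. Since $m<A\log k$ and $\log m\leq \log k+\cO(1)$, this is
\begin{equation}
\cO\!\left(\log m\cdot m\left[k+\frac{\log\epsilon^{-1}}{m}\right]\right)=\cO\!\left((\log m+\log k)\log k\left[k+\frac{\log(\epsilon^{-1})}{m}\right]\right).
\end{equation}

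The only delicate point is constant bookkeeping in the large regime, namely absorbing $9k^4/2^m$ and the prefactor $3C$ into a single exponential $e^{-am/\log k}$. This is handled by choosing $A$ large, exactly as in Eq.~\eqref{eq:fd-block-gap}. Everything else is a direct substitution into earlier results.
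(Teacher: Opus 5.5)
Your proposal is correct and follows the paper's proof essentially step for step. It uses the same split at $m\geq A\log k$, the Hadamard--diagonal block built from $\nu_{\mathrm{a2a},m}$ together with Lemma~\ref{lem:app-HD-perturbation}, gap amplification with the TPE-to-multiplicative conversion and the same repetition count $T$, and random Pauli rotations with an $\cO(\log m)$-depth CNOT parity tree in the small regime; your explicit check that $2^m\geq 4k^4$ is secured by enlarging $A$ is a point the paper leaves implicit in ``$A$ sufficiently large.''
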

	\begin{proof}
		First suppose that $m\geq A\log k$, where Theorem~\ref{thm:a2a-ensemble} applies.
		Let $\mu_{\mathrm{a2a},m}$ be the ensemble
		\begin{equation}
			U=D_3H^{\otimes m}D_2H^{\otimes m}D_1,
		\end{equation}
		where $D_1,D_2,D_3$ are sampled independently from $\nu_{\mathrm{a2a},m}$.
		Lemma~\ref{lem:app-HD-perturbation} and Eq.~\eqref{eq:a2a-distance} give
		\begin{equation}\begin{aligned}
				\lambda_k(\mu_{\mathrm{a2a},m})
				&\leq \frac{9k^4}{2^m}
				+3C\exp\!\left[-\frac{cm}{\log k}\right]\\
				&\leq \exp\!\left[-\frac{am}{\log k}\right]
				\label{eq:app-a2a-HD-block-gap}
		\end{aligned}\end{equation}
		for a constant $a>0$ when $A$ is sufficiently large.
		By the amplification bound in Eq.~\eqref{eq:app-gap-amplification},
		\begin{equation}
			\lambda_k(\mu_{\mathrm{a2a},m}^{*T})
			\leq
			\exp\!\left[-\frac{amT}{\log k}\right].
		\end{equation}
		Fact~\ref{fact:TPE-multiplicative-comparison} therefore bounds the multiplicative error by
		\begin{equation}
			\epsilon_{\mathrm{mult}}\leq 2^{2mk}\lambda_k(\mu_{\mathrm{a2a},m}^{*T})
			\leq
			\exp\!\left(2mk\log2-\frac{amT}{\log k}\right).
		\end{equation}
		It suffices to take
		\begin{equation}
			\label{eq:app-a2a-HD-repetition-count}
			T=
			\left\lceil
			\frac{\log k}{am}
			\left(2mk\log2+\log(\epsilon^{-1})\right)
			\right\rceil
			=\cO\!\left(
			\log k\left[k+\frac{\log(\epsilon^{-1})}{m}\right]
			\right)
		\end{equation}
		so that the $\epsilon_{\mathrm{mult}} \le \epsilon$.
		One sample of $\mu_{\mathrm{a2a},m}$ contains three diagonal circuits of depth $\cO(\log m+\log k)$ and two transversal Hadamard layers.
		Combining this depth with Eq.~\eqref{eq:app-a2a-HD-repetition-count} proves Eq.~\eqref{eq:app-a2a-m-qubit-design-depth}.
		
		Now suppose that $m<A\log k$.
		Use the random-Pauli-rotation ensemble $\nu_{\mathrm{RPR},m}$ from Fact~\ref{fact:app-fd-brickwork-gap}.
		Every sample $R(P,\theta)=e^{\mathrm i\theta P}$ has an ancilla-free all-to-all implementation of depth $\cO(\log m)$. 
		Indeed, conjugate $P$ by a tensor product of single-qubit Clifford gates so that it becomes $Z_S:=\prod_{j\in S}Z_j$ on its support $S$.
		A balanced binary tree of CNOTs computes the parity of the bits in $S$ into one of those qubits in $\cO(\log m)$ all-to-all depth.
		Applying the corresponding one-qubit $Z$ rotation and reversing the tree implements $e^{\mathrm i\theta Z_S}$.
		Conjugating the single-qubit Clifford gates gives $R(P,\theta)$.
		
		Let $g$ be the constant in Fact~\ref{fact:app-fd-brickwork-gap}.
		The same TPE-to-multiplicative conversion shows that
		\begin{equation}
			T=
			\left\lceil
			\frac{2mk\log2+\log(\epsilon^{-1})}{\log(\frac{1}{1-g})}
			\right\rceil
			\label{eq:app-a2a-rpr-repetition-count}
		\end{equation}
		repetitions suffice to make $\lambda_k(\nu_{\mathrm{RPR},m}^{*T}) \leq e^{-\log(\frac{1}{1-g})T}\leq 2^{-2mk}\epsilon$.
		The resulting all-to-all depth is
		\begin{equation}
			\cO\!\left(
			\log m[mk+\log(\epsilon^{-1})]
			\right).
			\label{eq:app-a2a-small-register-depth}
		\end{equation}
		Since $m=\cO(\log k)$, one has
		\begin{equation}
			\begin{aligned}
				m\log m
				&=\cO\!\left((\log m+\log k)\log k\right),\\
				\log m
				&=\cO\!\left(
				\frac{(\log m+\log k)\log k}{m}
				\right).
			\end{aligned}
		\end{equation}
		Substituting this bound into Eq.~\eqref{eq:app-a2a-small-register-depth} gives Eq.~\eqref{eq:app-a2a-m-qubit-design-depth}.
		Both regimes act only on the $m$ system qubits.
	\end{proof}
	
	We once again apply Fact~\ref{fact:app-fd-gluing} to glue these local designs into a global unitary design with improved depth scaling.
	
	\begin{theorem}[Low-depth approximate designs without ancillae with all-to-all connectivity]
		\label{thm:a2a-large-design}
		Fix a constant $\eta>0$.
		For every $n\geq1$, $k\geq2$, and $2^{-\eta nk}\leq\epsilon\leq 1$,
		there exists an $n$-qubit multiplicative $\epsilon$-approximate unitary $k$-design with an ancilla-free all-to-all circuit of depth
		\begin{equation}
			\label{eq:a2a-large-design-depth}
			\cO_\eta\!\left(k\log k
			\left[\log k+\log\!\left(\log(n)+\log( k)+\log(\epsilon^{-1})\right)\right]
			\right).
		\end{equation}
	\end{theorem}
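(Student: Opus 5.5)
The plan mirrors the proof of Theorem~\ref{thm:fd-large-design}, with the finite-dimensional block design replaced by Theorem~\ref{thm:a2a-small-design}. Set $M=\log(n)+\log(k)+\log(\epsilon^{-1})$ and choose a block size $b=\lceil\kappa M\rceil$ for a sufficiently large absolute constant $\kappa$. We split into two cases according to whether $n<2b$ or $n\geq 2b$.

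If $n<2b$, we apply Theorem~\ref{thm:a2a-small-design} directly to all $n$ qubits. Two bounds control its depth. First, $\epsilon\geq 2^{-\eta nk}$ gives $\log(\epsilon^{-1})/n=\cO_\eta(k)$. Second, $\log n=\cO(\log M)$. Together these give depth $\cO_\eta(k\log k(\log k+\log M))$, matching Eq.~\eqref{eq:a2a-large-design-depth}.

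If $n\geq 2b$, we split the qubits into consecutive groups $Q_1,\ldots,Q_t$ with $b\leq|Q_i|\leq 2b$. Since the architecture is all-to-all, we do not need prefix windows or geometry, so any partition works. Set the patches $R_i=Q_i\cup Q_{i+1}$, which satisfy $2b\leq|R_i|\leq 4b$. On each patch we sample independently a multiplicative $\epsilon'$-approximate $k$-design from Theorem~\ref{thm:a2a-small-design}, with $\epsilon'=\epsilon/(8t)$. The patches are arranged in the two-layer brickwork of Eq.~\eqref{eq:app-fd-two-layer-product}.

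The error analysis is the same telescoping application of Fact~\ref{fact:app-fd-gluing} as before. The overlaps have size $|Q_i|\geq\kappa M$, so $2^{|Q_i|}\geq 2k^2$, and the gluing overhead is $\cO(tk^2 2^{-\kappa M})\leq\epsilon/8$ once $\kappa$ is large. This uses $t\leq n$ together with $2^{-M}=\epsilon/(nk)$. Combined with $t\epsilon'\leq\epsilon/8$, this yields $\epsilon_{t-1}\leq e^{\epsilon/4}-1\leq\epsilon$.

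For the depth, each patch has $m=|R_i|=\Theta(M)$ qubits, and Theorem~\ref{thm:a2a-small-design} gives depth
\begin{equation}
\cO\!\left((\log M+\log k)\log k\left[k+\frac{\log(8t\epsilon^{-1})}{M}\right]\right).
\end{equation}
Since $\log(8t\epsilon^{-1})\leq 3+\log n+\log(\epsilon^{-1})=\cO(M)$, the bracket is $\cO(k)$. Patches within one layer are disjoint and run in parallel, so the two layers give the claimed bound.

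The main point to check is bookkeeping rather than a conceptual step. We need $\log(\epsilon'^{-1})/m$ to stay $\cO(1)$ uniformly, which the choice $m=\Theta(M)$ guarantees. We also need the regime $m<A\log k$ inside Theorem~\ref{thm:a2a-small-design} to be covered. It is, because that theorem already handles all $m\geq1$, so no separate argument is needed there. Finally, the construction uses no ancillae, since each local design is ancilla-free and supported on system qubits only.
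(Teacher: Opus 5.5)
Your proposal is correct and follows the paper's proof essentially step for step: the same choice $M=\log(n)+\log(k)+\log(\epsilon^{-1})$ and block scale $b=\lceil\kappa M\rceil$, the same split at $n<2b$, the same patches $R_i=Q_i\cup Q_{i+1}$ with $\epsilon'=\epsilon/(8t)$ glued iteratively via Fact~\ref{fact:app-fd-gluing} in a two-layer brickwork, and the same per-patch depth bookkeeping through Theorem~\ref{thm:a2a-small-design}. One minor point: the identity $2^{-M}=\epsilon/(nk)$ holds only for base-2 logarithms, but taking $\kappa$ large absorbs any change of base, so the argument is unaffected.
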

	\begin{proof}
		Set $M=\log(n)+\log(k)+\log(\epsilon^{-1})$
		and choose a block scale $b:=\lceil \kappa M\rceil$ for some sufficiently large constant $\kappa>0$.
		
		First consider the case $n<2b$. Applying Theorem~\ref{thm:a2a-small-design} to the full $n$-qubit system gives a circuit depth of
		\begin{equation}
			\cO\left((\log n+\log k)\log k\left[k+\frac{\log(\epsilon^{-1})}{n}\right]\right).
		\end{equation}
		The assumption $\epsilon\geq2^{-\eta nk}$ implies $\frac{\log(\epsilon^{-1})}{n}=\cO_\eta(k)$ while $n<2b$ implies $\log n=\cO(\log M)$.
		This gives the depth bound in Eq.~\eqref{eq:a2a-large-design-depth}.
		
		It remains to treat $n\geq 2b$.
		Partition the qubits into $t:=\lfloor n/b\rfloor\geq2$ blocks $Q_1,\ldots,Q_t$. Take $|Q_i|=b$ for $i<t$ and place all remaining qubits in $Q_t$.
		Then $b\leq|Q_i|<2b$ for every $i\in[t]$.
		For $i\in[t-1]$, define the overlapping patches $R_i:=Q_i\cup Q_{i+1}$ so that $2b\leq|R_i|<4b$.
		
		Set $\epsilon':=\epsilon/(8t)$.
		Independently sample a multiplicative $\epsilon'$-approximate unitary $k$-design $U_i$ on each $R_i$ using Theorem~\ref{thm:a2a-small-design}, and arrange them in the two-layer brickwork product
		\begin{equation}
			\label{eq:app-a2a-two-layer-product}
			U_{\mathrm{glue}}
			:=
			\left(\prod_{\substack{i\in[t-1]\\i\ \mathrm{even}}}U_i\right)
			\left(\prod_{\substack{i\in[t-1]\\i\ \mathrm{odd}}}U_i\right).
		\end{equation}
		To apply Fact~\ref{fact:app-fd-gluing} iteratively, define $W_1:=U_1$ and, for $i\geq2$,
		\begin{equation}
			W_i:=
			\begin{cases}
				U_iW_{i-1},&i\text{ even},\\
				W_{i-1}U_i,&i\text{ odd}.
			\end{cases}
		\end{equation}
		Then $W_{t-1}=U_{\mathrm{glue}}$, and at the $i$-th gluing step the overlap of two unitaries is $Q_i$.
		
		Choose $\kappa$ sufficiently large that $2^b\geq 2^{\kappa M}\geq 2k^2$.
		Then Fact~\ref{fact:app-fd-gluing} applies at every gluing step.
		If $\epsilon_i$ denotes the multiplicative error of $W_i$, Fact~\ref{fact:app-fd-gluing} gives
		\begin{equation}
			1+\epsilon_i
			\leq
			(1+\epsilon_{i-1})(1+\epsilon')
			\left(1+\cO(k^22^{-|Q_i|})\right).
		\end{equation}
		Iterating this inequality gives,
		\begin{equation}\begin{aligned}
				1+\epsilon_{t-1}
				&\leq
				(1+\epsilon')^{t-1}
				\left(1+\cO(k^22^{-b})\right)^{t-2}\\
				&\leq
				\exp\!\left(t\epsilon'+\cO(tk^22^{-b})\right).
		\end{aligned}\end{equation}
		Because $t\leq n$ and $M=\log(n)+\log (k)+\log(\epsilon^{-1})$, a sufficiently large $\kappa$ makes
		\begin{equation}
			\cO(tk^22^{-b})=\cO(tk^22^{-\kappa M})\leq\epsilon/8.
		\end{equation}
		Together with $t\epsilon'\leq\epsilon/8$, this yields
		\begin{equation}
			\epsilon_{t-1}
			\leq e^{\epsilon/4}-1
			\leq\epsilon.
		\end{equation}
		Thus $U_{\mathrm{glue}}$ is a multiplicative $\epsilon$-approximate unitary $k$-design on all $n$ qubits.
		
		Finally, the patches with odd indices are pairwise disjoint, as are those with even indices.
		Each patch circuit borrows workspace only from qubits within that patch.
		So each product in Eq.~\eqref{eq:app-a2a-two-layer-product} is executed in parallel.
		For every $i$, the patch size  $|R_i|=\Theta(M)$ and the logarithmic inverse-error parameter $\log(\epsilon'^{-1})=\log(8t\epsilon^{-1})=\cO(M)$.
		Theorem~\ref{thm:a2a-small-design} therefore gives the depth of one patch as
		\begin{equation}\begin{aligned}
				\cO\!\left(
				[\log(|R_i|)+\log k]\log k
				\left[k+\frac{\log(\epsilon'^{-1})}{|R_i|}\right]
				\right)=\cO\!\left(
				k\log k[\log k+\log M]
				\right).
		\end{aligned}\end{equation}
		Since the global construction consists of only two such layers, the overall depth scaling remains unchanged, establishing Eq.~\eqref{eq:a2a-large-design-depth}.
	\end{proof}
	
	We remark that in the highly precise regime $\epsilon < 2^{-\eta nk}$, this depth scaling is preserved via the exactification lemma (see Remark~\ref{rmk:very-small-errors}), albeit at the expense of inefficient classical sampling.

	\section{Exactification of an approximate design}
	\label{app:exactification}
	In this section, we demonstrate that any sufficiently accurate approximate unitary design can be lifted to an exact design by appropriately reweighting its probability distribution. 
	Leveraging this technique, we then construct low-depth exact designs by applying it to our previously established constructions of approximate designs.
	
	\subsection{Design exactification}
	
	We now formalize and prove Lemma~\ref{lem:exactification-informal}, which establishes that any unitary ensemble achieving a sufficiently small TPE error for both orders $k$ and $2k$ can be exactified into a perfect $k$-design. The rigorous statement is as follows:
	\begin{lemma}[Design exactification, formal version of Lemma~\ref{lem:exactification-informal}]
		\label{lem:exactification}
		Let $\nu$ be a unitary ensemble on a $d$-dimensional Hilbert space.
		If
		\begin{equation}
			\lambda_k(\nu),
			\lambda_{2k}(\nu)
			\leq
			\frac{1}{32d^{4k}},
			\label{eq:app-exactification-threshold}
		\end{equation}
		then there exists a weight function
		$\omega:\operatorname{supp}\nu\rightarrow\mathbb R$ satisfying
		\begin{equation}
			\frac{3}{4}
			\leq
			\omega(U)
			\leq
			\frac{5}{4},
			\qquad
			\mathbb E_{U\sim\nu}[\omega(U)]=1,
			\label{eq:exactification-weights}
		\end{equation}
		such that
		\begin{equation}
			\mathbb E_{U\sim\nu}
			\left[
			\omega(U)
			U^{\otimes k}\otimes\overline U^{\otimes k}
			\right]
			=
			P_{\mathrm H}^{(k)}.
			\label{eq:exactified-moment}
		\end{equation}
		Consequently, the reweighted measure
		$d\nu_{\mathrm{ex}}(U):=\omega(U)d\nu(U)$ forms an exact unitary
		$k$-design supported on precisely the same unitaries as $\nu$.
	\end{lemma}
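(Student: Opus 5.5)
Our plan is to look for the weight as a linear function of the $k$-th moment representation, $\omega(U)=1+\tr\bigl(Y^\dagger R_k(U)\bigr)$ with $R_k(U)=U^{\otimes k}\otimes\overline U^{\otimes k}$ and $Y$ an operator on $(\mathbb C^d)^{\otimes 2k}$. The equation $\mathbb E_\nu[\omega(U)R_k(U)]=P_{\mathrm H}^{(k)}$ then becomes linear in $Y$:
\begin{equation}
	M_k(\nu)+\mathcal T(Y)=P_{\mathrm H}^{(k)},\qquad \mathcal T(Y):=\mathbb E_{U\sim\nu}\bigl[\tr(Y^\dagger R_k(U))\,R_k(U)\bigr].
\end{equation}
We will solve this with the ansatz $Y=P_{\mathrm H}^{(k)}-M_k(\nu)+Z$, treating $Z$ as a correction, and then check the weight bounds and the normalization. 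To make $\omega$ real, we will use that $R_k(U)$ is equivalent to its own conjugate under the tensor-factor swap. We will keep the relevant objects, $M_k(\nu)-P_{\mathrm H}^{(k)}$, $\mathcal T$ and its Haar counterpart, inside the swap-symmetric subspace, so that $\tr(Y^\dagger R_k(U))$ is real.

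The key step is to compare $\mathcal T$ with the Haar version $\mathcal T_{\mathrm H}$. Vectorizing, $\mathcal T$ is the operator $\mathbb E_\nu\bigl[|R_k(U)\rangle\!\rangle\langle\!\langle R_k(U)|\bigr]$, and $|R_k(U)\rangle\!\rangle$ is, up to a fixed reordering of tensor factors, $U^{\otimes 2k}\otimes\overline U^{\otimes 2k}$ applied to a maximally entangled vector. Hence $\|\mathcal T-\mathcal T_{\mathrm H}\|_{\mathrm{op}}\le \lambda_{2k}(\nu)\cdot d^{2k}$, with the $d^{2k}$ coming from the norm of the unnormalized vector; this is where the $2k$ hypothesis enters. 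For the Haar side, $\mathcal T_{\mathrm H}$ acts invariantly, as a left and right $U$-twirl, and it is a positive operator whose range contains the traceless part we need. We plan to work on the space $\mathcal V=\operatorname{span}\{R_k(U)\}$, where it is invertible. For large $d$ its smallest nonzero eigenvalue should be bounded below by roughly $d^{-2k}$, up to constants, via the fact that $\tr(R_k(U)^\dagger R_k(V))=|\tr(U^\dagger V)|^{2k}$ and standard moment estimates; a crude lower bound of order $d^{-2k}$ suffices given the threshold $32d^{4k}$. We will restrict the equation to $\mathcal V$, since both $P_{\mathrm H}^{(k)}$ and $M_k(\nu)$ lie there as averages of $R_k(U)$. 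The equation $\mathcal T(Y)=P_{\mathrm H}-M_k$ is then solved by a Neumann series $Y=\mathcal T_{\mathrm H}^{-1}(I+(\mathcal T-\mathcal T_{\mathrm H})\mathcal T_{\mathrm H}^{-1})^{-1}(P_{\mathrm H}-M_k)$, which converges once $d^{2k}\lambda_{2k}\cdot\|\mathcal T_{\mathrm H}^{-1}\|\le 1/2$.

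We then obtain the bound $\|Y\|_2\lesssim d^{2k}\|P_{\mathrm H}-M_k\|_2\le d^{2k}\cdot d^{k}\lambda_k$, converting operator norm to Frobenius norm on a space of dimension $d^{2k}$. This gives
\begin{equation}
	|\omega(U)-1|\le\|Y\|_2\,\|R_k(U)\|_2=\|Y\|_2\,d^{k}\lesssim d^{4k}\lambda_k\le 1/4
\end{equation}
under the stated threshold, with the constant 32 absorbing the factors. Normalization follows by taking the trace of the moment equation against $P_{\mathrm H}^{(k)}$, or, more directly, by noting that the identity component $\langle\!\langle \Phi^+|$ relation gives $\mathbb E[\omega]=1$. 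The main obstacle we expect is the lower bound on $\mathcal T_{\mathrm H}$ restricted to $\mathcal V$ with explicit constants compatible with $1/(32d^{4k})$. If that route is too lossy, we will instead use the $\nu$-Gram operator directly, bounding its smallest eigenvalue on $\mathcal V$ via the $2k$-moment perturbation, and fall back on choosing $Y$ as the minimal-norm solution.
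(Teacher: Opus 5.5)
Your argument is correct. It is a close variant of the paper's proof rather than the same one.

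\textbf{What the two proofs share.} Both take $\omega$ affine in $R_k(U)$ and obtain the coefficient by inverting a second-moment operator. Both rest on the same two estimates:
\begin{itemize}
\item the $2k$-moment perturbation bound $\|\mathcal T-\mathcal T_{\mathrm H}\|_{\mathrm{op}}\le d^{2k}\lambda_{2k}(\nu)$, which follows from $R_k\otimes\overline{R_k}\simeq R_{2k}$;
\item a Haar spectral lower bound of $d^{-2k}$ on the relevant span.
\end{itemize}

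\textbf{Where they differ.} The paper uses the \emph{centered} covariance $C_\nu$ of the realified features, so $\mathbb E_\nu[\omega]=1$ is built in. It then controls $|\omega-1|$ through the constant-leverage identity $z(U)^{\mathsf T}C_{\mathrm H}^{-1}z(U)=q$. You instead use the \emph{uncentered} Gram operator and recover normalization afterwards from $P_{\mathrm H}^{(k)}R_k(U)=P_{\mathrm H}^{(k)}$. You also replace the leverage identity by the cruder bound $|\tr(Y^\dagger R_k(U))|\le\|Y\|_2\,d^k$. This avoids the real/imaginary block bookkeeping ($A_\eta,B_\eta$) and the $\Delta\Delta^{\mathsf T}$ correction. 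It even improves the constant: $\mathcal T\succeq\frac{31}{32}d^{-2k}$ gives $|\omega-1|\le\frac{32}{31}d^{4k}\lambda_k\le\frac1{31}$.

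\textbf{Your flagged obstacle.} The Haar lower bound is exactly the paper's Schur-orthogonality computation, and it needs no large-$d$ assumption or moment estimates for $|\tr(U^\dagger V)|^{2k}$. Write $\operatorname{span}\{R_k(U)\}=\bigoplus_\lambda I_{m_\lambda}\otimes\operatorname{End}(V_\lambda)$, where $\lambda$ runs over the irreps occurring in $R_k$ (including the trivial one). Then $\mathcal T_{\mathrm H}$ commutes with left and right multiplication by $R_k(V)$. By Schur's lemma it acts on the $\lambda$-block as a scalar. Taking the trace fixes that scalar at $m_\lambda/d_\lambda$, which is at least $d^{-2k}$ for every $d$.

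\textbf{One point to tighten.} As written, $\mathcal T(Y)=\mathbb E_\nu[\tr(Y^\dagger R_k)R_k]$ is conjugate-linear in $Y$. It agrees with $\mathbb E_\nu|R_k\rangle\!\rangle\langle\!\langle R_k|$ only where the pairing is real. So run the whole argument on the real space $\operatorname{span}_{\mathbb R}\{R_k(U)\}$, which has the properties you need:
\begin{itemize}
\item it contains $P_{\mathrm H}^{(k)}$ and $M_k(\nu)$;
\item it is invariant under both Gram operators, so the Neumann series stays inside it;
\item $\tr(Y^\dagger R_k(U))$ is automatically real on it, because $\tr(R_k(V)^\dagger R_k(U))=|\tr(V^\dagger U)|^{2k}$.
\end{itemize}
This also replaces your swap-symmetry argument for reality.
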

	The underlying intuition is straightforward: the order-$k$ moment controls the mean of the unitary representation, whereas the order-$2k$ moment bounds its covariance. Once both moments are sufficiently close to their ideal Haar values, a perturbation to the probability weights is sufficient to perfectly correct the residual error in the mean.
	
	\begin{proof}
		For convenience, set
		\begin{equation}
			R(U):=R_k(U)
			=
			U^{\otimes k}\otimes\overline U^{\otimes k},
			\qquad
			P:=P_{\mathrm H}^{(k)},
			\qquad
			N:=d^{2k}.
			\label{eq:app-exactification-notation}
		\end{equation}
		Thus, $R$ is an $N$-dimensional unitary representation. Fixing a column vectorization, we define the associated real feature vector as
		\begin{equation}
			x(U):=\operatorname{vec}(R(U))\in\mathbb C^{N^2},
			\qquad
			F(U):=
			\begin{pmatrix}
				\operatorname{Re}x(U)\\
				\operatorname{Im}x(U)
			\end{pmatrix}
			\in\mathbb R^{2N^2}.
			\label{eq:app-real-feature}
		\end{equation}
		Next, we let $\overline F_{\mathrm H} := \mathbb E_{U\sim \mathrm H}[F(U)]$ denote the Haar expectation and center the feature vector as $z(U):=F(U)-\overline F_{\mathrm H}$. This allows us to introduce the relevant real feature space:
		\begin{equation}
			E
			:=
			\operatorname{span}_{\mathbb R}
			\{z(U):U\in\mathrm U(d)\},
			\qquad
			q:=\dim E\leq2N^2.
		\end{equation}
		The Haar covariance matrix on this space is given by
		\begin{equation}
			C_{\mathrm H}
			:=
			\mathbb E_{U\sim \mathrm H}
			\left[z(U)z(U)^{\mathsf T}\right].
			\label{eq:app-haar-feature-covariance}
		\end{equation}
		In what follows, all covariance inverses are implicitly restricted to the subspace $E$. 
		Furthermore, because $F(U)$ is not constant, we may safely assume $q \ge 1$.
		
		\smallskip
		\noindent\emph{Haar covariance.}---We first establish two key properties of $C_{\mathrm H}$:
		\begin{equation}
			C_{\mathrm H}\succeq\frac{1}{N} I_E,
			\label{eq:app-haar-covariance-lower}
		\end{equation}
		and
		\begin{equation}
			z(U)^{\mathsf T}C_{\mathrm H}^{-1}z(U)=q, 
			\qquad
			\forall U\in\mathrm U(d).
			\label{eq:app-constant-leverage}
		\end{equation}

		To verify Eq.~\eqref{eq:app-haar-covariance-lower}, we decompose the moment representation into its irreducible components:
		\begin{equation}
			R(U)
			\simeq
			I_{m_0}
			\oplus
			\bigoplus_{\lambda\neq0}
			I_{m_\lambda}\otimes\pi_\lambda(U),
			\qquad
			d_\lambda:=\dim\pi_\lambda.
		\end{equation}
		Notice that the character of the moment representation is real-valued:
		\begin{equation}
			\operatorname{tr}[R(U)]=|\operatorname{tr}(U)|^{2k}\in\mathbb R.
			\label{eq:app-real-character}
		\end{equation}
		Consequently, for every matrix $Y$ whose vectorized realification lies in $E$, the overlap $\operatorname{vec}_{\mathbb R}(Y)^{\mathsf T} z(U) = \operatorname{tr}[Y^\dagger(R(U)-P)]$ is real. By the definition of the Haar covariance matrix, 
		\begin{equation}
			\operatorname{vec}_{\mathbb R}(Y)^{\mathsf T} C_{\mathrm H} \operatorname{vec}_{\mathbb R}(Y)
			=
			\mathbb{E}_{\mathrm H} \left[ \left| \operatorname{tr}[Y^\dagger(R(U)-P)] \right|^2 \right].
		\end{equation}
		We can evaluate this expectation by expanding $Y = \sum_{\lambda \neq 0, ij} y_{\lambda,ij} B_{\lambda,ij}$ in the Hilbert--Schmidt orthonormal coefficient basis for the non-trivial blocks,
		\begin{equation}
			B_{\lambda,ij}
			:=
			\frac{1}{\sqrt{m_\lambda}}\,
			I_{m_\lambda}\otimes|i\rangle\langle j|,
		\end{equation}
		which yields $\operatorname{tr}[B_{\lambda,ij}^\dagger (R(U)-P)] = \sqrt{m_\lambda} [\pi_\lambda(U)]_{ij}$. Substituting this expansion into the expectation and applying Schur orthogonality for the irreducible matrix elements gives
		\begin{align}
			\operatorname{vec}_{\mathbb R}(Y)^{\mathsf T}
			C_{\mathrm H}
			\operatorname{vec}_{\mathbb R}(Y)
			&=
			\sum_{\lambda\neq0}
			\frac{m_\lambda}{d_\lambda}
			\sum_{i,j}|y_{\lambda,ij}|^2 \nonumber\\
			&\geq
			\frac{1}{N}
			\sum_{\lambda\neq0, i,j}|y_{\lambda,ij}|^2
			=
			\frac{1}{N}\|Y\|_{\mathrm F}^2,
			\label{eq:app-schur-covariance}
		\end{align}
		where the inequality follows from $m_\lambda\geq1$ and $d_\lambda\leq N$. This completes the proof of Eq.~\eqref{eq:app-haar-covariance-lower}.

		For Eq.~\eqref{eq:app-constant-leverage}, left multiplication by
		$R(V)$ induces an orthogonal map on $E$ that sends $z(U)$ to $z(VU)$
		and preserves $C_{\mathrm H}$. Hence
		$z(U)^{\mathsf T}C_{\mathrm H}^{-1}z(U)$ is independent of $U$.
		Averaging it over Haar measure gives
		\begin{equation}
			\mathbb E_{U\sim \mathrm H}
			\left[
			z(U)^{\mathsf T}C_{\mathrm H}^{-1}z(U)
			\right]
			=
			\operatorname{tr}
			\left(C_{\mathrm H}^{-1}C_{\mathrm H}\right)
			=
			q,
		\end{equation}
		which proves Eq.~\eqref{eq:app-constant-leverage}.
		
		\smallskip
		\noindent\emph{Mean and covariance under $\nu$.}--Define the feature mean and its deviation from the Haar mean as
		\begin{equation}
			\overline F_\nu
			:=
			\mathbb E_{U\sim\nu}[F(U)],
			\qquad
			\Delta
			:=
			\overline F_{\mathrm H}-\overline F_\nu,
			\label{eq:app-feature-mean-error}
		\end{equation}
		and the corresponding covariance matrix under $\nu$ as
		\begin{equation}
			C_\nu
			:=
			\mathbb E_{U\sim\nu}
			\left[
			(F(U)-\overline F_\nu)
			(F(U)-\overline F_\nu)^{\mathsf T}
			\right].
			\label{eq:app-feature-covariance}
		\end{equation}
		Let $\alpha:=\lambda_k(\nu)$ and $\beta:=\lambda_{2k}(\nu)$. 
		Because $\Delta$ represents the vectorized realification of $P-M_k(\nu)$, the standard inequality between the Frobenius and operator norms yields
		\begin{equation}
			\|\Delta\|_2
			\leq
			\sqrt{N}\,\alpha.
			\label{eq:app-feature-mean-bound}
		\end{equation}
		Combining this with Eq.~\eqref{eq:app-haar-covariance-lower}, we obtain
		\begin{equation}
			a
			:=
			\sqrt{
				\Delta^{\mathsf T}C_{\mathrm H}^{-1}\Delta
			}
			\leq
			N\alpha.
			\label{eq:app-whitened-mean-bound}
		\end{equation}
		
		We next compare the covariance matrices. Up to fixed permutations of the tensor factors, we have
		\begin{equation}
			R(U)\otimes R(U)\simeq R_{2k}(U),
			\qquad
			R(U)\otimes\overline{R(U)}
			\simeq R_{2k}(U).
			\label{eq:app-second-moment-representations}
		\end{equation}
		Consequently, the order-$2k$ TPE error $\beta$ controls both complex second moments of the vectorized representation $x(U)=\operatorname{vec}(R(U))$.
		More explicitly, for $\eta\in\{\nu,\mathrm H\}$, define
		\begin{equation}
			A_\eta
			:=
			\mathbb E_\eta[x(U)x(U)^\dagger],
			\qquad
			B_\eta
			:=
			\mathbb E_\eta[x(U)x(U)^{\mathsf T}],
			\qquad
			S_\eta
			:=
			\mathbb E_\eta[F(U)F(U)^{\mathsf T}].
		\end{equation}
		The matrices $A_\eta$ and $B_\eta$ are obtained from the averages of
		$R\otimes\overline R$ and $R\otimes R$, respectively, by a fixed
		realignment of matrix indices. If $\mathcal Q$ denotes this
		realignment, then
		\begin{equation}
			\|\mathcal Q(T)\|_{\mathrm{op}}
			\leq
			\|\mathcal Q(T)\|_{\mathrm F}
			=
			\|T\|_{\mathrm F}
			\leq
			N\|T\|_{\mathrm{op}}.
		\end{equation}
		It follows from Eq.~\eqref{eq:app-second-moment-representations} that
		\begin{equation}
			\|A_\nu-A_{\mathrm H}\|_{\mathrm{op}},
			\,
			\|B_\nu-B_{\mathrm H}\|_{\mathrm{op}}
			\leq
			N\beta.
			\label{eq:app-complex-second-moment-bound}
		\end{equation}
		Moreover, writing
		$A:=A_\nu-A_{\mathrm H}$ and $B:=B_\nu-B_{\mathrm H}$, we have
		\begin{equation}
			S_\nu-S_{\mathrm H}
			=
			\frac12
			\begin{pmatrix}
				\operatorname{Re}(A+B)
				&
				\operatorname{Im}(B-A)
				\\
				\operatorname{Im}(A+B)
				&
				\operatorname{Re}(A-B)
			\end{pmatrix}.
		\end{equation}
		Consequently,
		\begin{equation}
			\|S_\nu-S_{\mathrm H}\|_{\mathrm{op}}
			\leq
			N\beta.
			\label{eq:app-real-second-moment-bound}
		\end{equation}
		
		Let $\Pi_E$ be the real orthogonal projector onto the subspace $E$. Recall that $F(U)=z(U)+\overline F_{\mathrm H}$, and note that $\overline F_{\mathrm H}$ is orthogonal to $E$ because the Haar moment $P$ is an orthogonal projector ($P(R(U)-P)=0$). Thus, applying $\Pi_E$ to the uncentered second moment $S_\eta=\mathbb E_\eta[F(U)F(U)^{\mathsf T}]$ annihilates all terms involving $\overline F_{\mathrm H}$, leaving
		\begin{equation}
			\Pi_E S_\eta \Pi_E
			=
			\mathbb E_\eta\left[z(U)z(U)^{\mathsf T}\right].
		\end{equation}
		Using the covariance relations $C_{\mathrm H} = \mathbb E_{U\sim \mathrm H}[z(U)z(U)^{\mathsf T}]$ and $C_\nu = \mathbb E_{U\sim\nu}[z(U)z(U)^{\mathsf T}] - \Delta\Delta^{\mathsf T}$, we have
		\begin{equation}
			C_\nu-C_{\mathrm H}
			=
			\Pi_E(S_\nu-S_{\mathrm H})\Pi_E
			-
			\Delta\Delta^{\mathsf T}.
		\end{equation}
		Therefore,
		\begin{equation}
			\|C_\nu-C_{\mathrm H}\|_{\mathrm{op}}
			\leq
			N\beta+N\alpha^2 \leq
			\frac{1}{4N}.
			\label{eq:app-covariance-comparison}
		\end{equation}
		Combining this with Eq.~\eqref{eq:app-haar-covariance-lower}, we obtain
		\begin{equation}
			C_\nu\succeq\frac12C_{\mathrm H},
			\qquad
			C_\nu^{-1}\preceq2C_{\mathrm H}^{-1}
			\quad\text{on }E.
			\label{eq:app-covariance-inverse-comparison}
		\end{equation}
		
		\smallskip
		\noindent\emph{Correcting the weights.}---
		We now define the reweighting function
		\begin{equation}
			\omega(U) := 1+ 
			\Delta^{\mathsf T}C_\nu^{-1}
			\bigl(F(U)-\overline F_\nu\bigr).
			\label{eq:app-correcting-weight}
		\end{equation}
		We will verify its positivity, normalization, and exactness in turn.
		
		First, we decompose the deviation as $F(U)-\overline F_\nu = z(U) + \Delta$. Then,
		\begin{align}
			\bigl(F(U)-\overline F_\nu\bigr)^{\mathsf T}
			C_\nu^{-1}
			\bigl(F(U)-\overline F_\nu\bigr)
			\nonumber
			&\leq
			2 \bigl(z(U) + \Delta\bigr)^{\mathsf T} C_{\mathrm H}^{-1} \bigl(z(U) + \Delta\bigr)
			\nonumber\\
			&\leq
			2\left(
			\sqrt{z(U)^{\mathsf T}C_{\mathrm H}^{-1}z(U)} + 
			\sqrt{\Delta^{\mathsf T}C_{\mathrm H}^{-1}\Delta}
			\right)^2.
		\end{align}
		Substituting Eq.~\eqref{eq:app-constant-leverage} and the definition of $a$ from Eq.~\eqref{eq:app-whitened-mean-bound}, this evaluates to $2(\sqrt{q}+a)^2$. 
		Furthermore, because $q \geq 1$ and $a \leq N\alpha \leq \frac{1}{32N} \leq \sqrt{q}$, we have $(\sqrt{q}+a)^2 \leq (2\sqrt{q})^2 = 4q$. This gives
		\begin{equation}
			\bigl(F(U)-\overline F_\nu\bigr)^{\mathsf T}
			C_\nu^{-1}
			\bigl(F(U)-\overline F_\nu\bigr)
			\leq
			8q.
			\label{eq:app-corrected-leverage}
		\end{equation}
		Using Cauchy--Schwarz in the $C_\nu^{-1}$ inner product, together with
		$q\leq2N^2$, gives
		\begin{align}
			|\omega(U)-1|
			&\leq
			\|\Delta\|_{C_\nu^{-1}}
			\|F(U)-\overline F_\nu\|_{C_\nu^{-1}}
			\nonumber\\
			&\leq
			4a\sqrt q
			\leq
			8N^2\alpha
			\leq
			\frac14.
			\label{eq:app-positive-weight-bound}
		\end{align}
		Thus $\frac34\leq\omega(U)\leq\frac54$ for every $U\in\operatorname{supp}\nu$.
		
		Second, because
		$\mathbb E_\nu[F-\overline F_\nu]=0$,
		Eq.~\eqref{eq:app-correcting-weight} immediately gives $\mathbb E_\nu\omega(U)=1$.
		
		Finally, let $w(U):=F(U)-\overline F_\nu$. Then
		$\mathbb E_\nu w=0$ and
		$\mathbb E_\nu[ww^{\mathsf T}]=C_\nu$. Therefore
		\begin{align}
			\mathbb E_\nu[\omega(U)F(U)]
			&=
			\overline F_\nu
			+
			\mathbb E_\nu[ww^{\mathsf T}]
			C_\nu^{-1}\Delta
			\nonumber\\
			&=
			\overline F_\nu+\Delta
			=
			\overline F_{\mathrm H}.
			\label{eq:app-exact-feature-mean}
		\end{align}
		Since realification is injective, this is equivalent to
		\begin{equation}
			\mathbb E_{U\sim\nu}
			\left[
			\omega(U)
			U^{\otimes k}\otimes\overline U^{\otimes k}
			\right]
			=
			P_{\mathrm H}^{(k)}.
		\end{equation}
	\end{proof}

	\subsection{Construction of low-depth exact designs}
	
	
	

	We now apply Lemma~\ref{lem:exactification} to our implementation of the $m$-qubit approximate unitary designs in Theorem~\ref{thm:fd-small-design} and Theorem~\ref{thm:a2a-small-design}, and obtain the following low-depth exact unitary design construction:
	
	\begin{theorem}[Low-depth exact designs without ancillae, formal version of Theorem~\ref{thm:exact-design-informal}]\label{thm:exact-design-formal}
		For every $n\geq1$ and $k\geq2$, there exists an $n$-qubit exact unitary $k$-design with ancilla-free circuit depth
		\begin{equation}
			\label{eq:app-exact-fd-depth}
			\cO_\delta\left(n^{1/\delta}k\log^2 k\right)
		\end{equation}
		on a $\delta$-dimensional architecture, and an $n$-qubit exact unitary $k$-design with ancilla-free circuit depth
		\begin{equation}
			\label{eq:app-exact-a2a-depth}
			\cO(k\log k(\log n+\log k)).
		\end{equation}
	\end{theorem}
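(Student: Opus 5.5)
The plan is to apply Lemma~\ref{lem:exactification} directly to the global $m=n$ constructions of Theorems~\ref{thm:fd-small-design} and~\ref{thm:a2a-small-design}. No gluing is used. We only need a TPE guarantee at both orders $k$ and $2k$ with error at most $1/(32d^{4k})$, where $d=2^n$. Since reweighting preserves the support of the ensemble, every unitary in the exactified ensemble $\nu_{\mathrm{ex}}$ is one of the original circuits. Hence depth and the absence of ancillae are inherited automatically.

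\emph{Large regime, $n\geq A\log k$.} We take the Hadamard--diagonal block $\mu$ on all $n$ qubits (either $\mu_{\delta,n}$ or $\mu_{\mathrm{a2a},n}$), built from the diagonal ensembles of Theorems~\ref{thm:finite-dimensional-ensemble} and~\ref{thm:a2a-ensemble}. Those theorems hold simultaneously for all $r\leq 2k$. Combined with Lemma~\ref{lem:app-HD-perturbation} at $r=k$ and at $r=2k$, this gives
\begin{equation}
\lambda_r(\mu)\leq \exp\!\left[-\frac{an}{\log k}\right],\qquad r\in\{k,2k\}.
\end{equation}
We must check that the condition $d\geq 4r^4$ holds at $r=2k$; this is ensured by enlarging $A$. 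By Eq.~\eqref{eq:app-gap-amplification}, $T$ repetitions give
\begin{equation}
\lambda_r(\mu^{*T})\leq \exp\!\left[-\frac{anT}{\log k}\right],
\end{equation}
so it suffices to choose
\begin{equation}
T=\left\lceil \frac{\log k}{an}\bigl(4nk\log 2+\log 32\bigr)\right\rceil=\cO(k\log k).
\end{equation}
Each block has depth $\cO_\delta(n^{1/\delta}\log k)$ on a balanced prefix window, or $\cO(\log n+\log k)$ with all-to-all connectivity. Multiplying by $T$ gives depth $\cO_\delta(n^{1/\delta}k\log^2k)$ and $\cO(k\log k(\log n+\log k))$, respectively.

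\emph{Small regime, $n<A\log k$.} We use random Pauli rotations (Fact~\ref{fact:app-fd-brickwork-gap}), whose gap holds for all orders simultaneously. This requires
\begin{equation}
T=\cO(nk)=\cO(k\log k)
\end{equation}
repetitions to reach $2^{-4nk}/32$ at both orders. Each rotation costs depth $\cO_\delta(n^{1/\delta}\log n)$ (respectively $\cO(\log n)$). Since $\log n=\cO(\log\log k)$, the totals are $\cO_\delta(n^{1/\delta}k\log k\log\log k)$ and $\cO(k\log k\log n)$, both within the claimed bounds.

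\emph{Geometry.} For the $\delta$-dimensional statement we place the $n$ qubits on a balanced prefix window $H_\delta(L_n,n)$ with $L_n=2^{\lfloor \delta^{-1}\log_2 n\rfloor}$, exactly as in the small case of Theorem~\ref{thm:fd-large-design}. Routing on this window costs $\cO_\delta(n^{1/\delta})$.

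The main obstacle I expect is bookkeeping rather than conceptual. The diagonal-block error bounds, and the threshold $m\geq A\log k$, must hold at order $2k$ and not only at order $k$. The theorems were stated for $r\leq 2k$ precisely for this purpose, so I would verify in detail only that the constant in Eq.~\eqref{eq:fd-block-gap} absorbs the term $9(2k)^4/2^n$. Once the threshold is met at both orders, Lemma~\ref{lem:exactification} supplies the weights, and the theorem follows.
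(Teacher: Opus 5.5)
Your proof is correct. It follows the paper's overall strategy: run the global construction on all $n$ qubits without gluing until it reaches the exactification threshold, then reweight via Lemma~\ref{lem:exactification}, which preserves the support and hence the depth and the absence of ancillae. The difference is in how you certify the threshold.

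The paper treats Theorems~\ref{thm:fd-small-design} and~\ref{thm:a2a-small-design} as black boxes:
\begin{itemize}
\item It takes a multiplicative $\epsilon_\star$-approximate $2k$-design with $\epsilon_\star=1/(64d^{5k})$.
\item It shows this is also a multiplicative $\epsilon_\star$-approximate $k$-design, by writing $\Phi^{(k)}_\eta=\mathcal T\circ\Phi^{(2k)}_\eta\circ\mathcal J$. Here $\mathcal J(X)=X\otimes I/d^k$ and $\mathcal T$ is the partial trace over the last $k$ copies.
\item It applies the converse half of Fact~\ref{fact:TPE-multiplicative-comparison}, $\lambda_r\le 2d^{r/2}\epsilon$, to obtain $\lambda_k,\lambda_{2k}\le 1/(32d^{4k})$.
\item The depth follows because $\log(\epsilon_\star^{-1})/n=\cO(k)$.
\end{itemize}

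You instead stay at the TPE level throughout. You bound $\lambda_k$ and $\lambda_{2k}$ of the Hadamard--diagonal block, or of the random-Pauli-rotation step, directly and then amplify.

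\emph{What the paper's route buys:} it is modular. It shows that any sufficiently accurate multiplicative $2k$-design can be exactified, without reopening the construction. The cost is a TPE $\to$ multiplicative $\to$ TPE round trip, which loses factors $d^{4k}$ and $2d^{k}$ (harmless in the exponent), plus the order-reduction step.

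\emph{What your route buys:} it avoids both conversions and the order-reduction argument. The price is that you must verify the internal conditions at order $2k$ yourself, and you do so correctly:
\begin{itemize}
\item $2^n\ge 4(2k)^4$ for Lemma~\ref{lem:app-HD-perturbation}, which is absorbed by enlarging $A$;
\item the availability of the diagonal-ensemble bounds for all $r\le 2k$;
\item the all-orders validity of Fact~\ref{fact:app-fd-brickwork-gap}.
\end{itemize}

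Both routes give $T=\cO(k\log k)$ repetitions and the same final depths.
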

	
	\begin{proof}
		Denote $d:=2^n$ and set 
		\begin{equation}
			\epsilon_{\star}:=\frac{1}{64d^{5k}}.
		\end{equation}
		Consider a multiplicative $\epsilon_\star$-approximate unitary $2k$-design $\nu$ on this Hilbert space.
		Such an ensemble is also a multiplicative $\epsilon_\star$-approximate unitary $k$-design.
		To see this, define two channels
		\begin{equation}
			\mathcal J(X):=
			X\otimes\frac{I_{d^k}}{d^k},
			\qquad
			\mathcal T(Y):=
			\operatorname{Tr}_{k+1,\ldots,2k}(Y).
		\end{equation}
		For either $\eta=\nu$ or the Haar measure, their $k$-th and $2k$-th moment operators satisfy
		\begin{equation}
			\Phi_\eta^{(k)}=\mathcal T\circ\Phi_\eta^{(2k)}\circ\mathcal J.
		\end{equation}
		Since $\mathcal J$ and $\mathcal T$ are completely positive, the multiplicative inequalities at order $2k$ imply the same inequalities at order $k$.
		
		Applying Fact~\ref{fact:TPE-multiplicative-comparison} gives
		\begin{equation}
			\lambda_r(\nu)
			\leq 2d^{r/2}\epsilon_\star
			\leq 2d^k\epsilon_\star
			=\frac{1}{32d^{4k}}
		\end{equation}
		at order $r=k$ and $r=2k$.
		Thus every multiplicative $\epsilon_\star$-approximate unitary $2k$-design $\nu$ satisfies the hypotheses of Lemma~\ref{lem:exactification}. 
		Its support can consequently be reweighted to form an exact unitary $k$-design.
		
		For a $\delta$-dimensional architecture, by Theorem~\ref{thm:fd-small-design}, there is an $n$-qubit multiplicative $\epsilon_\star$-approximate unitary $2k$-design with ancilla-free circuit depth
		\begin{equation}
			\cO_\delta\left(n^{1/\delta}\left(2k+\frac{\log(\epsilon_\star^{-1})}{n}\right)\log^2(2k)\right)= \cO_\delta\left(n^{1/\delta}k\log^2 k\right).
		\end{equation}    
		For the all-to-all architecture, by Theorem~\ref{thm:a2a-small-design}, there is an $n$-qubit multiplicative $\epsilon_\star$-approximate unitary $2k$-design with ancilla-free circuit depth
		\begin{equation}
			\cO\left(
			\left(\log n+\log(2k)\right)\log(2k)\left[2k+\frac{\log(\epsilon_\star^{-1})}{n}\right]\right)=\cO(k\log k(\log n+\log k)).
		\end{equation}
		As exactification only reweights the ensemble $\nu$ and hence preserves the depth and the absence of ancillary qubits, this gives the claimed depth bound in Eq.~\eqref{eq:app-exact-fd-depth} and Eq.~\eqref{eq:app-exact-a2a-depth}.
	\end{proof}
	
\end{document}